\crefname{paragraph}{Paragraph}{Paragraphs}
\definecolor{mylila}{RGB}{190,186,218}
\definecolor{myyellow}{RGB}{255,255,179}
\definecolor{mygreen}{RGB}{141,211,199}
\tikzstyle{vertex}=[draw, circle, fill, inner sep = 2pt]
\tikzstyle{squaredvertex}=[draw, fill, red]
\newlength{\RoundedBoxWidth}
\newsavebox{\GrayRoundedBox}
\newenvironment{GrayBox}[1]%
   {\setlength{\RoundedBoxWidth}{.93\columnwidth}
    \def\boxheading{#1}
    \begin{lrbox}{\GrayRoundedBox}
       \begin{minipage}{\RoundedBoxWidth}}%
   {   \end{minipage}
    \end{lrbox}
    \begin{center}
    \begin{tikzpicture}%
       \node(Text)[draw=black!20,fill=white,rounded corners,inner sep=2ex,text width=\RoundedBoxWidth]
             {\usebox{\GrayRoundedBox}};
        \coordinate(x) at (current bounding box.north west);
        \node [draw=white,rectangle,inner sep=3pt,anchor=north west,fill=white]
        at ($(x)+(6pt,.75em)$) {\boxheading};
    \end{tikzpicture}
    \end{center}}
\newenvironment{defproblemx}[1]{\noindent\ignorespaces%
                                \FrameSep=6pt%
                                \parindent=0pt%
                \begin{GrayBox}{#1}%
                \begin{tabular*}{\columnwidth}{!{\extracolsep{\fill}}@{\hspace{.1em}} >{\itshape} p{1.5cm} p{0.86\columnwidth} @{}}%
            }{
                \end{tabular*}%
                \end{GrayBox}%
                \ignorespacesafterend
            }
\newcommand{\defProblemTask}[3]{%
  \begin{defproblemx}{#1}
    Input: & #2 \\
    Task: & #3
  \end{defproblemx}
}
\DeclareMathOperator{\rk}{rk}
\DeclareMathOperator{\cut}{cut}
\DeclareMathOperator{\tw}{tw}
\DeclareMathOperator{\adh}{adh}
\DeclareMathOperator{\tcw}{tcw}
\DeclareMathOperator{\td}{td}
\newcommand{\N}{N}
\DeclareMathOperator{\fes}{fes}
\DeclareMathOperator{\tor}{tor}
\DeclareMathOperator{\clos}{clos}
\DeclareMathOperator{\sig}{sig}
\DeclareMathOperator{\best}{best}
\DeclareMathOperator{\vc}{vc}
\DeclareMathOperator{\OPT}{OPT}
\newcommand{\oldminusone}{{\uparrow}}
\newcommand{\oldzero}{{\parallel}}
\newcommand{\oldone}{{\downarrow}}
\newcommand{\false}{\text{false}}
\newcommand{\true}{\text{true}}
\newcommand{\mytodo}[2]{\todo[size=\tiny, color=#1!50!white]{#2}\xspace}
\newcommand{\myinlinetodo}[2]{\todo[size=\small, color=#1!50!white, inline, caption={}]{#2}\xspace}
\newcommand{\registerAuthor}[3]{%
  \expandafter\newcommand\csname #2com\endcsname[1]{\mytodo{#3}{\textsc{#2}: ##1}}%
  \expandafter\newcommand\csname #2inline\endcsname[1]{\myinlinetodo{#3}{\textsc{#2}: ##1}}%
  \expandafter\newcommand\csname #2inlineLater\endcsname[1]{\lv{\myinlinetodo{#3}{\textsc{#2}: ##1}}}%
}
\theoremstyle{plain}
\newtheorem{observation}[theorem]{Observation}
\title{
Parameterized Complexity of Stable Roommates with Ties and Incomplete Lists
Through the Lens of Graph Parameters
\thanks{An extended abstract of this work appears in the proceedings of the 30th International Symposium on Algorithms and Complexity (ISAAC 2019)~\cite{BredereckHKN19}.
In this full version, we provide many missing proofs and new results on the W[1]-hardness of \textsc{Stable Roommates with Ties and Incomplete Preferences} parameterized by feedback vertex number, strengthen the W[1]-hardness results for feedback vertex number to disjoint paths modulator number, and present an FPT-algorithm for \textsc{Maximum Stable Marriage with Ties and Incomplete Preferences} parameterized by tree-cut width.}}
\title{
Parameterized Complexity of Stable Roommates with Ties and Incomplete Lists
Through the Lens of Graph Parameters}
\titlerunning{Parameterized Complexity of Stable Roommates with Ties and Incomplete Lists}
\author{Robert Bredereck}{Humboldt Universit\"at zu Berlin, Institut f\"ur Informatik, Algorithm Engineering}{robert.bredereck@hu-berlin.de}{https://orcid.org/0000-0002-6303-6276}{}
\author{Klaus Heeger}{Technische Universit\"at Berlin, Algorithmics and Computational Complexity}{heeger@tu-berlin.de}{https://orcid.org/0000-0001-8779-0890}{Supported by DFG Research Training Group 2434 ``Facets of Complexity''.}
\author{Dušan Knop}{Department of Theoretical Computer Science, Faculty of Information Technology,\\ Czech Technical University in Prague, Prague, Czech Republic}{dusan.knop@fit.cvut.cz}{https://orcid.org/0000-0003-2588-5709}{Supported by the DFG, project MaMu (NI 369/19).}
\author{Rolf Niedermeier}{Technische Universit\"at Berlin, Algorithmics and Computational Complexity}{rolf.niedermeier@tu-berlin.de}{https://orcid.org/0000-0003-1703-1236}{}
\authorrunning{R. Bredereck, K. Heeger, D. Knop, and R. Niedermeier}
\keywords{Stable matching, acceptability graph, fixed-parameter tractability, W[1]-hardness, graph parameters, treewidth, treedepth, tree-cut width, disjoint path modulator, feedback edge number}
\begin{document}

\maketitle
\begin{abstract}
We continue and extend previous work on the parameterized complexity analysis
of the NP-hard \textsc{Stable Roommates with Ties and Incomplete Lists}
problem, thereby strengthening earlier results
both on the side of parameterized hardness as well as on the side
of fixed-parameter
tractability.
Other than for its famous sister problem \textsc{Stable Marriage}
which focuses on a bipartite scenario,
\textsc{Stable Roommates with Incomplete Lists} allows for arbitrary
acceptability graphs whose edges specify the possible matchings
of each two agents (agents are represented by graph vertices).
Herein, incomplete lists and ties reflect the fact that
in realistic application
scenarios the agents cannot bring \emph{all} other agents into
a \emph{linear} order.
Among our main contributions is to show that it is W[1]-hard to compute a maximum-cardinality
stable matching for acceptability graphs of bounded treedepth, bounded tree-cut width, and bounded disjoint paths modulator number (these are each time the respective parameters).
However, if we `only' ask for perfect stable
matchings or the mere existence of a stable matching, then we obtain
fixed-parameter tractability with respect to tree-cut width but not with
respect to treedepth.
On the positive side, we also provide fixed-parameter tractability results for the parameter feedback edge set number.
\end{abstract}


\section{Introduction}
Computing stable matchings is a core topic in the
intersection of algorithm design and theory,
algorithmic game theory, and computational
social choice. It has numerous applications---the research goes back to the
1960s.
The classic (and most prominent from introductory textbooks) problem, \textsc{Stable Marriage}, is known to be solvable
in linear time.
It relies on complete bipartite graphs for the modeling
with the
two sides representing the same number of ``men'' and ``women''. Herein,
each side expresses preferences (linear orderings aka rankings)
over the opposite sex.
Informally, stability then means that no two matched
agents have reason to break up.
\textsc{Stable Roommates}, however, is not restricted
to a bipartite setting.
The input consists of a set~$V$ of agents together with a
preference list~$\mathcal{P}_v$ for every agent~$v \in V$,
where a preference list $\mathcal{P}_v$ is a strict (linear)
order on~$V \setminus \{v\}$.
The task is to find a \emph{stable matching}, that is, a set of pairs of agents
such that each agent is contained in at most one pair and there is no blocking edge
(i.e., a pair of agents who strictly prefer their mates in this pair to their partners in the matching; naturally, we assume that the agents prefer
to be matched to being unmatched).
Such a matching can be computed in polynomial time~\cite{Irving85}.
We refer to the monographs~\cite{GusfieldIrving89,Manlove13}
for a general discussion on \textsc{Stable Roommates}.
Recent practical applications of \textsc{Stable Roommates} and
its variations also to be studied here range from kidney exchange to connections in peer-to-peer networks~\cite{GaiLMMRV07,RothSU05,RothSU07}.

If the preference lists $\mathcal{P}_v$ for all agents~$v$ are complete, then
the graph-theoretic model behind is trivial---a complete graph reflects
that every agent ranks all other agents. In the more realistic scenario
that an agent may only rank part of all other agents,
the corresponding graph, referred to as \emph{acceptability graph},
is no longer a complete graph but can have an arbitrary structure.
We assume that the acceptability relation is symmetric, that is, if an agent~$v$ finds an agent~$u$ acceptable, then also agent~$u$ finds $v$~acceptable.
Moreover, to make the modeling of real-world scenarios more flexible and
realistic, one also allows ties in the preference lists (rankings) of the
agents, meaning that tied agents are considered equally good.
Unfortunately, once allowing ties in the preferences,
\textsc{Stable Roommates} already becomes NP-hard~\cite{ManloveIIMM02,Ronn90};
indeed, this is true even if each agent finds at most three other agents
acceptable~\cite{CsehIM19}.
Hence, in recent works specific (parameterized) complexity aspects of
\textsc{Stable Roommates with Ties and Incomplete Lists} (\textsc{SRTI})
have been investigated~\cite{Adil2018,DBLP:journals/aamas/BredereckCFN20,Chen-ICALP2018}.
In particular, while Bredereck et al.~\cite{DBLP:journals/aamas/BredereckCFN20} studied
restrictions on the structure of the preference lists,
Adil et al.~\cite{Adil2018} initiated the study of
structural restrictions of the underlying acceptability graph, including the
parameter treewidth of the acceptability graph.
We continue Adil et al.'s line of research
by systematically studying three variants (`maximum', `perfect', `existence') and by
extending significantly the range of graph parameters under study,
thus gaining a fairly comprehensive picture of the
parameterized complexity landscape of \textsc{SRTI}.

Notably, while previous work~\cite{Adil2018,GuptaSZ17} argued for the (also practical) relevance for
studying the structural parameters treewidth and vertex cover number,
our work extends this to further structural graph parameters that are either
stronger than the vertex cover number or yield more positive algorithmic results than possible for treewidth.
We study the arguably most natural optimization
version of \textsc{Stable Roommates} with ties and incomplete lists,
referred to as \textsc{Max-SRTI}:
\defProblemTask{\textsc{Max-SRTI}}
{A set~$V$ of agents and a preference profile~$\mathcal{P} = (\mathcal{P}_v)_{v \in V}$.}
{Find a maximum-cardinality stable matching or decide that none exists.}

In addition to \textsc{Max-SRTI}, we also study two in generally NP-hard variants.
The input is the same, but the task either changes to
finding a \emph{perfect} stable matching---this variant is \textsc{Perfect-SRTI}---or
to finding just \emph{any} stable matching---this variant is \textsc{SRTI-Existence}.\footnote{In the following, we consider a slightly different formulation of these problems: We assume that the input consists of the acceptability graph and rank functions. This is no restriction, as one can transform a set of agents and a profile to an acceptability graph and rank functions and vice versa in linear time.}

\defProblemTask{\textsc{Perfect-SRTI}}
{A set of agents $V$ and a preference profile $\mathcal{P} = (\mathcal{P}_v)_{v \in V}$.}
{Find a perfect stable matching or decide that none exists.}

\defProblemTask{\textsc{SRTI-Existence}}
{A set of agents $V$ and a preference profile $\mathcal{P} = (\mathcal{P}_v)_{v \in V}$.}
{Find a stable matching or decide that none exists.}

Clearly, any algorithm for \textsc{Max-SRTI} also solves \textsc{Perfect-SRTI} and \textsc{SRTI-Existence}.
An algorithm solving \textsc{Perfect-SRTI} can be used to solve \textsc{SRTI-Existence} by extending the preferences of every agent to include all other agents in a suitable way, ensuring that every stable matching needs to be perfect; however, this reduction turns sparse graphs into dense ones and therefore increases the parameters considered in this paper.
Note that on bipartite graphs, the Gale-Shapley algorithm~\cite{GaleShapley1962} always finds a stable matching, while it is NP-complete to find a perfect stable matching~\cite{Iwama1999,ManloveIIMM02}.

We denote the restriction of \textsc{Max-SRTI} and \textsc{Perfect-SRTI} to bipartite acceptability graphs by \textsc{Max-SMTI} and \textsc{Perfect-SMTI}.
Note that we do not consider the restriction of \textsc{SRTI-Existence} to bipartite acceptability graphs as the Gale-Shapley algorithm~\cite{GaleShapley1962} is guaranteed to find a solution in this case.

\subparagraph{Related Work.}
Restricted to bipartite acceptability graphs, where \textsc{Stable Roomates} is called \textsc{Stable Marriage},
\textsc{Max-SRTI} admits a polynomial-time factor-$\frac{2}{3}$-approximation \cite{McDermid2009}. However, even on bipartite graphs it is NP-hard to approximate \textsc{Max-SRTI} by a factor of $\frac{29}{33}$, and \textsc{Max-SRTI} cannot be approximated by a factor of $\frac{3}{4}+\epsilon$ for any $\epsilon >0$ unless \textsc{Vertex Cover} can be approximated by a factor strictly smaller than two~\cite{Yanagisawa2007}.
Note that, as we will show in our work, \textsc{SRTI-Existence}
is computationally hard in many cases, so good polynomial-time
or even fixed-parameter approximation algorithms
for \textsc{Max-SRTI} seem out of reach.

\textsc{Perfect-SRTI} was shown to be NP-hard even on bipartite graphs~\cite{Iwama1999,ManloveIIMM02}. This holds also for the more restrictive case when ties occur only on one side of the bipartition, and any preference list is either strictly ordered or a tie of length two~\cite{ManloveIIMM02}.
Furthermore, \textsc{Perfect-SMTI} is NP-hard when the acceptability graph is a complete bipartite graph minus one edge~\cite{CsehH20}.
As \textsc{SRTI-Existence} is NP-hard for complete graphs, all three problems considered in this paper are NP-hard on complete graphs (as every stable matching is a maximal matching). This implies paraNP-hardness for all parameters which are constant on cliques, including distance to clique, cliquewidth, neighborhood diversity, twin cover, the number of unmatched vertices, and modular width.

Following up on work by Bartholdi~III and Trick~\cite{BT86},
Bredereck et al.~\cite{DBLP:journals/aamas/BredereckCFN20} showed NP-hardness and  polynomial-time
solvability results for \textsc{SRTI-Existence} under several restrictions
constraining the agents' preference lists.

On a fairly general level, there is quite some work on employing methods
of parameterized algorithmics in the search for potential
islands of tractability for in general NP-hard stable
matching problems~\cite{Adil2018,DBLP:conf/sagt/BoehmerBHN20,DBLP:conf/wine/BoehmerH20,DBLP:conf/wine/BredereckHKN20,Chen-ICALP2018,Chen-EC2018,DBLP:conf/fsttcs/Gupta0R0Z20,DBLP:conf/wads/GuptaR0Z19,GuptaSZ17,Marx2010,Marx2011,MeeksR18,DBLP:journals/algorithmica/MnichS20}.
More specifically, Marx and Schlotter~\cite{Marx2010} showed that \textsc{Max-SMTI} is W[1]-hard when parameterized by the number of ties.
While it is known the problem is NP-hard even if the maximum length of a tie
is two~\cite{ManloveIIMM02}, they showed that \textsc{Max-SMTI} is fixed-parameter tractable when parameterized by the combined parameter `number of ties and maximum length of a tie'.
Meeks and Rastegari~\cite{MeeksR18} considered a setting where the agents are partitioned into different types having the same preferences.
They show that the problem is FPT in the number of types.
Mnich and Schlotter~\cite{DBLP:journals/algorithmica/MnichS20} defined \textsc{Stable Marriage with Covering Constraints}, where the task is to find a matching which matches a given set of agents, and minimizes the number of blocking pairs among all these matchings.
They showed the NP-hardness of this problem and investigated several parameters such as the number of blocking pairs or the maximum degree of the acceptability graph.

Most directly related to our work, however, Adil et al.~\cite{Adil2018}
started the research on structural restrictions of the acceptability graph,
which we continue and extend.  Their result is
an XP-algorithm for the parameter treewidth; indeed, they did not show
W[1]-hardness for this parameter, leaving this as an open question.
This open question was solved (also for the bipartite case)
by Gupta et al.~\cite{GuptaSZ17}, who further considered various variants
(such as sex-equal or balanced) of stable marriage with respect to treewidth variants.\footnote{Indeed, without knowing the work of Gupta et al.~\cite{GuptaSZ17}, our work initially was strongly motivated by Adil et al.'s~\cite{Adil2018}
open question for treewidth.
To our surprise, although the Adil et al.~\cite{Adil2018} paper has been revised six months after the publication of
Gupta et al.~\cite{GuptaSZ17}, it was not mentioned by Adil et al.~\cite{Adil2018} that this open question was answered
by a subset of the authors, namely Gupta et al.~\cite{GuptaSZ17}.}
Moreover, Adil et al.~\cite{Adil2018} showed that \textsc{Max-SRTI}
is fixed-parameter tractable when parameterized by the size of the solution
(that is, the cardinality of the set of edges in the stable matching)
and that \textsc{Max-SRTI} restricted to planar acceptability
graphs for the same parameter is fixed-parameter tractable, even obtaining subexponential running time.\footnote{More precisely, Adil et al.~\cite{Adil2018} state their result for the parameter `size of a maximum matching of the acceptability graph', which is only by a factor at most two greater than the size of a stable matching.}

\subparagraph{Our Contributions}
We continue the study of algorithms for \textsc{Max-SRTI} and its variants based on
structural limitations of the acceptability graph.
In particular, we extend the results of Adil et al.~\cite{Adil2018} in several ways.
For an overview on our results we refer to \Cref{fig:SRTIResultsOverview}. We highlight a few results in what follows.
We observe that Adil et al.'s dynamic programming-based XP-algorithm designed for the
parameter treewidth\footnote{It only gives containment in~XP for this parameter, and only this is stated by Adil et al.~\cite{Adil2018}.} indeed yields fixed-parameter tractability for the combined parameter treewidth and maximum degree.
We complement their XP result and the above mentioned results by showing that \textsc{Max-SRTI} is W[1]-hard for the graph parameters treedepth, disjoint paths modulator set (for these two parameters also on bipartite graphs), and tree-cut width.
Notably, all these graph parameters are `weaker'~\cite{KN12} than treewidth
and these mutually independent results imply W[1]-hardness with respect to
treewidth; the latter was also shown in the independent work of Gupta et al.~\cite{GuptaSZ17}.

For the two related problems \textsc{Perfect-SRTI} and \textsc{SRTI-Existence},
on the contrary, we show fixed-parameter tractability with respect to the parameter tree-cut width.
Furthermore,\textsc{Max-SMTI} is fixed-parameter tractable with respect to the parameter tree-cut width.
These results confirm the intuition that tree-cut width, a recently introduced~\cite{Wollan15} and since then already
well researched graph parameter~\cite{Ganian2015,DBLP:journals/algorithmica/GanianKO21,Ganian-ALENEX2019,KimOPST18,MarxW14}
`lying between' treewidth and the combined parameter `treewidth and maximum vertex degree', is a  better suited structural parameter
for edge-oriented graph problems than treewidth is.
Moreover, we extend our W[1]-hardness results to \textsc{Perfect-SRTI} and \textsc{SRTI-Existence} parameterized by treedepth and disjoint paths modulator number.
Finally, we give an FPT-algorithm for \textsc{Max-SRTI} parameterized by feedback edge number.
    \footnote{Note that the
preliminary version of this paper~\cite{BredereckHKN19} contained a
typo, claiming erroneously a running time of $2^{\fes (G)} n^{O(1)}$
instead of $3^{\fes (G)} n^{O(1)}$.}

In summary, we provide a complete picture of the (graph-)parameterized computational complexity landscape for the three studied problems---see \Cref{fig:SRTIResultsOverview} for an overview of our results.
Among other things, for the
parameter tree-cut width \Cref{fig:SRTIResultsOverview} depicts a
surprising complexity gap between \textsc{Max-SRTI} on the
one side (W[1]-hardness) and \textsc{Perfect-SRTI} and \textsc{SRTI-Existence}
(fixed-parameter tractability) on the other side.

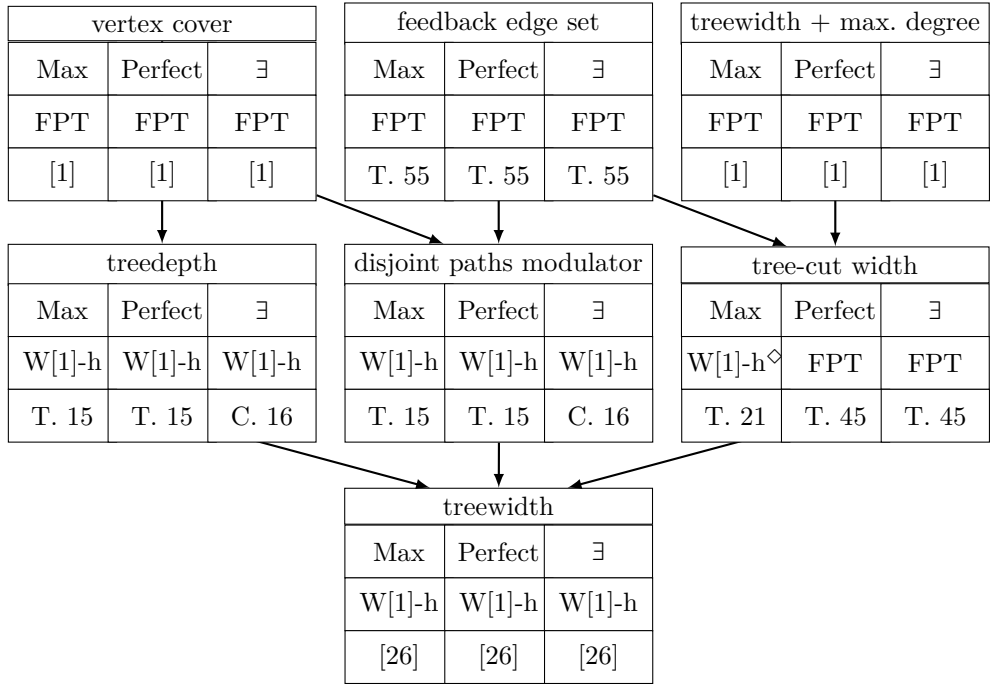
\begin{figure}[bt]
  \begin{center}
    \begin{tikzpicture}[xscale = 1.64]
      \tikzstyle{parameter}=[draw, minimum width = 4.05 cm, text centered, minimum height = 0.6]
      \tikzstyle{arrow} = [thick,->,>=latex]
      \node[parameter] (tw) at (5, -1.2) {treewidth};
      \node[parameter] (tcw) at (7.7, 2) {tree-cut width};
      \node[parameter] (twd) at (7.7, 5.2) {treewidth + max.\ degree};
      \node[parameter] (vc) at (2.3, 5.2) {vertex cover};
      \node[parameter] (td) at (2.3, 2) {treedepth};
      \node[parameter] (fvs) at (5, 2) {disjoint paths modulator};
      \node[parameter] (fes) at (5, 5.2) {feedback edge set};

      \tikzstyle{problemvariant}=[draw, minimum width = 1.424cm, minimum height = 0.7 cm, text centered, fill = white, inner sep = 0]
      \tikzstyle{problemvariantshort}=[draw, minimum width = 1.424cm, minimum height = 0.7 cm, text centered, fill = white, inner sep = 0]

      \node[problemvariant] (twm) at ($(tw) + (-0.8, -0.6)$) {Max};
      \node[problemvariant] (twp) at ($(tw) + (-0., -0.6)$) {Perfect};
      \node[problemvariant] (twe) at ($(tw) + (0.8, -0.6)$) {$\exists$};

      \node[problemvariant] (tdm) at ($(td) + (-0.8, -0.6)$) {Max};
      \node[problemvariant] (tdp) at ($(td) + (-0., -0.6)$) {Perfect};
      \node[problemvariant] (tde) at ($(td) + (0.8, -0.6)$) {$\exists$};

      \node[problemvariant] (vcm) at ($(vc) + (-0.8, -0.6)$) {Max};
      \node[problemvariant] (vcp) at ($(vc) + (-0., -0.6)$) {Perfect};
      \node[problemvariant] (vce) at ($(vc) + (0.8, -0.6)$) {$\exists$};

      \node[problemvariant] (twdm) at ($(twd) + (-0.8, -0.6)$) {Max};
      \node[problemvariant] (twdp) at ($(twd) + (0., -0.6)$) {Perfect};
      \node[problemvariant] (twde) at ($(twd) + (0.8, -0.6)$) {$\exists$};

      \node[problemvariant] (tcwm) at ($(tcw) + (-0.8, -0.6)$) {Max};
      \node[problemvariant] (tcwp) at ($(tcw) + (-0., -0.6)$) {Perfect};
      \node[problemvariant] (tcwe) at ($(tcw) + (0.8, -0.6)$) {$\exists$};

      \node[problemvariant] (fesm) at ($(fes) + (-0.8, -0.6)$) {Max};
      \node[problemvariant] (fesp) at ($(fes) + (-0, -0.6)$) {Perfect};
      \node[problemvariant] (fese) at ($(fes) + (0.8, -0.6)$) {$\exists$};

      \node[problemvariant] (fvsm) at ($(fvs) + (-0.8, -0.6)$) {Max};
      \node[problemvariant] (fvsp) at ($(fvs) + (-0., -0.6)$) {Perfect};
      \node[problemvariant] (fvse) at ($(fvs) + (0.8, -0.6)$) {$\exists$};

      \node[problemvariant] (tw1) at ($(twm) + (0, -0.7)$) {\text{W[1]-h}};
      \node[problemvariant] (tw2) at ($(twp) + (0, -0.7)$) {\text{W[1]-h}};
      \node[problemvariant] (tw3) at ($(twe) + (0, -0.7)$) {\text{W[1]-h}};
      
      \node[problemvariant] (tw11) at ($(tw1) + (0, -0.7)$) {\cite{GuptaSZ17}};
      \node[problemvariant] (tw22) at ($(tw2) + (0, -0.7)$) {\cite{GuptaSZ17}};
      \node[problemvariant] (tw33) at ($(tw3) + (0, -0.7)$) {\cite{GuptaSZ17}};

      \node[problemvariant] (td1) at ($(tdm) + (0, -0.7)$) {\text{W[1]-h}};
      \node[problemvariant] (td2) at ($(tdp) + (0, -0.7)$) {\text{W[1]-h}};
      \node[problemvariant] (td3) at ($(tde) + (0, -0.7)$) {\text{W[1]-h}};

      \node[problemvariant] (tdr1) at ($(td1) + (0, -0.7)$) {\text{T.~\ref{thm:W-hard-td-fvn}}};
      \node[problemvariant] (tdr2) at ($(td2) + (0, -0.7)$) {\text{T.~\ref{thm:W-hard-td-fvn}}};
      \node[problemvariant] (tdr3) at ($(td3) + (0, -0.7)$) {\text{C.~\ref{cor:W-hard-existence}}};

      \node[problemvariant] (vc1) at ($(vcm) + (0, -0.7)$) {FPT};
      \node[problemvariant] (vc2) at ($(vcp) + (0, -0.7)$) {FPT};
      \node[problemvariant] (vc3) at ($(vce) + (0, -0.7)$) {FPT};

      \node[problemvariant] (vcr1) at ($(vc1) + (0, -0.7)$) {\cite{Adil2018}};
      \node[problemvariant] (vcr2) at ($(vc2) + (0, -0.7)$) {\cite{Adil2018}};
      \node[problemvariant] (vcr3) at ($(vc3) + (0, -0.7)$) {\cite{Adil2018}};

      \node[problemvariant] (twd1) at ($(twdm) + (0, -0.7)$) {FPT};
      \node[problemvariant] (twd2) at ($(twdp) + (0, -0.7)$) {FPT};
      \node[problemvariant] (twd3) at ($(twde) + (0, -0.7)$) {FPT};

      \node[problemvariant] (twd1) at ($(twd1) + (0, -0.7)$) {\cite{Adil2018}};
      \node[problemvariant] (twd2) at ($(twd2) + (0, -0.7)$) {\cite{Adil2018}};
      \node[problemvariant] (twd3) at ($(twd3) + (0, -0.7)$) {\cite{Adil2018}};

      \node[problemvariant] (tcw1) at ($(tcwm) + (0, -0.7)$) {\text{W[1]-h}$^\Diamond$};
      \node[problemvariant] (tcw2) at ($(tcwp) + (0, -0.7)$) {FPT};
      \node[problemvariant] (tcw3) at ($(tcwe) + (0, -0.7)$) {FPT};

      \node[problemvariant] (tcwr1) at ($(tcw1) + (0, -0.7)$) {\text{T.~\ref{twhtcw}}};
      \node[problemvariant] (tcwr2) at ($(tcw2) + (0, -0.7)$) {\text{T.~\ref{tfpttcw}}};
      \node[problemvariant] (tcwr3) at ($(tcw3) + (0, -0.7)$) {\text{T.~\ref{tfpttcw}}};

      \node[problemvariant] (fes1) at ($(fesm) + (0, -0.7)$) {FPT};
      \node[problemvariant] (fes2) at ($(fesp) + (0, -0.7)$) {FPT};
      \node[problemvariant] (fes3) at ($(fese) + (0, -0.7)$) {FPT};

      \node[problemvariant] (fesr1) at ($(fes1) + (0, -0.7)$) {\text{T.~\ref{tfptfes}}};
      \node[problemvariant] (fesr2) at ($(fes2) + (0, -0.7)$) {\text{T.~\ref{tfptfes}}};
      \node[problemvariant] (fesr3) at ($(fes3) + (0, -0.7)$) {\text{T.~\ref{tfptfes}}};

      \node[problemvariant] (fvs1) at ($(fvsm) + (0, -0.7)$) {\text{W[1]-h}};
      \node[problemvariant] (fvs2) at ($(fvsp) + (0, -0.7)$) {\text{W[1]-h}};
      \node[problemvariant] (fvs3) at ($(fvse) + (0, -0.7)$) {\text{W[1]-h}};

      \node[problemvariant] (fvsrm) at ($(fvs1) + (-0, -0.7)$) {T.~\ref{thm:W-hard-td-fvn}};
      \node[problemvariant] (fvsrp) at ($(fvs2) + (-0, -0.7)$) {T.~\ref{thm:W-hard-td-fvn}};
      \node[problemvariant] (fvsre) at ($(fvs3) + (-0, -0.7)$) {C.~\ref{cor:W-hard-existence}};

      \begin{scope}[on background layer]

        \draw[arrow] (tdr2) -- (tw);
        \draw[arrow] (vc) -- (td);
        \draw[arrow] (tcwr2) -- (tw);
        \draw[arrow] (twd) -- (tcw);
        \draw[arrow] (fvsrp) -- (tw);
        \draw[arrow] (fes) -- (fvs);
      \end{scope}

      \draw[arrow] (vcr3) -- (fvs);
      \draw[arrow] (fesr3) -- (tcw);
  \end{tikzpicture}

  \end{center}
  \caption{\label{fig:SRTIResultsOverview}%
  \looseness -1
  Results for graph-structural parameterizations of \textsc{Stable Roommates with Ties and Incomplete Lists}. Max means \textsc{Max-SRTI}, Perfect means \textsc{Perfect-SRTI}, and $\exists $ means \textsc{SRTI-Existence}.
  All W[1]-hardness results for \textsc{Perfect-SRTI} also hold for \textsc{Perfect-SMTI}, i.e., the restriction of \textsc{Perfect-SRTI} to bipartite acceptability graphs.
  The symbol $^\Diamond$ indicates the existence of an FPT factor-$\frac{1}{2}$-approximation algorithm as well as an exact FPT algorithm on bipartite graphs (see \Cref{capx}).
  The arrows indicate dependencies between the different parameters. An arrow from a parameter~$p_1(G)$ to a parameter~$p_2 (G)$ means that there is a computable function~$f\colon \mathbb{N}\rightarrow\mathbb{N}$ such that for any graph $G$ we have $p_1 (G) \ge f(p_2 (G))$.
  Consequently, W[1]-hardness for~$p_1$ then implies W[1]-hardness for~$p_2$, and fixed-parameter tractability for~$p_2$ then implies fixed-parameter tractability for~$p_1$.
  }
\end{figure}

\section{Preliminaries}\label{sec:preliminaries}
  For a positive integer $n$ let $[n]:= \{ 1, 2, 3, \dots, n\}$.
  We write vectors $\bm{h}$ in boldface, and given a vector $\bm{h} \in X^E $ for a set $X$ and a finite set $E$, we access its entries (coordinates) via $\bm{h} (e)$ for some $e \in E$.

  For an undirected graph $G$ and a vertex $v\in V(G)$, let $\delta_G (v)$ be the set of edges incident to $v$.
  For a set of vertices $X \subseteq V(G)$, let $\delta_G(X) $ be the set of edges with one endpoint contained in~$X$ and the other endpoint not contained in $X$.
  If the graph~$G$ is clear from the context, then we may just write $\delta (v)$ or $\delta (X)$.
  For a subset of edges $M\subseteq E(G)$ and a vertex~$v\in V(G)$, let $\delta_M (v) := \delta_G (v) \cap M$.
  We denote the maximum degree in $G$ by $\Delta (G)$, i.e., $\Delta (G):= \max_{v\in V(G)} |\delta_G (v)|$.
  For a tree $T$ rooted at a vertex~$r$ and a vertex $v\in V(T)$, we denote by $T_v$ the subtree rooted at $v$.
  For a graph~$G$ and a subset of vertices $X$ (a subset of edges $F$), we define $G-X$ ($G-F$) to be the graph arising from $G$ by deleting all vertices in $X$ and all edges incident to a vertex from $X$ (deleting all edges in~$F$).
  For a graph~$G$ and a set of vertices $X\subseteq V(G)$, the graph arising by \emph{contracting} $X$ is denoted by $G_{/X}$; it is defined by replacing the vertices in $X$ by a single vertex. Thus, we have $V(G_{/X}) := \bigl(V(G)\setminus X\bigr) \cup \{v_X\}$ and $E(G_{/X}):=\{\{v, w\}\in E(G): v, w\notin X\}\cup \{\{v, v_X\}:\{v, x\}\in E(G): v\notin X, x\in X\}$.
  Unless stated otherwise, $n := |V(G)|$ and $m := |E(G)|$.

  We define the directed graph $\overleftrightarrow{G}$ by replacing each edge $\{v, w\}\in E(G)$ by two directed ones in opposite directions, i.e., $(v, w) $ and $(w, v)$.

Note that the acceptability graph for a set of agents $V$ and a profile $\mathcal{P}$ is always simple, while a graph arising from a simple graph through the contraction of vertices does not need to be simple.

A parameterized problem consists of the problem instance $I$ (in our
setting the \textsc{Stable Roommate} instance) and a
parameter value~$k$ (in our case always a number measuring
some aspect in acceptability graph). An  FPT-algorithm for
a parameterized problem is an algorithm that runs in time
$f(k)|I|^{O(1)}$, where $f$ is some computable function. That is, an
FPT algorithm can run in exponential time, provided that the
exponential part of the running time depends on the parameter value
only. If such an algorithm exists, the parameterized problem is
called fixed-parameter tractable for the corresponding parameter.
There is also a theory of hardness of parameterized
problems that includes the notion of W[1]-hardness. If a problem is
W[1]-hard for a given parameter, then it is widely believed not to be
fixed-parameter tractable for the same parameter.

The typical approach to showing that a certain parameterized problem is
W[1]-hard is to reduce to it a known W[1]-hard problem, using
the notion of a parameterized reduction. In our case, instead of using
the full power of parameterized reductions, we use standard many-one
reductions that ensure that the value of the parameter in the output
instance is bounded by a function of the parameter of the input
instance.

The Exponential-Time Hypothesis (ETH) of Impagliazzo and Paturi~\cite{ImpagliazzoP01} asserts that there is a constant $c > 1$ such that there is no \( c^{o(n)} \) time algorithm solving the \textsc{Satisfiability} problem, where $n$ is the number of variables.
Chen et al.~\cite{CHEN2005216} showed that assuming ETH, there is no \( f(k) \cdot n^{o(k)} \) time algorithm solving $k$-\textsc{(Multicolored) Clique}, where $f$ is any computable function and $k$ is the size of the clique we are looking for.
For further notions related to parameterized complexity and ETH we refer the reader to~\cite{CyganFKLMPPS15}.

\subsection{Profiles and Preferences}
Let $V$ be a set of agents.
A \emph{preference list}~$\mathcal{P}_v$ for an agent $v$ is a subset $P_v \subseteq V\setminus \{v\}$ together with an ordered partition $(P_v^1,  \dots, P_v^k)$ of $P_v$.
A set $P_v^i$ with $|P_v^i|>1$ is called a \emph{tie}. The \emph{size of a tie} $P_v^i$ is its cardinality, i.e.,~$|P_v^i|$.
For an agent $v\in V$, the \emph{rank function} is $\rk_v \colon P_v\cup\{v\} \rightarrow \mathbb{N}\cup \{\infty\}$ with $\rk_v(x) := i$ for $x\in P_v^i$, and $\rk_v (v) = \infty$.

We say that $v$ \emph{prefers $x\in P_v$ to $y\in P_v$} if $\rk_v(x)< \rk_v (y)$.
In this case, we also write~$x \succ_v y$.
If $\rk_v(x)=\rk_v(y)$, then~$v$ \emph{ties} $x$ and $y$.
In this case, we also write~$x \sim_v y$.
For a set $V$ of agents, a set $\mathcal{P}= (\mathcal{P}_v)_{v \in V}$ of preference lists is called a \emph{profile}.
The corresponding \emph{acceptability graph} $G$ consists of vertex set $V(G):= V$ and edge set~$E(G):=\{\{v, w\}: v\in P_w \}$.
Recall that we assume that acceptability is symmetric, i.e., $v\in P_w $ if and only if $ w\in P_v$ for every two agents $v, w \in V$.

A subset $M\subseteq E(G)$ of pairwise non-intersecting edges is called a matching.
If~$\{x,y\} \in M$, then we denote the corresponding partner~$y$ of~$x$ by~$M(x)$ and set $M(x):=x$ if $x$~is unmatched, that is, if $\{y\in V(G) : \{x, y\} \in M\}=\emptyset$.
An edge~$\{v, w\}\in E(G)$ is \emph{blocking for~$M$} if $\rk_v (w) < \rk_v (M(v))$ and $\rk_w (v) < \rk_w (M(w))$; we say that $v,w$ constitutes a \emph{blocking pair for $M$}.
A matching $M\subseteq E(G)$ is \emph{stable} if there are no blocking pairs, i.e., for all $\{v, w\}\in E(G)$, we have $\rk_v(w) \ge \rk_v (M(v))$ or $\rk_w(v)\ge \rk_v(M(w))$.

Note that the literature contains several different stability notions for a matching in the presence of ties. Our stability definition is frequently called \emph{weak stability}.\footnote{Manlove \cite{Manlove02} discusses other types of stability---strong stability and super-strong stability.}

\subsection{Structural Graph Parameters}

We consider the (graph-theoretic) parameters treewidth, tree-cut width, treedepth, disjoint paths modulator number, feedback edge number, vertex cover number, and the combined parameter `treewidth + maximum vertex degree' (also called degree-treewidth in the literature).

  A set of edges $F \subseteq E(G)$ is a \emph{feedback edge set} if $G-F $ is a forest.
  We define the \emph{feedback edge number} $\text{fes} (G)$ to be the cardinality of a minimum feedback edge set of~$G$.
  The \emph{disjoint paths modulator number} $\operatorname{dpm} (G)$ is the minimum cardinality of a set~$X$ sucht that $G-X$ is the disjoint union of paths.
  A \emph{vertex cover} is a set of vertices intersecting with every edge of~$G$, and the \emph{vertex cover number} $\vc (G)$ is the size of a minimum vertex cover.
  The \emph{treedepth} $\td (G) $ is the smallest height of a rooted tree~$T$ with vertex set $V(G)$ such that for each $\{v, w\}\in V(G)$ we have that either $v$ is a descendant of $w$ in $T$ or $w$ is a descendant of $v$ in $T$.

\emph{Treewidth} intuitively measures the tree-likeness of a graph. It can be defined via structural decompositions of a graph into pieces of bounded size, which are connected in a tree-like fashion, called \emph{tree decompositions}.
As we do not need treewidth to describe our results, we do not give a formal definition here.
For a precise definition and some properties of treewidth, we refer to Kloks~\cite{Kloks94}.

\subparagraph{Tree-cut Width}
Tree-cut width has been introduced by Wollan~\cite{Wollan15} as tree-likeness measure
between treewidth and treewidth combined with maximum degree.
A family of subsets $X_1, \dots, X_k$ of a finite set $X$ is a \emph{near-partition} of $X$ if $X_i\cap X_j = \emptyset$ for all $i\neq j$ and $\bigcup_{i=1}^k X_i = X$. Note that $X_i =\emptyset$ is possible (even for several distinct~$i$).
A \emph{tree-cut decomposition} of a graph~$G$ is a pair $(T, \mathcal{X})$ which consists of a tree $T$ and a near-partition $\mathcal{X} = \{X_t\subseteq V(G): t\in V(T)\}$ of $V(G)$.
A set in the family $\mathcal{X}$ is called a \emph{bag} of the tree-cut decomposition.
An example for a tree-cut decompsition is given in \Cref{fig:tcw}.
 \begin{figure}[bt]
\begin{center}
    \begin{tikzpicture}[xscale=0.8, yscale=0.75]
       \node[vertex, label = 90:$r_{1}$] (r1) at (0.5, 0.5) {};
       \node[vertex, label = 90:$r_{2}$] (r2) at (1.5, 0.5) {};

       \node[vertex, label = 180:$v_{11}$] (m1) at (-1, -2.5) {};
       \node[vertex, label = 225:$v_{12}$] (m2) at (0.5, -2.5) {};
       \node[vertex, label = 180:$v_{13}$] (m3) at (1.5, -1) {};
       \node[vertex, label = 90:$v_{14}$] (m4) at (1.5, -2.5) {};
       \node[vertex, label = 00:$v_{15}$] (m5) at (2.5, -2.5) {};

       \node[vertex, label = 180:$v_{21}$] (ll1) at (-1, 0.5) {};
       \node[vertex, label = 180:$v_{22}$] (ll2) at (-1, -1) {};

       \node[vertex, label = 0:$v_{41}$] (lr1) at (3.7, -3) {};
       \node[vertex, label = 0:$v_{42}$] (lr2) at (3.7, -1) {};
       \node[vertex, label = 0:$v_{43}$] (lr3) at (3.7, 0.5) {};

       \draw (m1) edge (m2);
       \draw (m2) edge (m4);
       \draw (m4) edge (m5);
       \draw (m5) edge (m3);

       \node[vertex, label = 270:$v_{3}$] (lm) at (1, -4) {};
       \draw (lm) edge (m2);
       \draw (lm) edge (m4);
       \draw (lm) edge (m5);

       \draw (ll1) edge (ll2);
       \draw (ll2) edge (m1);
       \draw (ll1) edge (r1);

       \draw (lr2) edge (m5);
       \draw (lr1) edge  (lr2);
       \draw (lr2) edge (lr3);
       \draw (lr1) edge[bend right=55] (lr3);
       \draw (lr2) edge (m3);

       \draw (m3) edge (r2);
     \end{tikzpicture}

     \begin{tikzpicture}
       \node (shift) at (0, 0.35) {};
       \draw (0, 0) rectangle (3, 1);

       \draw ($(-1, -1.5) + (shift)$) rectangle ($(3.5, -3.5) + (shift)$);

       \draw ($(-5, -5) + 2*(shift)$) rectangle ($(-2, -6) + 2*(shift)$);
       \draw ($(-0.5, -5) + 2*(shift)$) rectangle ($(1.5, -6) + 2*(shift)$);
       \draw ($(3.5, -5) + 2*(shift)$) rectangle ($(7, -6) + 2*(shift)$);

       \draw[ultra thick] ($(1, 0)$) -- ($(1, -1.5) + (shift)$);
       \draw[ultra thick] ($(-1, -3.5) + (shift)$) -- ($(-2, -5) + 2*(shift)$);
       \draw[ultra thick] ($(1, -3.5) + (shift)$) -- ($(1, -5) + 2*(shift)$);
       \draw[ultra thick] ($(3.5, -3.5) + (shift)$) -- ($(4, -5) + 2*(shift)$);

       \node[vertex, label = 90:\Large{$r_{1}$}] (r1) at (0.5, 0.45) {};
       \node[vertex, label = 90:\Large{$r_{2}$}] (r2) at (1.5, 0.45) {};

       \node[vertex, label = 90:\Large{$v_{11}$}] (m1) at ($(-0.5, -2.1) + (shift)$) {};
       \node[vertex, label = 90:\Large{$v_{12}$}] (m2) at ($(0.5, -2.5) + (shift)$) {};
       \node[vertex, label = 180:\Large{$v_{13}$}] (m3) at ($(1.7, -2) + (shift)$) {};
       \node[vertex, label = 90:\Large{$v_{14}$}] (m4) at ($(1.5, -3) + (shift)$) {};
       \node[vertex, label = 00:\Large{$v_{15}$}] (m5) at ($(2.7, -2.7) + (shift)$) {};

       \draw (m1) edge node[pos=0.2, fill=white, inner sep=2pt] {\footnotesize $1$}  node[pos=0.76, fill=white, inner sep=2pt] {\footnotesize $1$} (m2);
       \draw (m2) edge node[pos=0.2, fill=white, inner sep=2pt] {\footnotesize $1$}  node[pos=0.76, fill=white, inner sep=2pt] {\footnotesize $3$} (m4);
       \draw (m4) edge node[pos=0.2, fill=white, inner sep=2pt] {\footnotesize $1$}  node[pos=0.76, fill=white, inner sep=2pt] {\footnotesize $1$} (m5);
       \draw (m5) edge node[pos=0.2, fill=white, inner sep=2pt] {\footnotesize $2$}  node[pos=0.76, fill=white, inner sep=2pt] {\footnotesize $2$} (m3);

       \node[vertex, label = 270:\Large{$v_{3}$}] (lm) at ($(1, -5.5) + 2*(shift)$) {};
       \draw (lm) edge node[pos=0.2, fill=white, inner sep=2pt] {\footnotesize $1$}  node[pos=0.76, fill=white, inner sep=2pt] {\footnotesize $2$} (m2);
       \draw (lm) edge node[pos=0.2, fill=white, inner sep=2pt] {\footnotesize $2$}  node[pos=0.76, fill=white, inner sep=2pt] {\footnotesize $2$} (m4);
       \draw (lm) edge node[pos=0.2, fill=white, inner sep=2pt] {\footnotesize $3$}  node[pos=0.76, fill=white, inner sep=2pt] {\footnotesize $2$} (m5);

       \node[vertex, label = 270:\Large{$v_{21}$}] (ll1) at ($(-3.9, -5.5) + 2*(shift)$) {};
       \node[vertex, label = 270:\Large{$v_{22}$}] (ll2) at ($(-2.5, -5.5) + 2*(shift)$) {};

       \draw (ll1) edge node[pos=0.2, fill=white, inner sep=2pt] {\footnotesize $1$}  node[pos=0.76, fill=white, inner sep=2pt] {\footnotesize $2$} (ll2);
       \draw (ll2) edge[bend left=15] node[pos=0.2, fill=white, inner sep=2pt] {\footnotesize $1$}  node[pos=0.8, fill=white, inner sep=2pt] {\footnotesize $2$} (m1);
       \draw (ll1) edge[bend left] node[pos=0.2, fill=white, inner sep=2pt] {\footnotesize $1$}  node[pos=0.76, fill=white, inner sep=2pt] {\footnotesize $1$} (r1);

       \node[vertex, label = 270:\Large{$v_{41}$}] (lr1) at ($(3.8, -5.5) + 2*(shift)$) {};
       \node[vertex, label = 270:\Large{$v_{42}$}] (lr2) at ($(4.8, -5.5) + 2*(shift)$) {};
       \node[vertex, label = 270:\Large{$v_{43}$}] (lr3) at ($(5.8, -5.5) + 2*(shift)$) {};

       \draw (lr2) edge[bend right = 25] node[pos=0.2, fill=white, inner sep=2pt] {\footnotesize $1$}  node[pos=0.76, fill=white, inner sep=3pt] {\footnotesize $2$} (m5);
       \draw (lr1) edge node[pos=0.2, fill=white, inner sep=2pt] {\footnotesize $1$}  node[pos=0.76, fill=white, inner sep=2pt] {\footnotesize $3$} (lr2);
       \draw (lr2) edge node[pos=0.2, fill=white, inner sep=2pt] {\footnotesize $2$}  node[pos=0.76, fill=white, inner sep=2pt] {\footnotesize $1$} (lr3);
        \draw (lr1) edge[bend left=35] node[pos=0.2, fill=white, inner sep=2pt] {\footnotesize $2$}  node[pos=0.76, fill=white, inner sep=2pt] {\footnotesize $1$} (lr3);
       \draw (lr2) edge[bend right = 60] node[pos=0.2, fill=white, inner sep=2pt] {\footnotesize $2$}  node[pos=0.76, fill=white, inner sep=2pt] {\footnotesize $1$} (m3);

        \draw (m3) edge node[pos=0.2, fill=white, inner sep=2pt] {\footnotesize $1$}  node[pos=0.76, fill=white, inner sep=2pt] {\footnotesize $1$} (r2);

       \node[] (lr) at (2.5, 0.5) {\LARGE{$r$}};
       \node (lm) at ($(-0.5, -3) + (shift)$) {\LARGE{$t_1$}};
       \node (lll) at ($(-4.5, -5.5) + 2*(shift)$) {\LARGE{$t_2$}};
       \node (llm) at ($(0, -5.5) + 2*(shift)$) {\LARGE{$t_3$}};
       \node (llr) at ($(6.7, -5.5) + 2*(shift)$) {\LARGE{$t_4$}};
     \end{tikzpicture}
\end{center}
   \caption{\label{fig:tcw}%
   An example of a graph $G$ (upper part) and its nice tree-cut decomposition~$(T, \mathcal{X})$ (not of minimal width) (lower part).
   The vertices of $G$ are the circles, while the nodes of $T$ are the rectangles.
   For a node $t\in V(T)$, bag $X_t$ contains exactly the vertices inside the rectangle.
   In the lower picture, the thick edges are the edges of $T$, while the thin edges are from $G$.
   The root of this tree-cut decomposition is $r$.
   We have $\adh (r) = 0$, $\adh (t_1) =1$, $\adh (t_2) = 2$, $\adh (t_3) = 3$, and $\adh (t_4) = 2$.
   Furthermore, we have $\tor (r) = 2$, $\tor (t_1) = 6$, $\tor (t_2) = 2$, $\tor (t_3) = 2$, and $\tor (t_4) = 3$.
   Therefore, the width of this tree-cut decomposition is $6$.
   The nodes~$t_1$ and $t_4$ are light, while $t_2$ (because there is an edge connecting a vertex in $t_2$ to a vertex in~$r$) and $t_3$ (because $\adh (t_3) =3$) are heavy.
   }
 \end{figure}
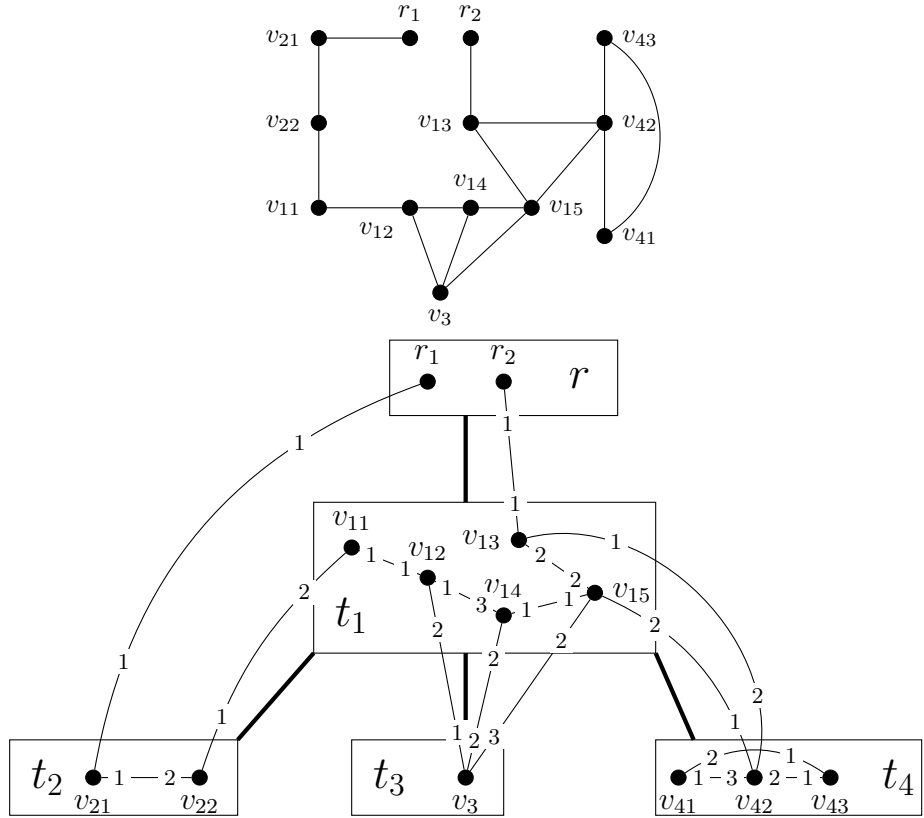
 
   Given a tree node~$t$, let $T_t$~be the subtree of~$T$ rooted at~$t$.
For a node~$t\in V(T)$, we denote by $Y_t$ the set of vertices induced by $T_t$, i.e., $Y_t:= \bigcup_{v\in V(T_t)} X_v$.
The graph induced by these vertices is denoted by $G_t:= G[Y_t]$.

  For an edge $e= \{u, v\}\in E(T)$, we denote by $T_u^{\{u, v\}}$ and $T_v^{\{u, v\}}$ the two connected components in $T-e$ which contain $u$ respectively $v$.
  These define a partition \[(\bigcup_{t\in T_u^{\{u, v\}}} X_t, \bigcup_{t\in T_v^{\{u, v\}}} X_t)\] of $V(G)$.
  We denote by $\cut (e)\subseteq E(G)$ the set of edges of $G$ with one endpoint in $\bigcup_{t\in T_u^{\{u, v\}}} X_t$ and the other one in $\bigcup_{t\in T_v^{\{u, v\}}} X_t$.

  A tree-cut decomposition is called \emph{rooted} if one of its nodes is called the root $r$. For any node $t\in V(T)\setminus\{r\}$, we denote by $e(t)$ the unique edge incident to $t$ on the $r$-$t$-path in $T$. The \emph{adhesion} $\adh_T (t)$ is defined as $|\cut (e(t))|$ for each $t\neq r$, and $\adh_T (r):= 0$.

  The \emph{torso of a tree-cut decomposition $(T, \mathcal{X})$ at a node $t$}, denoted by $H_t$, can be constructed from $G$ as follows:
  If $T$ consists of a single node, then the torso of $t\in V(T)$ is $G$.
  Else let $C_{t}^1, \dots, C_{t}^\ell$ be the connected components of $T-t$.
  Let $Z_i:=\bigcup_{v\in V(C_{t}^i)} X_v$.
  Then, the torso arises from $G$ by contracting each $Z_i\subseteq V(G)$ for $1\le i\le \ell$.

  The operation of \emph{suppressing a vertex} $v$ of degree at most two consists of deleting $v$ and, if $v$ has degree exactly two, then adding an edge between the two neighbors of $v$.
  The torso-size $\textnormal{tor} (t)$ is defined as the number of vertices of the graph arising from the torso $H_t$ by exhaustively suppressing all vertices of degree at most two.

  The \emph{width of a tree-cut decomposition} $(T, \mathcal{X})$ is defined as the maximum adhesion or torso-size of a node, i.e., $\max_{t\in V(T)} \{\adh (t), \textnormal{tor} (t)\}$.
  The \emph{tree-cut width} $\tcw (G)$ of a graph $G$ is the minimum width of a tree-cut decomposition of $G$.

  \subparagraph{Nice Tree-cut Decompositions}
  Similarly to nice tree decompositions~\cite{Kloks94}, each tree-cut decomposition can be transformed into a nice tree-cut decomposition.
  Nice tree-cut decompositions have additional properties which help simplifying algorithm design.
  Besides the definition of nice tree-cut decompositions, in the following we provide some of its properties.\footnote{The properties used here are stated (without a proof) by Ganian et al.~\cite{Ganian2015}; a proof is given by Ganian et al.~\cite{GanianKO21}.}
  
  \begin{definition}[\cite{Ganian2015}]
    Let $(T, \mathcal{X})$ be a tree-cut decomposition.
    A node $t\in V(T)$ is called \emph{light} if $\adh (t) \le 2$ and all outgoing edges from $Y_t$ end in $X_p$, where $p$ is the parent of $t$, and \emph{heavy} otherwise.
  \end{definition}

   \begin{theorem}[{\cite[Theorem~2]{Ganian2015}}]
     Let $G$ be a graph with $\tcw( G) = k$. Given a tree-cut decomposition of $G$ of width $k$, one can compute a nice tree-cut decomposition $(T, \mathcal{X})$ of $G$ of width $k$ with at most $2|V(G)|$ nodes in cubic time.
   \end{theorem}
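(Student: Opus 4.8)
The plan is to start from the given width-$k$ tree-cut decomposition $(T,\mathcal{X})$, root it arbitrarily, and then massage it into a nice one through a sequence of local modifications, none of which increases either the adhesion of any node or any torso-size, so that the width stays exactly $k$ throughout. I would separate the work into two goals: (i) enforcing the structural requirement that every node of adhesion at most two be \emph{light}, and (ii) bounding the number of nodes by $2|V(G)|$. Since the bags $\mathcal{X}$ form a near-partition of $V(G)$, at most $n=|V(G)|$ of them are nonempty, so all of the difficulty in (ii) comes from nodes with empty bags, and all of the difficulty in (i) comes from nodes $t$ with $\adh(t)\le 2$ that still have an edge leaving $Y_t$ to a vertex outside $X_p$, where $p$ denotes the parent of $t$.

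For (ii) I would repeatedly apply two clean-up operations. First, delete every leaf whose bag is empty: such a leaf contributes nothing to the near-partition and its removal changes no cut, hence preserves the width. Second, eliminate every empty node $t$ of degree two in $T$ by contracting one of its two incident tree-edges; because $X_t=\emptyset$, the cut along the resulting merged tree-edge equals one of the two original cuts, so no adhesion grows, and no bag is enlarged, so no torso-size grows. After applying these exhaustively, every empty-bag node has degree at least three. In a tree the number of nodes of degree at least three is strictly smaller than the number of leaves, and every leaf now carries a nonempty bag, so there are at most $n$ leaves and hence fewer than $n$ empty branching nodes; together with the at most $n$ nonempty bags this yields fewer than $2n$ nodes.

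For (i) the central observation is that a subtree $T_t$ with $\adh(t)\le 2$ can be reattached anywhere in $T$ without affecting the width. Its adhesion is an intrinsic property of the cut $\cut(e(t))$ and does not change when $T_t$ is moved, while $Y_t$ sends at most two edges into the rest of $G$; therefore, wherever $T_t$ is attached, it enters any torso only through a vertex incident to at most two edges, which is suppressed and so never counts towards a torso-size. Using this, for each node $t$ of adhesion at most two that is not light I would reattach $T_t$ so that its (at most two) external neighbours land in the parent bag: if the escaping edges all reach a single bag $X_s$, make $t$ a child of $s$; and if $\adh(t)=2$ with the two escaping edges reaching two different bags, first route $T_t$ to the node where the two tree-paths to those bags meet and split as needed. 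Termination follows from a potential argument, for instance the sum, over non-light nodes of adhesion at most two, of the tree-distance from $t$ to the bags containing its external neighbours, which strictly decreases with each relocation.

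Finally, all cuts and adhesions can be (re)computed in $O(nm)$ time, there are $O(n)$ clean-up and relocation steps, and each step touches $O(n)$ tree-nodes, which altogether stays within the claimed cubic bound; a last pass of the clean-up of (ii) restores the node count after the relocations of (i), and one checks that contracting an empty degree-two node keeps every remaining node of adhesion at most two light. I expect the main obstacle to be the careful verification in the two hard cases of (i)---in particular the $\adh(t)=2$ case with two distinct target bags, where one must confirm that the rerouting genuinely makes $t$ light without turning some ancestor heavy, and that no torso-size silently increases---rather than the counting or the timing, which are routine once the relocation-preserves-width principle is in place.
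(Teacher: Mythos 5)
First, a framing remark: the paper does not prove this statement at all --- it is quoted verbatim from Ganian et al.~\cite[Theorem~2]{Ganian2015}, so the only comparison available is with the proof in that reference. Your skeleton (prune empty bags to get the $2|V(G)|$ bound; reroute adhesion-$\le 2$ subtrees towards their neighbours; potential argument for termination) does mirror that proof's strategy, and your part~(ii) is essentially correct. The gaps are in part~(i).

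The central claim of part~(i), that ``a subtree $T_t$ with $\adh(t)\le 2$ can be reattached anywhere in $T$ without affecting the width,'' is false. Reattaching $T_t$ changes $Y_u$ for every node $u$ on the tree path between the old and the new attachment point, and $\adh(u)=|\cut(e(u))|$ can \emph{increase} by up to $\adh(t)$. Concretely: let $b$ have a single neighbour $a$, let the root bag be $\{a\}$, let $u$ be a child of the root whose bag $K$ sends $k$ edges to $a$, and let $t=\{b\}$ be a child of the root (light). Moving $T_t$ below $u$ raises $\adh(u)$ from $k$ to $k+1$, so the width grows. The lemma you need is directional --- moves \emph{towards} the external neighbours decrease the adhesions along the path --- and even then the torso argument is missing: your one sentence only accounts for the contracted vertex of $T_t$ itself, not for the torsos of the intermediate nodes, whose contracted vertices gain or lose the edges of $Y_t$; there you additionally need that exhaustive suppression cannot produce a larger torso-size after edges are removed, a monotonicity statement you never identify, let alone prove. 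Moreover, your termination argument breaks for the same reason the ``anywhere'' claim does: each relocation lowers adhesions along a path, which can turn formerly heavy nodes into new adhesion-$\le 2$ non-light nodes, so your potential is not monotone as claimed.

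Finally, your formulation of goal~(i) is stronger than what ``nice'' actually means, and is unattainable by relocations alone. In \cite{Ganian2015} a decomposition is nice if every node $t$ with $\adh(t)\le 2$ has no neighbour of $Y_t$ in a \emph{sibling's} subtree, i.e.\ $N(Y_t)\subseteq X_p\cup(V(G)\setminus Y_p)$; it is \emph{not} required that every such node be light in the sense of this paper (that requirement is only used to bound the number of non-light children by $2k+1$). Under your stronger reading, a node $t$ with $\adh(t)=2$ whose two escaping edges end at vertices lying in two \emph{different} bags can never be made light: no choice of parent bag contains both endpoints, and relocations never re-partition vertices among bags. Your proposed fix for exactly this case (``route $T_t$ to the node where the two tree-paths meet and split as needed'') therefore cannot work as stated --- the meeting node's bag contains neither neighbour --- whereas under the correct (weaker) niceness condition this case is resolved simply by hanging $T_t$ below the subtree containing one of the neighbours, which is what the cited proof does.
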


  \begin{lemma}[{\cite[Lemma~2]{Ganian2015}}]\label{laatcw}
    Each node $t$ in a nice tree-cut decomposition of width $k$ has at most $2k+1$ heavy children.
  \end{lemma}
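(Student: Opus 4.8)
The plan is to analyze the torso $H_t$ at the node $t$ together with the cut $\cut(e(t))$. First I would fix a node $t$ with children $c_1,\dots,c_s$ and recall the structure of $H_t$: contracting the connected components of $T-t$ collapses the parent-side component to a single vertex $z_0$ and each child subtree $T_{c_i}$ to a single vertex $z_i$, while the bag $X_t$ survives vertex by vertex. In this (multi)graph one has $\deg_{H_t}(z_i)=|\cut(e(c_i))|=\adh(c_i)$, since every $G$-edge leaving $Y_{c_i}$ becomes an edge at $z_i$. Moreover the edges leaving $Y_t$ upwards are exactly $\cut(e(t))$, so there are $\adh(t)\le k$ of them, and by assumption $\tor(t)\le k$, i.e.\ the graph obtained from $H_t$ by exhaustively suppressing degree-$\le 2$ vertices has at most $k$ vertices.

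Next I would split the heavy children of $t$ according to adhesion. For a heavy child $c_i$ with $\adh(c_i)\ge 3$, the vertex $z_i$ has degree at least three in $H_t$. Since suppressing a vertex of degree at most two never decreases the degree of any other vertex (it merely replaces two incident edges by one), such a $z_i$ can never become suppressible, hence it survives into the suppressed torso and is counted by $\tor(t)$. As the $z_i$ are pairwise distinct, this group contains at most $\tor(t)\le k$ children.

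The remaining heavy children are those with $\adh(c_i)\le 2$. Here I would combine the definition of \emph{heavy} with the defining property of \emph{nice} tree-cut decompositions that the subtrees below two distinct children of $t$ are non-adjacent in $G$ (equivalently, every edge leaving $Y_{c_i}$ lands either in $X_t$ or in the part of $G$ lying above $t$). Being heavy with $\adh(c_i)\le 2$ means that some edge leaves $Y_{c_i}$ but not into $X_t$; by niceness this edge cannot reach a sibling, so it must go to the parent side, i.e.\ it belongs to $\cut(e(t))$. Assigning to each such child one of its edges in $\cut(e(t))$ yields an injection, because the sets $Y_{c_i}$ are disjoint and hence distinct children select distinct edges. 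Therefore this group has at most $\adh(t)\le k$ children. Summing the two groups gives at most $2k$, and in particular at most $2k+1$, heavy children.

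The main obstacle is precisely this second group, and the point I want to stress is that \emph{without niceness the statement is false}: if sibling subtrees were allowed to be adjacent, one could let $G$ be a long cycle and split it into singleton bags hanging off one common node $t$, so that every child has adhesion two and is heavy, the torso suppresses to a constant-size graph, yet the number of heavy children grows with $|V(G)|$. Thus the crux is to invoke the niceness property at exactly the right place, forcing every low-adhesion heavy child to send an edge across $\cut(e(t))$, after which the injection into a cut of size at most $k$ closes the argument. A secondary technical point that needs care is the multigraph nature of $H_t$ and the bookkeeping that suppression preserves all vertices of degree at least three, which is what underlies the bound for the high-adhesion group.
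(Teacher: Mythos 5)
The paper never actually proves this lemma---it is quoted verbatim from Ganian et al.\ and used as a black box---so your attempt has to be judged against the source's argument. Your overall skeleton (split the heavy children into those of adhesion at least $3$ and the thin ones, and bound the two groups by $\tor(t)$ and $\adh(t)$ respectively) is the right one, and your second group is handled correctly: niceness forces every heavy child of adhesion at most $2$ to send an edge into $\cut(e(t))$, giving at most $\adh(t)\le k$ such children. The gap is in the first group. Your key claim that ``suppressing a vertex of degree at most two never decreases the degree of any other vertex'' is false: suppressing a vertex of degree \emph{one} means deleting it, and its unique neighbour loses a degree. Because such degree-one deletions cascade, a contracted child vertex $z_i$ of degree $\adh(c_i)\ge 3$ need not survive. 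Concretely, take $G=K_{1,3}$, put the three leaves into $X_t$ and the centre into a child bag: the child has adhesion $3$, yet under the suppression rule exactly as stated in this paper's preliminaries the torso collapses to the empty graph, so $\tor(t)=0$; taking many disjoint copies yields (vacuously nice) decompositions of width $3$ with arbitrarily many heavy children. So with this paper's literal definition of torso-size your Group-1 bound---and indeed the lemma itself---cannot be proved.

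What is missing is a detail that this paper's preliminaries silently drop: in Wollan's and Ganian et al.'s definition, $\tor(t)$ is the number of vertices of the $3$-center of the \emph{pair} $(H_t,X_t)$, i.e.\ only vertices \emph{outside} the bag $X_t$ may ever be suppressed. With bag vertices protected, a vertex $z_i$ with $\adh(c_i)\ge3$ can lose degree only when a non-bag neighbour of degree at most one is deleted; niceness must then be invoked a \emph{second} time (it forbids thin siblings from being adjacent to $Y_{c_i}$---note that niceness constrains only thin nodes, not all siblings as you assumed), so the only possible culprit is the parent-side contracted vertex $z_0$, whose eventual deletion removes at most one edge. Hence at most one adhesion-$\ge3$ child can be suppressed away, giving $|\{c:\adh(c)\ge3\}|\le\tor(t)+1\le k+1$, and this ``$+1$'' is exactly where the $2k+1$ in the statement comes from. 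Your stronger bound of $k$ for this group is not merely unproved but false: with $\adh(t)=1$, $X_t=\emptyset$, bold children $z_1,\dots,z_{k+1}$ where $z_0$ is attached to $z_1$ and $z_1$ to $z_2,z_3$, the deletion of $z_0$ followed by the suppression of $z_1$ leaves a $3$-center on $k$ vertices, so the decomposition has width $k$ but $k+1$ heavy children of adhesion $3$. So the first group needs (i) the protected-bag definition of the $3$-center, (ii) a second application of niceness, and (iii) the additive $1$; as written, your argument for it fails.
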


  \looseness -1
  In what follows, we will assume that a nice tree-cut decomposition of the input graph is given.
  Computing the tree-cut width of a graph is NP-hard, but there exists an algorithm that, given a graph $G$ and an integer $k$, either finds a tree-cut decomposition of width at most $2k$ or decides that $\tcw (G) > k$ in time $2^{O(k^2\log k)} n^2$ \cite{KimOPST18}.
  Furthermore, Giannopoulou et al.~\cite{GiannopoulouKRT19} gave a constructive proof of the existence of an algorithm deciding whether the tree-cut width of a given graph $G$ is at most $k$ in $f(k) n$ time, where $f$ is a computable recursive function.
  Very recently, Ganian et al.~\cite{Ganian-ALENEX2019} performed experiments on computing optimal tree-cut decompositions using SAT-solvers.

  To get an intuition about tree-cut width, we start with simple observations about the tree-cut width.

\begin{lemma}\label{ltcwt}
  Let $T$ be a forest. Then $\tcw (T) = 1$.
\end{lemma}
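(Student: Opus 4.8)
The plan is to prove the upper bound $\tcw(T)\le 1$ by exhibiting a single tree-cut decomposition of width~$1$; the reverse inequality $\tcw(T)\ge 1$ for a forest with at least one edge is standard, so the substantive work lies in the construction. First I would impose a rooted structure on the forest: root each connected component $T_1,\dots,T_c$ of $T$ at an arbitrary vertex $r_i$, and build a \emph{decomposition tree} $T'$ on node set $V(T)$ by keeping all edges of $T$ and additionally linking the component roots into a path $r_1 r_2\cdots r_c$, with $T'$ rooted at $r_1$. I would take singleton bags $X_v:=\{v\}$, so that $\mathcal{X}=\{X_v : v\in V(T)\}$ is a near-partition of $V(T)$ and, for every node $v$, the set $Y_v$ is exactly the set of descendants of $v$ under this rooting.

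Next I would bound the adhesions. For a vertex $v$ that is not a component root, its $T'$-subtree $Y_v$ coincides with the subtree of $v$ inside its own component $T_i$; since $T$ is acyclic, the only edge of $T$ leaving $Y_v$ is the edge joining $v$ to its parent, so $\cut(e(v))$ is this single edge and $\adh(v)=1$. For a component root $r_i$ the set $Y_{r_i}$ is a union of entire components of $T$, hence no edge of $T$ crosses the corresponding cut and $\adh(r_i)=0$. Thus every adhesion is at most~$1$; this is precisely where I use that $T$ is acyclic and has no edges between distinct components.

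The crucial step is bounding the torso-size, and here I would argue structurally rather than by cases. At any node $t$ the torso $H_t$ is obtained from $T$ by contracting the pairwise disjoint, connected sets $Z_1,\dots,Z_\ell$ coming from the components of $T'-t$; hence $H_t$ is a minor of the forest $T$ and is therefore itself a forest. No multi-edge can appear, because in a forest a single vertex has at most one edge into any connected set (a second such edge would close a cycle), so the torso stays simple. Since exhaustively suppressing all vertices of degree at most two empties any forest (one can always delete a leaf), the torso-size $\tor(t)$ equals~$0$ for every $t$. Combining the three bounds, the decomposition has width $\max_{t}\{\adh(t),\tor(t)\}=1$, giving $\tcw(T)\le 1$ and, with the lower bound, $\tcw(T)=1$. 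I expect the only delicate point to be this torso argument; everything else follows directly from acyclicity, so the write-up should hinge on the observation that a minor of a forest is a forest and therefore suppresses away completely.
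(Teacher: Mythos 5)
Your construction is genuinely different from the paper's. The paper disposes of disconnectedness in one line---$\tcw(T)\le\tcw(T+F)$ for any edge set $F$, so one may assume $T$ is a tree---and then takes the tree \emph{itself} as the decomposition tree with singleton bags: every non-root node has adhesion $1$, and the torso analysis is immediate because every component of $T-t$ is a connected subtree. You instead keep the forest and link its component roots by a path inside the decomposition tree $T'$. That buys you a self-contained proof that does not rely on the (easy but unproven) monotonicity of $\tcw$ under edge addition; the price is that $T'$ now contains edges that are not edges of $T$, and this is exactly where your argument has a hole.

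The hole is in the step you yourself call the hinge. You assert that the sets $Z_1,\dots,Z_\ell$ coming from the components of $T'-t$ are \emph{connected in $T$}, and conclude that $H_t$ is a minor of $T$. For a disconnected forest this is false: take $T$ to be the two disjoint edges $\{a,b\}$ and $\{c,d\}$, root at $a$ and $c$, and let $t=b$; the unique component of $T'-b$ has vertex set $\{a,c,d\}$, which is connected in $T'$ via the added edge $\{a,c\}$ but not in $T$. Contraction of a \emph{disconnected} vertex set is not a minor operation and does not preserve forests in general (contract the two endpoints of a path on five vertices and you obtain a $4$-cycle), so ``a minor of a forest is a forest'' cannot be invoked as stated. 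Your conclusion survives, but for a reason you must make explicit: each $Z_i$ intersects each tree of the forest in a connected (possibly empty) set, every edge of $T$ lies inside a single tree, and identifying vertices that lie in \emph{distinct} trees of a forest never creates a cycle or a parallel edge; hence each torso is still a simple forest. (Alternatively, perform the paper's reduction to a tree first, after which your minor argument is literally correct.) A final small point: your claim $\tor(t)=0$ follows the paper's literal wording, but under the intended definition (Wollan, Ganian et al.) bag vertices are never suppressed, so with singleton bags $\tor(t)=1$; this does not affect your bound, since the adhesions already force width $1$, but it is the reading under which the stated equality $\tcw(T)=1$---rather than $0$---and your lower-bound remark make sense.
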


\begin{proof}
  As clearly $\tcw (T) \le \tcw (T + F)$ for any set of edges $F$, we may assume without loss of generality that $T$ is a tree.

  We define $X_t = \{t\}$ for all $t\in V(T)$, and consider the tree-cut decomposition $(T, \mathcal{X})$, and pick an arbitrary vertex $r$ to be the root of $T$.

  As $T$ is a tree, we have $\adh (t) = 1$ for all $t\neq r$.

  Furthermore, for each $t\in V(T)$, all vertices but $t$ contained in the torso of $t$ can be suppressed, and thus $\tor (t) \le 1$.
\end{proof}

\begin{lemma}\label{ltcwea}
  Let $G$ be a graph. Then $\tcw (G+e) \le \tcw (G) + 2$ for any edge $e$.
\end{lemma}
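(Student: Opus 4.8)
The plan is to keep the decomposition and only re-examine the two quantities that enter the width. Let $(T,\mathcal{X})$ be a tree-cut decomposition of $G$ of width $\tcw(G)$, and write $e=\{u,v\}$. Since $V(G+e)=V(G)$, the very same pair $(T,\mathcal{X})$ is a tree-cut decomposition of $G+e$; I would leave the tree $T$ and the near-partition $\mathcal{X}$ untouched and merely track how the adhesions $\adh(t)=|\cut(e(t))|$ and the torso-sizes $\tor(t)$ change once the single edge $e$ is present. It then suffices to show that every adhesion grows by at most $1$ and every torso-size by at most $2$: since for each node $\adh(t)+1\le\max\{\adh(t),\tor(t)\}+2$ and $\tor(t)+2\le\max\{\adh(t),\tor(t)\}+2$, taking the maximum over $V(T)$ gives width at most $\tcw(G)+2$ for $G+e$, hence $\tcw(G+e)\le\tcw(G)+2$.

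The adhesion bound is immediate: for a tree edge $e(t)$ the cut $\cut(e(t))$ can acquire at most the new edge $e$, and only when $u$ and $v$ lie in different components of $T-e(t)$, so $|\cut^{G+e}(e(t))|\le|\cut^{G}(e(t))|+1$ for every $t$. For the torso, recall that $H_t$ is obtained by contracting the components $Z_1,\dots,Z_\ell$ of $T-t$; contraction commutes with the insertion of $e$, so the torso of $G+e$ at $t$ is exactly the old torso $H_t$ with one extra edge $\bar{e}$ between the images of $u$ and $v$ (a loop if both land in the same part). Thus the whole claim reduces to the purely graph-theoretic statement that, for a multigraph $H$ and a single added edge $\bar e$, exhaustively suppressing the vertices of degree at most two leaves at most two more vertices than doing so in $H$.

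This suppression claim is the step I expect to be the main obstacle. I would prove it through the characterization that the suppressed graph of $H$ consists precisely of the vertices of degree at least three in the $2$-core of $H$ (peel degree-$\le1$ vertices to the $2$-core, then dissolve all degree-$2$ vertices, which in particular makes pure cycles vanish); so I must bound the number of such \emph{branch vertices} after adding $\bar e$. Inserting $\bar e=\{a,b\}$ either joins two components, creating no new cycle and hence no new branch vertex, or it creates a single new cycle through $\bar e$ and a path of $G$ from $a$ to $b$; this new cyclic structure attaches to the old $2$-core at no more than two \emph{ports} (the points where the path enters and leaves the old $2$-core), and only at these ports can a vertex be promoted to degree at least three, so the branch-vertex count increases by at most two. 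The delicate points I would have to handle with care are the dissolutions that relocate a newly created branch vertex from an endpoint of $\bar e$ to a neighbour, and the degenerate cases where $\bar e$ is a loop or where an entire cycle collapses to a single looped vertex; in each of these the increase is still at most two. (I would explicitly avoid the tempting ``protect the endpoints of $\bar e$ during suppression'' shortcut, since protecting two vertices can retain additional vertices through collapsing cycles and thereby overcounts.) Combining the adhesion bound, the torso bound, and the maximum over all nodes yields the stated inequality.
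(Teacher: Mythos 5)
Your high-level plan is exactly the paper's: keep the decomposition, note each adhesion grows by at most $1$, and show each torso-size grows by at most $2$. The reduction to the purely graph-theoretic claim (adding one edge to a multigraph raises the size of the exhaustively suppressed graph by at most $2$) and your characterization of the suppressed graph as the degree-$\ge 3$ vertices of the $2$-core are both correct. For what it is worth, the paper itself disposes of the torso step with precisely the shortcut you reject (``$e$ can prevent at most both of its endpoints from being suppressed''), and your instinct that this is imprecise is sound: the two new surviving vertices need not be endpoints of $e$ at all.

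However, your own case analysis contains a genuine error: the case ``$\bar e$ joins two components, creating no new cycle and hence no new branch vertex'' is false. Let $H$ consist of two disjoint triangles $T_1$, $T_2$, a pendant edge from $u\in T_1$ to $a$, and a pendant edge from $v\in T_2$ to $b$. Exhaustive suppression of $H$ leaves nothing (the pendants peel away, the pure triangles dissolve), so the suppressed size is $0$. Adding $\bar e=\{a,b\}$ joins two components and creates no cycle, yet now the whole graph has minimum degree $2$, so the $2$-core is everything; $u$ and $v$ have degree $3$, and the suppressed graph is $\{u,v\}$ (two looped vertices joined by an edge), of size $2$. The flaw is that the $2$-core is not merely a union of cycles: it also contains bridges joining cycles, and $\bar e$ can enter the new $2$-core as such a bridge, pulling previously peeled paths into the core and promoting their attachment vertices---neither of which is an endpoint of $\bar e$---to branch vertices. (The same example refutes the literal reading of the paper's one-liner.) The bound $+2$ does hold in this case too, and your ``two ports'' idea is the right mechanism, but it must be applied uniformly rather than via the dichotomy ``new cycle / no new cycle'': in all cases the edges of the new $2$-core that are neither in the old $2$-core nor equal to $\bar e$ form a forest (a cycle there would already lie in the old core), maximality of the old core forces every component of this forest to have a leaf in $\{a,b\}$, and a short analysis then shows the extra structure is at most two paths to the old core, a single $a$--$b$ path through at most one old-core vertex, or a tripod with legs to $a$, $b$, and the old core---so at most two vertices have their core-degree pushed to $3$ or more. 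With that repair your argument goes through; as written, the component-joining case would not survive refereeing.
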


\begin{proof}
  Consider a tree-cut decomposition $(T, \mathcal{X})$ of $G$. This is also a tree-cut decomposition of $G+e$.

  Clearly, the adhesion of any node of $T$ can increase by at most 1.

  The torso-size of a vertex can also increase by at most 2, as $e$ can prevent at most both of its endpoints from being suppressed.
\end{proof}

\begin{corollary}
  Let $G$ be a graph, and $k$ be its feedback edge number. Then $\tcw (G) \le 2k + 1$.
\end{corollary}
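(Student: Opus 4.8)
The plan is to combine the two preceding lemmas by induction on the number of feedback edges. First I would fix a minimum feedback edge set $F \subseteq E(G)$, so that $|F| = k$ and $G - F$ is a forest by definition of the feedback edge number. Since the forest $G - F$ has tree-cut width exactly $1$ by Lemma~\ref{ltcwt}, this provides the base of the induction.

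Next I would recover $G$ from $G - F$ by adding back the $k$ edges of $F$ one at a time. Writing $F = \{e_1, \dots, e_k\}$ and setting $G_0 := G - F$ and $G_i := G_{i-1} + e_i$, we have $G_k = G$. Applying Lemma~\ref{ltcwea} at each step gives $\tcw(G_i) \le \tcw(G_{i-1}) + 2$, so a trivial induction yields
\[
  \tcw(G) = \tcw(G_k) \le \tcw(G_0) + 2k = 1 + 2k,
\]
which is exactly the claimed bound.

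There is essentially no obstacle here: the statement is a direct chaining of Lemma~\ref{ltcwt} (forests have tree-cut width $1$) with $k$-fold application of Lemma~\ref{ltcwea} (adding one edge raises the tree-cut width by at most $2$). The only point worth stating explicitly is that adding back precisely the edges of $F$ reconstructs $G$, so that the telescoping inequality terminates at $\tcw(G)$; everything else is the routine arithmetic $1 + 2k = 2k + 1$.
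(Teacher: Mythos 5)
Your proof is correct and follows exactly the paper's intended argument: the paper's own proof states only that the corollary ``directly follows'' from Lemma~\ref{ltcwt} and Lemma~\ref{ltcwea}, and your write-up just makes the induction (adding back the $k$ feedback edges one at a time, each costing at most $2$ in tree-cut width, starting from the forest of width $1$) explicit.
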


\begin{proof}
  This directly follows from \Cref{ltcwt,ltcwea}.
\end{proof}

\section{W[1]-hardness of Perfect-SMTI and SRTI-Existence with Respect to Treedepth and Disjoint Paths Modulator Number}
\label{sec:MaxSRTIhardnessTDandFVS}\label{appaddhard}\label{ssmti}

All our hardness result are based on parameterized reductions from the problems \textsc{Clique} or \textsc{Multicolored Biclique}, two well-known W[1]-hard problems.
The so-called vertex selection gadgets are somewhat similar to those of Gupta et al.~\cite{GuptaSZ17}; however, the other gadgets in our reductions are different.
Next, we only discuss the main dissimilarities of the reductions we present here and the one of Gupta et al.~\cite{GuptaSZ17}.
We use one gadget for each edge whereas the reduction presented by Gupta et al.~\cite{GuptaSZ17} uses a single gadget for all edges between two color classes.
This subtle difference allows us to bound not only the treewidth of the resulting graph but rather both treedepth and the size of a disjoint paths modulator number.
Recall that both treedepth and disjoint paths modulator number upper-bound the treewidth of any graph.
It is worth noting that it is not clear whether the reduction of Gupta et al.~\cite{GuptaSZ17} can, with some additional changes and work, yield hardness for these parameters as well or not.
Furthermore, in our reduction all vertices have either strictly ordered preferences or a tie between (the only) two agents they find acceptable.

We reduce from \textsc{Multicolored Biclique}:

\defProblemTask{\textsc{Multicolored Biclique}}
{A bipartite graph $G$ with bipartition $V(G) = A \cupdot B$, a partition of $A$ into $k$ sets $A_1, \dots, A_k$, and a partition of $B$ into $k$ sets $B_1, \dots, B_k$.}
{Find a set $\{a_1, \dots, a_k\} \cup\{ b_1, \dots, b_k\}\subseteq V(G)$ such that for each $i\in [k]$ we have $a_i \in A_i$ and $b_i \in B_i$, and for each $i, j\in [k]$ we have $\{a_i, b_j\} \in E(G)$, or decide that no such set exists.}
We may assume without loss of generality that $|A_i| = |B_j | = n$ for all $i,j\in [k]$ for some~$n\in \mathbb{N}$, as \textsc{Multicolored Biclique} is W[1]-hard parameterized by solution size~$k$ even if $|A_i| = |B_j|$ for all $i,j\in [k]$.
Let $A_i = \{a_i^1, \dots, a_i^n\}$ and $B_j = \{b_j^1, \dots, b_j^n\}$.

The general idea of the reduction is as follows.
For each set $Z \in \{A_1, \dots, A_k, \allowbreak B_1, \dots, B_k\}$, we add a vertex-selection gadget, encoding which vertex of~$Z$ is part of the biclique.
For each edge from $G$, we add an edge gadget.
Furthermore, we add two incidence vertex $a_{i,j}$ and $b_{i,j}$ for each pair $(A_i, B_j)$.
A stable matching can match the two incidence vertices only if there exists an edge between the vertices selected by the vertex-selection gadgets for $A_i$ and $B_j$.

We will first describe the gadgets used in the reduction, starting with the vertex-selection gadget in \Cref{sec:fvs-vsg} and followed by the edge gadget and incidence vertices in \Cref{sEdgeGadget}.
Afterwards, we show how to connect the gadgets and we prove the correctness of the reduction in \Cref{sec:fvs-correctness}.

\subsection{Vertex-selection Gadget}
\label{sec:fvs-vsg}
A vertex-selection gadget for a set of vertices $Z_i = A_i$ or $Z_i =B_i$ has four vertices $c^Z_i$, $\bar{c}^Z_i$, $d^Z_i$, and~$\bar{d}^Z_i$, and, additionally for each $z\in Z_i$, four vertices $s^z$, $\bar{s}^z$, $t^z$, and $\bar{t}^z$.
We denote a vertex-selection gadget for the vertices $Z_i$ by $S(Z_i)$.
The preferences of these vertices over the other vertices of the vertex-seleciton gadget look as follows (where $v : v_1 \succ v_2 \succ v_3 \succ \dots$ means that vertex~$v$ prefers $v_1 $ to $v_2$ to $v_3$ and so on).
\begin{align*}
  c^Z_i & : s^{z^1_i} \succ t^{z^1_i} \succ s^{z^2_i} \succ t^{z_i^2} \succ \dots \succ s^{z_i^n} \succ t^{z_i^n},\\
  \bar{c}^Z_i & : \bar{s}^{z_i^n} \succ \bar{s}^{z_i^{n-1}} \succ \bar{s}^{z_i^{n-2}} \succ \dots \succ \bar{s}^{z^1_i},\\
  d^Z_i & : {t}^{z_i^n} \succ {t}^{z_i^{n-1}} \succ {t}^{z_i^{n-2}} \succ \dots \succ {t}^{z^1_i},\\
  \bar{d}^Z_i & :\bar t^{z^1_i} \succ \bar s^{z^1_i} \succ \bar t^{z^2_i} \succ \bar s^{z_i^2} \succ \dots \succ \bar t^{z_i^n} \succ \bar s^{z_i^n},\\
  s^v &: \bar{s}^v \sim c^Z_i, \\
  t^v &: \bar{t}^v \succ c^Z_i \succ d^Z_i, \\
  \bar{s}^v &: s^v \succ \bar{d}^Z_i \succ \bar{c}^Z_i, \\
  \bar{t}^v &: t^v \sim \bar{d}^Z_i.
\end{align*}
See \cref{fvsgm} for an example.
Note that $c^Z_i$ and $d^Z_i$ have also neighbors outside the vertex-selection gadget; we will describe the preferences of $c^Z_i$ and $d^Z_i$ over these neighbors later.

\begin{figure}
    \begin{center}
      \begin{tikzpicture}[xscale = 2.2]

        \node[squaredvertex, label=90:$c^Z_i$] (ci) at (-2, 1) {};
        \node[vertex, label=315:$\bar{c}^Z_i$] (cih) at (3, 1) {};
        \node[vertex, label=90:$s^{z_i^1}$] (a1) at (0, 3) {};
        \node[vertex, label=90:$\bar{s}^{z_i^1}$] (a2) at (1, 3) {};
        \node[vertex, label=270:$s^{z_i^2}$] (b1) at (0, 2) {};
        \node[vertex, label=270:$\bar{s}^{z_i^2}$] (b2) at (1, 2) {};
        \node[vertex, label=90:$s^{z_i^{n-1}}$] (c1) at (0, 0) {};
        \node[vertex, label=90:$\bar{s}^{z_i^{n-1}}$] (c2) at (1, 0) {};
        \node[vertex, label=270:$s^{z_i^n}$] (d1) at (0, -1) {};
        \node[vertex, label=265:$\bar{s}^{z_i^n}$] (d2) at (1, -1) {};

        \draw (ci) edge node[pos=0.2, fill=white, inner sep=1pt] {\scriptsize $1$}  node[pos=0.76, fill=white, inner sep=1pt] {\scriptsize $1$} (a1);
        \draw (a1) edge node[pos=0.2, fill=white, inner sep=1pt] {\scriptsize $1$}  node[pos=0.76, fill=white, inner sep=1pt] {\scriptsize $1$} (a2);
        \draw (a2) edge node[pos=0.2, fill=white, inner sep=1pt] {\scriptsize $3$}  node[pos=0.65, fill=white, inner sep=1pt] {\scriptsize $n$} (cih);

        \draw (ci) edge node[pos=0.2, fill=white, inner sep=1pt] {\scriptsize $3$}  node[pos=0.76, fill=white, inner sep=1pt] {\scriptsize $1$} (b1);
        \draw (b1) edge node[pos=0.2, fill=white, inner sep=1pt] {\scriptsize $1$}  node[pos=0.76, fill=white, inner sep=1pt] {\scriptsize $1$} (b2);
        \draw (b2) edge node[pos=0.2, fill=white, inner sep=1pt] {\scriptsize $3$}  node[pos=0.6, fill=white, inner sep=1pt] {\scriptsize $n-1$} (cih);

        \draw (ci) edge node[pos=0.4, fill=white, inner sep=1pt] {\scriptsize $2n - 3$}  node[pos=0.76, fill=white, inner sep=1pt] {\scriptsize $1$} (c1);
        \draw (c1) edge node[pos=0.2, fill=white, inner sep=1pt] {\scriptsize $1$}  node[pos=0.76, fill=white, inner sep=1pt] {\scriptsize $1$} (c2);
        \draw (c2) edge node[pos=0.2, fill=white, inner sep=1pt] {\scriptsize $3$}  node[pos=0.76, fill=white, inner sep=1pt] {\scriptsize $2$} (cih);

        \draw (ci) edge node[pos=0.35, fill=white, inner sep=1pt] {\scriptsize $2n - 1$}  node[pos=0.76, fill=white, inner sep=1pt] {\scriptsize $1$} (d1);
        \draw (d1) edge node[pos=0.2, fill=white, inner sep=1pt] {\scriptsize $1$}  node[pos=0.76, fill=white, inner sep=1pt] {\scriptsize $1$} (d2);
        \draw (d2) edge node[pos=0.2, fill=white, inner sep=1pt] {\scriptsize $3$}  node[pos=0.76, fill=white, inner sep=1pt] {\scriptsize $1$} (cih);

        \node[draw, circle, fill, inner sep = 0.5] (dot1) at (0.5, 1.2) {};
        \node[draw, circle, fill, inner sep = 0.5] (dot2) at (0.5, 1.0) {};
        \node[draw, circle, fill, inner sep = 0.5] (dot3) at (0.5, 0.8) {};

        \begin{scope}[on background layer]
          \newcommand{\colorBetweenTwoNodes}[3]{
            \fill[#1] ($(#2) + (0, .1)$) to ($(#2) - (0, .1)$) to ($(#3) - (0,.1)$) to ($(#3) + (0,.1)$) -- cycle;
          }
          \colorBetweenTwoNodes{mygreen}{ci}{a1}
          \colorBetweenTwoNodes{mygreen}{cih}{a2}
          \colorBetweenTwoNodes{mygreen}{b1}{b2}
          \colorBetweenTwoNodes{mygreen}{c1}{c2}
          \colorBetweenTwoNodes{mygreen}{d1}{d2}
        \end{scope}

        \def\shift{8}

        \node[squaredvertex, label=90:$d^Z_i$] (cib) at ($(ci) - (0, \shift)$) {};
        \node[vertex, label=315:$\bar{d}^Z_i$] (cihb) at ($(cih) - (0, \shift)$) {};
        \node[vertex, label=45:$t^{z_i^n}$] (a1b) at ($(a1) - (0, \shift)$) {};
        \node[vertex, label=90:$\bar{t}^{z_i^n}$] (a2b) at ($(a2) - (0, \shift)$) {};
        \node[vertex, label=270:$t^{z_i^{n-1}}$] (b1b) at ($(b1) - (0, \shift)$) {};
        \node[vertex, label=270:$\bar{t}^{z_i^{n-1}}$] (b2b) at ($(b2) - (0, \shift)$) {};
        \node[vertex, label=90:$t^{z_i^{2}}$] (c1b) at ($(c1) - (0, \shift)$) {};
        \node[vertex, label=90:$\bar{t}^{z_i^{2}}$] (c2b) at ($(c2) - (0, \shift)$) {};
        \node[vertex, label=270:$t^{z_i^1}$] (d1b) at ($(d1) - (0, \shift)$) {};
        \node[vertex, label=270:$\bar{t}^{z_i^1}$] (d2b) at ($(d2) - (0, \shift)$) {};

        \draw (cib) edge node[pos=0.2, fill=white, inner sep=1pt] {\scriptsize $1$}  node[pos=0.76, fill=white, inner sep=1pt] {\scriptsize $3$} (a1b);
        \draw (a1b) edge node[pos=0.2, fill=white, inner sep=1pt] {\scriptsize $1$}  node[pos=0.76, fill=white, inner sep=1pt] {\scriptsize $1$} (a2b);
        \draw (a2b) edge node[pos=0.2, fill=white, inner sep=1pt] {\scriptsize $1$}  node[pos=0.6, fill=white, inner sep=1pt] {\scriptsize $2n-1$} (cihb);

        \draw (cib) edge node[pos=0.2, fill=white, inner sep=1pt] {\scriptsize $2$}  node[pos=0.76, fill=white, inner sep=1pt] {\scriptsize $3$} (b1b);
        \draw (b1b) edge node[pos=0.2, fill=white, inner sep=1pt] {\scriptsize $1$}  node[pos=0.76, fill=white, inner sep=1pt] {\scriptsize $1$} (b2b);
        \draw (b2b) edge node[pos=0.2, fill=white, inner sep=1pt] {\scriptsize $1$}  node[pos=0.55, fill=white, inner sep=1pt] {\scriptsize $2n-3$} (cihb);

        \draw (cib) edge node[pos=0.4, fill=white, inner sep=1pt] {\scriptsize $n - 1$}  node[pos=0.76, fill=white, inner sep=1pt] {\scriptsize $3$} (c1b);
        \draw (c1b) edge node[pos=0.2, fill=white, inner sep=1pt] {\scriptsize $1$}  node[pos=0.76, fill=white, inner sep=1pt] {\scriptsize $1$} (c2b);
        \draw (c2b) edge node[pos=0.2, fill=white, inner sep=1pt] {\scriptsize $1$}  node[pos=0.76, fill=white, inner sep=1pt] {\scriptsize $3$} (cihb);

        \draw (cib) edge node[pos=0.35, fill=white, inner sep=1pt] {\scriptsize $n$}  node[pos=0.76, fill=white, inner sep=1pt] {\scriptsize $3$} (d1b);
        \draw (d1b) edge node[pos=0.2, fill=white, inner sep=1pt] {\scriptsize $1$}  node[pos=0.76, fill=white, inner sep=1pt] {\scriptsize $1$} (d2b);
        \draw (d2b) edge node[pos=0.2, fill=white, inner sep=1pt] {\scriptsize $1$}  node[pos=0.76, fill=white, inner sep=1pt] {\scriptsize $1$} (cihb);

        \node[draw, circle, fill, inner sep = 0.5] (dot1b) at ($(dot1) - (0, \shift)$) {};
        \node[draw, circle, fill, inner sep = 0.5] (dot2b) at ($(dot2) - (0, \shift)$) {};
        \node[draw, circle, fill, inner sep = 0.5] (dot3b) at ($(dot3) - (0, \shift)$) {};

        \begin{scope}[on background layer]
          \newcommand{\colorBetweenTwoNodes}[3]{
            \fill[#1] ($(#2) + (0, .1)$) to ($(#2) - (0, .1)$) to ($(#3) - (0,.1)$) to ($(#3) + (0,.1)$) -- cycle;
          }
          \colorBetweenTwoNodes{mygreen}{cib}{d1b}
          \colorBetweenTwoNodes{mygreen}{cihb}{d2b}
          \colorBetweenTwoNodes{mygreen}{b1b}{b2b}
          \colorBetweenTwoNodes{mygreen}{c1b}{c2b}
          \colorBetweenTwoNodes{mygreen}{a1b}{a2b}
        \end{scope}

        \node[squaredvertex] (xi) at (ci) {};
        \node[squaredvertex] (xij) at (cib) {};

        \draw (xi) edge[bend right = 0] node[pos=0.2, fill=white, inner sep=2pt] {\scriptsize $2n$}  node[pos=0.95, fill=white, inner sep=2pt] {\scriptsize $2$} (a1b);
        \draw (xi) edge[bend right = 10] node[pos=0.2, fill=white, inner sep=2pt] {\scriptsize $2n-2$}  node[pos=0.95, fill=white, inner sep=2pt] {\scriptsize $2$} (b1b);
        \draw (xi) edge[bend right = 13] node[pos=0.2, fill=white, inner sep=2pt] {\scriptsize $4$}  node[pos=0.95, fill=white, inner sep=2pt] {\scriptsize $2$} (c1b);
        \draw (xi) edge[bend right = 15] node[pos=0.2, fill=white, inner sep=2pt] {\scriptsize $2$}  node[pos=0.95, fill=white, inner sep=2pt] {\scriptsize $2$} (d1b);

        \draw (cihb) edge[bend right = 15] node[pos=0.25, fill=white, inner sep=2pt] {\scriptsize $2$}  node[pos=0.95, fill=white, inner sep=2pt] {\scriptsize $2$} (a2);
        \draw (cihb) edge[bend right = 13] node[pos=0.25, fill=white, inner sep=2pt] {\scriptsize $4$}  node[pos=0.95, fill=white, inner sep=2pt] {\scriptsize $2$} (b2);
        \draw (cihb) edge[bend right = 10] node[pos=0.25, fill=white, inner sep=2pt] {\scriptsize $2n-2$}  node[pos=0.95, fill=white, inner sep=2pt] {\scriptsize $2$} (c2);
        \draw (cihb) edge[bend right = 0] node[pos=0.2, fill=white, inner sep=2pt] {\scriptsize $2n$}  node[pos=0.9, fill=white, inner sep=2pt] {\scriptsize $2$} (d2);

        \draw (ci) edge[dashed] ($(ci) + (-0.5, 0)$);
        \draw (ci) edge[dashed] ($(ci) + (-0.5, 1)$);
        \draw (ci) edge[dashed] ($(ci) + (-0.5, -1)$);

        \draw (cib) edge[dashed] ($(cib) + (-0.5, 1)$);
        \draw (cib) edge[dashed] ($(cib) + (-0.5, 0)$);
        \draw (cib) edge[dashed] ($(cib) + (-0.5, -1)$);
      \end{tikzpicture}

    \end{center}
    \caption{A vertex selection gadget for the vertex set $Z_i = \{z_i^1, \dots, z_i^n\}$ for some $Z\in \{A, B \}$ and $i \in [k]$.
    For an edge $\{v, w\}$, the number on this edge closer to $v$ indicates how $v$ ranks $w$.
    For example, vertex $\bar{c}^A_i$ likes $\bar{s}^{a_i^n}$ most, $\bar s^{a_i^{n-1}}$ second-most, vertex $\bar s^{a_i^2}$ is at the $n-1$st position in the preferences of $\bar c^A_i$, and $\bar s^{a_i^1}$ is at the $n$th position in the preferences of $\bar c^A_i$.
    The red, squared vertices $c^A_i$ and $d^A_i$ are the only vertices having neighbors not depicted in the picture.
    Dashed edges are not part of the vertex-selection gadget, but will be added later in the reduction.
    The green edges form a stable matching.}\label{fvsgm}

\end{figure}

\begin{observation}\label{operf}
  If a graph $H$ contains a vertex-selection gadget $S(Z_i)$, then any perfect matching $M$ matches $c^Z_i$ and $d^Z_i$ to vertices from the vertex-selection gadget.
\end{observation}

\begin{proof}
  Note that the vertex-selection gadget is bipartite with $U:=\{s^v, t^v : v\in Z_i\} \cup \{\bar{c}^Z_i, \bar{d}^Z_i\}$ and $W:= \{\bar{s}^v, \bar{t}^v : v\in Z_i\} \cup \{c^Z_i, d^Z_i\}$.
  Since $|U| = n + 2 = |W|$, the observation follows.
\end{proof}

Now we show that any perfect stable matching must match $c^Z_i$ and $d^Z_i$ to the vertices $s^v$ and $t^v$ for some $v\in Z_i$.

\begin{lemma}\label{lvsgm}
  Let $H$ be a graph containing a vertex selection gadget $S(Z_i)$ such that $c^Z_i$ and~$d^Z_i$ are the only vertices from $S(Z_i)$ with neighbors outside the vertex selection gadget.

  In any perfect stable matching $M$, the vertices $\bar{d}$ and $\bar{c}$ are matched to vertices $\bar{t}^v$ and $\bar{s}^v$, and $d$ and $c$ are matched to $t^v$ and $s^v$ for some $v\in [n]$.
\end{lemma}

\begin{proof}
  Since $M$ is perfect, we have $\{\bar{s}^v, \bar{c}^Z_i\} \in M$ for some vertex~$v\in Z_i$.
  As $M$ is perfect and the only neighbors of $s^v$ are $\bar{s}^v$ and $c^Z_i$, it follows that $\{s^v, c^Z_i\} \in M$.

  Similarly, by~\cref{operf}, vertex~$d^Z_i$ is matched to a vertex~$t^w$.
  As $M$ is perfect and the only neighbors of $\bar{t}^w$ are $t^w$ and $\bar{d}^Z_i$, it follows that $\{t^w, \bar{d}^Z_i\} \in M$.

  Let $v= z_i^\ell$, and $w = z_i^p$.
  It remains to show $\ell = p$.
  If $\ell < p$, then edge~$\{\bar{d}^Z_i, \bar{s}^v\}$ blocks~$M$, a contradiction.
  If $\ell > p $, then edge~$\{c^Z_i, t^w\}$ blocks $M$, a contradiction.
  Thus, we have $v =w $, and the lemma follows.
\end{proof}

If a matching contains edges $\{c_i^z, s^v\}$, $\{\bar c_i^z, \bar s^v\}$, $\{d^Z_i, t^v\}$, $\{\bar d^Z_i, \bar t^v\}$, and $\{s^w, \bar s^w\}$ and $\{t^w, \bar t^w\}$ for all $w\in Z_i \setminus \{v\}$, then we say that the vertex selection gadget $S(Z_i)$ \emph{selects} the vertex $v$.

We now show that selecting a vertex from $X$ also yields a stable matching inside the vertex selection gadget.

\begin{lemma}\label{lpmivsg}
  For each vertex-selection gadget $S(Z_i)$ with $Z\in \{A, B\}$ and each $v\in Z_i$, there exists a perfect stable matching $M$ inside the edge gadget containing the edges $\{c^Z_i, s^v\}$ and~$\{d^Z_i, t^v\}$.
\end{lemma}

\begin{proof}
  Let $M = \{\{c^Z_i, s^v\}, \{d^Z_i, t^v\}, \{\bar{c}^Z_i, \bar{s}^v\}, \{\bar{d}^Z_i, \bar{t}^v\} \cup \{\{s^x, \bar{s}^x\}, \{t^x, \bar{t}^x\} : x\in V\setminus \{v\}\}$.
  We show that $M$ is stable, proving the lemma.

  Every vertex of the form $s^x$, $\bar{s}^x$, $t^x$, or $\bar{t}^x$ for $x\neq v$ cannot be contained in a blocking pair, as they are matched to one of their top choices.
  Thus, only edges $\{c^Z_i, t^v\}$ and $\{\bar{d}^Z_i, \bar{s}^v\}$ can be blocking.
  However, $c^Z_i$ prefers $M(c^Z_i) = s^v$ to $t^v$, and $\bar{d}^Z_i$ prefers $M(\bar{d}^Z_i) = \bar{t}^v$ to $\bar{s}^v$.
  Thus, $M$ is stable.
\end{proof}

We now turn to incidence vertices and edge gadgets which shall ensure that the vertices selected by a perfect stable matching indeed form a biclique.

\subsection{Incidence Vertices and Edge Gadgets}\label{sEdgeGadget}
For each pair $(i, j)$ with $i, j\in [k]$, we add two vertices $a_{ij}$ and $b_{ij}$ (which we call \emph{incidence vertices}) to check whether the vertices selected by~$S(A_i)$ and $S(B_j)$ are adjacent.
Each incidence vertex ranks its neighbors in an arbitrary way.

An edge gadget for an edge $e = \{a_i^\ell, b_j^p\}$ consists of a path of length three and has the vertices $a^e_1$, $a^e_2$, $b^e_1$, and $b^e_2$.
The inner two vertices (i.e., $a^e_2$ and $b^e_2$) tie both their neighbors.
Vertex $a^e_1$ ($b^e_1$) is connected to the vertices $c^A_i$ and $d^A_i$ of the vertex selection gadget $S(A_i)$ ($c^B_j$ and $d^B_j$ of $S(B_j)$), and to incidence vertex $a_{ij}$ ($b_{ij}$).
The preference lists look as follows.
\begin{align*}
  a^e_1 &: a_2^e \succ c^A_i \succ d^A_i \succ a_{ij},\\
  a^e_2 &: (a^e_1, b^e_2),\\
  b^e_2 &: (a^e_2, b^e_1),\\
  b^e_1 &: b^e_2 \succ c^B_j \succ d^B_j \succ b_{ij}.
\end{align*}
See \Cref{fedgegadgetm} for an example.

\begin{figure}
  \begin{center}
    \begin{tikzpicture}[yscale = 1.5]
      \node[vertex, label=180:$c^A_i$] (ai) at (-1.5, 0) {};
      \node[vertex, label=180:$d^A_i$] (aib) at (-1.5, -2) {};
      \node[vertex, label=0:$c^B_j$] (bi) at (9.5, 0) {};
      \node[vertex, label=0:$d^B_j$] (bib) at (9.5, -2) {};

      \node[vertex, label=90:$a_{ij}$] (v1) at (3, 0) {};
      \node[vertex, label=90:$a^e_1$] (v3) at (1, -1) {};
      \node[vertex, label=270:$a^e_2$] (v5) at (3, -2) {};
      \node[vertex, label=270:$b^e_2$] (v7) at (5, -2) {};
      \node[vertex, label=90:$b^e_1$] (v8) at (7, -1) {};
      \node[vertex, label=90:$b_{ij}$] (v10) at (5, 0) {};

      \draw (ai) edge node[pos=0.2, fill=white, inner sep=1pt] {\scriptsize $2\ell - 1$}  node[pos=0.76, fill=white, inner sep=1pt] {\scriptsize $2$} (v3);
      \draw (aib) edge node[pos=0.3, fill=white, inner sep=1pt] {\scriptsize $2(n + 1 - \ell) - 1$}  node[pos=0.76, fill=white, inner sep=1pt] {\scriptsize $3$} (v3);
      \draw (bi) edge node[pos=0.2, fill=white, inner sep=1pt] {\scriptsize $2p - 1$}  node[pos=0.76, fill=white, inner sep=1pt] {\scriptsize $2$} (v8);
      \draw (bib) edge node[pos=0.2, fill=white, inner sep=1pt] {\scriptsize $2(n + 1 - p) -1$}  node[pos=0.76, fill=white, inner sep=1pt] {\scriptsize $3$} (v8);

      \draw (v1) edge node[pos=0.2, fill=white, inner sep=1pt] {\scriptsize $7$}  node[pos=0.76, fill=white, inner sep=1pt] {\scriptsize $4$} (v3);
      \draw (v3) edge node[pos=0.2, fill=white, inner sep=1pt] {\scriptsize $1$}  node[pos=0.76, fill=white, inner sep=1pt] {\scriptsize $1$} (v5);
      \draw (v5) edge node[pos=0.2, fill=white, inner sep=1pt] {\scriptsize $1$}  node[pos=0.76, fill=white, inner sep=1pt] {\scriptsize $1$} (v7);
      \draw (v7) edge node[pos=0.2, fill=white, inner sep=1pt] {\scriptsize $1$}  node[pos=0.76, fill=white, inner sep=1pt] {\scriptsize $1$} (v8);
      \draw (v8) edge node[pos=0.2, fill=white, inner sep=1pt] {\scriptsize $4$}  node[pos=0.76, fill=white, inner sep=1pt] {\scriptsize $5$} (v10);

        \begin{scope}[on background layer]
          \newcommand{\colorBetweenTwoNodes}[3]{
            \fill[#1] ($(#2) + (0, .1)$) to ($(#2) - (0, .1)$) to ($(#3) - (0,.1)$) to ($(#3) + (0,.1)$) -- cycle;
          }
          \colorBetweenTwoNodes{mygreen}{v1}{v3}
          \colorBetweenTwoNodes{mygreen}{v5}{v7}
          \colorBetweenTwoNodes{mygreen}{v8}{v10}
          \colorBetweenTwoNodes{mylila}{v3}{v5}
          \colorBetweenTwoNodes{mylila}{v7}{v8}
        \end{scope}
    \end{tikzpicture}

  \end{center}
  \caption{An edge gadget for the edge $e = \{a_i^\ell, b^p_j\}$ with all its neighbors.
    For an edge $\{v, w\}$, the number on this edge closer to $v$ indicates how $v$ ranks $w$.
    For example, vertex $a^e_1$ prefers $a^e_2$ to $c^A_i$, vertex~$c^A_i$ to $d^A_i$, and $d^A_i$ to $a_{ij}$.
  The green and purple edges represent the two different sets of edges a perfect stable matching can contain.}
  \label{fedgegadgetm}

\end{figure}

\subsection{The Reduction}
\label{sec:fvs-correctness}

Given an instance $(G, A_1, \dots, A_k, B_1, \dots, B_k)$ of \textsc{Multicolored Biclique}, we construct a \textsc{Perfect-SMTI}-instance as follows:
For each set $A_i$, we add a vertex selection gadget~$S(A_i)$.
Similary, for each set $B_i$, we add a vertex selection gadget $S(B_i)$.
We add~$k^2$ incidence vertices $a_{ij}$ and $k^2$ incidence vertices $b_{ij}$.
For each edge $e=\{a_i^\ell, b_j^p\}\in E(G)$, we add an edge gadget.
We connect the vertex $a^e_1$ to~$a_{ij}$,~$c^A_i$, and $d^A_i$, where for the ranks we set $\rk_{c^A_i} (a^e_1): = 2 \ell$ and $\rk_{d^A_i} (a^e_1): = 2(n + 1 -\ell )$.
Similarly, $b^e_1$ is connected to $b_{ij}$, $c^B_j$, and $d^B_j$, where for the ranks we set $\rk_{c^B_j} (b^e_1) := 2 p$ and $\rk_{d^B_j} (b^e_1) := 2 (n + 1 - p)$.

We call the resulting graph $H$.
First, we show that $H$ is bipartite.

\begin{lemma}\label{lbip}
  Graph $H$ is bipartite.
\end{lemma}

\begin{proof}
  Let
  \[
    U:= \left\{\bar{c}^A_i, \bar{d}^A_i, s^a, t^a, a^e_1, b^e_2, b_{ij}, c^B_j, d^B_j, \bar{s}^b, \bar{t}^b : i, j\in [k], e\in E(G), a\in \bigcup_{\ell=1}^k A_\ell, b\in \bigcup_{k=1}^k B_\ell\right\}
  \]
  and
  \[
    W:= \left\{c^A_i, d^A_i, \bar{s}^a, \bar{t}^a, a^e_2, b^e_1, a_{ij}, \bar{c}^B_j, \bar{d}^B_j, s^b, t^b : i, j\in [k], e\in E(G), a\in \bigcup_{\ell=1}^k A_\ell, b\in \bigcup_{k=1}^k B_\ell\right\}\,.
  \]

  It is easy to check that $U \uplus W$ is a bipartition of $H$.
\end{proof}

Next, we show that the parameters treedepth and disjoint paths modulator number of~$H$ are indeed bounded by a function of~$k$.

\begin{lemma}\label{ltd}
  Graph $H$ has treedepth $O(k)$ and disjoint paths modulator number size $O(k^2)$.
\end{lemma}

\begin{proof}
  First, we show that the treedepth of $G$ is bounded linearly in $k$.
  Consider the following rooted tree $T$ on $V(G)$:
  The root of~$T$ is $c^A_1$.
  Then, $T$ contains path $c^{A}_1$-$c^A_2$-$\dots$-$c^A_k$-$\bar c^A_1$-$\bar c^A_2$-$\dots$-$\bar c^A_k$-$d^A_1$-$\dots$-$d^A_k$-$\bar d^A_1$-$\dots$-$\bar d^A_k$-$c^{B}_1$-$\dots$-$c^B_k$-$\bar c^B_1$-$\dots$-$\bar c^B_k$-$d^B_1$-$\dots$-$d^B_k$-$\bar d^B_1$-$\dots$-$\bar d^B_k$.
  For every $i,j \in [k]$, vertex $a_{ij}$ is a child of $d^B_k$, and $b_{ij} $ is a child of $a_{ij}$.
  For every edge $e= \{a^p_i, b^q_j\} \in E$, vertex $a^e_1$ is a child of $b_{ij}$, vertex $a^e_2$ is a child of $a^e_2$, vertex $b^e_2$ is a child of $a^e_2$, and $b^e_1$ is a child of $b^e_1$.
  It is easy to verify that for each edge $e = \{v, w\}\in E(H)$, either $v$ is an ancestor of $w$ or $w$ is an ancestor of $v$ in $T$.
  As the depth of $T$ is $8k + 6$, the treedepth of $H$ is at most $8k+6$.

  Now we show that disjoint path modulator number of $H$ is bounded by~$O(k^2)$.  
  Consider the set $X:=\{c^A_i, d^A_i, \bar{c}^A_i, \bar{d}^A_i, c^B_i, d^B_i, \bar{c}^B_i, \bar{d}^B_i : 1~\le~i\le~k\} \cup\{a_{ij}, b_{ij}: i, j\in [k]\}$.
  Note that $|X| = O(k^2)$.
  After removing this $X$, only paths of length one (inside the vertex-selection gadgets) and paths of length three (inside the edge gadgets) remain.
  Thus, we have that $H$ has disjoint path modulator number $O(k^2)$. 
\end{proof}

We now turn to the correctness of the reduction.
We start by showing that a multicolored biclique implies a perfect stable matching.

\begin{lemma}\label{lh}
  If $G$ contains a multicolored biclique $K_{k, k}$, then $H$ contains a perfect stable matching.
\end{lemma}

\begin{proof}
  Let $\{a_1^{\ell_1}, \dots, a_k^{\ell_k}\}\cup \{b_1^{p_1}, \dots, b_k^{p_k}\}$ be the multicolored biclique.
  We construct a perfect stable matching~$M$.
  The vertex selection gadget $S(A_i)$ respectively~$S(B_i)$ shall select~$a_i^{\ell_i}\in A_i$ respectively~$b_i^{p_i}\in B_i$.
  For each edge~$e = \{a_i^{\ell_i}, b_j^{p_j}\}$ for $i,j \in [k]$, we add edges $\{a_{ij}, a^e_1\}$, $\{a^e_2, b^e_2\}$, and $\{b_1^e, b_{ij}\}$ to $M$, while for all other edges~$e'$, we add the edges $\{a^{e'}_1, a^{e'}_2\}$ and~$\{b^{e'}_1, b^{e'}_2\}$ to~$M$.
  We first show that $M$ is perfect, and then that $M$ is stable.

  \begin{claim*}
  The matching $M$ is perfect.
  \end{claim*}
  \begin{claimproof}
    Each incidence vertex is matched, as $M$ contains the edges $\{a_{ij}, a^e_1\}$ and $\{b_{ij}, b^e_1\}$ for the edge $e = \{a_i^{\ell_i}, b_j^{p_j}\}$.
    All vertices contained in vertex selection gadgets are matched by \cref{lpmivsg}, and all vertices contained in edge gadgets are clearly matched.
  \end{claimproof}

  \begin{claim*}
  The matching $M$ is stable.
  \end{claim*}
  \begin{claimproof}
    No blocking pair is contained inside a vertex-selection gadget by \cref{lpmivsg}.
    For any edge $e\in E(G)$, vertices~$a_2^e$ and $b_2^e$ are matched to one of their top choices, and thus are not part of any blocking pair.
    Unless $e$ is an edge of the multicolored biclique, also the vertices $a_1^e$ and $b_1^e$ are matched to their top choice and thus not part of a blocking pair.

    For every incidence vertex~$a_{ij}$ or $b_{ij}$, all vertices adjacent to $a_{ij}$ or $b_{ij}$ except for $M(a_{ij})$ or $M(b_{ij})$ are matched to their top choice.
    Thus, no incidence vertex is part of a blocking pair.

    Thus, the only possible blocking pairs are $\{c^A_i, a^e_1\}$ and $\{d^A_i, a_1^e\}$ when $\{a_{ij}, a_1^e\}\in M$ and $\{c^B_j, b^e_1\}$ and $\{d^B_j, b_1^e\}$ when $\{b_{ij}, b_1^e\}\in M$.
    However, since $e$ is incident to vertex~$a^{\ell_i}_i$ selected by $S(A_i)$, vertex~$c^A_i$ respectively~$d^A_i$ prefers $M(c^A_i) = s^{a^{\ell_i}}_i$ respectively $M(d^A_i) = t^{a^{\ell_i}}$ to $a_1^e$.
    Pairs~$\{c^B_j, b^e_1\}$ and $\{d^B_j, b_1^e\}$ are not blocking by symmetric arguments.
  \end{claimproof}

  This finishes the proof of \Cref{lh}.
\end{proof}

We now show the reverse direction, i.e., that every perfect stable matching implies a multicolored biclique.
In order to do so, we start by observing that incidence vertices $a_{ij}$ and $b_{ij}$ are matched to vertices from the same edge gadget in every perfect stable matching.

\begin{lemma}\label{leg}
  Let $M$ be any perfect stable matching in $H$, and $e = \{a_i^\ell, b_j^p\}$.
  If $\{a_{ij}, a^e_1\}\in M$, then also $\{b_{ij}, b^e_1\}\in M$.
\end{lemma}

\begin{proof}
  Assume that $\{a_{ij}, a^e_1\}\in M$.
  Then $M$ also contains the edge $\{a^e_2, b^e_2\}$, as $M$ is perfect.
  Since the vertices $c^B_j$ and $d^B_j$ are matched inside the vertex selection gadget by \cref{lvsgm}, the vertex $b^e_1$ must be matched to $b_{ij}$.
\end{proof}

\begin{lemma}\label{lr}
  If $H$ contains a perfect stable matching $M$, then $G$ contains a biclique $K_{k, k}$.
\end{lemma}

\begin{proof}
  By \cref{leg}, $M$ matches the incidence vertices $a_{ij}$ and $b_{ij}$ to the same edge gadget for all~$i, j\in [k]$.
  By \cref{lvsgm}, any vertex selection gadget $S(A_i)$ respectively~$S(B_i)$ indeed selects a vertex $a_i^{\ell_i}$ respectively $b_i^{p_i}$.

  It is enough to show that $G$ contains the edge $\{a_i^{\ell_i}, b_j^{p_j}\}$ for all $i, j\in[k]$.
  Let $e= \{a_i^q, b\}\in E(G)$ with $a^i_q\in A_i$ and $b\in B_j$ be the edge such that $\{a_{ij}, a_1^e\}\in M$ (this edge exists since $M$ is perfect).
  Then $a^e_1$ prefers both $c_i^{A}$ and $d_i^A$ over $M$.
  By the stability of $M$, it follows that $\rk_{c_i^A} (M(c_i^A)) = 2 \ell_i - 1\le 2 q -1$, and $\rk_{d_i^A} (M(d_i^A) ) = 2 ( n + 1 -\ell_i) - 1 \le 2 (n + 1 - q\ell) -1 $.
  This implies $\ell_i = q$.
  Since $M$ is perfect, it also contains the edge $\{b_{ij}, b_1^e\}$.
  By symmetric arguments, it follows that $b_j = b$, and thus, $G$ contains a biclique of size $k$.
\end{proof}

Alltogether, we have proven the W[1]-hardness of \textsc{Perfect-SMTI} (and therefore also \textsc{Max-SMTI}) parameterized by treedepth and disjoint paths modulator number.

\begin{theorem}\label{thm:W-hard-td-fvn}
  \textsc{Perfect-SMTI} and \textsc{Max-SMTI} are W[1]-hard for the parameter treedepth plus disjoint paths modulator number.
\end{theorem}

\begin{proof}
  Directly follows from \cref{ltd,lbip,lh,lr}.
\end{proof}

W[1]-hardness for \textsc{SRTI-Existence} parameterized by treedepth and disjoint paths modulator number now easily follows.

\begin{corollary}\label{cor:W-hard-existence}
  \textsc{SRTI-Existence} parameterized by the combined parameter treedepth plus disjoint paths modulator number is W[1]-hard.
\end{corollary}

\begin{proof}
  We reduce from \textsc{Perfect-SMTI} parameterized by treedepth plus disjoint paths modulator number.
  Let $\mathcal{I}$ be an instance of \textsc{Perfect-SMTI}, and let~$H$ be the acceptability graph of this instance.

  We add four vertices $\alpha_1$, $\alpha_2$, $\beta_1$, and $\beta_2$ with the following preferences (see \Cref{f3c} for an example):
  \begin{align*}
    \alpha_1 &: (V(H)) \succ \alpha_2,\\
    \alpha_2 &: \alpha_1 \succ \beta_2 \succ \beta_1 \succ (V(H)),\\
    \beta_1 &: \alpha_2 \succ \beta_2,\\
    \beta_2 &: \beta_1 \succ \alpha_2.
  \end{align*}
  For each $v\in V(H)$, we append ``$\succ \alpha_2 \succ\alpha_1$'' at the end of the preferences of $v$, i.e., $v$ prefers all its neighbors in $H$ to both $\alpha_1$ and $\alpha_2$ and prefers $\alpha_2 $ to $\alpha_1$.
  We call the resulting instance $\mathcal{I}'$.
  As we added only four vertices, the treedepth and disjoint paths modulator number of the acceptability graph of $\mathcal{I}$ can be at most by four larger than the one of $H$.

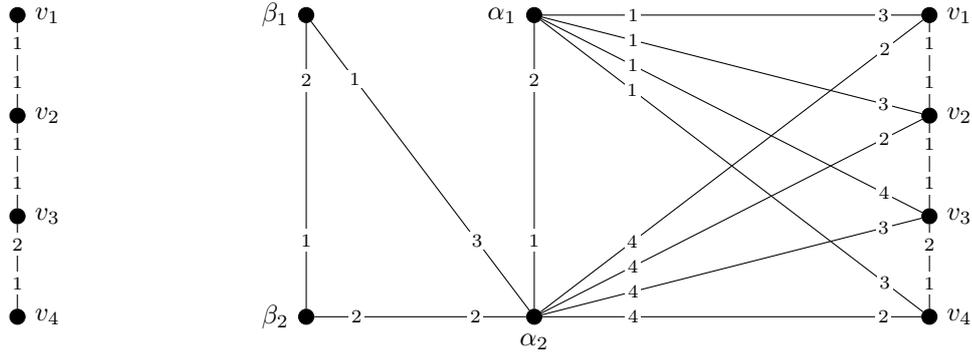
\begin{figure}
  \begin{center}    \begin{tikzpicture}[xscale = 4]

        \node[vertex, label=0:$v_1$] (aa1) at (-3, 3) {};
        \node[vertex, label=0:$v_2$] (ba1) at (-3, 1.667) {};
        \node[vertex, label=0:$v_3$] (ca1) at (-3, 0.333) {};
        \node[vertex, label=0:$v_4$] (da1) at (-3, -1) {};

        \draw (da1) edge node[pos=0.3, fill=white, inner sep=2pt] {\scriptsize $1$}  node[pos=0.76, fill=white, inner sep=2pt] {\scriptsize $2$} (ca1);
        \draw (ca1) edge node[pos=0.3, fill=white, inner sep=2pt] {\scriptsize $1$}  node[pos=0.76, fill=white, inner sep=2pt] {\scriptsize $1$} (ba1);
        \draw (ba1) edge node[pos=0.3, fill=white, inner sep=2pt] {\scriptsize $1$}  node[pos=0.76, fill=white, inner sep=2pt] {\scriptsize $1$} (aa1);

        \node[vertex, label=0:$v_1$] (a1) at (0, 3) {};
        \node[vertex, label=0:$v_2$] (b1) at (0, 1.667) {};
        \node[vertex, label=0:$v_3$] (c1) at (0, 0.333) {};
        \node[vertex, label=0:$v_4$] (d1) at (0, -1) {};

        \node[vertex, label=180:$\alpha_1$] (alpha1) at (-1.3, 3) {};
        \node[vertex, label=270:$\alpha_2$] (alpha2) at (-1.3, -1) {};

        \node[vertex, label=180:$\beta_1$] (beta1) at ($(alpha1) + (-0.75, 0)$) {};
        \node[vertex, label=180:$\beta_2$] (beta2) at ($(alpha2)  + (-0.75, 0)$) {};

        \draw (a1) edge node[pos=0.1, fill=white, inner sep=1pt] {\scriptsize $3$}  node[pos=0.76, fill=white, inner sep=1pt] {\scriptsize $1$} (alpha1);
        \draw (b1) edge node[pos=0.1, fill=white, inner sep=1pt] {\scriptsize $3$}  node[pos=0.76, fill=white, inner sep=1pt] {\scriptsize $1$} (alpha1);
        \draw (c1) edge node[pos=0.1, fill=white, inner sep=1pt] {\scriptsize $4$}  node[pos=0.76, fill=white, inner sep=1pt] {\scriptsize $1$} (alpha1);
        \draw (d1) edge node[pos=0.1, fill=white, inner sep=1pt] {\scriptsize $3$}  node[pos=0.76, fill=white, inner sep=1pt] {\scriptsize $1$} (alpha1);
        \draw (a1) edge node[pos=0.1, fill=white, inner sep=1pt] {\scriptsize $2$}  node[pos=0.76, fill=white, inner sep=1pt] {\scriptsize $4$} (alpha2);
        \draw (b1) edge node[pos=0.1, fill=white, inner sep=1pt] {\scriptsize $2$}  node[pos=0.76, fill=white, inner sep=1pt] {\scriptsize $4$} (alpha2);
        \draw (c1) edge node[pos=0.1, fill=white, inner sep=1pt] {\scriptsize $3$}  node[pos=0.76, fill=white, inner sep=1pt] {\scriptsize $4$} (alpha2);
        \draw (d1) edge node[pos=0.1, fill=white, inner sep=1pt] {\scriptsize $2$}  node[pos=0.76, fill=white, inner sep=1pt] {\scriptsize $4$} (alpha2);

        \draw (alpha1) edge node[pos=0.2, fill=white, inner sep=2pt] {\scriptsize $2$}  node[pos=0.76, fill=white, inner sep=2pt] {\scriptsize $1$} (alpha2);

        \draw (beta1) edge node[pos=0.2, fill=white, inner sep=2pt] {\scriptsize $2$}  node[pos=0.76, fill=white, inner sep=2pt] {\scriptsize $1$} (beta2);

        \draw (beta1) edge node[pos=0.2, fill=white, inner sep=1pt] {\scriptsize $1$}  node[pos=0.76, fill=white, inner sep=1pt] {\scriptsize $3$} (alpha2);
        \draw (beta2) edge node[pos=0.2, fill=white, inner sep=1pt] {\scriptsize $2$}  node[pos=0.76, fill=white, inner sep=1pt] {\scriptsize $2$} (alpha2);

        \draw (d1) edge node[pos=0.3, fill=white, inner sep=2pt] {\scriptsize $1$}  node[pos=0.76, fill=white, inner sep=2pt] {\scriptsize $2$} (c1);
        \draw (c1) edge node[pos=0.3, fill=white, inner sep=2pt] {\scriptsize $1$}  node[pos=0.76, fill=white, inner sep=2pt] {\scriptsize $1$} (b1);
        \draw (b1) edge node[pos=0.3, fill=white, inner sep=2pt] {\scriptsize $1$}  node[pos=0.76, fill=white, inner sep=2pt] {\scriptsize $1$} (a1);
    \end{tikzpicture}

  \end{center}
  \caption{An example for the reduction from \textsc{Perfect-SMTI} to \textsc{SRTI-Existence}.
  The \textsc{Perfect-SMTI} instance is depicted on the left, and the \textsc{SRTI-Existence} instance created by the reduction is depicted on the left.}\label{f3c}
\end{figure}

  Given a perfect stable matching~$M$ in $\mathcal{I}$, we define $M' := M \cup \{\{\alpha_1, \alpha_2\}, \{\beta_1, \beta_2\}\}$, and claim that $M' $ is stable in $\mathcal{I}'$.
  Since $M$ is perfect and stable, no vertex from $V(H)$ is part of a blocking pair.
  Vertices $\alpha_2$ and $\beta_2$ are matched to their first choice and thus also not part of a blocking pair.
  As there is no edge between the only two remaining vertices ($\alpha_1$ and $\alpha_2$), it follows that $M'$ is stable.

  Vice versa, given a stable matching $M'$ in $\mathcal{I}'$, we construct a perfect stable matching~$M : = M' \setminus \{\{\alpha_1, \alpha_2\}, \{\beta_1, \beta_2\}\}$.
  First, note that if $M'$ does not contain edge~$\{\alpha_1, \alpha_2\}$, then two vertices from $\alpha_2$, $\beta_1$, and $\beta_2$ from a blocking pair.
  Furthermore, every vertex from $V(H)$ is matched in $M'$, as otherwise there exists an unmatched vertex~$v$ and two vertices from $v$, $\alpha_1$, and $\alpha_2$ form a blocking pair.
  It follows that $M$ is a perfect matching.
  Matching~$M$ is also stable, as any blocking pair for $M$ would also be a blocking pair for $M'$.
\end{proof}

Note that by arbitrarily breaking all ties involving agent $c^Z_i$ or $d^Z_i$ for all $Z\in \{A, B\}$ and $i \in [k]$, the hardness results in this section extend even in the restricted case that the preferences of every vertex are either strictly ordered or a tie of size two:
\Cref{operf} implies that $c^Z_i$ or $d^Z_i$ cannot be matched to a vertex it ties with another vertex in a stable matching, and thus, breaking these ties arbitrarily does not change the set of stable matchings.

\section{W[1]-hardness of Max-SRTI for the Parameter Tree-cut Width}

In this section, we show the W[1]-hardness of \textsc{Max-SRTI} for the parameter tree-cut width. We will reduce from \textsc{Clique}, which is defined as follows.
\defProblemTask{\textsc{Clique}}
{A graph $G$ and a natural number $k$.}
{Find a clique of size $k$, i.e., a set~$\{v_1, \dots, v_k\}$ of vertices such that for each $i, j\in [k]$ we have $\{v_i, v_j\} \in E(G)$, or decide that no such set exists.}
Thus, let $(G, k)$ be an instance of \textsc{Clique}.
The basic idea of the reduction is as follows.
First, the vertices of $V(G) = \{v_1, \dots, v_n\}$ are ordered in an arbitrary way.
The parameterized reduction contains several gadgets:
$k$ vertex-selection gadgets which ``select'' the vertices which are contained in the clique, and one edge gadget for each edge of $G$ and each $i\neq j \in [k]$.
Furthermore, there is one incidence vertex~$c_{i,j}$ for each $i \neq j \in [k]$.
Every vertex-selection gadget only one vertex~$c_i$ which is connected to vertices outside the edge gadget.
The larger a stable matching is inside the vertex-selection gadget, the worse is vertex~$c_i$ matched.
Every incidence vertex $c_{ij}$ now has to be matched to the edge gadget of an edge incident to the vertex selected by the $i$-th vertex-selection gadget in any sufficiently large stable matching.
To ensure this, we need to model parallel edges by gadgets which we call ``parallel-edges gadget''.
If now $c_{ij} $ and $c_{ji}$ match to the same edge gadget, then this increases the size of a maximum stable matching by one.
Thus, a maximum stable matching allows us to decide whether $G$ contains a clique of size $k$.

We will describe the gadgets used in the reduction first, then the reduction, and finally prove its correctness.

\subsection{The Gadgets}

As in the W[1]-hardness proof for treedepth and disjoint paths modulator number, the reduction contains vertex-selection gadgets and edge gadgets.
However, they work differently than those for treedepth and disjoint path modulator number.
Consistency gadgets are not required, but there will be ``incidence vertices'' and a ``parallel-edges gadget''.
We start describing the reduction by describing the gadgets, starting with the parallel-edges gadget.

\subsubsection{The Parallel-edges Gadget}
  Our reduction will use parallel edges; however, parallel edges are not allowed in the \textsc{Max-SRTI}-problem because it is defined on simple graphs.
  We will use a result of Cechl\'{a}rov\'{a} and Fleiner \cite{Cechlarova2005}, showing that parallel edges can modelled in an \textsc{Max-SRTI}-instance:

  \begin{lemma}\label{lpeg}
    One can model parallel edges in an \textsc{(Max)-SR(T)I}-instance by 6-cycles with two outgoing edges (see Theorem 2.1 from \cite{Cechlarova2005}).
  \end{lemma}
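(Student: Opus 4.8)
The plan is to reconstruct the gadget of Cechlárová and Fleiner~\cite{Cechlarova2005} explicitly and to verify that each copy simulates a single parallel edge between two prescribed agents $u$ and $v$, carrying its own prescribed ranks $r_u$ (the position at which $u$ should perceive this particular parallel edge) and $r_v$ (symmetrically for $v$); if several parallel edges share the endpoints $u,v$, each is replaced by its own independent gadget, interacting with $u$ and $v$ only through dedicated ports. Concretely, I would replace one parallel edge by six fresh vertices $g_1, \dots, g_6$ forming a 6-cycle $g_1 g_2 g_3 g_4 g_5 g_6 g_1$, together with two outgoing edges $\{g_1, u\}$ and $\{g_4, v\}$ attaching the antipodal ports $g_1$ and $g_4$ to $u$ and $v$. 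The internal preferences would be set cyclically (each $g_i$ preferring one cycle-neighbour over the other) so that, in isolation, the cycle has exactly the two alternating internal perfect matchings, both stable; the ports $g_1, g_4$ rank $u$ resp.\ $v$ at the position that makes $u$ see $g_1$ exactly as it would have seen $v$ at rank $r_u$ (and symmetrically for $v$), while $u$ and $v$ rank the ports in the slot formerly occupied by the parallel edge.

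Next I would isolate the invariant that makes the simulation correct. The claim to prove is that in any stable matching $M'$ of the transformed (simple-graph) instance, the gadget is always in one of two clean states: either both ports are matched outward ($\{g_1,u\},\{g_4,v\}\in M'$) with the remaining four vertices matched along the cycle, or both ports stay inside the gadget. In the first state the gadget \emph{transmits} the edge, so $u$ and $v$ behave as if matched to each other through this parallel edge, and in the second it does not. Correctness then reduces to a rank-preserving correspondence: $\{u,v\}$ is resolved (non-blocking) via this gadget in $M'$ exactly when the corresponding direct parallel edge would be non-blocking for the induced matching $M$ in the original instance. This yields a correspondence between the stable matchings of the two instances in which each gadget always contributes the same number of internal edges to $|M'|$, keeping the cardinality bookkeeping of \textsc{Max-SRTI} intact.

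The verification itself is a finite case analysis: the gadget has only six internal vertices, so there are constantly many internal matchings to inspect, and for each I would check that no internal pair blocks and that the interface behaviour is as claimed, ruling out the "mixed" states (exactly one port matched outward) as unstable. I would argue both directions, lifting a stable matching of the original instance to one of the transformed instance and conversely. The main obstacle is choosing one fixed cyclic preference pattern on $g_1, \dots, g_6$ that simultaneously satisfies three competing requirements: (i) when both ports are pulled outward, the four vertices $g_2, g_3, g_5, g_6$ still admit a stable internal matching; (ii) when neither port is matched outward, the full cycle is internally stable; and (iii) no internal blocking pair ever arises no matter how $u$ and $v$ are matched elsewhere, so that the gadget can never by itself destroy stability. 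Threading all three constraints through a single preference assignment — and proving that the two ports are forced to act in unison rather than independently — is the delicate point, and is exactly where I would lean on the antipodal placement of the ports and on Cechlárová and Fleiner's original verification.
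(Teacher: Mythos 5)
Your gadget is exactly the one the paper itself uses (the Cechl\'arov\'a--Fleiner construction shown in Figure~\ref{fpeg}): a $6$-cycle with antipodal ports, cyclically oriented internal preferences, each port ranking the external agent strictly between its two cycle-neighbours, and $u,v$ ranking the ports in the slots of the replaced parallel edge; your claimed invariant (two clean states, ports forced to act in unison, mixed states killed by internal blocking pairs) is precisely what this gadget satisfies, and since the paper gives no verification beyond citing Theorem~2.1 of \cite{Cechlarova2005}, your finite case analysis would simply make explicit what the paper delegates to that reference. One correction to your bookkeeping, which matters for the \textsc{Max} variant: the gadget does \emph{not} contribute the same number of matching edges in its two states (four edges incident to it when transmitting, three otherwise, since the two non-transmitting internal matchings are perfect on the cycle), but this difference exactly offsets the simulated edge's own contribution ($1$ versus $0$), so each gadget adds the constant $3$ to the matching size and the cardinality correspondence you want still holds.
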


  The gadget one can use to replace such an edge is depicted in \Cref{fpeg}; we call it a parallel-edges gadget.

  \begin{figure}[bt]
    \begin{center}
      \begin{tikzpicture}[xscale =2 , yscale = 1.2]
        \node[vertex, label=180:$u$] (u) at (0, 0) {};
        \node[vertex, label=0:$v$] (v) at (3, 0) {};

        \node[vertex, label=0:$u_0^e$] (u0) at (1, 0) {};
        \node[vertex, label=180:$v_0^e$] (v0) at (2, 0) {};
        \node[vertex, label=180:$u_1^e$] (u1) at (1, 1) {};
        \node[vertex, label=0:$v_1^e$] (v1) at (2, 1) {};
        \node[vertex, label=180:$u_2^e$] (u2) at (1, -1) {};
        \node[vertex, label=0:$v_2^e$] (v2) at (2, -1) {};

        \draw (u) edge node[pos=0.2, fill=white, inner sep=2pt] {\scriptsize $i$}  node[pos=0.76, fill=white, inner sep=2pt] {\scriptsize $2$} (u0);
        \draw (u0) edge node[pos=0.2, fill=white, inner sep=2pt] {\scriptsize $1$}  node[pos=0.76, fill=white, inner sep=2pt] {\scriptsize $2$} (u1);
        \draw (u1) edge node[pos=0.2, fill=white, inner sep=2pt] {\scriptsize $1$}  node[pos=0.76, fill=white, inner sep=2pt] {\scriptsize $2$} (v1);
        \draw (v1) edge node[pos=0.2, fill=white, inner sep=2pt] {\scriptsize $1$}  node[pos=0.76, fill=white, inner sep=2pt] {\scriptsize $3$} (v0);
        \draw (v0) edge node[pos=0.2, fill=white, inner sep=2pt] {\scriptsize $1$}  node[pos=0.76, fill=white, inner sep=2pt] {\scriptsize $2$} (v2);
        \draw (v2) edge node[pos=0.2, fill=white, inner sep=2pt] {\scriptsize $1$}  node[pos=0.76, fill=white, inner sep=2pt] {\scriptsize $2$} (u2);
        \draw (u2) edge node[pos=0.2, fill=white, inner sep=2pt] {\scriptsize $1$}  node[pos=0.76, fill=white, inner sep=2pt] {\scriptsize $3$} (u0);
        \draw (v0) edge node[pos=0.2, fill=white, inner sep=2pt] {\scriptsize $2$}  node[pos=0.76, fill=white, inner sep=2pt] {\scriptsize $j$} (v);
      \end{tikzpicture}

    \end{center}
    \caption{How parallel edges can be modeled \cite[Theorem 2.1]{Cechlarova2005}: any edge $e=\{u, v\}$ can be replaced by the above gadget, where $u$ ranks $u_0^e$ at the position on which it ranked $v$ in the original instance (position $i$ in this example), and $v$ ranks $v_0^e$ at the position on which it ranked $w$ in the original instance (position $j$ in this example).}\label{fpeg}
  \end{figure}
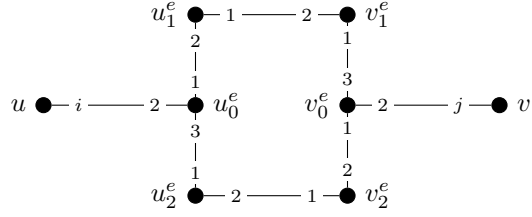

  By \Cref{lpeg}, we can use parallel edges, and we will do so from now on.

\subsubsection{The Vertex-selection Gadget}

We now describe the vertex-selection gadget.
The gadget has a special vertex $c_i$, which is the only vertex which has connections outside the vertex-selection gadget.
The vertex-selection gadget also contains a vertex-gadget for each vertex $u\in V(G)$.
In any stable matching, $c_i$~has to be connected to one of the vertex-gadgets.
The worse $c_i$ is matched, the more edges will be contained in the vertex-gadget.

We first describe the vertex-gadget.

 \begin{lemma}\label{lvg2}
   For any $j, \ell\in \mathbb{N}$ and any vertex $c_i$, one can construct a gadget with one outgoing edge $\{c_i, w\}$ to $c_i$ with $\rk_{c_i} (w) = \ell$ not contained in the gadget such that the maximum size of a stable matching is
   \begin{enumerate}[label=(\roman*)]
     \item $j+2$ if $\{c_i, w\}$ is not contained in the matching; in this case, $\{c_i, w\}$ is no blocking pair even if $c_i$ is unmatched and

     \item $2j+2$ otherwise (i.e. if $\{c_i, w\}$ is contained in the matching), counting also the edge~$\{c_i, w\}$.
   \end{enumerate}

   Furthermore, the gadget can be constructed in~$O(j)$ time and has tree-cut width at most four.

 \end{lemma}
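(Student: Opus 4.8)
The plan is to build the gadget as a \emph{chain} of constant-size cells together with a constant-size \emph{head} that contains the port vertex $w$, wired so that the matching behaves like a signal propagating along the chain whose direction is dictated solely by whether $w$ is matched inside the gadget or out to $c_i$. The first point to get straight is \emph{why the two claimed sizes are even consistent}: since deleting $w$ can never increase the size of a maximum (unconstrained) matching, the larger value $2j+1$ in case~(ii) cannot arise from ordinary maximum-matching considerations. It must be \emph{stability}, not cardinality, that rules out the large internal matchings in case~(i). Accordingly, I would design the gadget to possess two genuinely different \emph{stable} internal configurations: a ``content'' one of size $j+2$ in which $w$ is matched to its top internal neighbor, and a ``released'' one of size $2j+1$ in which $w$ is left free (hence available for the port edge) and every cell is flipped into its two-edge state.

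Concretely, I would make each cell admit exactly two locally stable states, a \emph{one-edge} state and a \emph{two-edge} state, and choose the $\rk$-values at the cell boundaries so that a cell is forced into its one-edge state precisely when the adjacent part of the chain is ``content'' and into its two-edge state when the neighbor is ``released''; a single constant-size head attached to $w$ converts $w$'s status into the initial signal. Writing the head's contribution as the constant $3$ and each of the $j-1$ propagating cells as $1$ or $2$, the bookkeeping gives
\[
(j+2) = 3 + (j-1)\cdot 1, \qquad (2j+1) = 3 + (j-1)\cdot 2 ,
\]
so case~(i) yields $j+2$ internal edges and case~(ii) yields $2j+1$ internal edges, to which the port edge $\{c_i,w\}$ is added for the stated total $2j+2$. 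For the non-blocking clause in~(i) I would set $\rk_w$ so that $w$'s internal partner in the content state is strictly better for $w$ than $c_i$, e.g.\ $\rk_w(\text{internal partner})=1 < \rk_w(c_i)$; this guarantees that $\{c_i,w\}$ is not blocking even when $c_i$ is unmatched. Case~(ii) then follows because, once $\{c_i,w\}$ is used, $w$ is absent internally and stability forces each boundary to propagate the ``released'' signal, activating every cell. (The construction may freely use parallel edges, which is legitimate by \cref{lpeg}.)

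Bounding the width is afterwards routine: the gadget is a caterpillar-like chain, so I would take the tree of the tree-cut decomposition to be a path with one node per cell (plus one for the head), placing each cell's constant vertex set in its bag. The only edges crossing between consecutive nodes are the constantly many ``signal'' edges joining adjacent cells, so every adhesion is a constant and each torso has constant size after suppressing vertices of degree at most two; choosing the cell so that these constants are at most $4$ yields $\tcw \le 4$. Since each cell and the head have constant size and are emitted independently, the whole gadget is produced in time $O(j)$.

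The main obstacle is the stability analysis showing that \emph{no intermediate configuration survives}: I must prove that a stable matching cannot leave some cells one-edge and others two-edge, i.e.\ that a partial or inconsistent propagation necessarily creates a blocking pair at the first cell where the signal ``changes its mind.'' This is the delicate local-to-global step, and it is exactly where the boundary $\rk$-values have to be tuned. Once it is established, together with the observation that an internally unmatched $w$ would itself block inside the gadget unless it is released to $c_i$, both cardinality bounds and the non-blocking clause follow.
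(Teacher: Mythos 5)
There is a genuine gap: your proposal is a design brief, not a proof. Every load-bearing element is left unspecified or explicitly deferred --- you never define the cells, never give a single concrete rank value beyond $\rk_w(\text{internal partner})=1$, and you yourself flag the ``delicate local-to-global step'' (that stability kills every mixed configuration and yields exactly $j+2$ resp.\ $2j+2$ edges) as an open obstacle. But that step \emph{is} the lemma. The counting identities $3+(j-1)\cdot 1=j+2$ and $3+(j-1)\cdot 2+1=2j+2$ are trivially arrangeable; what must be proved is (a) a concrete preference assignment under which the ``content'' state forces every cell into its one-edge state via blocking pairs, (b) that the ``released'' state admits a \emph{stable} all-two-edge configuration (in particular, when $w$ leaves for $c_i$, the vertices $w$ abandons inside the head must not form blocking pairs with $w$ --- a constraint you do not address), and (c) that no stable matching exceeds these counts in either case. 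A chain gadget with these properties can plausibly be built (the paper's own edge gadget for the tree-cut-width reduction uses path propagation), but until the cells and the propagation argument are written down, none of the three claims of the lemma is established.

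For comparison, the paper sidesteps propagation entirely by using a \emph{star} rather than a chain (Figure~\ref{fvgtcw}): a triangle on $w,w',w''$ forces $w$ to be matched to either $c_i$ or a single central vertex $c$, and $c$ is adjacent to the middle vertex $p_{q2}$ of $j$ \emph{disjoint} 3-edge paths, ranking all of them tied at position~$1$ and $w$ at position~$2$. The forcing is then a one-step local check per path, independent of the other paths: if $\{w,c\}\in M$, each edge $\{c,p_{q2}\}$ would block unless $\{p_{q2},p_{q3}\}\in M$, giving exactly $j+2$ edges; if $\{c_i,w\}\in M$, then $c$ absorbs one path ($\{c,p_{12}\}$, enabled by the tie) and the remaining paths take both end edges, giving $2j+2$. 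The non-blocking clause in case (i) comes from $w$ tying $c$ and $c_i$ at rank~$1$, and the star shape immediately gives a tree-cut decomposition of width~$4$ (center bag $\{w,w',w'',c\}$, one leaf per path, adhesion~$1$). If you want to salvage your chain design, you must supply the cell preferences and prove the domino argument in both directions over all $j$ cells --- precisely the work the star construction reduces to a single local observation.
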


 We call such a gadget a $(j, c_i, \ell)$-vertex gadget.

 \begin{figure}
   \begin{center}
     \begin{tikzpicture}[xscale = 1.8, yscale = 1.5]
        \draw[dotted] (1.23, 0.5) circle (1.13);
        \draw[dotted] (4., 1.6) ellipse (2 and 0.5);
        \draw[dotted] (5, 0.6) ellipse (2 and 0.5);
        \draw[dotted] (4., -1.6) ellipse (2 and 0.5);
        \node[vertex, label=270:$c_i$] (v) at (0, 00) {};
        \node[vertex, label=270:$w$] (w) at (1, 0) {};
        \node[vertex, label=90:$w'$] (w1) at (0.5, 1) {};
        \node[vertex, label=270:$c$] (c) at (2, 0) {};
        \node[vertex, label=90:$w''$] (w2) at (1.5, 1) {};
        \node[vertex, label=90:$p_{11}$] (p11) at (2.5, 1.5) {};
        \node[vertex, label=90:$p_{12}$] (p12) at (3.5, 1.5) {};
        \node[vertex, label=90:$p_{13}$] (p13) at (4.5, 1.5) {};
        \node[vertex, label=90:$p_{14}$] (p14) at (5.5, 1.5) {};
        \node[vertex, label=90:$p_{21}$] (p21) at (3.5, 0.5) {};
        \node[vertex, label=90:$p_{22}$] (p22) at (4.5, 0.5) {};
        \node[vertex, label=90:$p_{23}$] (p23) at (5.5, 0.5) {};
        \node[vertex, label=90:$p_{24}$] (p24) at (6.5, 0.5) {};
        \node[vertex, label=270:$p_{j1}$] (p41) at (2.5, -1.5) {};
        \node[vertex, label=270:$p_{j2}$] (p42) at (3.5, -1.5) {};
        \node[vertex, label=270:$p_{j3}$] (p43) at (4.5, -1.5) {};
        \node[vertex, label=270:$p_{j4}$] (p44) at (5.5, -1.5) {};

        \draw (v) edge node[pos=0.2, fill=white, inner sep=1pt] {\scriptsize $\ell$}  node[pos=0.76, fill=white, inner sep=1pt] {\scriptsize $1$} (w);
        \draw (w) edge node[pos=0.2, fill=white, inner sep=1pt] {\scriptsize $1$}  node[pos=0.76, fill=white, inner sep=1pt] {\scriptsize $2$} (c);
        \draw (c) edge node[pos=0.2, fill=white, inner sep=1pt] {\scriptsize $1$}  node[pos=0.76, fill=white, inner sep=1pt] {\scriptsize $2$} (p12);
        \draw (p12) edge node[pos=0.2, fill=white, inner sep=1pt] {\scriptsize $1$}  node[pos=0.76, fill=white, inner sep=1pt] {\scriptsize $1$} (p13);
        \draw (p13) edge node[pos=0.2, fill=white, inner sep=1pt] {\scriptsize $1$}  node[pos=0.76, fill=white, inner sep=1pt] {\scriptsize $1$} (p14);
        \draw (c) edge node[pos=0.2, fill=white, inner sep=1pt] {\scriptsize $1$}  node[pos=0.76, fill=white, inner sep=1pt] {\scriptsize $2$} (p22);
        \draw (p22) edge node[pos=0.2, fill=white, inner sep=1pt] {\scriptsize $1$}  node[pos=0.76, fill=white, inner sep=1pt] {\scriptsize $1$} (p23);
        \draw (p23) edge node[pos=0.2, fill=white, inner sep=1pt] {\scriptsize $1$}  node[pos=0.76, fill=white, inner sep=1pt] {\scriptsize $1$} (p24);
        \draw (c) edge node[pos=0.2, fill=white, inner sep=1pt] {\scriptsize $1$}  node[pos=0.76, fill=white, inner sep=1pt] {\scriptsize $2$} (p42);
        \draw (p42) edge node[pos=0.2, fill=white, inner sep=1pt] {\scriptsize $1$}  node[pos=0.76, fill=white, inner sep=1pt] {\scriptsize $1$} (p43);
        \draw (p43) edge node[pos=0.2, fill=white, inner sep=1pt] {\scriptsize $1$}  node[pos=0.76, fill=white, inner sep=1pt] {\scriptsize $1$} (p44);
        \draw (w) edge node[pos=0.2, fill=white, inner sep=1pt] {\scriptsize $3$}  node[pos=0.76, fill=white, inner sep=1pt] {\scriptsize $1$} (w1);
        \draw (w1) edge node[pos=0.2, fill=white, inner sep=1pt] {\scriptsize $2$}  node[pos=0.76, fill=white, inner sep=1pt] {\scriptsize $1$} (w2);
        \draw (w2) edge node[pos=0.2, fill=white, inner sep=1pt] {\scriptsize $2$}  node[pos=0.76, fill=white, inner sep=1pt] {\scriptsize $2$} (w);
        \draw (p11) edge node[pos=0.2, fill=white, inner sep=1pt] {\scriptsize $1$}  node[pos=0.76, fill=white, inner sep=1pt] {\scriptsize $3$} (p12);
        \draw (p21) edge node[pos=0.2, fill=white, inner sep=1pt] {\scriptsize $1$}  node[pos=0.76, fill=white, inner sep=1pt] {\scriptsize $3$} (p22);
        \draw (p41) edge node[pos=0.2, fill=white, inner sep=1pt] {\scriptsize $1$}  node[pos=0.76, fill=white, inner sep=1pt] {\scriptsize $3$} (p42);

        \node[draw, circle, fill, inner sep = 0.8] (dot1) at (4, -0.2) {};
        \node[draw, circle, fill, inner sep = 0.8] (dot2) at (4, -0.5) {};
        \node[draw, circle, fill, inner sep = 0.8] (dot3) at (4, -0.8) {};

        \begin{scope}[on background layer]
          \newcommand{\colorBetweenTwoNodes}[3]{
            \fill[#1] ($(#2) + (0, .08)$) to ($(#2) - (0, .08)$) to ($(#3) - (0,.08)$) to ($(#3) + (0,.08)$) -- cycle;
          }
          \colorBetweenTwoNodes{mygreen}{v}{w}
          \colorBetweenTwoNodes{mygreen}{c}{p12}
          \colorBetweenTwoNodes{mygreen}{p13}{p14}
          \colorBetweenTwoNodes{mygreen}{p21}{p22}
          \colorBetweenTwoNodes{mygreen}{p23}{p24}
          \colorBetweenTwoNodes{mygreen}{p41}{p42}
          \colorBetweenTwoNodes{mygreen}{p43}{p44}

          \colorBetweenTwoNodes{myyellow}{w}{c}
          \colorBetweenTwoNodes{myyellow}{p12}{p13}
          \colorBetweenTwoNodes{myyellow}{p22}{p23}
          \colorBetweenTwoNodes{myyellow}{p42}{p43}

          \colorBetweenTwoNodes{mylila}{w1}{w2}
        \end{scope}
     \end{tikzpicture}

   \end{center}
    \caption{A $(j, c_i, \ell)$-vertex gadget.
    The vertex $c_i$ ranks $w$ at position $\ell$.
    The green edges (including the violet edge) form a stable matching, and so do the yellow edges (also containing the violet edge). The dotted ellipses indicate the near-partition of a tree-cut decomposition of width 4.}\label{fvgtcw}
 \end{figure}

 \begin{proof}
  The gadget can be seen in \Cref{fvgtcw}.
  It contains a 3-cycle containing $w$ and two other vertices $w'$ and $w''$, and a vertex~$c$ connected to $w$.
  The vertex $c$ again is connected to the vertex $p_{q2} $ of the 3-path $p_{q1},p_{q2},p_{q3},p_{q4}$ for $q\in [j]$.
  The preferences are shown in \Cref{fvgtcw}.

  Let $M$ be a maximum stable matching.

  First, note that avoiding a blocking pair in the 3-cycle forces vertex $w$ to be matched to~$c$ or $c_i$, and that the vertices $w'$ and $w''$ have to be matched together, i.e., $\{w', w''\}\in M$.

   \begin{enumerate}[label=(\roman*)]
    \item By the above observation, the matching must contain the edge $\{w, c\}$.
    As this is one of $w$'s top choices, $\{c_i, w\}$ is not a blocking pair.
    To avoid a blocking pair $\{c, p_{i2}\}$ for $1\le i \le j$, we have to take edge $\{p_{i2}, p_{i3}\}$ for all $1\le i\le j$, yielding in total $j+2$ edges (the yellow and violet edges in \Cref{fvgtcw}).
    As this matching is maximal, no stable matching not containing $\{c_i, w\}$ can contain more edges.

    \item If $\{c_i, w\}$ is contained in the matching, then $M$ can contain the edges $\{p_{i1}, p_{i2}\}$ and~$\{p_{i3}, p_{i4}\}$ for $i\ge 2$ and the edges $\{c, p_{12}\}$ and $\{p_{13}, p_{14}\}$ (the green and violet edges in \Cref{fvgtcw}).
    One easily checks that this results in a stable matching leaving only $p_{11}$ unmatched, and thus, no matching can contain more edges. Therefore, $M$ contains $2j+2$ edges.
   \end{enumerate}

   The size of the gadget is obviously linear in $j$. It remains to show that the tree-cut width is at most four. To see this, we give a tree-cut decomposition. The tree is a star with~$j$~leaves, where the center contains $\{w, w', w'', c\}$, while the $j$-th leaf contains the vertices $\{p_{q1}, p_{q2}, p_{q3}, p_{q4}\}$ for $q\in [j]$ (see \Cref{fvgtcw}).
   Thus, we have $\adh (t) \le 1$ for all $t\in V(T)$.
   Furthermore, the 3-center of each torso is just the bag itself. Thus, the tree-cut width is at most four.
 \end{proof}

 From this, we construct a vertex-selection gadget as follows.
 For each $j\in [n]$, we add a $(2j-1, c_i, C+ j(k-1))$-vertex gadget for a sufficiently large number $C$, depending polynomially on~$n$, ensuring that $c_i$ has to match to one of those vertex gadgets.
 More precisely, let $C:=(14 + 6n) m k (k-1) + 8 k (k-1) (n -1)$.

 It is easy to see that the tree-cut width of a vertex-selection gadget is at most four since one can construct an according tree-cut decomposition by taking a bag $\{c_i\}$ as the root, and adding for each vertex-gadget a tree-cut decomposition of width four, where the root of this tree-cut decomposition is a child of $\{c_i\}$.

 Next, we turn to the edge gadgets:

 \subsubsection{The Edge Gadget}

 The idea behind the edge gadgets is that if the vertex $c_i$ of a vertex-selection gadget is ranked ``bad'', then there are so-called incidence vertices $c_{ij}$ which have to be ranked ``good'' in order to avoid a blocking pair $\{c_i, c_{ij}\}$, and thus, a stable matching contains ``few'' edges inside this edge gadget, compensating the fact that the matching contains ``many'' edges inside the vertex-selection gadget.
 Finally, if $c_{ij}$ and $c_{ji}$ do not match to the same edge gadget, then this matching contains one edge less than if they would match to the same gadget.

 We now prove that the edge gadget has the desired properties.

 \begin{lemma}\label{leg2}
   For any $k_1, k_2\in \mathbb{N}$, we can construct a gadget with tree-cut width at most 10 with two outgoing edges $e_1 =\{v^1, w^1\}$ and $e_2 = \{v^2, w^2\}$ such that

   \begin{enumerate}[label=(\roman*)]
    \item if $e_1$ and $e_2$ are contained in a maximum stable matching $M$, then $M$ contains $6+2k_1+2k_2$ edges with at least one endpoint in the gadget,

    \item if $e_i$ is contained in a maximum stable matching $M$ but $e_{3-i}$ is not, then $M$ contains $5+2k_i+k_{3-i}$ edges with at least one endpoint in the gadget, and

    \item if neither $e_1 $ nor $e_2$ is contained in a maximum stable matching $M$, then $M$ contains $5+k_1+k_2$ edges with at least one endpoint in the gadget.
   \end{enumerate}

   Furthermore, the gadget can be constructed in $O(k_1+k_2)$ time, and $\{v^i, w^i\}$ is not a blocking pair in any of the three cases, even if $v^i$ is unmatched.
 \end{lemma}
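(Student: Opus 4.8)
The plan is to assemble the gadget from two symmetric \emph{arms} joined through two central vertices, reusing the triangle-forcing idea of the vertex gadget in \cref{lvg2}. Arm $i$ ($i\in\{1,2\}$) has a pivot $v^i$ carrying the outgoing edge $e_i=\{v^i,w^i\}$, a triangle $v^i v'^i v''^i$, and a hub $c^i$ adjacent to $v^i$, to a central vertex $d^i$, and to the middle vertex $p^{i,q}_2$ of each of $k_i$ three-paths $p^{i,q}_1 p^{i,q}_2 p^{i,q}_3 p^{i,q}_4$ ($q\in[k_i]$); the two arms are linked only by the central edge $\{d^1,d^2\}$. The preferences I would use are: the triangle forces $v^i$ to be matched outside it (to $w^i$ or to $c^i$) with $v'^i,v''^i$ matched to each other, exactly as in \cref{lvg2}; $v^i$ ranks $w^i$ strictly worse than $c^i$ and the triangle neighbours, so that $\{v^i,w^i\}$ never blocks even when $v^i$ is unmatched; $c^i$ ranks $d^i$ strictly first (then the path middles, then $v^i$); and each three-path is tuned so that it yields two matching edges when its middle vertex is not claimed by $c^i$ and is forced into the single edge $\{p^{i,q}_2,p^{i,q}_3\}$ otherwise (to kill the potential blocking pair $\{c^i,p^{i,q}_2\}$). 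The essential trick is that $d^i$ ranks its only two neighbours $c^i$ and $d^{3-i}$ \emph{tied} at the top, so that under weak stability the central edge may be present or absent without ever forming a blocking pair.

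Reading off the three cases rests on two observations: $c^i$ is free precisely when $e_i\in M$, and then, ranking $d^i$ first, it is matched to $d^i$; hence $d^i$ is matched to $c^i$ exactly when $e_i\in M$, and (by the tie at $d^i$) the central edge $\{d^1,d^2\}$ occurs, and must occur for maximality, exactly when both $d^1,d^2$ are otherwise free, i.e.\ when neither $e_1$ nor $e_2$ lies in $M$. Each arm contributes two non-path edges in every case (the triangle edge together with either $e_i$ and $\{c^i,d^i\}$, or the single edge $\{v^i,c^i\}$); its paths contribute $2k_i$ if $e_i\in M$ and $k_i$ otherwise; the central edge adds one iff neither outgoing edge is present. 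Summation then gives $6+2k_1+2k_2$, $5+2k_i+k_{3-i}$, and $5+k_1+k_2$ in cases (i)--(iii), and one checks directly that the exhibited matchings are weakly stable.

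The main obstacle I expect is \emph{tightness}: that these values upper-bound \emph{every} stable matching conditioned on which $e_i$ it contains, not merely the constructed ones. In case (i) the competitor to rule out lets each $c^i$ absorb a three-path (as in \cref{lvg2}) instead of taking $\{c^i,d^i\}$, which frees $d^1,d^2$ and admits the central edge; I would show this trades the two edges $\{c^1,d^1\},\{c^2,d^2\}$ for the single edge $\{d^1,d^2\}$ and is therefore strictly smaller, so the maximum is attained by the configuration matching $c^i$ to $d^i$. In cases (ii) and (iii) the blocking-pair constraint at each hub forces every three-path of a non-selected arm into its middle edge, capping that arm's path contribution at $k_i$, while the triangle caps the pivot's contribution; a short exchange argument then shows no stable matching can exceed the stated bound. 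Throughout, the tie at $d^i$ is exactly what simultaneously permits the central edge in case (iii) and forbids it (without blocking) in case (i).

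Finally, for the tree-cut bound I would exhibit the caterpillar tree-cut decomposition whose centre bag holds the ten core vertices $\{v^i,v'^i,v''^i,c^i,d^i:i\in\{1,2\}\}$ and whose leaves are the bags $\{p^{i,q}_1,p^{i,q}_2,p^{i,q}_3,p^{i,q}_4\}$, one per three-path, each a child of the centre. Every leaf meets the rest of the graph only in the single edge $\{c^i,p^{i,q}_2\}$, so all adhesions are at most two (the centre exposes only $e_1,e_2$ to the outside); after suppressing vertices of degree at most two, the torso at the centre retains at most its ten vertices and each leaf torso has constant size, whence $\tcw\le 10$. The gadget plainly has $O(k_1+k_2)$ vertices and edges, completing the construction.
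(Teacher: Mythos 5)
Your gadget has essentially the paper's topology (a central path joining two triangle-guarded pivots and two hubs with pendant $3$-paths; your tie at $d^i$ plays exactly the role of the paper's tie at $y^i$), but your preference assignment at the pivots breaks cases (i) and (ii). You rank the external neighbour $w^i$ \emph{strictly below} the triangle neighbours at the pivot $v^i$ (note also that in the lemma's statement $v^i$ denotes the vertex \emph{outside} the gadget, whereas your $v^i$ carries the triangle). The triangle-forcing mechanism you borrow from \cref{lvg2} is inherently cyclic: whenever $\{v'^i,v''^i\}\in M$, one of these two vertices strictly prefers $v^i$ to its partner, so stability requires $v^i$ to be matched to someone it ranks \emph{at least as well as} that triangle vertex; and if $\{v'^i,v''^i\}\notin M$, the triangle itself contains a blocking pair. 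With your ordering this is satisfiable when $v^i$ is matched to the hub $c^i$, but not when $v^i$ is matched to $w^i$: then $v^i$ sits at its worst rank and forms a blocking pair with the triangle vertex that top-ranks it. Hence in your gadget \emph{no} stable matching contains $e_i$ at all; the matchings you exhibit in cases (i) and (ii) are not stable, and the gadget cannot distinguish the three cases by matching size, which is precisely what the W[1]-hardness reduction needs (its forward direction explicitly builds stable matchings realizing cases (i) and (iii)).

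The fix is what the paper does: at the pivot, rank the external neighbour and the hub \emph{tied at the top}, above both triangle vertices. Then the pivot is always matched at rank $1$ (to the hub or to the external vertex), so no triangle vertex can ever block with it, and the ``never blocking even if the outside endpoint is unmatched'' property follows not from ranking the external vertex last but from the pivot never strictly preferring it to its own partner. (Any pivot ranking with hub $\le$ external $\le$ triangle would do; external ranked at least as well as the triangle is the non-negotiable constraint your construction violates.) The rest of your outline---the counts $6+2k_1+2k_2$, $5+2k_i+k_{3-i}$ and $5+k_1+k_2$, the forcing of the path middles into $\{p^{i,q}_2,p^{i,q}_3\}$ when the hub is matched badly, and the star-shaped tree-cut decomposition giving width at most $10$---matches the paper's argument and would go through once the pivot preferences are repaired.
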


  We call such a gadget a $(v^1, j_1, k_1, v^2, j_2, k_2)$-edge gadget. The vertex $v^1$ (which is not part of the gadget) is called \emph{left} neighbor, while vertex $v^2$ (which is also not part of the gadget) is called \emph{right} neighbor.

  \begin{figure}[bt]
    \begin{center}
      \begin{tikzpicture}[xscale = 1.1, yscale = 1.5]
        \draw[dotted] (3.5, -0.3) ellipse (4 and 1.2);
        \node[vertex, label=180:$v^1$] (v2) at (-1.5, -1) {};
        \node[vertex, label=270:$w^1$] (v3) at (1, -1) {};
        \node[vertex] (v1) at (v3) {};
        \node[vertex, label=90:$x^1$] (v4) at (2, -1) {};
        \node[vertex, label=90:$y^1$] (v5) at (3, -1) {};
        \node[vertex, label=90:$y^2$] (v6) at (4, -1) {};
        \node[vertex, label=90:$x^2$] (v7) at (5, -1) {};
        \node[vertex, label=270:$w^2$] (v8) at (6, -1) {};
        \node[vertex, label=0:$v^2$] (v9) at (8.5, -1) {};
        \node[vertex] (cji) at (v8) {};
        \node[vertex, label=180:$p_{14}^1$] (p31) at ($(v4) + (-3.5, -4)$) {};
        \node[vertex, label=180:$p_{13}^1$] (p32) at ($(p31) + (0, 1)$) {};
        \node[vertex, label=180:$p_{12}^1$] (p33) at ($(p32) + (0, 1)$) {};
        \node[vertex, label=180:$p_{11}^1$] (p34) at ($(p33) + (0, 1)$) {};
        \draw[dotted] ($(p31) + (-0.2, 1.5)$) ellipse (0.65 and 2.);

        \node[vertex, label=0:$p_{24}^1$] (pl31) at ($(v4) + (-2.5, -5.3)$) {};
        \node[vertex, label=0:$p_{23}^1$] (pl32) at ($(pl31) + (0, 1)$) {};
        \node[vertex, label=0:$p_{22}^1$] (pl33) at ($(pl32) + (0, 1)$) {};
        \node[vertex, label=90:$p_{21}^1$] (pl34) at ($(pl33) + (0, 1)$) {};
        \draw[dotted] ($(pl31) + (0.15, 1.5)$) ellipse (0.65 and 2.2);

        \node[vertex, label=270:$p_{k_14}^1$] (p1) at ($(v4) + (0, -5.3)$) {};
        \node[vertex, label=0:$p_{k_13}^1$] (p2) at ($(p1) + (0, 1)$) {};
        \node[vertex, label=0:$p_{k_12}^1$] (p3) at ($(p2) + (0, 1)$) {};
        \node[vertex, label=90:$p_{k_11}^1$] (p4) at ($(p3) + (0, 1)$) {};
        \draw[dotted] ($(p1) + (0.1, 1.5)$) ellipse (0.75 and 2.2);
        
        \draw (p1) edge node[pos=0.2, fill=white, inner sep=2pt] {\scriptsize $1$}  node[pos=0.76, fill=white, inner sep=2pt] {\scriptsize $1$} (p2);
        \draw (p2) edge node[pos=0.2, fill=white, inner sep=2pt] {\scriptsize $1$}  node[pos=0.76, fill=white, inner sep=2pt] {\scriptsize $1$} (p3);
        \draw (p3) edge node[pos=0.2, fill=white, inner sep=2pt] {\scriptsize $3$}  node[pos=0.76, fill=white, inner sep=2pt] {\scriptsize $1$} (p4);
        \draw (p3) edge[bend left] node[pos=0.2, fill=white, inner sep=2pt] {\scriptsize $2$}  node[pos=0.76, fill=white, inner sep=2pt] {\scriptsize $2$} (v4);

        \node[vertex, label=90:$p_{11}^2$] (p61) at ($(v7) + (-0, -2)$) {};
        \node[vertex, label=180:$p_{12}^2$] (p62) at ($(p61) + (0, -1)$) {};
        \node[vertex, label=180:$p_{13}^2$] (p63) at ($(p62) + (0, -1)$) {};
        \node[vertex, label=180:$p_{14}^2$] (p64) at ($(p63) + (0, -1)$) {};
        \draw[dotted] ($(p61) + (-0.2, -1.5)$) ellipse (0.65 and 2.2);

        \node[vertex, label=350:$p_{21}^2$] (p261) at ($(v7) + (1, -2)$) {};
        \node[vertex, label=0:$p_{22}^2$] (p262) at ($(p261) + (0, -1)$) {};
        \node[vertex, label=0:$p_{23}^2$] (p263) at ($(p262) + (0, -1)$) {};
        \node[vertex, label=0:$p_{24}^2$] (p264) at ($(p263) + (0, -1)$) {};
        \draw[dotted] ($(p261) + (0.2, -1.5)$) ellipse (0.65 and 2.2);

        \node[vertex, label=0:$p_{k_21}^2$] (p361) at ($(v7) + (3.5, -2)$) {};
        \node[vertex, label=0:$p_{k_22}^2$] (p362) at ($(p361) + (0, -1)$) {};
        \node[vertex, label=0:$p_{k_23}^2$] (p363) at ($(p362) + (0, -1)$) {};
        \node[vertex, label=0:$p_{k_24}^2$] (p364) at ($(p363) + (0, -1)$) {};
        \draw[dotted] ($(p361) + (0.2, -1.5)$) ellipse (0.65 and 2.2);
        
        \draw (p31) edge node[pos=0.2, fill=white, inner sep=2pt] {\scriptsize $1$}  node[pos=0.76, fill=white, inner sep=2pt] {\scriptsize $1$} (p32);
        \draw (p32) edge node[pos=0.2, fill=white, inner sep=2pt] {\scriptsize $1$}  node[pos=0.76, fill=white, inner sep=2pt] {\scriptsize $1$} (p33);
        \draw (p33) edge node[pos=0.2, fill=white, inner sep=2pt] {\scriptsize $3$}  node[pos=0.76, fill=white, inner sep=2pt] {\scriptsize $1$} (p34);
        \draw (p33) edge node[pos=0.2, fill=white, inner sep=1pt] {\scriptsize $2$}  node[pos=0.76, fill=white, inner sep=1pt] {\scriptsize $2$} (v4);
        \draw (pl31) edge node[pos=0.2, fill=white, inner sep=2pt] {\scriptsize $1$}  node[pos=0.76, fill=white, inner sep=2pt] {\scriptsize $1$} (pl32);
        \draw (pl32) edge node[pos=0.2, fill=white, inner sep=2pt] {\scriptsize $1$}  node[pos=0.76, fill=white, inner sep=2pt] {\scriptsize $1$} (pl33);
        \draw (pl33) edge node[pos=0.2, fill=white, inner sep=2pt] {\scriptsize $3$}  node[pos=0.76, fill=white, inner sep=2pt] {\scriptsize $1$} (pl34);
        \draw (pl33) edge node[pos=0.2, fill=white, inner sep=2pt] {\scriptsize $2$}  node[pos=0.76, fill=white, inner sep=2pt] {\scriptsize $2$} (v4);

        \draw (p63) edge node[pos=0.2, fill=white, inner sep=2pt] {\scriptsize $1$}  node[pos=0.76, fill=white, inner sep=2pt] {\scriptsize $1$} (p64);
        \draw (p63) edge node[pos=0.2, fill=white, inner sep=2pt] {\scriptsize $1$}  node[pos=0.76, fill=white, inner sep=2pt] {\scriptsize $1$} (p62);
        \draw (p62) edge node[pos=0.2, fill=white, inner sep=2pt] {\scriptsize $3$}  node[pos=0.76, fill=white, inner sep=2pt] {\scriptsize $1$} (p61);
        \draw (p62) edge[bend left] node[pos=0.2, fill=white, inner sep=2pt] {\scriptsize $2$}  node[pos=0.76, fill=white, inner sep=2pt] {\scriptsize $2$} (v7);

        \draw (p263) edge node[pos=0.2, fill=white, inner sep=2pt] {\scriptsize $1$}  node[pos=0.76, fill=white, inner sep=2pt] {\scriptsize $1$} (p264);
        \draw (p263) edge node[pos=0.2, fill=white, inner sep=2pt] {\scriptsize $1$}  node[pos=0.76, fill=white, inner sep=2pt] {\scriptsize $1$} (p262);
        \draw (p262) edge node[pos=0.2, fill=white, inner sep=2pt] {\scriptsize $3$}  node[pos=0.76, fill=white, inner sep=2pt] {\scriptsize $1$} (p261);
        \draw (p262) edge node[pos=0.2, fill=white, inner sep=2pt] {\scriptsize $2$}  node[pos=0.76, fill=white, inner sep=2pt] {\scriptsize $2$} (v7);

        \draw (p363) edge node[pos=0.2, fill=white, inner sep=2pt] {\scriptsize $1$}  node[pos=0.76, fill=white, inner sep=2pt] {\scriptsize $1$} (p364);
        \draw (p363) edge node[pos=0.2, fill=white, inner sep=2pt] {\scriptsize $1$}  node[pos=0.76, fill=white, inner sep=2pt] {\scriptsize $1$} (p362);
        \draw (p362) edge node[pos=0.2, fill=white, inner sep=2pt] {\scriptsize $3$}  node[pos=0.76, fill=white, inner sep=2pt] {\scriptsize $1$} (p361);
        \draw (p362) edge node[pos=0.2, fill=white, inner sep=2pt] {\scriptsize $2$}  node[pos=0.76, fill=white, inner sep=2pt] {\scriptsize $2$} (v7);

        \node[vertex, label=90:$w^{\prime 1}$] (v11) at (1.5, 0) {};
        \node[vertex, label=90:$w^{\prime \prime 1}$] (v12) at (0.5, 0) {};
        \node[vertex, label=90:$w^{\prime 2}$] (v61) at (5.5, 0) {};
        \node[vertex, label=90:$w^{\prime \prime 2}$] (v62) at (6.5,0) {};

        \draw (v1) edge node[pos=0.2, fill=white, inner sep=1pt] {\scriptsize $2$}  node[pos=0.76, fill=white, inner sep=1pt] {\scriptsize $2$} (v11);
        \draw (v11) edge node[pos=0.2, fill=white, inner sep=1pt] {\scriptsize $1$}  node[pos=0.76, fill=white, inner sep=1pt] {\scriptsize $2$} (v12);
        \draw (v12) edge node[pos=0.2, fill=white, inner sep=1pt] {\scriptsize $1$}  node[pos=0.76, fill=white, inner sep=1pt] {\scriptsize $3$} (v1);
        \draw (cji) edge node[pos=0.2, fill=white, inner sep=1pt] {\scriptsize $2$}  node[pos=0.76, fill=white, inner sep=1pt] {\scriptsize $2$} (v61);
        \draw (v61) edge node[pos=0.2, fill=white, inner sep=1pt] {\scriptsize $1$}  node[pos=0.76, fill=white, inner sep=1pt] {\scriptsize $2$} (v62);
        \draw (v62) edge node[pos=0.2, fill=white, inner sep=1pt] {\scriptsize $1$}  node[pos=0.76, fill=white, inner sep=1pt] {\scriptsize $3$} (cji);
        \draw (v2) edge node[pos=0.2, fill=white, inner sep=2pt] {\scriptsize $j_1$}  node[pos=0.76, fill=white, inner sep=1pt] {\scriptsize $1$} (v1);
        \draw (v3) edge node[pos=0.2, fill=white, inner sep=1pt] {\scriptsize $1$}  node[pos=0.76, fill=white, inner sep=1pt] {\scriptsize $3$} (v4);
        \draw (v4) edge node[pos=0.2, fill=white, inner sep=1pt] {\scriptsize $1$}  node[pos=0.76, fill=white, inner sep=1pt] {\scriptsize $1$} (v5);
        \draw (v5) edge node[pos=0.2, fill=white, inner sep=1pt] {\scriptsize $1$}  node[pos=0.76, fill=white, inner sep=1pt] {\scriptsize $1$} (v6);
        \draw (v6) edge node[pos=0.2, fill=white, inner sep=1pt] {\scriptsize $1$}  node[pos=0.76, fill=white, inner sep=1pt] {\scriptsize $1$} (v7);
        \draw (v7) edge node[pos=0.2, fill=white, inner sep=1pt] {\scriptsize $3$}  node[pos=0.76, fill=white, inner sep=1pt] {\scriptsize $1$} (v8);
        \draw (v8) edge node[pos=0.2, fill=white, inner sep=1pt] {\scriptsize $1$}  node[pos=0.76, fill=white, inner sep=2pt] {\scriptsize $j_2$} (v9);

        \begin{scope}[on background layer]
          \newcommand{\colorBetweenTwoNodes}[3]{
            \fill[#1] ($(#2) + (0, .08)$) to ($(#2) - (0, .08)$) to ($(#3) - (0,.08)$) to ($(#3) + (0,.08)$) -- cycle;
          }
          \colorBetweenTwoNodes{mygreen}{v1}{v2}
          \colorBetweenTwoNodes{mygreen}{v5}{v4}
          \colorBetweenTwoNodes{mygreen}{v6}{v7}
          \colorBetweenTwoNodes{mygreen}{v8}{v9}

          \colorBetweenTwoNodes{mylila}{v11}{v12}
          \colorBetweenTwoNodes{mylila}{v61}{v62}

          \colorBetweenTwoNodes{myyellow}{v4}{v3}
          \colorBetweenTwoNodes{myyellow}{v6}{v5}
          \colorBetweenTwoNodes{myyellow}{v7}{v8}
        \end{scope}
        \begin{scope}[on background layer]
          \newcommand{\colorBetweenTwoNodes}[3]{
            \fill[#1] ($(#2) + (0.08, .0)$) to ($(#2) - (0.08, .0)$) to ($(#3) - (0.08,.0)$) to ($(#3) + (0.08,.0)$) -- cycle;
          }
          \colorBetweenTwoNodes{mygreen}{p31}{p32}
          \colorBetweenTwoNodes{mygreen}{p33}{p34}
          \colorBetweenTwoNodes{mygreen}{pl31}{pl32}
          \colorBetweenTwoNodes{mygreen}{pl33}{pl34}
          \colorBetweenTwoNodes{mygreen}{p61}{p62}
          \colorBetweenTwoNodes{mygreen}{p63}{p64}
          \colorBetweenTwoNodes{mygreen}{p261}{p262}
          \colorBetweenTwoNodes{mygreen}{p263}{p264}
          \colorBetweenTwoNodes{mygreen}{p361}{p362}
          \colorBetweenTwoNodes{mygreen}{p363}{p364}

          \colorBetweenTwoNodes{myyellow}{v3}{v4}
          \colorBetweenTwoNodes{myyellow}{v5}{v6}
          \colorBetweenTwoNodes{myyellow}{v7}{v8}
          \colorBetweenTwoNodes{myyellow}{p62}{p63}
          \colorBetweenTwoNodes{myyellow}{p262}{p263}
          \colorBetweenTwoNodes{myyellow}{p362}{p363}
          \colorBetweenTwoNodes{myyellow}{pl32}{pl33}
          \colorBetweenTwoNodes{myyellow}{p32}{p33}
          
          \colorBetweenTwoNodes{mygreen}{p1}{p2}
          \colorBetweenTwoNodes{mygreen}{p3}{p4}
          \colorBetweenTwoNodes{myyellow}{p2}{p3}
        \end{scope}
        
        \node (dots) at ($0.25*(pl32)+0.25*(pl33)+0.25*(p2)+0.25*(p3) + (0.13,0)$) {\Huge\dots};
        
        \node (dots2) at ($0.25*(p262)+0.25*(p263)+0.25*(p362)+0.25*(p363) + (0.13,0)$) {\Huge\dots};
      \end{tikzpicture}
      \caption{A $(v^1, j_1, k_1,v^2, j_2, k_2)$-edge gadget.
      The dotted ellipses describe a near-partition of the gadget. The green edges together with the violet edges and the yellow edges together with the violet edges form a stable matching.}
    \label{fedgegadgettcw}
    \end{center}
    \end{figure}

 \begin{proof}
  A $(v^1, j_1, k_1, v^2, j_2, k_2)$-edge gadget is depicted in \Cref{fedgegadgettcw}.
  The gadget consists of a path $w^1, x^1, y^1, y^2, x^2, w^2$ of length five, where the end vertices $w^1$ and $w^2$ of the path are connected to the outgoing edges $e_1$ and $e_2$.
  For each $w^i$, there exists also two vertices~$w'^i$ and $w''^i$ which form a triangle together with $w^i$.
  The vertex $x^i$ is connected to $k_i$ paths $p^i_{j1}, p^i_{j2},p^i_{j3},p^i_{j4}$ of length three.
  More precisely, $x^i$ is connected to $p^i_{j2}$ for $1\le j \le k_i$.
  The preferences can be seen in \Cref{fedgegadgettcw}.
  Obviously, the gadget can be constructed in time linear in $k_1 + k_2$.

  Let $M^*$ be a maximum stable matching.
  If $M^*(w^i)\notin \{v^i, x^i\}$, then one of the vertices~$w^i$, $w'^i$, and $w''^i$ is unmatched in $M^*$. It is easy to see that this unmatched vertex forms a blocking pair with another vertex of this triangle, so we may assume that $w^i$ is matched to~$x^i$ or $v^i$ in $M^*$.

  We define a matching for each of the three cases as follows:
  \begin{enumerate}[label=(\roman*)]
    \item $
            M_1 := \{\{w^{\prime i}, w^{\prime\prime i}\}, \{v^i, w^i\}, \{x^i, y^i\}:i\in [2]\} \cup \{\{p^i_{j1}, p^i_{j2}\}, \{p^i_{j3}, p^i_{j4}\} : i\in [2], j\in [k_i]\}
          $ (green and violet edges in \Cref{fedgegadgettcw}).
    \item We assume without loss of generality that $e_1$ is contained in a maximum stable matching.
    \begin{align*}
      M_2 := &\{\{w^{\prime i}, w^{\prime\prime i}\}:i\in [2]\}\cup \{\{v^1, w^1\}, \{x^1, y^1\}, \{x^2, w^2\}\} \\
      &\cup \{\{p^1_{j1}, p^1_{j2}\}, \{p^1_{j3}, p^1_{j4}\} : j\in [k_1]\} \cup \{\{p^2_{j2}, p^2_{j3}\} : j \in [k_2]\}.
    \end{align*}
    \item $M_3:= \{\{w^{\prime i}, w^{\prime\prime i}\}, \{w^i, x^i\}: i\in [2]\} \cup \{\{y_1, y_2\}\} \cup \{\{p^i_{j2}, p^i_{j3}\}: i\in [2], j\in [k_i]\}$ (yellow and violet edges in \Cref{fedgegadgettcw}).
  \end{enumerate}

  One easily checks that $M_1$, $M_2$, and $M_3$ are stable.

  If $M^* (w^i) = u^i$, then $M^*$ must contain the edge $\{p^i_{j2}, p^i_{j3}\}$ for all $j\in [k_i]$ to avoid the blocking pair $\{u^i, p^i_{j2}\}$ (yellow edges in \Cref{fedgegadgettcw}), yielding in total $k_i$ edges.

  Otherwise, $M^* (w^i) = x^i$ holds.
  Thus, $M^*$ can contain the edges $\{p^i_{j1}, p^i_{j2}\}$ and $\{p^i_{j3}, p^i_{j4}\}$ for~$j\in [k_i]$ (green edges in \Cref{fedgegadgettcw}), yielding in total $2k_i$ edges.

  If $\{v^i, w^i\}\notin M$ for some $i$, then $M$ can contain at most three edges inside the (unique)~$v^1$-$v^2$-path inside the edge gadget, while otherwise $M$ can contain the four edges $\{v^i, w^i\}$, $\{x^i, y^i\}$ with $1\le i\le 2$.

  Thus, $M$ cannot contain more than
  \begin{itemize}
    \item $6+2k_1 + 2k_2$ edges with at least one endpoint in the gadget in case $(i)$,
    \item $5+2k_i+k_{3-i}$ edges with at least one endpoint in the gadget in case $(ii)$, and
    \item $5 + k_1 + k_2$ edges with at least one endpoint in the gadget in case $(iii)$.
  \end{itemize}

  As the sizes of $M_1$, $M_2$ and $M_3$ match these upper bounds, the upper bounds are tight.

  It remains to show that the resulting graph has bounded tree-cut width. This can be seen by considering the following tree-cut decomposition (see \Cref{fedgegadgettcw}): $T$ is a star, and the center corresponds to $\{w^i, w'^i, w''^i, x^i, y^i : i\in [2]\}$, while each 3-path is a leaf, showing that the tree-cut width is at most 10.
 \end{proof}

  \subsection{The Reduction}

  We are now ready to describe our reduction from \textsc{Clique}. Let $(G, k)$ be a \textsc{Clique}-instance. To simplify notation, we assume that $V(G) = [n]$.
  The \textsc{Max-SRTI}-instance contains $k$ vertices $c_1,\dots , c_k$, each of which is connected to $k-1$ incidence vertices $c_{ij}$ for $i\neq j$.

  Each vertex $c_i$ is connected to incidence vertex $c_{ij}$ by $n- 1$ parallel edges for each $j\in [k]$ (see \Cref{fvic}).
  They are ranked at positions $2l$ at $c_i$ and at positions $2(n-l)$ at $c_{ij}$ for~$1\le l \le n-1$.

  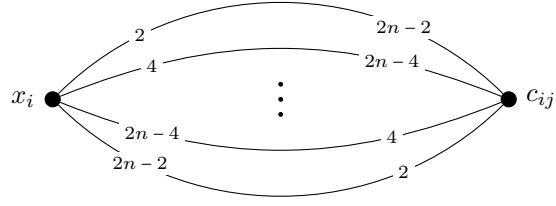
\begin{figure}[bt]
    \begin{center}
      \begin{tikzpicture}

        \node[vertex, label=180:$x_i$] (xi) at (-3, 0) {};
        \node[vertex, label=0:$c_{ij}$] (xij) at (3, 0) {};

        \draw[bend right = 45] (xi) edge node[pos=0.2, fill=white, inner sep=2pt] {\scriptsize $2n-2$}  node[pos=0.76, fill=white, inner sep=2pt] {\scriptsize $2$} (xij);
        \draw[bend right = 22.5] (xi) edge node[pos=0.2, fill=white, inner sep=2pt] {\scriptsize $2n-4$}  node[pos=0.76, fill=white, inner sep=2pt] {\scriptsize $4$} (xij);
        \draw[bend right = -22.5] (xi) edge node[pos=0.2, fill=white, inner sep=2pt] {\scriptsize $4$}  node[pos=0.76, fill=white, inner sep=2pt] {\scriptsize $2n-4$} (xij);
        \draw[bend right = -45] (xi) edge node[pos=0.2, fill=white, inner sep=2pt] {\scriptsize $2$}  node[pos=0.76, fill=white, inner sep=2pt] {\scriptsize $2n-2$} (xij);
        \node[draw, circle, fill, inner sep = 0.5] (dot1) at (0, 0.2) {};
        \node[draw, circle, fill, inner sep = 0.5] (dot2) at (0, 0) {};
        \node[draw, circle, fill, inner sep = 0.5] (dot3) at (0, -0.2) {};
      \end{tikzpicture}

    \end{center}
    \caption{The connection between a vertex $x_i$ and the incidence vertex $c_{ij}$.}\label{fvic}
  \end{figure}

  Two incidence vertices $c_{ij}$ and $c_{ji}$ with $i<j$ are connected by $2m$ edge gadgets (for each edge $e=(v, w)\in E(\overleftrightarrow{G})$, a $(c_{ij}, 2(n-v)+1, n-v, c_{ji}, 2(n-w)+1, n-w)$-edge gadget). This means that for any edge $e=(v, w)\in E(\overleftrightarrow{G})$ and each $1\le i < j \le k$, we add an edge gadget, where $c_{ij}$ ranks the edge gadget at position $2(n-v)+1$, and $c_{ji}$ ranks the edge gadget at position $2(n-w)+1$.

  For each $i\in [k] $ and $j\in [n]$, we add a $(2j-1, c_i, C+ j(k-1))$-vertex gadget for a sufficiently large number $C$, depending polynomially on $n$, ensuring that $c_i$ has to match to one of those vertex gadgets.
  More precisely, we let $C:=(14 + 6n) m k (k-1) + 8 k (k-1) (n -1)$, as this is an upper bound on the number of edges not contained in vertex gadgets (the first summand arises from the edge gadgets, and the second summand from the parallel-edges gadgets).

  We call the resulting graph $H$.

  The intuition behind the reduction is the following: As $C$ is sufficiently large, each $c_i$ matches to a vertex gadget, corresponding to a vertex $x_i$. This corresponds to selecting $x_i$ to be part of the clique.
  The incidence vertices $c_{ij}$ then have to match to edge gadgets.
  By \Cref{leg2}, it is better to match $c_{ij}$ and $c_{ji}$ to the same edge gadget than to different ones, so the matching is only large enough if $c_{ij}$ and $c_{ji}$ match to the same edge gadget. The parallel edges between $c_i$ and $c_{ij}$ ensure that $c_{ij}$ is matched at rank at most $2(n+1-x_i)$. Thus, $c_{ij}$ must match to an edge gadget corresponding to an edge $e=\{v, w\}$ with $v\le x$ (remember that $V(G)=[n]$, and thus there is a natural order on $V(G)$). If $v < x$, then the edge gadget contains less edges, and therefore in any stable matching which is large enough, we have $v = x$.

  To keep the calculations simpler, we define $\kappa$ to be the sum over all edge, parallel-edges and vertex gadgets of the minimum number of edges any maximum stable matching contains inside them, assuming that at least one stable matching exists (i.e., for any $j$-vertex gadget, we add $2j+2$ to $\kappa$, for each parallel-edges gadget, we add $3$, and for each $(k_1, k_2)$-edge gadget, we add $5 + k_1+k_2$).
  Thus, if we define $\kappa_{(k_1, k_2)}$ to be the number of $(v^1, j_1, k_1, v^2, j_2, k_2)$-edge gadgets contained in $H$, then we have \[\kappa:=k\Bigl(n(C+2)+(k-1)\frac{n(n+1)}{2}\Bigr) + 3 k (n-1) + \sum_{(k_1, k_2)\in \mathbb{N}^2} (5 + k_1 +k_2)\kappa_{(k_1, k_2)}\]
  where the first summand arises from the vertex gadgets, the second from the parallel-edges gadgets (a stable matching in such a gadget contains at least three edges) and the third from the edge gadgets.

  If a gadget contains by $\ell$ more edges than the minimum number of edges any stable matching must contain inside the gadget, then we say that the gadget has $\ell$ \emph{additional} edges.

  The key part of the correctness of the reduction will be the following lemma which gives an upper bound on the size of a stable matching in $H$.

  \begin{lemma}\label{lmnoe}
    Let $M$ be a stable matching in $H$. Then $|M|\le \kappa + k(k-1) (n + 0.5) + kC$, and equality holds if and only if

    \begin{enumerate}
      \item every $c_i$ is matched to a vertex gadget,
      \item if $c_i$ is matched at rank $2l - 1$, then $c_{ij}$ is matched at rank $2(n-l) +1$, and
      \item for each $i< j$, $c_{ij}$ and $c_{ji}$ match to the same edge gadget.
    \end{enumerate}

  \end{lemma}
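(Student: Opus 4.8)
The plan is to express $|M|$ as the sum, over all gadgets $G$ of the construction, of the number of $M$-edges having at least one endpoint in $G$, and to argue that this sum equals $|M|$ \emph{exactly}. Indeed, the only vertices of $H$ lying outside every gadget are the selection vertices $c_i$ and the incidence vertices $c_{ij}$, and no edge of $H$ joins two of these (all connections between them are routed through a parallel-edges gadget or an edge gadget); hence every edge of $M$ is counted in the tally of exactly one gadget. I would then bound each tally by Lemma~\ref{lvg2} for vertex gadgets, by Lemma~\ref{leg2} for edge gadgets, and by the $6$-cycle analysis behind Lemma~\ref{lpeg} for parallel-edges gadgets (each such gadget carries at least $3$ edges, and one more precisely when the modelled edge is used, i.e.\ when both its endpoints $c_i,c_{ij}$ are matched into it). Writing each tally as its minimum (baseline) value plus its number of \emph{additional} edges, the baselines sum to exactly $\kappa$, so the task reduces to showing that the total number of additional edges is at most $kC+k(k-1)(n+0.5)$, with the stated equality profile.

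The heart of the argument is a per-selection-vertex accounting that exploits the cancellation supplied by Lemma~\ref{ltriv}. Fix $i$ and suppose $c_i$ is matched at rank $2l-1$, i.e.\ to its vertex gadget of index $l$; by Lemma~\ref{lvg2} this gadget then carries $C+l(k-1)$ additional edges. Lemma~\ref{ltriv} forces every incidence vertex $c_{ij}$ ($j\neq i$) to be matched at rank at most $2(n-l)+1$, so whenever $c_{ij}$ is matched to an edge gadget for an arc $(v,w)$ we have $v\ge l$, and by Lemma~\ref{leg2} the additional edges charged to the $c_{ij}$-side of that gadget number at most $n-v\le n-l$. Collecting the vertex-gadget additional edges of $c_i$ together with the edge-gadget additional edges charged to the $k-1$ incidence vertices $c_{ij}$ yields at most $(C+l(k-1))+(k-1)(n-l)=C+(k-1)n$, in which the index $l$ cancels. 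Thus each selection vertex and its incidence vertices jointly account for at most $C+(k-1)n$ additional edges, regardless of which vertex is selected.

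It remains to absorb the two residual sources of additional edges. An edge gadget matched on \emph{both} outgoing edges (by $c_{ij}$ and $c_{ji}$ simultaneously) carries exactly one further additional edge by Lemma~\ref{leg2}(i); since $c_{ij}$ and $c_{ji}$ can jointly lie in at most one common edge gadget, there are at most $\binom{k}{2}$ such gadgets, contributing at most $\binom{k}{2}=k(k-1)/2$ in total. A used parallel-edges gadget contributes one additional edge but forces the incident $c_i$ to be matched into it rather than into a vertex gadget; this replaces the $\Theta(C)$ additional edges of a vertex gadget by a single additional edge, so the per-$i$ budget $C+(k-1)n$ still holds for such an $i$ (now with large slack) and the parallel-edges contribution is already subsumed. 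Summing the per-$i$ bounds over $i\in[k]$ and adding the $\binom{k}{2}$ term gives total additional edges at most $kC+k(k-1)n+\binom{k}{2}=kC+k(k-1)(n+0.5)$, whence $|M|\le\kappa+kC+k(k-1)(n+0.5)$.

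Finally, equality is read off by tracing where each inequality is tight. The $\Theta(C)$ gap between the two cases of the per-$i$ budget forces every $c_i$ to be matched to a vertex gadget, giving condition~(1); tightness of $v\ge l$ forces $v=l$, i.e.\ each $c_{ij}$ is matched at rank exactly $2(n-l)+1$ when $c_i$ is matched at rank $2l-1$, giving condition~(2); and attaining the full $\binom{k}{2}$ requires every pair $c_{ij},c_{ji}$ to select the \emph{same} edge gadget, giving condition~(3). I expect the main obstacle to be the double-counting-free bookkeeping of the additional edges---in particular attributing each used parallel-edges gadget and each edge-gadget side to the correct selection vertex so that the clean cancellation of $l$ survives, and checking that the three tightness conditions can hold simultaneously (equivalently, that the common gadget selected by $c_{ij}$ and $c_{ji}$ corresponds to the arc $(l_i,l_j)$ for the vertices selected by $c_i,c_j$, which is exactly the point at which the clique structure of $G$ will later enter).
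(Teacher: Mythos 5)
Your proof is correct and follows essentially the same route as the paper's: both decompose $|M|$ into per-gadget baselines summing to $\kappa$ plus ``additional'' edges, both use Lemma~\ref{ltriv} to obtain the per-$i$ cancellation $(C+l(k-1))+(k-1)(n-l)=C+(k-1)n$, and both cap the bonus from doubly-matched edge gadgets at $\binom{k}{2}$ (the paper phrases this as charging half an edge to each of $c_{ij}$ and $c_{ji}$), with the equality conditions (1)--(3) read off from tightness exactly as you do. The only cosmetic difference is that the paper first forces every $c_i$ into a vertex gadget by invoking the size assumption together with the fact that all additional edges outside vertex gadgets total less than $C$, whereas you absorb that case (and the used parallel-edges gadgets) into the per-$i$ budget with slack; this is bookkeeping, not a different argument.
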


  \begin{proof}
    Let $M$ be a stable matching of size at least $\kappa+k(k-1) (n+ 0.5) + kC$.

    As $C$ is lower-bounded by the number of edges not contained in vertex gadgets, there are less than $C$ additional edges outside vertex gadgets. Thus, any stable matching containing at least $kC$ additional edges must match any vertex $c_i$ to a vertex gadget. Thus, we can assume that $c_i$ is matched to the vertex gadget corresponding to some vertex $x_i$. This yields $C + x_i (k-1)$ additional edges for vertex $c_i$.

    Consider any $(c_{ij}, 2k_1 +1, k_1, c_{ji}, 2k_2 +1, k_2)$-edge gadget between two incidence vertices~$c_{ij}$ and $c_{ji}$ with $i<j$. If the edge $\{c_{ij}, w^1\}\in M$, then we charge $k_1$ edges to $c_i$, and if $\{c_{ji}, w^2\}\in M$, we charge $k_2$ edges to $c_j$.
    If both $\{c_{ij}, w^1\}$ and $\{c_{ji}, w^2\}$ are contained in~$M$, then we additionally charge half an edge to $c_{ij}$ and $c_{ji}$.
    Then clearly the sum over all additional edges in edge gadgets equals the sum of charged edges.

    We know that $c_i$ is matched at rank $2x_i -1$, and yields $C+x_i(k-1)$ additional edges.
    As there exists a parallel-edges gadget which $c_i$ ranks at position~$2x_i -2$ and $c_{ij}$ ranks at position~$2 (n -\ell) + 2$, we know that $c_{ij}$ is matched at rank at most~$2(n-x_i)+1$.
    Thus, the incidence vertex $c_{ij}$ is matched to a $(c_{ij}, 2k_1 +1, k_1, c_{ji}, 2k_2 + 1, k_2)$-edge gadget with $k_1 \le (n-x_i)$ and $k_2\le n-x_j$.
    Therefore, there are at most $n-x_i$ edges charged to any $c_{ij}$, and $n-x_j$ edges charged to $c_{ji}$.
    Thus, for any fixed $i$, the sum of edges charged to~$c_i$ and $c_{ij}$ for $j\neq i$ is at most $x_i(k-1) + (k-1) (n-x_i) =(k-1) n$, where equality holds if and only if all $c_{ij}$ are matched at rank $2x_i -1$.

    The number of edges charged to incidence vertices is at most $\frac{k(k-1)}{2}$, and this is tight if and only if $c_{ij}$ and $c_{ji}$ match to the same edge gadget for all $i\neq j$.

    Thus, the total number of edges in $M$ is at most $\kappa + k (k-1) n + \frac{k(k-1)}{2} + kC$, with equality if and only if 1.-3. hold.
  \end{proof}

 \begin{theorem}\label{twhtcw}
   \textsc{Max-SRTI} parameterized by tree-cut width is W[1]-hard.
 \end{theorem}

 \begin{proof}
   The reduction is already described above.
   We now show $G$ contains a clique of size $k$ if and only if $H$ contains a stable matching of size at least $\kappa + k (k-1 ) n + \frac{k (k-1)}{2} +k C$.
   First, we show that a clique of size $k$ implies a stable matching of size at least $\kappa + k (k-1) n + \frac{k (k-1)}{2} + kC$, and afterwards, we will show the reverse direction.

  \begin{claim}
    If $G$ has a clique of size $k$, then $H$ has a stable matching of size at least $\kappa + k (k-1) n + \frac{k(k-1)}{2} + kC$.
  \end{claim}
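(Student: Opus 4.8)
The plan is to exhibit, for a given clique $\{x_1,\dots,x_k\}\subseteq V(G)$, an explicit stable matching $M$ in $H$ that meets conditions 1.--3. of \cref{lmnoe}; the size bound then follows immediately from the equality characterisation in that lemma, so the real content is the construction together with a stability check.

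I would build $M$ gadget by gadget. In the vertex-selection gadget of every $c_i$, I match $c_i$ into the vertex gadget with first parameter $2x_i-1$ (the one selecting $x_i$) using the ``matched'' stable configuration of \cref{lvg2}(ii), and I use the ``unmatched'' configuration of \cref{lvg2}(i) in the remaining $n-1$ vertex gadgets attached to $c_i$. For each pair $i<j$, since $\{x_i,x_j\}$ is a clique edge we have $(x_i,x_j)\in E(\overleftrightarrow{G})$, so there is an edge gadget between $c_{ij}$ and $c_{ji}$; I match both outgoing edges $\{c_{ij},w^1\}$ and $\{c_{ji},w^2\}$ of this single gadget using the configuration of \cref{leg2}(i), and I use the configuration of \cref{leg2}(iii) (neither outgoing edge matched) in every other edge gadget between $c_{ij}$ and $c_{ji}$. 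Inside each parallel-edges gadget I take the stable configuration of \cref{lpeg} in which the modelled parallel edge is not used, and I leave all parallel edges between the $c_i$ and the $c_{ij}$ unmatched. By construction $M$ satisfies the three conditions of \cref{lmnoe}: every $c_i$ is matched to a vertex gadget (1.), $c_i$ is matched at rank $2x_i-1$ while $c_{ij}$ is matched at rank $2(n-x_i)+1$ (2.), and $c_{ij},c_{ji}$ match the same edge gadget (3.).

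It remains to verify that $M$ is stable; this is the main obstacle. Internal stability of every gadget is guaranteed by the corresponding lemma, since each configuration I picked is one of the stable configurations described there. The outgoing edges left unmatched cannot block either: for an unselected vertex gadget the edge $\{c_i,w\}$ is non-blocking even if $c_i$ were unmatched by \cref{lvg2}(i), and for an edge gadget into which $c_{ij}$ (or $c_{ji}$) does not match, the outgoing edge $\{c_{ij},w^1\}$ is non-blocking even if $c_{ij}$ were unmatched by \cref{leg2}. The only genuinely new pairs to rule out are the parallel edges between $c_i$ and $c_{ij}$. A parallel edge sitting at position $2l$ at $c_i$ occupies position $2(n-l)$ at $c_{ij}$; for it to block, $c_i$ must prefer it to its match at rank $2x_i-1$, forcing $l<x_i$, while $c_{ij}$ must prefer it to its match at rank $2(n-x_i)+1$, forcing $l\ge x_i$. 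These requirements are contradictory, exactly as in \cref{ltriv}, so no parallel edge blocks and $M$ is stable.

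Finally, having shown that $M$ is stable and satisfies conditions 1.--3., the equality part of \cref{lmnoe} yields $|M| = \kappa + k(k-1)(n+0.5) + kC = \kappa + k(k-1)n + \tfrac{k(k-1)}{2} + kC$, which is the required lower bound. The delicate point throughout is the bookkeeping of the preference positions at the $c_i$ and the $c_{ij}$ (the interleaving of the parallel-edge ranks with the vertex- and edge-gadget ranks); once that is set up correctly, the stability check reduces to the short case analysis above, and the counting is offloaded entirely onto \cref{lmnoe}.
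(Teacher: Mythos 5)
Your proposal is correct and takes essentially the same route as the paper's own proof: match each $c_i$ into the vertex gadget for $x_i$ and both $c_{ij}$, $c_{ji}$ into the edge gadget for $(x_i,x_j)$, put every other gadget into its stable unmatched configuration (Lemmas~\ref{lvg2}, \ref{leg2}, \ref{lpeg}), obtain the size from the equality characterisation of Lemma~\ref{lmnoe}, and rule out blocking pairs on the parallel edges via the rank-parity argument of Lemma~\ref{ltriv}. Your explicit arithmetic ($l<x_i$ versus $l\ge x_i$) merely spells out what the paper states tersely as ``$c_{ij}$ is matched to an edge gadget corresponding to an edge ending in the vertex selected by $c_i$.''
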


  \begin{claimproof}
    Assume that $G$ contains a clique $\{x_1, \dots, x_k\}$.
    We match $c_i$ to the vertex-selection gadget corresponding to $x_i$, and the incidence vertices $c_{ij}$ to the edge gadgets corresponding to the edge~$(x_i, x_j)\in E(\overleftrightarrow{G})$ between $c_{ij}$ and $c_{ji}$.
    Inside each gadget, we apply \Cref{leg2} $(i)$ and $(iii)$ and take the maximum number of edges of a stable matching inside the gadget, assuming that the outgoing edges are contained if and only if they have already been added to the matching.

    By \Cref{lmnoe}, the matching has size $\kappa + k (k-1) n + \frac{k(k-1)}{2} + kC$ if it is stable, so it remains to show that $M$ is stable.

    Note that the vertex and edge gadgets are constructed in such a way that none of their vertices is part of a blocking pair.
    Thus, the blocking pair must contain a vertex from a parallel-edges gadget. We picked the edges inside the parallel-edges gadgets in a way such that no blocking pair contains two vertices inside a parallel-edges gadget.
    Thus, a parallel-edge gadget for an edge $e=\{c_i, c_{ij}\}$ contains a blocking pair if and only if both $c_i$ and $c_{ij}$ prefer $e$ over their partner. But this is not blocking as we connected $c_{ij}$ to an edge gadget corresponding to an edge ending in the vertex selected by $c_i$.
  \end{claimproof}

  \begin{claim}
    If $H$ contains a stable matching of size $\kappa + k (k-1) n + \frac{k(k-1)}{2} + kC$, then $G$ contains a clique of size $k$.
  \end{claim}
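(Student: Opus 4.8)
The plan is to take a stable matching $M$ of the stated size and to notice that this size is exactly the upper bound established in Lemma~\ref{lmnoe}: indeed $\kappa + k(k-1)(n+0.5) + kC = \kappa + k(k-1)n + \frac{k(k-1)}{2} + kC$. Hence $M$ attains equality in Lemma~\ref{lmnoe} and therefore satisfies all three extremal conditions of that lemma. From condition~1, every $c_i$ is matched inside some vertex gadget; I would let $x_i \in [n] = V(G)$ be the index of the vertex gadget to which $c_i$ is matched, and I propose $\{x_1, \dots, x_k\}$ as the desired clique. The remaining work is to show that for every $i < j$ the pair $\{x_i, x_j\}$ lies in $E(G)$ and that the $x_i$ are pairwise distinct.

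Next I would fix $i < j$ and use condition~3 of Lemma~\ref{lmnoe}: the incidence vertices $c_{ij}$ and $c_{ji}$ are matched to a common edge gadget, which by construction corresponds to a single directed edge $e = (v, w) \in E(\overleftrightarrow{G})$. It then remains to identify $v$ with $x_i$ and $w$ with $x_j$. For this I would invoke condition~2 together with Lemma~\ref{ltriv}: since $c_i$ is matched inside its vertex gadget for $x_i$, the parallel-edge gadget between $c_i$ and $c_{ij}$ together with the stability argument of Lemma~\ref{ltriv} forces $c_{ij}$ to be matched at the rank associated with $x_i$, namely $2(n-x_i)+1$. On the other hand, the common gadget is a $(c_{ij}, 2(n-v)+1, n-v, c_{ji}, 2(n-w)+1, n-w)$-edge gadget, so $c_{ij}$ ranks its outgoing edge to that gadget precisely at position $2(n-v)+1$. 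Equating the two ranks gives $v = x_i$, and the symmetric argument applied to $c_{ji}$ gives $w = x_j$.

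Consequently $(x_i, x_j) \in E(\overleftrightarrow{G})$, which means $\{x_i, x_j\} \in E(G)$ and in particular $x_i \neq x_j$. As this holds for every pair $i < j$, the vertices $x_1, \dots, x_k$ are pairwise distinct and pairwise adjacent, so they form a clique of size $k$ in $G$, which completes the proof.

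I expect the only genuinely delicate point to be the rank bookkeeping in the second paragraph: one must carefully match the rank $2(n-x_i)+1$ that is \emph{forced} on $c_{ij}$ against the rank $2(n-v)+1$ at which $c_{ij}$ ranks its edge gadget, so that $v = x_i$ is pinned down \emph{exactly} rather than merely bounded by an inequality. All the heavier combinatorial accounting---the charging of additional edges that makes conditions~1--3 necessary for a matching of maximum size---has already been carried out in Lemma~\ref{lmnoe}, so the final claim follows essentially by reading off its equality case and translating the chosen gadgets back into vertices and edges of $G$.
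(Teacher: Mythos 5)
Your proposal is correct and follows essentially the same route as the paper: both proofs read off the three equality conditions of Lemma~\ref{lmnoe}, define $x_i$ as the vertex whose gadget $c_i$ is matched into, and combine condition~3 (common edge gadget for $c_{ij}$ and $c_{ji}$) with the rank correspondence of condition~2 to conclude that this gadget corresponds to the edge $(x_i,x_j)$. Your version merely spells out the rank bookkeeping ($2(n-x_i)+1$ versus $2(n-v)+1$) that the paper compresses into a single sentence.
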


  \begin{claimproof}
    By \Cref{lmnoe}, we know that each $c_i$ is matched to a vertex gadget, each incidence vertex is matched to an edge gadget whose endpoint corresponds to the vertex selected by the vertex selection gadget, and the incidence vertices $c_{ij}$ and $c_{ji}$ are matched to the same edge gadget.
    Let $x_i $ be the vertex selected by $c_i$.
    Thus, for each $i\neq j$, the incidence vertex $c_{ij}$ matches to an edge gadget corresponding to an edge adjacent to $x_i$, and the vertex~$c_{ji}$ matches to an edge gadget corresponding to an edge adjacent to $x_j$, and these edge gadgets are the same, implying that $G$ contains the edge~$\{x_i, x_j\}$.
    Thus, $\{x_1, \dots, x_k\}$ forms a clique.
  \end{claimproof}
  
  Having shown the correctness of the reduction, it remains to show that the graph has bounded tree-cut width.
  So consider the following tree-cut decomposition: Let $T$ be a star containing a leaf for each gadget. The bag corresponding to the center consists of $\{c_1, \dots, c_k\}\cup\{c_{ij}: i\neq j\}$.
    Replace all leaves by the tree-cut decompositions of the gadgets.
    The center shall be the root of this tree-cut decomposition.

    We claim that the resulting tree-cut decomposition has width at most $k':=\max\{k^2, 10\}$. Obviously each bag contains at most $k'$ vertices. For the center, the torso size is $k^2$. For each other node, it is at most 10 (the maximum tree-cut width of a gadget), as the component containing the center gets contracted or suppressed. The adhesion of the center is 0, and for all other nodes, the adhesion is bounded by $10$. Thus, we have $\tcw (H) \le k'$.
 \end{proof}

 \begin{remark}
    By \Cref{lmnoe}, every stable matching of size $\kappa + k (k-1) n + \frac{k (k-1)}{2} + k C$ does not match any vertex to a tie of size at least three.
    As breaking ties cannot increase the size of a maximum stable matching, it follows that the instance arising by breaking each tie of size at least three contains a stable matching of size $\kappa + k (k-1) n + \frac{k (k-1)}{2}$ if and only if $H$ does. 
    Therefore, by breaking all ties of length at least three in an arbitrary way, one gets that \textsc{Max-SRTI} parameterized by tree-cut width is W[1]-hard even if each preference list is strictly ordered or a tie of length two.
 \end{remark}

\section{Perfect-SRTI and SRTI-Existence are Fixed-parameter Tractable for the Parameter Tree-cut Width}\label{sec:SRTIFPTTreeCutWidth}
 In this section,
 we present a dynamic programming FPT-algorithm for \textsc{Perfect-SRTI} and \textsc{SRTI-Existence} parameterized by tree-cut width.
 Recall that a tree-cut decomposition consists of a rooted tree $T$, and an assignment of the nodes of $T$ to a near-partition~$\mathcal{X} = \{X_t :{t\in V(T)}\}$ of $V(G)$.
 This naturally assigns to each node $t\in V(T)$ the union~$Y_t$ of the vertex sets assigned to nodes in the subtree of $T$ rooted at $t$.
 In order to design an FPT-algorithm with respect to tree-cut width, one approach is to compute a partial ``solution(s)'' on the subgraph $G_t$ of $G$ induced by $Y_t$ (somehow taking into account the at most $\tcw (G)$ many edges leaving $Y_t$) via bottom-up induction on $T$.
 As $Y_t$ contains at most $\tcw (G)$ vertices for every leaf $t\in V(T)$, the solution(s) for every leaf usually can be computed efficiently.
 The difficulty lies in the induction step, where one needs compute for some node~$t\in T$ solution(s) for $G_t $ given solution(s) to $G_c$ for every child $c$ of $t$.
 Here, the difficulty usually does not arise because $Y_t $ contains the vertices from $X_t$ which are not contained in $Y_c$ for every child $c$ of $t$ (as there are at most $\tcw (G)$ vertices contained in $X_t$), but that there may be an unbounded number of children of~$t$.
 Assuming that the tree-cut decomposition is nice helps in so far as it allows to bound the number of heavy children (i.e., children~$c$ such that at least three edges leave $Y_c$), but the number of light children (i.e., children $c$ such that at most two edges leave $Y_c$) is unbounded.
 
 The section is divided into three parts:
 In \Cref{sec:idea}, we give the high-level idea of the algorithm.
 In \Cref{sind}, we elaborate the details of the algorithm and prove its correctness.
 Finally, in \Cref{sec:Max-SMTI}, we show how to adopt the algorithm to solve \textsc{Max-SMTI} (i.e., \textsc{Max-SRTI} on bipartite graphs).
 
 Throughout this section, we fix a \textsc{SRTI-Existence} instance with acceptability graph~$G$ as well as a nice tree-cut decomposition $(T, \mathcal{X})$ of $G$ of width $k:= \tcw (G)$.
 
 \subsection{High-level Idea}
 \label{sec:idea}
 
 Given a nice tree-cut decomposition of the acceptability graph, we use dynamic programming to decide whether a solution exists or not.
 In the dynamic programming (DP) table for a node~$t \in V(T)$ we store information whether there exists a matching $M$ for the set $Y_t$~of
 vertices from the bags of the subtree of the tree-cut decomposition rooted in~$t$.
 We allow that $M$ is not stable in $G$ but require that the blocking pairs are incident to vertices outside $Y_t$, and for some of the edges~$\{v, w\}$ in $\cut (t)$ (recall that $\cut (t) $ is the set of edges in $G$ which leave $Y_t$), we require the endpoint $v$ in $Y_t$ to be matched at least as good as it ranks $w$.

\subparagraph{DP Tables}
Before we describe the idea behind the table entries we store in
our dynamic programming procedure, we introduce the following relaxation of
matching stability.

 \begin{definition}
   Let $(T, \mathcal{X})$ be a nice tree-cut decomposition of $G$.
   For a node $t\in V(T)$, the \emph{closure $\clos(t)$ of $t$} is the set of vertices in $Y_t$ together with their neighbors, that is, $\clos (t):= Y_t \cup \N_G (Y_t)$.
   We say that a matching $M$ on $\clos (t)$ for some $t\in V(T)$ \emph{complies with a vector $\bm{h}\in \{\oldminusone, \oldzero, \oldone\}^{\cut (t)}$} if the following conditions hold:
   \begin{itemize}
     \item for each edge $e \in \cut (t)$, we have $e\in M$ if and only if $\bm{h} (e) = \oldzero$;

     \item for each $e = \{v, y\}\in \cut (t)$ with $y\in Y_t$ and $\bm{h} (e) = \oldone$, we have $\rk_y (M(y)) \le \rk_y(v)$, i.e.,~$y$~ranks its partner (not being $v$ by the previous condition) in $M$ at least as good as~$v$; and

     \item every blocking pair contains a vertex from $V(G)\setminus Y_t$ not matched in~$M$.
   \end{itemize}
 \end{definition}
Intuitively, if we set $\bm{h} (e) = \oldone$ for an edge $e = \{v,y\}$ in $\cut(t)$ with $y \in Y_t$, then we are searching for a matching $M$ (in $G[\clos(t)]$) for which we can guarantee that $\rk_y(M(y)) \le \rk_y(v)$.
Consequently, we know that $e$ will not be blocking in an extension of such a matching.
On the contrary, if we set $\bm{h}_t(e) = \oldminusone$, then we allow $y$ to prefer $v$ over its partner (in particular, $y$ may remain unmatched).
Thus, for an extension of such a matching in order to maintain stability we have to secure that $\rk_v(M(v)) \le \rk_v(y)$, since otherwise $e$ will be blocking.
Observe that if a matching complies with $\bm{h}$ for a vector $\bm{h} \in \{\oldminusone,\oldzero,\oldone\}^{\cut(t)}$ with $\bm{h}(e) = \oldone$ for some edge~$e \in \cut(t)$, then it complies with $\bm{\hat{h}} \in \{\oldminusone,\oldzero,\oldone\}^{\cut(t)}$ which is the same as $\bm{h}$ but for $e$ is set to~$\oldminusone$ (formally, $\bm{\hat{h}}(f) = \bm{h}(f)$ for $f \neq e$ and $\bm{\hat{h}}(e) = \oldminusone$).
Clearly, any matching complying with~$\bm{h}$ complies with $\bm{\hat{h}}$, since the latter is more permissive.

 For a node $t \in V(T)$, its dynamic programming table is $\tau_t$ and it contains an entry for every $\bm{h} \in \{ \oldminusone, \oldzero, \oldone \}^{\cut(t)}$.
 An entry $\tau_t[\bm{h}]$ is a matching $M \subseteq E(G_t) \cup \cut(t)$ if $M$ complies with~$\bm{h}$.
 If no such matching for $\bm{h}$ exists, then we set $\tau_t[\bm{h}] = \square$.
 Note that the size of the table $\tau_t$ for a node $t$ is upper-bounded by $3^{\tcw(G)}$.

 \begin{example}
   A graph together with a corresponding tree-cut decomposition are depicted in \Cref{fextcwalgo}:
 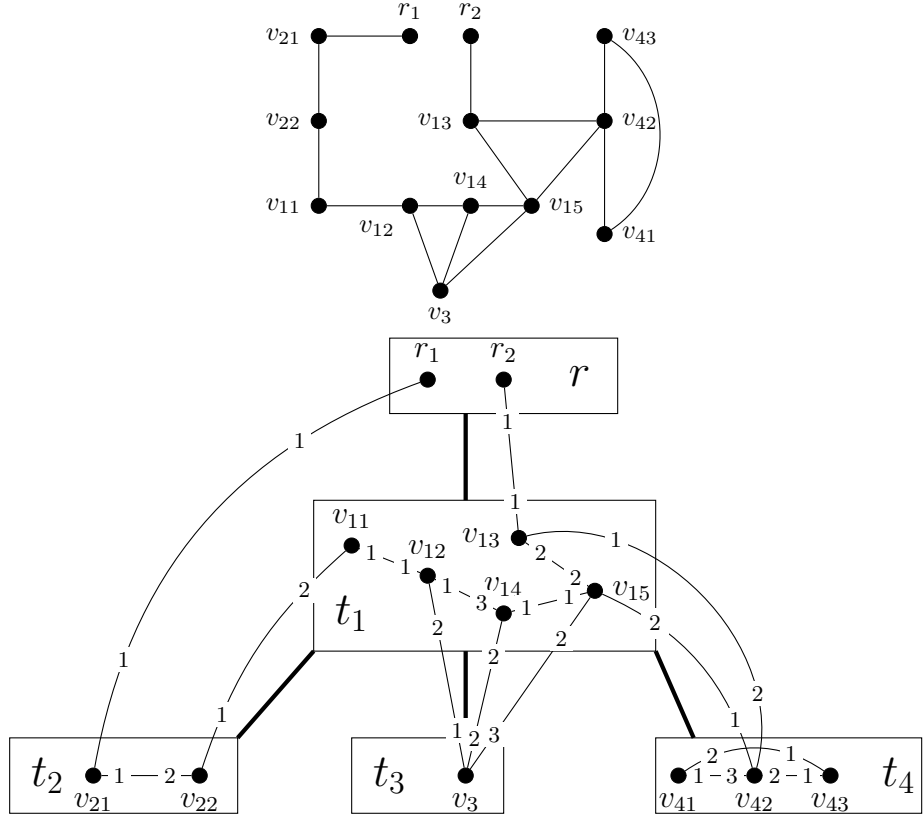
\begin{figure}[bt]
\begin{center}
    \begin{tikzpicture}[xscale=0.8, yscale=0.75]
       \node[vertex, label = 90:$r_{1}$] (r1) at (0.5, 0.5) {};
       \node[vertex, label = 90:$r_{2}$] (r2) at (1.5, 0.5) {};

       \node[vertex, label = 180:$v_{11}$] (m1) at (-1, -2.5) {};
       \node[vertex, label = 225:$v_{12}$] (m2) at (0.5, -2.5) {};
       \node[vertex, label = 180:$v_{13}$] (m3) at (1.5, -1) {};
       \node[vertex, label = 90:$v_{14}$] (m4) at (1.5, -2.5) {};
       \node[vertex, label = 00:$v_{15}$] (m5) at (2.5, -2.5) {};

       \node[vertex, label = 180:$v_{21}$] (ll1) at (-1, 0.5) {};
       \node[vertex, label = 180:$v_{22}$] (ll2) at (-1, -1) {};

       \node[vertex, label = 0:$v_{41}$] (lr1) at (3.7, -3) {};
       \node[vertex, label = 0:$v_{42}$] (lr2) at (3.7, -1) {};
       \node[vertex, label = 0:$v_{43}$] (lr3) at (3.7, 0.5) {};

       \draw (m1) edge (m2);
       \draw (m2) edge  (m4);
       \draw (m4) edge (m5);
       \draw (m5) edge (m3);

       \node[vertex, label = 270:$v_{3}$] (lm) at (1, -4) {};
       \draw (lm) edge  (m2);
       \draw (lm) edge (m4);
       \draw (lm) edge (m5);

       \draw (ll1) edge (ll2);
       \draw (ll2) edge  (m1);
       \draw (ll1) edge  (r1);

       \draw (lr2) edge  (m5);
       \draw (lr1) edge (lr2);
       \draw (lr2) edge (lr3);
       \draw (lr1) edge[bend right=55] (lr3);
       \draw (lr2) edge (m3);

       \draw (m3) edge (r2);
     \end{tikzpicture}

     \begin{tikzpicture}
       \node (shift) at (0, 0.35) {};
       \draw (0, 0) rectangle (3, 1);

       \draw ($(-1, -1.5) + (shift)$) rectangle ($(3.5, -3.5) + (shift)$);

       \draw ($(-5, -5) + 2*(shift)$) rectangle ($(-2, -6) + 2*(shift)$);
       \draw ($(-0.5, -5) + 2*(shift)$) rectangle ($(1.5, -6) + 2*(shift)$);
       \draw ($(3.5, -5) + 2*(shift)$) rectangle ($(7, -6) + 2*(shift)$);

       \draw[ultra thick] ($(1, 0)$) -- ($(1, -1.5) + (shift)$);
       \draw[ultra thick] ($(-1, -3.5) + (shift)$) -- ($(-2, -5) + 2*(shift)$);
       \draw[ultra thick] ($(1, -3.5) + (shift)$) -- ($(1, -5) + 2*(shift)$);
       \draw[ultra thick] ($(3.5, -3.5) + (shift)$) -- ($(4, -5) + 2*(shift)$);

       \node[vertex, label = 90:\Large{$r_{1}$}] (r1) at (0.5, 0.45) {};
       \node[vertex, label = 90:\Large{$r_{2}$}] (r2) at (1.5, 0.45) {};

       \node[vertex, label = 90:\Large{$v_{11}$}] (m1) at ($(-0.5, -2.1) + (shift)$) {};
       \node[vertex, label = 90:\Large{$v_{12}$}] (m2) at ($(0.5, -2.5) + (shift)$) {};
       \node[vertex, label = 180:\Large{$v_{13}$}] (m3) at ($(1.7, -2) + (shift)$) {};
       \node[vertex, label = 90:\Large{$v_{14}$}] (m4) at ($(1.5, -3) + (shift)$) {};
       \node[vertex, label = 00:\Large{$v_{15}$}] (m5) at ($(2.7, -2.7) + (shift)$) {};

       \draw (m1) edge node[pos=0.2, fill=white, inner sep=2pt] {\footnotesize $1$}  node[pos=0.76, fill=white, inner sep=2pt] {\footnotesize $1$} (m2);
       \draw (m2) edge node[pos=0.2, fill=white, inner sep=2pt] {\footnotesize $1$}  node[pos=0.76, fill=white, inner sep=2pt] {\footnotesize $3$} (m4);
       \draw (m4) edge node[pos=0.2, fill=white, inner sep=2pt] {\footnotesize $1$}  node[pos=0.76, fill=white, inner sep=2pt] {\footnotesize $1$} (m5);
       \draw (m5) edge node[pos=0.2, fill=white, inner sep=2pt] {\footnotesize $2$}  node[pos=0.76, fill=white, inner sep=2pt] {\footnotesize $2$} (m3);

       \node[vertex, label = 270:\Large{$v_{3}$}] (lm) at ($(1, -5.5) + 2*(shift)$) {};
       \draw (lm) edge node[pos=0.2, fill=white, inner sep=2pt] {\footnotesize $1$}  node[pos=0.76, fill=white, inner sep=2pt] {\footnotesize $2$} (m2);
       \draw (lm) edge node[pos=0.2, fill=white, inner sep=2pt] {\footnotesize $2$}  node[pos=0.76, fill=white, inner sep=2pt] {\footnotesize $2$} (m4);
       \draw (lm) edge node[pos=0.2, fill=white, inner sep=2pt] {\footnotesize $3$}  node[pos=0.76, fill=white, inner sep=2pt] {\footnotesize $2$} (m5);

       \node[vertex, label = 270:\Large{$v_{21}$}] (ll1) at ($(-3.9, -5.5) + 2*(shift)$) {};
       \node[vertex, label = 270:\Large{$v_{22}$}] (ll2) at ($(-2.5, -5.5) + 2*(shift)$) {};

       \draw (ll1) edge node[pos=0.2, fill=white, inner sep=2pt] {\footnotesize $1$}  node[pos=0.76, fill=white, inner sep=2pt] {\footnotesize $2$} (ll2);
       \draw (ll2) edge[bend left=15] node[pos=0.2, fill=white, inner sep=2pt] {\footnotesize $1$}  node[pos=0.8, fill=white, inner sep=2pt] {\footnotesize $2$} (m1);
       \draw (ll1) edge[bend left] node[pos=0.2, fill=white, inner sep=2pt] {\footnotesize $1$}  node[pos=0.76, fill=white, inner sep=2pt] {\footnotesize $1$} (r1);

       \node[vertex, label = 270:\Large{$v_{41}$}] (lr1) at ($(3.8, -5.5) + 2*(shift)$) {};
       \node[vertex, label = 270:\Large{$v_{42}$}] (lr2) at ($(4.8, -5.5) + 2*(shift)$) {};
       \node[vertex, label = 270:\Large{$v_{43}$}] (lr3) at ($(5.8, -5.5) + 2*(shift)$) {};

       \draw (lr2) edge[bend right = 25] node[pos=0.2, fill=white, inner sep=2pt] {\footnotesize $1$}  node[pos=0.76, fill=white, inner sep=3pt] {\footnotesize $2$} (m5);
       \draw (lr1) edge node[pos=0.2, fill=white, inner sep=2pt] {\footnotesize $1$}  node[pos=0.76, fill=white, inner sep=2pt] {\footnotesize $3$} (lr2);
       \draw (lr2) edge node[pos=0.2, fill=white, inner sep=2pt] {\footnotesize $2$}  node[pos=0.76, fill=white, inner sep=2pt] {\footnotesize $1$} (lr3);
        \draw (lr1) edge[bend left=35] node[pos=0.2, fill=white, inner sep=2pt] {\footnotesize $2$}  node[pos=0.76, fill=white, inner sep=2pt] {\footnotesize $1$} (lr3);
       \draw (lr2) edge[bend right = 60] node[pos=0.2, fill=white, inner sep=2pt] {\footnotesize $2$}  node[pos=0.76, fill=white, inner sep=2pt] {\footnotesize $1$} (m3);

        \draw (m3) edge node[pos=0.2, fill=white, inner sep=2pt] {\footnotesize $1$}  node[pos=0.76, fill=white, inner sep=2pt] {\footnotesize $1$} (r2);

       \node[] (lr) at (2.5, 0.5) {\LARGE{$r$}};
       \node (lm) at ($(-0.5, -3) + (shift)$) {\LARGE{$t_1$}};
       \node (lll) at ($(-4.5, -5.5) + 2*(shift)$) {\LARGE{$t_2$}};
       \node (llm) at ($(0, -5.5) + 2*(shift)$) {\LARGE{$t_3$}};
       \node (llr) at ($(6.7, -5.5) + 2*(shift)$) {\LARGE{$t_4$}};
     \end{tikzpicture}
\end{center}

   \caption{\label{fextcwalgo}%
   An example of a graph $G$ (upper part) and its nice tree-cut decomposition~$(T, \mathcal{X})$ (not of minimal width) (lower part).
   The vertices of $G$ are the circles, while the nodes of $T$ are the rectangles.
   For a node $t\in V(T)$, bag $X_t$ contains exactly the vertices inside the rectangle.
   In the lower picture, the thick edges are the edges of $T$, while the thin edges are from $G$.
   The nodes~$t_1$ and $t_4$ are light, while $t_2$ (because there is an edge connecting a vertex in $t_2$ to a vertex in~$r$) and $t_3$ (because $\adh (t_3) =3$) are heavy.
   }
 \end{figure}

   For $t_1$ and the vector $\bm{h^1} \in \{\oldminusone , \oldzero, \oldone\}^{\cut (t_1)}$ with $\bm{h^1}(\{r_1, v_{21}\}) = \oldzero$ and $\bm{h^1}(\{v_{13}, r_2\}) = \oldone$, all stable matchings containing $\{r_1, v_{21}\}$ and $\{v_{13}, v_{42}\}$ comply with $\bm{h^1}$.
   For $t_2$ and the vector $\bm{h^1} \in \{\oldminusone , \oldzero, \oldone\}^{\cut (t_2)}$ with $\bm{h^2}(\{v_{21}, r_1\}) = \oldone$ and $\bm{h^2}(\{v_{22}, v_{11}\})=\oldminusone$, the matching~$M = \{ \{v_{21}, v_{22}\} \}$ complies with $\bm{h^2}$.
   For $t_3$ and any vector $\bm{h^3} \in \{\oldminusone , \oldzero, \oldone\}^{\cut (t_2)}$ with $\bm{h^3}(\{v_3, v_{12}\}) = \oldone$, no matching complies with $\bm{h^3}$:
   Such a matching would need to match~$v_3$ at least as good as~$\rk_{v_3} (v_{12})=1$, but not to $v_{12}$, which is impossible.
   For $t_4$ and the vector $\bm{h^1} \in \{\oldminusone , \oldzero, \oldone\}^{\cut (t_4)}$ with $\bm{h^4}(\{v_{42}, v_{15}\}) = \oldzero = \bm{h^4}(\{v_{42}, v_{13}\})$, no matching complying with $\bm{h^4}$ exists, as such a matching must match $v_{42}$ to both $v_{13} $ and $v_{15}$.
 \end{example}

For each leaf $t\in V(T)$, we have that $|\clos (t) | \le |Y_t  | + |\cut (t)| \le 2k$.
Therefore, we can find a matching complying with a vector $\bm{h} \in \{\oldminusone, \oldzero , \oldone\}^{ \cut (t)}$ in $2^{O(k \log k)}$ time if it exists by enumerating all matchings on $\clos (t)$.

For the root $r$ of $T$, it holds that $Y_r = V(G)$.
Therefore, a stable matching exists if and only if there exists a stable matching complying with (the unique) vector $\bm{h}\in \{\oldminusone, \oldzero, \oldone\}^{\cut (r)}$.

Thus, it remains to find a matching complying with a vector $\bm{h} \in \{\oldminusone, \oldzero, \oldone\}^{\cut (t)}$, given $\tau_c [ \bm{h^c}]$ for each child $c$ of $t$ and each $\bm{h^c} \in \{\oldminusone, \oldzero, \oldone\}^{\cut (c)}$.
We call this the \emph{induction step}.

\subparagraph{Induction Step}
In what follows, we sketch how to perform the induction step.
\defProblemTask{Induction Step}
 {
  The acceptability graph $G$, rank functions $\rk_v$ for all $v\in V(G)$, a nice tree-cut decomposition $(T, \mathcal{X})$, a node $t\in V(T)$, a vector $\bm{h}\in \{\oldminusone, \oldzero, \oldone\}^{\cut (t)}$, and for each child $c$ of $t$ and each $\bm{h^c} \in \{\oldminusone, \oldzero, \oldone\}^{\cut (c)}$ the value $\tau_c [\bm{h^c}]$.
 }
 {
  Compute $\tau_t [ \bm{h}]$.
 }

Before we give the proof idea, we first give the definition of light children classes.
Intuitively, two light children of a node $t$ are in the same class if, with respect to $t$, they behave in a similar way, that is, their neighborhoods in $X_t$ are the same and their table entries are compatible.
In order to properly define the later notion, we first need to introduce few auxiliary definitions.

 \begin{definition}
  Let $t\in V(T)$ be a node.
  Its \emph{signature} $\sig (t)$ is the set $\{\bm{h}\in \{\oldminusone, \oldzero, \oldone\}^{\cut (t)}: \tau_t [\bm{h} ]\neq \square\}$.

 Let $c, d$ be light children of $t$.
 Then, $c \Diamond d$ means $\sig (c) = \sig (d)$ and $N(c) = N(d)$, where $N(c) := N_G (Y_c)$ for each $c\in V(T)$.
\end{definition}

It follows immediately that $\Diamond$ is an equivalence relation on light children of $t$.
Furthermore, since each class of $\Diamond$ is identified by its signature and neighborhood in $X_t$, there are $O(k^2)$ classes of $\Diamond$.
Let $\mathcal{C}(c)$ denote the equivalence class of the light child $c$ of $t$ and let $N(\mathcal{C}) \subseteq X_t$ be the set of neighbors of the class $\mathcal{C}$ of $\Diamond$ (i.e., $N(\mathcal{C})$ is the set of neighbors $N(Y_c) \subseteq X_t$ for~$c \in \mathcal{C}$).
Furthermore, let $\sig(\mathcal{C})$ denote the signature of the class $\mathcal{C}$ and, similarly, let $\sig_x(\mathcal{C})$ denote the signature of $\mathcal{C}$ with respect to its neighbor $x \in N(\mathcal{C})$.

From the definition of the signature, it immediately follows that if a stable matching exists, then every class containing at least three children must contain $(\oldminusone, \oldminusone)$.

\begin{observation}\label{obs:smallClassOfChildren}
If $\mathcal{C}$ is a class of $\Diamond$ with $|\mathcal{C}| \ge 3$ and $(\oldminusone,\oldminusone) \notin \sig(\mathcal{C})$, then there is no stable matching in $G$.
\end{observation}

\begin{proof}
  Assume for a contradiction that there exists a stable matching $M$ in $G$ and a class~$\mathcal{C}$ with $|\mathcal{C}| \ge 3 $ but $(\oldminusone, \oldminusone) \notin \sig (\mathcal{C}|$.
  Let $N(c) = \{x_1, x_2\}$ for all $c \in \mathcal{C}$.
  As $|\mathcal{C}| \ge 3$, there exists some child $c^*\in \mathcal{C}$ such that neither $x_1$ nor $x_2$ is matched to a vertex in $Y_c$.
  It follows that $M$ restricted to $\clos (c^*)$ complies with the vector $\bm{(\oldminusone, \oldminusone}$ (otherwise there exists a blocking pair for $M$ in $Y_{c^*}$, contradicting the stability of $M$), a contradiction to $(\oldminusone, \oldminusone) \notin \sig (c^*)$.
\end{proof}

We now sketch the main ideas on how to perform the induction step.

\subparagraph*{Proof Idea (Induction Step)}
For proof details, see \Cref{sind}.
First, we will guess for each heavy child~$c$ of $t$ a vector $\bm{h^c}$ such that the matching restricted to $\clos (c)$ shall comply with $\bm{h^c}$. 
Since every node has at most $2 (k + 1) $ heavy children by \Cref{laatcw}, and every node has adhesion at most $k$, this can be done in $(3^k)^{O(k)}$.
(In fact, we will show that it is enough to consider $k^{O(k)}$ such guesses (\Cref{lheavy}).)
It is worth noting that these choices will result in some further constraints the matching in the light children must fulfill.

Then, for every class of the equivalence $\Diamond$ we guess whether its neighbor(s) in $X_t$ are matched to it (i.e., matched to a vertex in a child or two in this class) or not.
Let $\mathcal{N}$ denote the guessed matching.
Since each vertex $x \in X_t$ is ``adjacent'' to at most $k+1$ classes of $\Diamond$ (i.e., there are at most $k$ classes such that $x$ is adjacent to a vertex in all children contained in this class), choosing a class (or deciding not to be adjacent to any light child) for every vertex in $X_t$ results in~$k^{O(k)}$ guesses.
We show that if a class of $\Diamond$ with $N(\mathcal{C}) = \{ x \}$ is selected to be matched with its neighbor $x$, then it is possible to match $x$ to the best child in this class (the one containing the top choice for~$x$ among these children), provided that there exists a solution which is compatible with such a choice.
We do this by providing a rather simple exchange argument (\cref{lHsimple}).

Having resolved heavy children and light children with only one neighbor in $X_t$, it remains to deal with children with two neighbors.
To this end, we generalize the exchange argument we provide for classes with one neighbor (\cref{lcomplie}).
Then, we prove that some combinations of $\mathcal{N}$ and a signature of a class $\mathcal{C}$ allows us to reduce the number of children in $\mathcal{C}$ in which we have to search for a partner of a vertex in $N(\mathcal{C})$ to a constant (in fact, four).
We call such classes the \emph{good classes}.
However, there are classes where are not able to do so (call these the \emph{bad classes}).
Consequently, there are only $4^k$ possibilities how to match vertices in $X_t$ to good classes of children.
Finally, we show how to reduce the question of existence of a (perfect) matching complying with $\bm{h}$ and obeying all the constraints of heavy children to an instance of 2-SAT.
This basic idea of this reduction is similar to the reduction from \textsc{Stable Roommates} to \textsc{2-SAT} by Feder~\cite{Feder1992}, although we have to model several additional constraints.

\subsection{Details of the Induction Step}
 \label{sind}

 We now describe the induction step in detail and prove its correctness.
 Recall that in the induction step, we are given the \textsc{Max-SRTI} instance together with a nice tree-cut decomposition~$(T, \mathcal{X})$ of its acceptability graph, a node $t\in V(T)$, a vector~$\bm{h} \in \{\oldminusone, \oldzero, \oldone\}^{\cut (c)}$ as well as $\tau_c [\bm{h^c}]$ for every child $c$ of $t$ and $\bm{h^c} \in \{\oldminusone, \oldzero, \oldone\}^{\cut (c)}$, and the task is to decide whether there exists a matching complying with $\bm{h}$.
 In order to do so, we first guess how the matching shall be on the heavy children (\Cref{sec:heavy}).
 Recall that we partitioned the light children into $O(k^2)$ equivalence classes.
 We then guess for each class of light children whether their neighbor(s) are matched to a vertex contained in one of the light children.
 For each class~$\mathcal{C}$ for which we guessed that at least one neighbor~$x$ is matched to, we then try to compute a set $\mathcal{C}'\subseteq \mathcal{C}$ of constant size such that if there exists a matching complying with~$\bm{h}$, then there also exists one which matches $x$ to a vertex contained in $Y_c$ for some~$c\in \mathcal{C}'$.
 This is done in \Cref{applightchildren}.
 As there are at most $k$ classes, we can afford to guess for each class~$\mathcal{C}$ for which we could reduce to a constant-sized set~$\mathcal{C}'$ to which child in $\mathcal{C}'$ vertex $x$ is matched to.
 Finally, in \Cref{sec:2-SAT}, we show how to reduce the problem of finding a matching complying with $\bm{h}$ and fulfilling our guesses to 2-SAT.

 \subsubsection{The Heavy Children}
 \label{sec:heavy}

 We first deal with the heavy children.
 Since each node can have at most $k$ heavy children and for each heavy child $c$ we have $\cut (c) \le k$, it follows that there are at most $(3^k)^k = 3^{k^2}$ combinations of vectors for all heavy children together.
 However, many of these combinations can be ignored.
 For example, assume that the node $t$ has two heavy children~$c_1$ and $c_2$, and for some $x\in X_t$, there exists an edge $\{x, y_i\}$ for some $y_i \in Y_{c_i}$ for $i\in \{1, 2\}$ with $\rk_x(y_1 ) < \rk_x (y_2)$.
 If we now want to extend a matching complying with a vector $\bm{h^1}$ with $\bm{h^1} (\{x, y_1\}) = \oldminusone$ for $c_1$, and a vector $\bm{h^2}$ with $\bm{h^2} (\{x, y_2\}) = \oldone$ for $c_2$, then we could use a matching complying with a vector $\bm{h'^{2}}$ with $\bm{h'^{2}} (\{x, y_2\}) = \oldminusone$ for $c_2$ instead, and thus, we do not need to consider the combination of vectors $(\bm{h^1}, \bm{h^2})$ at all.

 Based on this idea, we reduce the number of combinations of vectors to consider from $3^{k^2}$ to $2^{O(k\log k)}$, thus improving the running time.

 \begin{lemma}\label{lheavy}
   Let $t\in V(T)$ be a non-leaf node, let $H = \{c_1, \dots, c_\ell\}$ be the set of its heavy children, and let $\bm{h}\in \{\oldminusone, \oldzero, \oldone\}^{\cut (t)}$ be given.

   There exists a family $\mathcal{H}$ of tuples $\left( \bm{g^{c_1}}, \dots, \bm{g^{c_\ell}} \right)$ with $\bm{g^{c_i}} \in \{\oldminusone, \oldzero, \oldone\}^{\cut (c_i)}$, where $k = \tcw(G)$, such that $|\mathcal{H}| \le k^{O(k)}$ and the following holds.
   If there exists a matching $\tilde{M} \subseteq E(G_t) \cup \cut(t)$ complying with $\bm{h}$, then there exists a matching $M$ complying with $\bm{h}$ for which there exists $(\bm{g^{c_1}}, \dots, \bm{g^{c_\ell}}) \in \mathcal{H}$ such that $\tau_c[\bm{g^c}] = M \cap \left( E(G_c) \cup \cut(c) \right)$ holds for every $c \in H$.

   The family $\mathcal{H}$ can be computed in $2^{O(k \log k)}$ time.
 \end{lemma}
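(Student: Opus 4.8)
The statement is a search-space reduction for the heavy children occurring in the induction step. The naive procedure guesses, for each heavy child $c$, one of the $3^{\adh(c)} \le 3^k$ vectors of $\sig(c)$, which gives $(3^k)^{|H|} = 3^{O(k^2)}$ combinations since $|H|\le 2k+1$ by \cref{laatcw}; the claim is that a subfamily $\mathcal{H}$ of size $k^{O(k)}$ already captures the relevant behaviour. The plan is to cut the number of \emph{relevant} vectors per heavy child from $3^k$ down to $k^{O(1)}$, and then to take the product over the at most $2k+1$ heavy children, which yields $(k^{O(1)})^{2k+1} = k^{O(k)}$.

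First I would set up the canonical vector together with an interior swap. Given any matching $\tilde M \subseteq E(Y_t)\cup\cut(t)$ complying with $\bm h$, define for each heavy child $c$ the tightest vector $\bm g^c = \bm g^c(\tilde M)$ that $\tilde M \cap (Y_c \cup \cut(c))$ complies with: put $\bm g^c(e)=\oldzero$ on matched cut edges, $\bm g^c(e)=\oldone$ on an unmatched cut edge $e=\{u,w\}$ with $u\in Y_c$ whenever $\rk_u(\tilde M(u))\le \rk_u(w)$, and $\bm g^c(e)=\oldminusone$ otherwise. By construction $\tau_c[\bm g^c]\neq\square$. Now replace, inside each heavy child $c$, the block $\tilde M\cap(Y_c\cup\cut(c))$ by the stored entry $\tau_c[\bm g^c]$, leaving all cut edges and the parts of $\tilde M$ in $X_t$ and in the light children untouched; call the result $M$. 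The two blocks agree on $\cut(c)$ (identical $\oldzero$-entries) and both comply with $\bm g^c$, so I would check that $M$ is again a matching complying with $\bm h$: no blocking pair can live entirely inside $Y_c$ (the stored entry complies with $\bm g^c$), and none can cross $\cut(c)$ because the $\oldone$/$\oldminusone$ guarantees certified by $\bm g^c$ are shared by $\tau_c[\bm g^c]$ and by $\tilde M$. This exchange produces an $M$ with $M\cap(Y_c\cup\cut(c))=\tau_c[\bm g^c(\tilde M)]$ for all $c\in H$ simultaneously.

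Second, I would enumerate for each heavy child $c$ a set $B_c$ of $k^{O(1)}$ vectors guaranteed to contain $\bm g^c(\tilde M)$ above. The $\oldzero$-entries of $\bm g^c$ are exactly the cut edges of $c$ used by $M$, so I would first treat the matching $\mathcal{M}_H := M \cap \bigcup_{c\in H}\cut(c)$ between $X_t$ and the heavy children: since at each child these matched edges are vertex-disjoint, and using a \emph{best-partner} exchange in the spirit of the light-children argument (\cref{lHsimple}), one may assume each neighbour of $c$ in $X_t$ is joined to $c$ through its best available cut edge, which limits the number of per-child matched patterns to $k^{O(1)}$. Once the matched pattern at $c$ is fixed, the remaining $\oldone$/$\oldminusone$-entries are set to their canonical (tightest) values from the previous paragraph rather than guessed; by the monotonicity noted before \cref{tfpttcw} (relaxing a $\oldone$ to $\oldminusone$ only enlarges the set of complying matchings) this canonical marking is the right one to store and contributes no extra factor. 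Thus $|B_c|=k^{O(1)}$, and I would set $\mathcal{H}:=\prod_{c\in H}B_c$, of size $k^{O(k)}$.

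The hard part will be the second step. The crux is to show rigorously that, after the matched cut edges are fixed, the number of per-child patterns one must try really collapses to $k^{O(1)}$ — i.e.\ that the $\pm\oldone$ marking is \emph{forced} once the $\oldzero$-edges are chosen, and that the best-partner reassignment at $X_t$ does not create a blocking pair involving some $X_t$-vertex whose partner was moved. I also expect that verifying the Step~1 exchange preserves the blocking-pair condition uniformly across all three edge types, and that it remains compatible with the later light-children and $2$-SAT phases, is where the genuine care lies; the $|H|\le 2k+1$ bound of \cref{laatcw} is then exactly what converts the per-child polynomial bound into the global $k^{O(k)}$ size of $\mathcal{H}$.
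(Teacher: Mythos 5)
Your Step~1 --- reading off the tightest vector $\bm{g}^c(\tilde M)$ from the witness and swapping each heavy child's interior for the stored entry $\tau_c[\bm{g}^c]$ --- is sound, and it makes explicit an exchange that the paper's own proof leaves implicit. Both gaps are in Step~2, and the first is the way the counting is organized. Even after your best-partner normalization, the $\oldzero$-part of $\bm{g}^c$ records \emph{which subset} of the $X_t$-neighbours of $c$ is matched into $c$; a heavy child can have $\Omega(k)$ neighbours in $X_t$ joined by pairwise disjoint cut edges, and which of them must enter $c$ is dictated by global constraints that vary over witnesses, so no local exchange can normalize this away. Hence $|B_c|=k^{O(1)}$ is false ($|B_c|$ can be $2^{\Omega(k)}$), and $\prod_{c\in H}B_c$ can have size $2^{\Omega(k^2)}$, not $k^{O(k)}$. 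The paper avoids this by indexing the enumeration by the vertices of $X_t$ rather than by the children: each of the at most $k$ vertices of $X_t$ picks one of the $O(k^2)$ pairs (heavy child, cut edge) or stays unmatched to heavy children, giving $O(k^2)^k=k^{O(k)}$ globally consistent $\oldzero$-patterns; no product over $H$ is ever formed, and no best-partner exchange for heavy children (which you would still have to justify, since several $X_t$-vertices enter the same child and feasibility is a joint condition on one table entry) is needed.

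The second gap is the claim that the remaining $\oldone/\oldminusone$ entries are ``canonical \ldots rather than guessed'' and ``contribute no extra factor''. The tightest padding is defined in terms of $\tilde M$ (it asks whether $\rk_u(\tilde M(u))\le \rk_u(w)$ for the interior endpoint $u$), equivalently, for a vertex $x\in X_t$ not matched into a heavy child it is determined by the rank of $x$'s \emph{eventual} partner, which is only fixed in the later light-children/2-SAT phase. Since $\mathcal{H}$ must be constructed without access to $\tilde M$, this padding cannot be ``set''; it has to be enumerated. Defaulting to the most permissive padding (all $\oldminusone$) breaks your Step~1 exchange: the stored entry may match the interior endpoint $u$ of an unmatched cut edge $\{x,u\}$ worse than it ranks $x$, and if $x$ ends up preferring $u$ to its eventual partner, $\{x,u\}$ blocks. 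The paper's resolution is precisely the piece of structure your write-up is missing: the admissible paddings for a fixed $x$ are \emph{monotone} in $\rk_x$ (if the edge to $y$ must carry $\oldone$, then so must the edge to every $y'$ that $x$ prefers to $y$), so only the $O(k^2)$ prefix splits of $x$'s heavy-children neighbours need to be tried --- a benign extra factor of $k^{O(k)}$, again accounted per vertex of $X_t$, not per child.
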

 \begin{proof}
  Let $\cut (H):= \bigcup_{c\in H} \cut (c)$.

  We generate the family $\mathcal{H}$ as follows.
  Recall that for a set of vertices~$X$, we defined $\delta (X)$ to be the set of edges leaving~$X$, i.e., with one endpoint in $X$ and one outside $X$.
  First, we enumerate all possible subsets $F$ of edges from $\bigl(\bigcup_{c\in H} \cut (c)\bigr) \cap \delta (X_t)$ which can be part of the matching.
  For each such subset~$F$, we enumerate all functions $f_F: X\rightarrow \cut (H) \cup \{\square\}$ with $f_F (x) \in \delta (X) \cup \{\square\}$ for all~$x\in X$, and $f_F (x) = e$ for all $e = \{v, x\} \in F$.
  For each pair $(F, f_F)$, we add the tuple $t(F, f_F):= (\bm{g^{c_1}}, \dots, \bm{g^{c_\ell}})$ to $\mathcal{H}$, where
  \[\bm{g^{c_i}} (\{v, x\}) := \begin{cases}
                     \oldzero & \text{ if } \{v, x \} \in F,\\
                     \oldone & \text{ if } \{v, x\} \notin F \land \Bigl( f_F(x) = \square \lor \rk_x (v) < \rk_x (f_F(x)) \Bigr),\\
                     \oldminusone & \text{ otherwise.}
                   \end{cases}
  \]

  Note that indeed $|\mathcal{H}| = 2^{O(k \log k)}$, as there are only $2^{O(k \log k)} $ possible subsets $F$ (as all but $k$ edges from $\cut (H)$ end in $X_t$, and the vertices in $X_t$ have at most $k$ neighbors in $Y_c$ for each of the $\ell\le k$ heavy children) and only $2^{O(k \log k)}$ possible functions $f_F$.
  Moreover, the enumeration of $F$ and $f_F$ can also be done in $2^{O(k\log k)}$ time.

  Let now $\tilde{M}$ be a matching complying with $\bm{h}$.
  Let $\tilde{F}:= \tilde{M} \cap \cut (H)$ and $\bm{g} := t(\tilde{F}, f_{\tilde{F}})$.
  Let~$f_{\tilde{F}} (x)$ be an edge $\{v, x\}\in \delta (v) \cap \cut (H)$ such that $\rk_x (v) \ge \rk_x (\tilde{M} (x))$, and $\rk_x (v)$ minimal.
  We claim that $M := \Bigl( \tilde{M} \setminus (\bigcup_{c\in H} (E(G_c) \cup \cut(c))\Bigr) \cup M_{\bm{g}}$ is also a matching complying with~$\bm{h}$, where $M_{\bm{g}} = \bigcup_{c\in H} \tau_c [\bm{g^c}]$.

  First, we show that $\tau_c [ \bm{g^c} ] \neq \square $ as $\tilde{M} \cap (E(G_c) \cup \cut (c) ) $ complies with $\bm{g^c}$.
  If not, then---since $M$ and~$\tilde{M}$ coincide on $\cut (c)$---there is some $v\in Y_c$ such that $\bm{g^c} (\{v, x\}) = \oldone$, but $\rk_v (x) < \rk_v (\tilde{M}(x))$.
  However, by the definition of $\bm{g^c}$, we get that also $\rk_x (v) < \rk_x (\tilde{M}(x))$, and thus, $\{v, x\}$ is a blocking pair for $\tilde{M}$, a contradiction to the assumption that $\tilde{M} $ complies with $\bm{h}$.

  Second, note that all agents $x\in X_t$ are matched the same in $\tilde{M}$ and $M$, and for each edge~$e\in \cut (H)$, we have that $e\in \tilde{M}$ if and only if $e\in M$.
  Thus, $M$ fulfills the first two properties of a matching complying with $\bm{h}$ as $\tilde{M}$ is complying with $\bm{h}$.

  It remains to show that $M$ also fulfills the third property, i.e., that there is no blocking pair inside $Y_t$.
  Since $\tilde{M}$ did not contain a blocking pair inside $Y_t$, and $M$ arises from $\tilde{M}$ by replacing the matching in $Y_c$ by a matching complying with $\bm{g^c}$ for each $c\in H$, any such blocking edge must be contained in $\cut (H)$.
  So assume that there is such a blocking edge~$\{v, x\}\in \cut (c)$.
  Thus, $\rk_x (v) < \rk_x (M(x))$, which implies that $\bm{g^c} (v) = \oldone$.
  However, since $\tau_c[\bm{g}^c]$ complies with $\bm{g}^c$, we have that $\rk_v (x) \ge \rk_v (M(v))$, contradicting the assumption that $\{v, x\}$ is a blocking pair.
 \end{proof}

\subsubsection{The Light Children}\label{applightchildren}

Recall that for each $x\in X_t$, we guess a class $\mathcal{C}$ of light children to which $x$ is matched (i.e., there exists some $v\in V_c$ for some $c\in \mathcal{C}$ such that $x$ is matched to $v$) or that $x$ is not matched a light child.
In this section, assuming we guessed $x$ to be matched to $\mathcal{C}$, we want to reduce the number of possible partners of $x$ to a constant number.
The classes for which we successfully do so will be called \emph{good} classes, while all other classes will be called bad.

We start with the formal description of the guesses to which class the vertices in $X_t$ are matched via a node-to-children matching $\mathcal{N}$.
\subparagraph*{Auxiliary Node Graph}
For a node $t$ we define an auxiliary node graph $H_t$ to be a bipartite graph with one side of the bipartition $X_t$ and the other side containing a vertex $x^{\mathcal{C}}$ for each class~$\mathcal{C}$ and each $x \in X_t$ with $x \in \N(\mathcal{C})$.
There is an edge $\left\{ x, x^{\mathcal{C}} \right\}$ whenever~$x \in \N(\mathcal{C})$.
Now, let $\mathcal{N}$ be a matching in $H_t$ and let us slightly abuse notation and write $\mathcal{N}(x) = \mathcal{C}$ if~$\left\{x, x^{\mathcal{C}} \right\} \in \mathcal{N}$.
Furthermore, if a vertex $x\in X_t$ is unmatched in $\mathcal{N}$, thenwe write $\mathcal{N} (x) = \bot$.

As described in the proof idea (\Cref{sec:idea}), we enumerate all $k^{O(k)}$ matchings $\mathcal{N}$ in $H_t$.
We will therefore from now on fix such a matching $\mathcal{N}$, and try to find a matching obeying $\mathcal{N}$ (and of course complying with $\bm{h}$).
Formally, we are looking for an embedding of $\mathcal{N}$ in the graph $G$ complying with $\bm{h}$.
We say that a matching $M$ is an \emph{$\bm{h}$-embedding} of $\mathcal{N}$ if
\begin{itemize}
  \item $M$ complies with $\bm{h}$ and
  \item there is an edge $\left\{x, x^{\mathcal{C}} \right\} \in \mathcal{N}$ if and only if $\left\{x, y^c \right\} \in M$ for some $y^c\in Y_c$ for some~$c \in \mathcal{C}$.
\end{itemize}

  Let us denote by $E(\mathcal{C})$ the set of edges $\bigcup_{c \in \mathcal{C}} \left( E(G_c) \cup \cut(c) \right)$, and by $V(\mathcal{C})$ the vertex set~$\bigcup_{c \in \mathcal{C}} Y_c$.
  Similarly, we denote by $\cut (\mathcal{C})$ the set of edges $\bigcup_{c\in \mathcal{C}} \cut (c)$.
  Let $x\in X_t$ and $\mathcal{C}$ be a class with $x\in \N (\mathcal{C})$.
  We define $\best (\mathcal{C}, x) \subseteq V(\mathcal{C})$ to be the set of vertices~$v$ minimizing $\rk_x (v) $ among all $v\in V(\mathcal{C})$.

Given such a matching $\mathcal{N}$, we know for each $x\in X_t$ to which class~$M(x)$ belongs, but not which vertex~$M(x)$ is exactly.
That is, $\mathcal{N}(x)$ is known, whereas $M(x)$ is yet to be computed.
We will now show that in many cases, we can then choose at most four vertices, such that there exists a solution matching $x$ to one of these four vertices if there exists such a matching for $\mathcal{N}$ complying with $\bm{h}$.
If $\mathcal{N} (x)$ contains at most four children for all $x\in X$, then we can enumerate all possible choices of $M(x) $ for all $x\in X$ in $O(4^k)$ time.
Thus, our overall strategy is to show that $\mathcal{N} (x)$ can be reduced to contain at most four children in some cases (see \Cref{sec:singleton,sec:non-singleton}).
We then enumerate $M(x) $ on all such~$x\in X_t$, and using a reduction to 2-SAT, we then show that we can decide whether we can extend this partial matching in a stable way in polynomial time in \Cref{sec:2-SAT}.

We now first deal with classes of children containing only one neighbor in $X_t$, which we call \emph{singleton classes}.

\paragraph{Singleton Classes}
\label{sec:singleton}

We will show how to deal with a class $\mathcal{C}$ of light children which have only one neighbor in $X_t$.
More precisely, we will show that there exists a child~$c\in \mathcal{C}$ such that from any matching $M$ complying with a vector $\bm{h}$ and containing an edge from $\cut (\mathcal{C})$, we can perform a ``local'' exchange and get a matching $M'$ containing an edge from $\cut (c)$ and also complying with $\bm{h}$.
The exchange being ``local'' here means that we exchange only edges from $E(\mathcal{C})$.
This locality allows us to apply similar exchange arguments (\Cref{lem:once-nonzero,lem:once-zero,lem:same-child,lem:different-child}), which allows us to prune all but a finite number of members of a class (so-called ``good'' classes),
or at least to prune members of some members the corresponding classes (so-called ``bad classses'').

If the class has only one neighbor $x$ (although two different vertices from the class can have an edge to this neighbor), then we just match $x$ to the vertex it prefers most among those that do not lead to a blocking pair inside $Y_\mathcal{C} \cup \{x\}$.
Note that this is not necessarily the vertex best-ranked by $x$ among the neighbors of $x$ in $\mathcal{C}$.
For example, if $\rk_x (v_1) < \rk_x (w_1) < \rk_x(w_2) < \rk_x (v_2)$ with $v_i, w_i\in Y_{c_i}$ for two children $c_i$ of $x$ from the same class, and $\sig (c_i) = \{\bm{(\oldminusone, \oldminusone)}, \bm{(\oldone, \oldminusone)}, \bm{(\oldminusone, \oldone)}, \bm{(\oldone, \oldone)}, \bm{(\oldminusone, \oldzero)}\}$ (where the first coordinate refers to $v_i$, and the second to $w_i$), then we would match $x$ to $w_2$:
Note that $v_i$ cannot be matched to $x$ as $\bm{(\oldzero, z)}\notin \sig (c_i)$ for all $z\in \{\oldminusone, \oldzero, \oldone\}$.
Also $w_1$ cannot be matched to $x$:
In this case $x$ prefers $v_1$ to its partner~$w_1$ in the matching.
Since $\bm{(\oldone, \oldzero)}\notin \sig (c_i)$ it follows that the matching must contain a blocking pair inside $Y_{c_1}$ or $v_1$ prefers $x$ to its partner in the matching.
Since $x$ prefers $v_1$ to $w_1$, it follows that there is a blocking pair in $Y_c \cup \{x\}$.

\begin{lemma}\label{lHsimple}
  Let $\mathcal{C}$ be a class with $\N(\mathcal{C}) = \{x\}$ for some $x\in X_t$.
  Assume $\mathcal{N}(x) = \mathcal{C}$ and let $\bm{h}$ be given.
  We can compute in polynomial time a child $c \in \mathcal{C}$ such that the following holds:
  If there is an $\bm{h}$-embedding~$M$ of~$\mathcal{N}$, then there is a set of edges $F \subseteq E(\mathcal{C})$ such that the matching $M \Delta F$ is an $\bm{h}$-embedding of $\mathcal{N}$ and $(M \Delta F)(x) \in Y_c$.
\end{lemma}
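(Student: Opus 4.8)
The plan is to prove the statement by an explicit local exchange argument. Fix the class $\mathcal{C}$ with $\N(\mathcal{C}) = \{x\}$ and the matching $\mathcal{N}$ with $\mathcal{N}(x) = \mathcal{C}$. Every child $c \in \mathcal{C}$ is light, so $\cut(c)$ consists of at most two edges, all of the form $\{x,\cdot\}$ with the other endpoint in $V(Y_c)$; and since $c \Diamond d$ for all $c,d \in \mathcal{C}$, the signatures coincide, so after identifying coordinates across children by increasing edge index we may regard $\sig(\mathcal{C})$ as one subset of $\{\oldminusone,\oldzero,\oldone\}^{\cut(c)}$. First I would fix the target child once and for all, independently of $M$: pick $w^{*} \in \best(\mathcal{C},x)$, let $c \in \mathcal{C}$ be the child with $w^{*} \in V(Y_c)$, and note $\rk_x(w^{*}) \le \rk_x(v)$ for all $v \in V(\mathcal{C})$. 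Because $x$ strictly prefers no boundary vertex of any child of $\mathcal{C}$ to $w^{*}$, matching $x$ to $w^{*}$ creates no blocking pair $\{x,v\}$ with $v \in V(Y_c)$; hence rerouting a witness matching of $G[\clos(c)]$ so that $w^{*}$ is matched to $x$ yields a matching complying with some $\bm{g}^{*} \in \sig(c)$ with $\bm{g}^{*}(\{x,w^{*}\}) = \oldzero$, so that $\tau_c[\bm{g}^{*}] \neq \square$.

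Now assume an $\bm{h}$-embedding $M$ of $\mathcal{N}$ exists and write $M(x) = w \in V(Y_{c'})$ for the unique child $c' \in \mathcal{C}$ to which $x$ is matched. If $c' = c$ we are done with $F = \emptyset$. Otherwise I would assemble $F$ from two modifications inside $E(\mathcal{C})$: writing $M_d := M \cap (E(Y_d) \cup \cut(d))$ and letting $\bm{g}^{c}$ be the vector that $M_c$ complies with (which has no $\oldzero$-coordinate, as $x$ is not matched into $c$ under $M$), set $M_{c}^{\mathrm{new}} := \tau_c[\bm{g}^{*}]$ and $M_{c'}^{\mathrm{new}} := \tau_{c'}[\bm{g}^{c}]$, both well defined since $\sig(c) = \sig(c') = \sig(\mathcal{C})$. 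Then $F := (M_{c} \Delta M_{c}^{\mathrm{new}}) \cup (M_{c'} \Delta M_{c'}^{\mathrm{new}}) \subseteq E(\mathcal{C})$, and $M' := M \Delta F$ swaps in the new $c$- and $c'$-parts while leaving the rest of $M$ untouched. In $M'$ the vertex $x$ is matched only to $w^{*} \in V(Y_c)$, so $M'$ is a matching that still embeds $\mathcal{N}$.

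It remains to verify that $M'$ complies with $\bm{h}$. As $F \subseteq E(\mathcal{C})$ lies inside $G[Y_t]$, the restriction of $M'$ to $\cut(t)$ equals that of $M$, so the $\oldzero$-condition on $\cut(t)$ is preserved; the only $X_t$-vertex whose partner changed is $x$, and $\rk_x(M'(x)) = \rk_x(w^{*}) \le \rk_x(w) = \rk_x(M(x))$, so each $\oldone$-constraint on an edge $\{x,\cdot\} \in \cut(t)$ valid for $M$ remains valid for $M'$. For stability I would show every blocking pair of $M'$ still contains an unmatched vertex outside $Y_t$: pairs internal to $G[\clos(c)]$ or $G[\clos(c')]$ are governed by the compliance of $M_{c}^{\mathrm{new}}$ and $M_{c'}^{\mathrm{new}}$; pairs disjoint from $V(\mathcal{C}) \cup \{x\}$ are inherited unchanged from $M$; and a pair $\{x,v\}$ with $v \in V(\mathcal{C})$ cannot block since $\rk_x(v) \ge \rk_x(w^{*}) = \rk_x(M'(x))$.

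The main obstacle I anticipate is exactly this last stability check, and within it the boundary vertices of the turned-off child $c'$ that sit at a permissive ($\oldminusone$) coordinate of $\bm{g}^{c}$: such a vertex is no longer protected by local compliance once $x$ is globally matched again, so excluding a blocking pair with $x$ must rely on the \emph{global} optimality of $w^{*}$ over all of $V(\mathcal{C})$ (so $x$ prefers no class vertex to its new partner) rather than on the child's stored matching. Pinning this down, together with the claim that $\bm{g}^{*}$ is genuinely realizable in $\sig(c)$ via the rerouting above, is the technically delicate part; the remaining bookkeeping (matching validity, the unchanged $\cut(t)$-coordinates, and the internal pairs) is routine.
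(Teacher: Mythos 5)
Your overall skeleton (a local exchange swapping the child that currently hosts $M(x)$ for a pre-selected target child) is the same as the paper's, but your rule for selecting the target child is wrong, and the step you yourself flag as delicate---realizability of $\bm{g}^{*}$, i.e.\ that $x$ can actually be matched to $w^{*}\in\best(\mathcal{C},x)$ inside its child---is exactly where the argument breaks. Whether $x$ can be matched to $w^{*}$ is not governed by $x$'s ranks at all: it depends on the internal structure of $Y_c$, and neither your ``rerouting'' claim nor signature equality delivers it. Recall that signatures identify coordinates of $\cut(c)$ and $\cut(c')$ by \emph{edge index}, not by $x$'s ranks; so from the fact that $x$ can be matched into $c'$ you only learn that the index-corresponding entry of $\tau_c$ is nonempty, which may be the entry matching $x$ to a completely differently ranked boundary vertex of $c$.

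Concretely, let $\mathcal{C}=\{c,c'\}$ with $Y_c=\{a,b,z\}$, internal edge $\{a,z\}$, $\cut(c)=\{\{x,a\},\{x,b\}\}$ (in this index order), where $a$ ranks $z$ first and $x$ second, $z$'s unique neighbor is $a$, and $b$'s unique neighbor is $x$; let $Y_{c'}=\{u,v,z'\}$ be built identically with $\cut(c')=\{\{x,u\},\{x,v\}\}$. Set $\rk_x(a)=1$, $\rk_x(v)=2$, $\rk_x(u)=5$, $\rk_x(b)=10$. In either child, matching $x$ to the $a$-type vertex leaves the pendant $z$-type vertex unmatched and creates the internal blocking pair $\{a,z\}$ (resp.\ $\{u,z'\}$), so both children have signature $\{(\oldminusone,\oldminusone),(\oldone,\oldminusone),(\oldminusone,\oldzero),(\oldone,\oldzero)\}$ and hence $c\mathbin{\Diamond}c'$. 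The matching with $M(x)=v$ (plus $\{a,z\}$ and $\{u,z'\}$) is an $\bm{h}$-embedding of $\mathcal{N}$. Your rule picks $c$, since it contains $w^{*}=a$; but \emph{no} $\bm{h}$-embedding matches $x$ into $Y_c$: matching $x$ to $a$ triggers $\{a,z\}$, and matching $x$ to $b$ leaves $v$ unmatched and triggers $\{x,v\}$, both blocking pairs inside $Y_t$. So the lemma's conclusion fails for your choice of $c$. The paper's proof avoids exactly this: for every candidate edge $e=\{v,x\}\in\cut(\mathcal{C})$ and every child $d$ it forms the vector $\bm{h^{d,e}}$, and it selects the rank-minimal $e$ such that $\tau_d[\bm{h^{d,e}}]\neq\square$ for \emph{all} $d\in\mathcal{C}$; since the restrictions of any existing embedding $M$ witness that $e'=\{x,M(x)\}$ passes this test, the selected edge satisfies $\rk_x(v)\le\rk_x(M(x))$, and only with that guarantee does your exchange-plus-``$x$ prefers nobody to its new partner'' argument go through. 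In the example this filter correctly discards $a$ (because $(\oldzero,\oldminusone)\notin\sig(c)$) and selects $v$.
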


\begin{proof}
  For each edge $e= \{v, x\}\in\cut (\mathcal{C})$ and each child $d\in \mathcal{C}$, we define a vector $\bm{h^{d, e}} \in {\{\oldminusone, \oldzero, \oldone\}}^{ \cut (d)}$ by setting $\bm{h^{d, e}} (\{w, x\})$ for each edge $\{w, x\} \in \cut (d)$ with $w\in Y_d$ according to the following case distinction.
  \[\bm{h^{d, e}} (\{w, x\}):= \begin{cases}
 \oldone & \text{ if } \rk_x (w) < \rk_x (v),\\
 \oldzero & \text{ if } v= w,\\
 \oldminusone & \text{ otherwise.}
\end{cases}\]

  Let $c$ be the child with $e=\{v, x\}\in \cut (c)$, where $\{v, x\}$ is the edge minimizing $\rk_x (v) $ among all $\{v, x\}\in E(\mathcal{C})$ such that for all $d\in \mathcal{C}$ we have $\tau_{d} [\bm{h^{d, e}}]\neq \square$.
  (Clearly, $c$ can be computed in polynomial time.)

  Assume that $M$ is an $\bm{h}$-embedding of $\mathcal{N}$ and denote $\{x, M(x)\}$ by $e'$.
  By our assumptions, we have~$\tau_{d} [\bm{h^{d, e'}}] \neq \square$ for all $d\in \mathcal{C}$, and thus $\rk_x (v) \le \rk_x (M(x))$.
  Let $c'$ be the child such that $M(x) \in Y_{c'}$.
  Let
  \[
  F:=
  \Bigl( M \setminus \bigl(E(G_c) \cup E(G_{c'}) \cup \{x, M(x)\}\bigr)\Bigr) \cup \tau_{c} [\bm{h^{c, e}}] \cup \tau_{c'} [\bm{h^{c', e}}] \,.
  \]

  It remains to show that $M \Delta F$ complies with $\bm{h}$.
  The second and third condition obviously follow from $M$ complying with $F$, as $M$ and $M \Delta F$ only differ on $V(\mathcal{C}) \cup \{x\}$, and $\rk_x (M(x))\ge \rk_x ( (M\Delta F)(x) )$.
  Since $M$ and $M \Delta F$ coincide on $Y_t \setminus \bigl( V(\mathcal{C}) \cup \{x\}\bigr)$, any blocking pair must contain a vertex from $V(\mathcal{C}) \cup \{x\}$.

  Since $\rk_x ((M\Delta F) (x)) \le \rk_x (M(x))$ and $M$ complies with $\bm{h}$, any blocking pair must be contained in $V(\mathcal{C}) \cup \{x\}$.
  Since $(M\Delta F) \cap E(G_d) $ complies with $\bm{h^{d, e}}$ for each $d\in \mathcal{C}$, there is no blocking pair contained inside $Y_d$ for all $d\in \mathcal{C}$.
  Thus, any blocking pair contains the vertex $x$, and an edge $\{x, y^d\}$ for a vertex $y^d\in Y_d $ for some $d\in \mathcal{C}$.
  But from the definition of~$\bm{h^{d, e}}$ it follows that $\{x, y^d\}$ is not a blocking pair.
\end{proof}

\paragraph{Non-singleton Classes}
\label{sec:non-singleton}
We are now left with classes $\mathcal{C}$ of light children for which we know $|N(\mathcal{C})| = 2$ (of course, for the fixed node $t$).
In the rest of \Cref{sec:non-singleton}, let $N(\mathcal{C}) = \{x_1, x_2\}$ and let for every child $c \in \mathcal{C}$ vertex $y_i^c \in Y_c$ be the endpoint of the edge in $\cut(c)$ whose other endpoint is $x_i \in X_t$ (thus, we have $\cut(c) = \left\{ \{ x_1, y_1^c \}, \{x_2, y_2^c \} \right\}$).
Our goal to reduce the number of children in each class to at most four.
We will call the classes for which we achieve this goal \emph{good}, while the other classes are \emph{bad}.
We first need some notation.

The following definition of signature with respect to a neighbor of a child will help us when classifying good and bad children.
The idea is to capture when it is possible to require that one of the edges between a node and a child is not preferred by its endpoint contained in the child.
That is, for a child $c$ of $t$ with $\N (c) =  \{x_1, x_2\}$, it stores how $y_{3-i}^c$ can be matched, given that $x_{i}$ prefers $y_{i}^c$ over its partner in a matching.
\begin{definition}
  Let $t$ be a node, let $c$ be a child with $\N(c) = \{x_1, x_2\}$.
  Then $\sig_{x_1}(c)$ denotes be the set of elements $z\in \{\oldminusone, \oldzero, \oldone\}$ such that $\bm{(\oldone, z)}\in \sig (c)$;
  $\sig_{x_2} (c)$ is defined analogously.
  Thus, we have $\sig_{x_i} (c): = \left\{ z \in \{\oldminusone, \oldzero, \oldone\}: \bm{h^{x_i,z}} \in \sig(c) \right\}$, where $\bm{h^{x_i,z}}(\{x_{3-i}, y^c_{3-i}\}) := z$ and $\bm{h^{x_i,z}}(\{ x_i, y_i^c\}) := \oldone$.
\end{definition}

Note that if e.g.\ $\sig_{x_2} (c) = \{ \oldzero \}$, then the edge $\{x_2, y_2^c \}$ will not be blocking for any matching extending the matching stored in $\tau_c[\bm{h}]$ with $\bm{h}(\{x_2, y_2^c \}) = \oldone$ and $\bm{h}(\{ x_1, y^c_1 \}) = \oldzero$.
By the definition of $\sig_{x_2} (c)$ and $\sig (c)$, this matching exists.

Note that the cases $\sig_{x_1} (\mathcal{C}) = \{\oldone\}$ and $\sig_{x_1} (\mathcal{C}) = \{\oldzero, \oldone\}$ are not possible, as any matching complying with $\bm{h} (e) = \oldone$ also complies with $\bm{h} (e) = {\oldminusone}$ (we stress here that the conditions we impose on a matching by setting $\bm{h} (e) = {\oldone}$ are stronger than those imposed by $\bm{h} (e) = \oldminusone$).
Thus, $\sig_{x_1} ( \mathcal{C})$ for a class $\mathcal{C}$ and the vertex $x_1\in N(\mathcal{C})$ can be one of the six following sets:
\begin{enumerate}
  \item $\sig_{x_1} (\mathcal{C})= \emptyset$.
  \item $\sig_{x_1} (\mathcal{C})= \{\oldminusone\}$.
  \item $\sig_{x_1} (\mathcal{C})= \{\oldminusone, \oldzero\}$.
  \item $\sig_{x_1} (\mathcal{C})= \{\oldminusone, \oldone\}$.
  \item $\sig_{x_1} (\mathcal{C})= \{\oldminusone, \oldzero, \oldone\}$.
  \item $\sig_{x_1} (\mathcal{C})= \{\oldzero\}$.
\end{enumerate}

Subsequently, we proceed as follows:
First, we provide a technical lemma giving a sufficient condition to perform certain exchanges of matching edges (\Cref{lcomplie}).
Then, we show this lemma to show that only a restricted set of children can be matched to a vertex from $X_t$ for every class (\Cref{lem:once-nonzero,lem:once-zero,lem:same-child,lem:different-child}).

\subparagraph*{Exchange Lemma}
We now provide a technical lemma which gives a sufficient condition to perform certain exchanges of matching edges.
This then allows us to simplify some further reasoning.
The basic idea of the exchange is that for each vertex $x\in X_t$ matched to a child~$c \in \mathcal{C}$, we can match $x$ to any child $d\in \mathcal{C}$ such that $x$ prefers its neighbor in~$Y_d$ over $Y_c$ as long as this does not introduce a blocking pair inside $V(\mathcal{C}) \cup \{x\}$.
Before doing so, we formally define the notion of a (feasible) exchange.

\begin{definition}
  Let $t\in V(T)$ be a non-leaf node.
  Let $\mathcal{C}$ be a class of light children, and let $M^*$ be a matching complying with some vector $\bm{h}\in \{\oldminusone, \oldzero, \oldone\}^{\cut (t)}$. Assume that $F= M^* \cap \cut (\mathcal{C})$.
  Let $F' \subseteq \cut (\mathcal{C})$ such that~$F$ contains an edge incident to an $x\in \N (\mathcal{C})$ if and only if $F'$ contains an edge incident to $x$.

  Then, for each $x\in \N(\mathcal{C})$, let $M' (x) := \begin{cases}
                         v & \text{ if } \{v, x\} \in F'\\
                         M^* (x) & \text{ otherwise}
                       \end{cases}$.

  For each $e = \{v, x\}\in \cut (c) \cap \delta (x)$ for some $c\in \mathcal{C}$ and $x\in \N (\mathcal{C})$, let \[\bm{h^c} (e) := \begin{cases}
    \oldone & \text{ if } \rk_x (v) < \rk_x (M' (x)),\\
    \oldzero & \text{ if } e\in F',\\
    \oldminusone & \text{ otherwise.}
  \end{cases}\]

  \emph{Matching $M$ arises from $M^*$ by exchanging the edges $F$ with $F'$}: Delete all edges with one endpoint in $Y_c$ for some $c\in \mathcal{C}$ and add for each $c\in \mathcal{C}$, the matching stored in $\tau_c [\bm{h^c}]$.
  See \Cref{fig:example-exchange} for an example.

  Exchanging $F$ and $F'$ is called \emph{feasible} if $\tau_c [\bm{h^c}] \neq \square$ for all $c\in \mathcal{C}$.
\end{definition}

\begin{figure}
  \begin{center}
    \begin{tikzpicture}[
  every node/.style={inner sep=1pt},
  proc/.style={shape=ellipse, draw}
]
      \node[vertex, label=90:$x_1$] (x1) at (0, 0) {};
      \node[vertex, label=90:$x_2$] (x2) at (1, 0) {};
      \node[fit=(x1)(x2), proc, inner ysep=3ex] (t) {};

      \node[vertex, label=270:$y_1^c$] (y1c) at (0,-2) {};
      \node[vertex, label = 270:$y_2^c$] (y2c) at ($(y1c) + (2, 0)$) {};
      \node[vertex] (zc) at ($(y1c) + (1, 0)$) {};
      \node[vertex, label=270:$y_1^d$] (y1d) at ($(y1c) + (3.8,0)$) {};
      \node[vertex, label = 270:$y_2^d$] (y2d) at ($(y1d) + (2, 0)$) {};
      \node[vertex] (zd) at ($(y1d) + (1, 0)$) {};
      \node[vertex, label=270:$y_1^b$] (y1b) at ($(y1c) + (-5,0)$) {};
      \node[vertex, label = 270:$y_2^b$] (y2b) at ($(y1b) + (3.2, 0)$) {};
      \node[vertex] (zb) at ($(y1b) + (1,0)$) {};

      \draw (y1b) edge node[pos=0.2, fill=white, inner sep=1pt] {\scriptsize $1$} node[pos=0.76, fill=white, inner sep=1pt] {\scriptsize $1$} (x1);
      \draw (y1c) edge node[pos=0.2, fill=white, inner sep=1pt] {\scriptsize $1$} node[pos=0.76, fill=white, inner sep=1pt] {\scriptsize $2$} (x1);
      \draw (y1d) edge node[pos=0.2, fill=white, inner sep=1pt] {\scriptsize $1$} node[pos=0.76, fill=white, inner sep=1pt] {\scriptsize $2$} (x1);

      \draw (y2b) edge node[pos=0.2, fill=white, inner sep=1pt] {\scriptsize $1$} node[pos=0.76, fill=white, inner sep=1pt] {\scriptsize $1$} (x2);
      \draw (y2c) edge node[pos=0.2, fill=white, inner sep=1pt] {\scriptsize $1$} node[pos=0.76, fill=white, inner sep=1pt] {\scriptsize $2$} (x2);
      \draw (y2d) edge node[pos=0.2, fill=white, inner sep=1pt] {\scriptsize $1$} node[pos=0.76, fill=white, inner sep=1pt] {\scriptsize $1$} (x2);

      \draw (y2b) edge node[pos=0.2, fill=white, inner sep=1pt] {\scriptsize $1$} node[pos=0.76, fill=white, inner sep=1pt] {\scriptsize $1$} (zb);
      \draw (y2c) edge node[pos=0.2, fill=white, inner sep=1pt] {\scriptsize $1$} node[pos=0.76, fill=white, inner sep=1pt] {\scriptsize $1$} (zc);
      \draw (y2d) edge node[pos=0.2, fill=white, inner sep=1pt] {\scriptsize $1$} node[pos=0.76, fill=white, inner sep=1pt] {\scriptsize $1$} (zd);

      \draw (y1d) edge node[pos=0.2, fill=white, inner sep=1pt] {\scriptsize $2$} node[pos=0.76, fill=white, inner sep=1pt] {\scriptsize $2$} (zd);

      \begin{scope}[on background layer]
          \newcommand{\colorBetweenTwoNodes}[3]{
            \fill[#1] ($(#2) + (0, .08)$) to ($(#2) - (0, .08)$) to ($(#3) - (0,.08)$) to ($(#3) + (0,.08)$) -- cycle;
        }
          \colorBetweenTwoNodes{mylila}{y1b}{x1}
          \colorBetweenTwoNodes{mylila}{y2b}{zb}
          \colorBetweenTwoNodes{mygreen}{y2c}{x2}
          \colorBetweenTwoNodes{mygreen}{y2d}{zd}

          \colorBetweenTwoNodes{myyellow}{y2d}{x2}
          \colorBetweenTwoNodes{myyellow}{y1d}{zd}
          \colorBetweenTwoNodes{myyellow}{y2c}{zc}

      \end{scope}
    \end{tikzpicture}
  \end{center}
  \caption{An example for a feasible exchange.
  Matching $M^*$ contains the purple and the green edges, and matching $M$ arising from $M^*$ by exchanging the edge $\{x_2, y_2^c\}$ with $\{x_2, y_2^d\}$ contains the purple and the yellow edges (i.e., $M$ arises from $M^*$ by replacing the green edges by the yellow edges).}
  \label{fig:example-exchange}
\end{figure}

\begin{lemma}[Exchange Lemma]\label{lcomplie}
  Let $M^*$ be an $\bm{h}$-embedding of $\mathcal{N}$, and let $M$ be the matching arising from $M^*$ by exchanging the edges $F$ with $F'$ for some $F, F'\subseteq \cut (\mathcal{C})$ for a class $\mathcal{C}$ with $(\bigcup_{e\in F} e)\cap \N (\mathcal{C}) = (\bigcup_{e\in F'} e) \cap \N (\mathcal{C})$.

  If the exchange is feasible and $\rk_x (M (x) ) \le \rk_x (M^*(x))$ for all $x\in \N (\mathcal{C})$, then the matching $M$ is an $\bm{h}$-embedding of $\mathcal{N}$.
\end{lemma}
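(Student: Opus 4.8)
The plan is to verify directly that the matching $M$ meets the two defining requirements of an $\bm{h}$-embedding of $\mathcal{N}$ (compliance with $\bm{h}$ and the edge-correspondence between $\mathcal{N}$ and $M$), exploiting that $M$ differs from $M^*$ only on $V(\mathcal{C})\cup\N(\mathcal{C})$. First I would record the structural facts coming from lightness: every $Y_c$ with $c\in\mathcal{C}$ has neighbours only in $\N(\mathcal{C})\subseteq X_t$, so no $\cut(t)$-edge is incident to $V(\mathcal{C})$, there is no $G$-edge between two distinct children of $t$, and a vertex of $Y_c$ has no edge to any vertex outside $Y_c\cup\N(\mathcal{C})$. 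I would also check that $M$ really is a matching: feasibility makes each $\tau_c[\bm{h^c}]$ a matching, the children share only the vertices $x,y\in\N(\mathcal{C})$, and the definition of $\bm{h^c}$ assigns $\oldzero$ to at most one edge at each of $x,y$ (the one in $F'$), so at most one child claims $x$ (resp.\ $y$). Combined with the hypothesis $(\bigcup_{e\in F}e)\cap\N(\mathcal{C})=(\bigcup_{e\in F'}e)\cap\N(\mathcal{C})$, this gives $M(x)=M'(x)$ for every $x\in\N(\mathcal{C})$ and shows no retained edge at $x$ clashes with an added one.

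The correspondence condition is then quick: for any class other than $\mathcal{C}$ nothing changed, and for $\mathcal{C}$ itself $x$ is matched inside $\mathcal{C}$ under $M$ iff $F'$ touches $x$, iff $F$ touches $x$, iff $x$ is matched inside $\mathcal{C}$ under $M^*$; so the equivalence with $\{x,x^{\mathcal{C}}\}\in\mathcal{N}$ transfers from $M^*$. For the first two clauses of compliance I would use that $M$ and $M^*$ agree on $\cut(t)$ (no deleted or added edge lies in $\cut(t)$), and that the only $Y_t$-endpoints of $\cut(t)$-edges whose partner could have moved are $x,y$; for those the hypothesis gives $\rk_x(M(x))\le\rk_x(M^*(x))\le\rk_x(w)$ whenever $\bm{h}$ demands it, so the second clause survives.

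The substance is the no-blocking clause, which I would settle by a case analysis on a putative blocking pair $\{a,b\}$ of $M$. If both endpoints lie outside $V(\mathcal{C})\cup\{x,y\}$, then $M$ and $M^*$ assign them identical partners, so $\{a,b\}$ blocks $M^*$ too and compliance of $M^*$ already supplies the required unmatched outside vertex. If $a\in Y_c$ for some $c\in\mathcal{C}$, the structural facts force $b\in Y_c\cup\N(\mathcal{C})$, else $\{a,b\}$ is not an edge; for $b\in Y_c$ the pair would block the stored matching $\tau_c[\bm{h^c}]$ with both endpoints in $Y_c$, contradicting its compliance with $\bm{h^c}$, while for $b\in\{x,y\}$ the three values of $\bm{h^c}(\{a,b\})$ each forbid blocking — $\oldzero$ puts the edge into $M$, $\oldone$ forces $\rk_a(M(a))\le\rk_a(b)$ by clause two of compliance of $\tau_c[\bm{h^c}]$, and $\oldminusone$ forces $\rk_b(a)\ge\rk_b(M(b))$ straight from the definition of $\bm{h^c}$. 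Finally, if neither endpoint lies in $V(\mathcal{C})$ but $a\in\{x,y\}$, then $\rk_a(M(a))\le\rk_a(M^*(a))$ together with $M(b)=M^*(b)$ shows $\{a,b\}$ also blocks $M^*$; compliance of $M^*$ yields an unmatched outside vertex, which must be $b$ (as $a\in Y_t$) and which is still unmatched in $M$ since $b\notin V(\mathcal{C})$.

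The main obstacle is exactly this last case, where improving the partners of $x$ and $y$ could a priori manufacture fresh blocking pairs; the entire purpose of the rank hypothesis $\rk_x(M(x))\le\rk_x(M^*(x))$ is to enforce monotonicity, so that any blocking pair through $x$ or $y$ is inherited from $M^*$ rather than created anew. Everywhere else the only care needed is the bookkeeping confirming that lightness confines all interactions to $\N(\mathcal{C})$, which is precisely what lets the per-child entries $\tau_c[\bm{h^c}]$ be stitched together without new inter-child blocking pairs.
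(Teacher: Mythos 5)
Your proof is correct and follows essentially the same route as the paper's: rank monotonicity at $\N(\mathcal{C})$ makes any blocking pair avoiding $V(\mathcal{C})$ inherited from $M^*$, feasibility rules out blocking pairs inside each $Y_c$, and the definition of $\bm{h^c}$ rules out pairs between $V(\mathcal{C})$ and $\N(\mathcal{C})$, with the remaining compliance conditions and the $\mathcal{N}$-correspondence transferring directly. You merely spell out details (that $M$ is a matching, the lightness-confinement facts, and the unmatched-outside-vertex bookkeeping) that the paper dispatches with ``clearly'' and ``directly follow.''
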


\begin{proof}
   The matchings $M$ and $M^*$ only differ on $S:=\N (\mathcal{C}) \cup \bigcup_{c\in \mathcal{C}} Y_c$.
   As $\N (\mathcal{C})$ is matched not worse in $M$ than in $M^*$, every blocking pair must contain a vertex from $V(\mathcal{C})$.
   As the exchange is feasible, there is no blocking pair in $Y_c$ for all $c\in \mathcal{C}$.
   By the definition of $\bm{h^c}$, no vertex from $\N (\mathcal{C})$ forms a blocking pair together with a vertex from $V(\mathcal{C})$.
   Thus, there are no blocking pairs in $Y_t$.

   The other two conditions for $M$ complying with $\bm{h}$ directly follow from $M^*$ complying with $\bm{h}$.

   Clearly, $M$ fulfills that there is an edge $\{x, x^\mathcal{D}\}\in \mathcal{N}$ if and only if $\{x, y^d\}\in M$ for some $y^d\in Y_d$ for some child~$d\in \mathcal{D}$ for all classes~$\mathcal{D}$, and thus $M$ is an $\bm{h}$-embedding of $\mathcal{N}$.
\end{proof}

Using \Cref{lcomplie}, we will now show for most classes of light children that we can restrict them to contain at most four children, i.e., to be good.
To ease the application of \Cref{lcomplie}, we define the \emph{vector $\bm{h^{M, c}}$ associated with a child $c$ and a matching $M$} by
\[
\bm{h^{M, c}} (\{x_i, y_i^c \}) :=
  \begin{cases}
    \oldone& \text{ if } \rk_{x_i}(y_i^c) < \rk_{x_i} (M(x_i)),\\
    \oldzero & \text{ if } \{x_i, y_i^c \}\in M,\\
    \oldminusone & \text{ otherwise}.
  \end{cases}
\]
Given the matching $M$ on $X_t$, this matching can be extended to a child $c$ if and only if $\tau_c [ \bm{h^{M, c}}] \neq \square$.

While we will not show for many classes having two neighbors in $X_t$ but being matched to only one of them that they are good, we can still prove some structure on these classes which enables us to reduce them to a \textsc{2-SAT} instance in \Cref{sec:2-SAT}.

\subparagraph*{Once Matched Children}

For the forthcoming \Cref{lem:once-nonzero,lem:once-zero}, we assume that for the considered class $\mathcal{C}$ it holds that $N(\mathcal{C}) = \{x_1, x_2\}$ and $\mathcal{N} (x_1) = \mathcal{C} = \mathcal{N} (x_2)$.
Before we show how to deal with such classes, we first need some notation.
For any vector~$\bm{h}\in \{\oldminusone, \oldzero, \oldone\}^{\cut (c)}$ for some $c\in \mathcal{C}$, the first coordinate will always denote an edge incident to~$x_1$, and the second coordinate an edge incident to $x_2$.
For a child $c \in \mathcal{C}$, let $\rk_{x_1} (c) := \rk_{x_1}(y_1^c)$ (recall that $y_1^c\in Y_c$ is the neighboring vertex of~$x_1$ in $Y_c$, i.e., the vertex from $Y_c$ such that $\{x_1, y_1^c\}\in E(G)$);
we define $\rk_{x_2} (c)$ analogously.
We say that a matching $M$ is \emph{valid} in the node $t$ for the matching~$\mathcal{N}$ if $M(x) \in \mathcal{N}(x)$ for all~$x \in X_t$.

We now show how to handle a class $\mathcal{C}$ with $N (\mathcal{C}) = \{x_1 , x_2\}$ and $\mathcal{N} (x_1) = \mathcal{C}$.
First, we deal with the case that $\sig_{x_1} (\mathcal{C}) \neq \{\oldzero\}$.

\begin{lemma}\label{lem:once-nonzero}
  Consider a class $\mathcal{C}$ with $N(\mathcal{C} = \{x_1, x_2\}$ and $\mathcal{N} (x_1) = \mathcal{C}  \neq \mathcal{N} (x_2)$.
  Then one can compute in polynomial time a set ${\mathcal{C} }'= \{c_1, \dots, c_\ell\} \subseteq \mathcal{C}$ such that $\rk_{x_1} (c_i) < \rk_{x_2} (c_{i+1})$ and $\rk_{x_2} (x_i ) < \rk_{x_2} (x_{i+1})$ for all $i\in [\ell -1]$ and the following holds.
  If there exists an $\bm{h}$-embedding~$M$ of $N$, then there exists one in which $x_1$ is matched to some $c\in {\mathcal{C}}' \cup \{ c^*\}$.
\end{lemma}

\begin{proof}
  We start with $\mathcal{C'} = \mathcal{C}$ and successively delete elements from $\mathcal{C'}$ until $\mathcal{C'}$ fulfills the first condition.
  If $\mathcal{C'}$ does not fulfill the first condition, then there exists some $c \neq d\in \mathcal{C'}$ with $\rk_{x_1} ( y_1^c) \le \rk_{x_1} (y_1^d)$ but $\rk_{x_2} (y_2^c) \ge \rk_{x_2} (y_2^d)$.
  We claim that we can delete~$d$ from~$\mathcal{C'}$.
  To see this, assume that $M$ is an $\bm{h}$-embedding of $\mathcal{N}$ containing edge $\{ x_1,y_1^d \}$.
  We show that the matching~$M'$ arising from $M$ by exchanging $\{ x_1, y_1^d \}$ with $\{ x_1, y_1^c \}$ also is an $\bm{h}$-embedding of~$\mathcal{N}$.
  To this end, let $F = (M \cap (E(G_c) \cup E(G_d) \cup \cut(d))) \cup \tau_c[\bm{(\oldzero, \oldminusone)}] \cup \tau_d[\bm{(\oldminusone, \oldminusone)}]$.
  Clearly, the children in $\mathcal{C}$ other than $c$ and $d$ remain unaffected by this change.
  We have that $\bm{h^{M, d}} (\{x_1, y_1^d\}) = \oldzero$.
  Thus, we have $\bm{(\oldzero, h^{M, d} (\{x_1, y_1^d\}))} \in \sig (\mathcal{C})$.
  As $\rk_{x_2} (y_2^d) \le \rk_{x_2} (y_2^c)$, it follows that $\bm{h^{M', c}} \in \sig (\mathcal{C})$.
  As $\rk_{x_1} ( y_1^d) \ge \rk_{x_1} (y_1^c)$, we have $\bm{h^{M', d}} (\{x_1, x_1^d) = \oldminusone$.
  If $\bm{h^{M', d}} (\{x_2, y_2^d\}) = \oldminusone$, then $\bm{h^{M', d}} = \bm{(\oldminusone, \oldminusone)} \in \sig (\mathcal{C})$ as $\bm{h^{M, c}}\in \sig (\mathcal{C})$.
  Otherwise we have $\bm{h^{M', d}} (\{x_2, y_2^d\}) = \oldone$.
  Note that $\bm{h^{M, d}} (\{x_2, y_2^d\}) = \bm{h^{M', d}} (\{x_2, y_2^d\})$.
  Thus, we have $\oldzero \in \sig_{x_1} ( \mathcal{C})$.
  As $\sig_{x_1}  (\mathcal C) \neq \{\oldzero\}$, it follows that $\oldminusone \in \sig_{x_1} (\mathcal{C})$, and thus, $\bm{h^{M', d}} \in \sig (\mathcal{C})$.
  It follows that the exchange is feasible.
  Since $x_1$ does not prefer $M(x_1)$ to $M' (x_1)$, \Cref{lcomplie} implies that $M'$ is also an $\bm{h}$-embedding of $\mathcal{N}$.

  The pair $\{c, d\}$ can clearly be found in polynomial time, and thus the total running time is also polynomial.
\end{proof}

We now take care of the remaining case of classes~$\mathcal{C}$ with two neighbors which are matched to only one, say $x_1$, of them, namely $\sig_{x_1} (\mathcal{C}) = \{\oldzero\}$.

\begin{lemma}\label{lem:once-zero}
  Consider a class $\mathcal{C}$ with $N(\mathcal{C} = \{x_1, x_2\}$ and $\mathcal{N} (x_1) = \mathcal{C}  \neq \mathcal{N} (x_2)$.
  Then one can compute in polynomial time a child $c^*\in \mathcal{C}$ and a set ${\mathcal{C} }'= \{c_1, \dots, c_\ell\} \subseteq \mathcal{C}$ such that $\rk_{x_1} (c_i) < \rk_{x_2} (c_{i+1})$ and $\rk_{x_2} (x_i ) < \rk_{x_2} (x_{i+1})$ for all $i\in [\ell -1]$ and the following holds.
  If there exists an $\bm{h}$-embedding $M$ of $N$, then there exists one in which $x_1$ is matched to some $c\in {\mathcal{C}}' \cup \{ c^*\}$.
\end{lemma}

\begin{proof}
  Let $c^*\in \mathcal{C}$ such that $y_2^{c^*} \in \best (\mathcal{C}, x_2)$.
  
  We proceed similar to the proof of \Cref{lem:once-nonzero}.
  We start with $\mathcal{C'} = \mathcal{C}\setminus \{c^*\}$ and successively delete elements from $\mathcal{C'}$ until $\mathcal{C'}$ fulfills the first condition.
  If $\mathcal{C'}$ does not fulfill the first condition, then there exists some $c \neq d\in \mathcal{C'}$ with $\rk_{x_1} ( y_1^c) \le \rk_{x_1} (y_1^d)$ but $\rk_{x_2} (y_2^c) \ge \rk_{x_2} (y_2^d)$.
  We claim that we can delete~$d$ from~$\mathcal{C'}$.
  To see this, assume that $M$ is an $\bm{h}$-embedding of~$\mathcal{N}$ containing edge $\{ x_1,y_1^d \}$.
  We show that the matching~$M'$ arising from $M$ by exchanging~$\{ x_1, y_1^d \}$ with $\{ x_1, y_1^c \}$ also is an $\bm{h}$-embedding of~$\mathcal{N}$.
  To this end, let $F = (M \cap (E(G_c) \cup E(G_d) \cup \cut(d))) \cup \tau_c[\bm{(\oldzero, \oldminusone)}] \cup \tau_d[\bm{(\oldminusone, \oldminusone)}]$.
  Clearly, the children in $\mathcal{C}$ other than $c$ and $d$ remain unaffected by this change.
  We have that $\bm{h^{M, d}} (\{x_1, y_1^d\}) = \oldzero$.
  Thus, we have $\bm{(\oldzero, h^{M, d} (\{x_1, y_1^d\}))} \in \sig (\mathcal{C})$.
  As $\rk_{x_2} (y_2^d) \le \rk_{x_2} (y_2^c)$, it follows that $\bm{h^{M', c}} \in \sig (\mathcal{C})$.
  As $\rk_{x_1} ( y_1^d) \ge \rk_{x_1} (y_1^c)$, we have $\bm{h^{M', d}} (\{x_1, x_1^d) = \oldminusone$.
  If $\bm{h^{M', d}} (\{x_2, y_2^d\}) = \oldminusone$, then $\bm{h^{M', d}} = \bm{(\oldminusone, \oldminusone)} \in \sig (\mathcal{C})$ as $\bm{h^{M, c}}\in \sig (\mathcal{C})$.
  Otherwise we have $\bm{h^{M', d}} (\{x_2, y_2^d\}) = \oldone$ and we will derive a contradiction, showing that this case cannot occur.
  Since $c^*\in \best (\mathcal{C}, x_2)$, it follows that $\oldone = \bm{h^{M', c^*}} (\{x_2, y_2^{c^*}\}) = \bm{h^{M, c^*}} (\{x_2, y_2^{c^*}\})$.
  However, since $d \neq c^*$, we have $\bm{h^{M, c^*} } (\{x_1, y_1^{c^*}\}) \neq \oldzero$.
  Since $\sig_{x_1} ( \mathcal{C}) = \{ \oldzero\}$, it follows that $\bm{h^{M, c^*}} \notin \sig (\mathcal{C})$, contradicting that $M$ is an $\bm{h}$-embedding.
  It follows that the exchange is feasible.
  Since $x_1$ does not prefer~$M(x_1)$ to $M' (x_1)$, \Cref{lcomplie} implies that $M'$ is also an $\bm{h}$-embedding of $\mathcal{N}$.

  The pair $\{c, d\}$ can clearly be found in polynomial time, and thus the total running time is also polynomial.
\end{proof}

For each class $\mathcal{C}$ with $N( \mathcal{C} ) = \{x_1, x_2\}$, $\mathcal{N} (x_1) = \mathcal{C} \neq \mathcal{N} (x_2)$, and $\sig_{x_1} (\mathcal{C}) = \{ \oldzero\}$, we will guess whether $x_1$ will be matched to $c^*$, $c_1$, or another children of $\mathcal{C}$.
In the former cases, $\mathcal{C}$ will be a good class, while in the later case, it will be a bad class.

\subparagraph*{Doubly Matched Children}
We now turn to classes~$\mathcal{C}$ with two neighbors~$N( \mathcal{C} ) = \{x_1, x_2\}$ such that both $x_1$ and $x_2$ are matched to $\mathcal{C}$, i.e., $\mathcal{N} (x_1) = \mathcal{C} = \mathcal{N} (x_2)$.
First, we deal with the case that $x_1$ and $x_2$ are matched to the same child of $\mathcal{C}$.
\begin{lemma}\label{lem:same-child}
  Let $\mathcal{C}$ be a class with $\mathcal N (x_1) = \mathcal{C} = \mathcal{N } (x_2)$.
  Then one can compute in polynomial time a set ${\mathcal C}' = \{c_1, \dots, c_\ell\} \subseteq \mathcal C$ with $\rk_{x_1} (c_i) < \rk_{x_1} (x_{i+1})$ and $\rk_{x_2} (c_i) > \rk_{x_2} (x_{i+1})$ for all~$i \in [\ell -1]$ such that the following holds.
  If there exists an $\bm{h}$-embedding matching~$M$ with $\{x_1, y_1^c\} \in M$ and $\{x_2, y_2^c\} \in M$ for some $c\in \mathcal{C}$ if and only if there exists an $\bm{h}$-embedding containing $\{x_1, y_1^{c'}\}$ and $\{x_2, y_2^{c'}\}$ for some $c' \in 
  {\mathcal{C}}'$.
\end{lemma}
\begin{proof}
  We start the shrinkage of $\mathcal{C}$ by setting $\mathcal{C'} = \mathcal{C}$ and we successively delete elements from~$\mathcal{C'}$ until $\mathcal{C'}$ fulfills the first condition of the lemma.
  As long as this is not true, there exist two children $c\neq d\in \mathcal{C'}$ with $\rk_{x_1} (y_1^c) \le \rk_{x_1} (y_1^d)$ and $\rk_{x_2} (y_2^c) \le \rk_{x_2} (y_2^d)$.
  Such children $c$ and~$d$ can clearly be found in polynomial time if they exist and if we find such a pair of children, we delete $d$ from $\mathcal{C}'$.

  Now, let $d \in \mathcal{C} \setminus \mathcal{C}'$ and let $c$ be the node because of which we have deleted $d$ in the above procedure.
  Let $M$ be an $\bm{h}$-embedding of $\mathcal{N}$ containing the edges $\{ x_1,y_1^d \}$ and $\{ x_2, y_2^d \}$.
  We claim that if we exchange edges $\{ x_1,y_1^d \}$ and $\{ x_2, y_2^d \}$ with $\{ x_1,y_1^c \}$ and $\{ x_2, y_2^c \}$, then the resulting matching $\hat{M}$ also is an $\bm{h}$-embedding of $\mathcal{N}$.
  Note that the exchange is feasible since both $\bm{h^{\hat{M}, c}} = \bm{(\oldzero, \oldzero)} \in \sig (\mathcal{C})$ and $\bm{h^{\hat{M}, d}} = \bm{(\oldminusone, \oldminusone)}\in \sig (\mathcal{C})$.
  Moreover, the two table entries used are nonempty.
  Now, observe that both $x_1$ and $x_2$ do not prefer their partner in $M \Delta F$ to their partner in $M$.
  Since there is no other change for the other children in $\mathcal{C}$, we are done.
\end{proof}

We now turn to the case that $x_1$ and $x_2$ are matched to different children inside class~$\mathcal{C}$.
The following lemma shows that $\mathcal{C}$ is good in this case.

\begin{lemma}\label{lem:different-child}
  Let $\mathcal{C}$ be a class with $\mathcal{N} (x_1) = \mathcal{C} = \mathcal{N} (x_2)$.
  Then one can compute in polynomial time a set of $\hat{\mathcal{C}}  \subseteq \mathcal{C}$ with $|\hat{\mathcal{C}}| \le 4$ such that the following holds.
  If there exists an $\bm{h}$-embedding~$M$ of~$\mathcal{N}$ matching such that $\{x_1, y_1^{c_1^*}\}\in M$ and $\{x_2, y_2^{c_2^*}\} \in M$ for $c_1^* \neq c_2^* \in \mathcal{C}$, then there exists an $\bm{h}$-embedding~$M'$ of $\mathcal{N}$ with $\{x_1, y_1^{c}\}\in M'$ and $\{x_2, y_2^{d}\}$ for some $c, d \in \hat{\mathcal{C}}$.
\end{lemma}

\begin{proof}
  Since $c_1^* \neq d_1^*$, we have $\bm{(\oldzero, \oldminusone)} \in \sig (\mathcal{C})$ and $\bm{(\oldminusone, \oldzero)} \in \sig (\mathcal{C})$.

  We now show that we can find in polynomial time four edges $\{x_1, v_1\}$, $\{x_2, v_2\}$, $\{ x_1, w_1\}$, and~$\{x_2, w_2\}$ with $v_i, w_i\in V(\mathcal{C})$ such that the following holds:
  The matching~$M_v$ arising from~$M$ by exchanging the edges $\{\{x_1, y_1^{c_1^*}\}, \{x_2, y_2^{c_2^*}\}\}$ with the two edges~$\{\{ x_1, v_1\}, \{x_2, v_2\}\}$ or the matching $M_w$ arising from~$M$ by exchanging the edges $\{\{x_1, y_1^{c_1^*}\}, \{x_2, y_2^{c_2^*}\}\}$ with $\{\{ x_1, w_1\}, \{x_2, w_2\}\}$ is an $\bm{h}$-embedding of $\mathcal{N}$.
  From this, the lemma clearly follows.

  Let $y_1^{c_1}\in \best (\mathcal{C}, x_1)$, and $y_2^{c_2}\in \best (\mathcal{C}, x_2)$.
  If possible, then choose them such that $c_1 \neq c_2 $.

  If $c_1 \neq c_2$ or $\tau_c [\bm{(\oldzero, \oldzero)}] \neq \square$ for all $c\in \mathcal{C}$, then let $\hat{M}$ be the matching arising from~$M$ by exchanging the edges $ \{\{x_1, y_1^{c_1^*}\}, \{x_2, y_2^{c_2^*}\}\}$ with $\{\{x_1, y_1^{c_1}\}, \{x_2, y_2^{c_2}\}\}$.
  This exchange is feasible, as we have $\bm{h^{M, c}} \in \{\oldminusone, \oldzero\}^2$, and if $c_1 \neq c_2$, we have $\bm{h^{M, c}} \in \{\bm{(\oldminusone, \oldminusone)}, \bm{(\oldminusone, \oldzero)}, \bm{(\oldzero, \oldminusone)}\}$ for all~$c\in\mathcal{C}$.
  Thus, $\hat{M}$ is an $\bm{h}$-embedding of $\mathcal{N}$ by \Cref{lcomplie}.

  Otherwise, we have $c_1 = c_2$, $\best(\mathcal{C}, x_1) = \best(\mathcal{C}, x_2) = \{y_1^{c_1}\}$, and $\tau_c [\bm{(\oldzero, \oldzero)}] = \square$ for all~$c\in \mathcal{C}$, and we distinguish several cases.

  \textbf{Case 1:}
  Suppose we have $\sig_{x_1} (\mathcal{C}) = \sig_{x_2} (\mathcal{C}) = \{\oldminusone, \oldone\}$.
  Then, we claim that no matching~$M^*$ complying with $\bm{h}$ can contain edge $\{x_1, y_1^{c_1}\}$ (and by symmetry also not~$\{x_2, y_2^{c_1}\}$):
  Assume for the sake of contradiction that $M^*(x_1) = y_1^{c_1}$.
  Now, $\tau_c [\bm{(\oldzero, \oldzero)}] = \square$ implies that~$M^*(x_2) \neq y_2^{c_1}$.
  Thus, $\bm{h^{M^*,c_1}} = \bm{(\oldzero, \oldone)}$, which implies $\oldzero \in \sig_{x_2}(\mathcal{C})$, contradicting the assumption $\sig_{x_2} (\mathcal{C}) = \{\oldminusone, \oldone\}$.

  As we know that $x_1$ and $x_2$ cannot be matched to $Y_{c_1}$, we mark child $c_1$ as unmatchable.
  We define ${y}_1^{\hat{c}_1}$ and~${y}_2^{\hat{c}_2}$ to be an unmarked vertex ranked best by $x_1$ and~$x_2$, respectively.
  If we cannot choose $\hat{c}_1 \neq \hat{c}_2$, then we mark $\hat{c}_1$, and recompute $\hat{c}_1$ and $\hat{c}_2$ among all unmarked vertices.
  If eventually all children inside $\mathcal{C}$ are marked, then no $\bm{h}$-embedding of $\mathcal{N}$ exists, since for any marked child~$c$, the edge $\{x, x^c\}$ cannot belong to any $\bm{h}$-embedding of $\mathcal{N}$ by the same arguments as for $c_1$.
  Otherwise, we have $\hat{c}_1 \neq \hat{c}_2$ at some point.
  We claim that the matching~$\hat{M}$ arising from $M$ by exchanging the edges $\{\{x_1, y_1^{c_1^*}\}, \{x_2, y_2^{c_1^*}\}\}$ with $\{\{x_1, y_1^{\hat{c}_1}\}, \{x_2, y_2^{\hat{c}_2}\}\}$ is an $\bm{h}$-embedding of $\mathcal{N}$.
  First note that $\tau_c [\bm{(\oldone, \oldone)}] \neq \square$ for all $c\in \mathcal{C}$ since $h^{M, c_1} = \bm{(\oldone, \oldone)}$.
  Thus, we have $\tau_c [ \bm{h^{\hat{M}, c}}] \neq \square$ for all children $c \in \mathcal{C} \setminus\{\hat{c}_1, \hat{c}_2\}$ since $\tau_c [\bm{(\oldone, \oldone)}] \neq \square$.
  For $c \in \{\hat{c}_1, \hat{c}_2\}$, this holds by the conditions $\bm{(\oldzero, \oldminusone)}\in \sig (\mathcal{C})$ and $\bm{(\oldminusone, \oldzero)}\in \sig (\mathcal{C})$.

  \bfseries Case 2: \mdseries We have $\sig_{x_2} (\mathcal{C}) \in \{\emptyset, \{\oldminusone\}\}$ (symmetrically, $\sig_{x_1} (\mathcal{C})  \in \{\emptyset, \{\oldminusone\}\}$).
  Any matching~$M^*$ complying with $\bm{h}$ must contain $\{x_2, y_2^{c_1}\}$:
  if $M^*$ does not contain $\{x_2, y_2^{c_1}\}$, then $\rk_{x_2} (y_2^{c_1}) < \rk_{x_1} (M^*(x_2))$ and thus $\bm{h^{M^*, c_1}} (\{y_2, y_2^{c_1}\}) = \oldone$.
  However, since $\best (\mathcal{C}, x_1) = \{c_1\}$, we have $\bm{h^{M^*, c_1}} (\{x_1, y_1^{c_1}\})\in \{\oldzero, \oldone\}$, and thus $\tau_{c_1} [\bm{h^{M^*, c_1}}] = \square$.

  So let $y_1^{d_1}$ be a vertex ranked second-best by $x_1$ among all vertices in $\N(x_1) \cap \bigl( \bigcup_{c\in \mathcal{C}} Y_c\bigr)$, i.e., $y_1^{d_1} $ minimizes $\rk_{x_1} (v)$ among all $v\in \{y_1^c : c\in \mathcal{C}\}\setminus \best (\mathcal{C}, x_1)$.
  We claim that the matching~$\hat{M}$ arising from $M$ by replacing $\{x_1, y_1^{c_1^*}\}$ and $\{x_1, y_1^{d_1}\}$ complies with $\bm{h}$.
  For all $c\in \mathcal{C}\setminus\{c_1, d_1\}$, we have $\bm{h^{\hat{M}, c}} = \bm{(\oldminusone, \oldminusone)}$ since $\best (\mathcal{C}, x_1) = \{y_1^{c_1}\} = \best (\mathcal{C}, x_2)$.
  We also have $\bm{h^{\hat{M}, d_1}} = \bm{(\oldzero, \oldminusone)}$ (because $\hat{M} (x_1) = y_1^{d_1}$ and $\hat{M} (x_2) \in \best (\mathcal{C}, x_2) = \{y_2^{c_1}\}$ but $y_2^{d_1} \notin \best (\mathcal{C}, x_2$) and~$\bm{h^{\hat{M}, c_1}} = \bm{(\oldone, \oldzero)}$ (because $y_1^{c_1} \in \best (\mathcal{C}, x_1)$ but $y_1^{d_1} \notin \best (\mathcal{C}, x_1)$ and $\hat{M} (x_2) = y_2^{c_1}$).
  However, also $\bm{h^{M, c_1}} = \bm{(\oldone, \oldzero)}$, and since $M$ is an $\bm{h}$-embedding of $\mathcal{N}$, the exchange is feasible, and thus by \Cref{lcomplie}, the matching~$\hat{M}$ is an $\bm{h}$-embedding of $\mathcal{N}$.

  \bfseries Case 3:
  \mdseries We have $\sig_{x_1} (\mathcal{C}) \in \{\{\oldminusone, \oldzero\}, \{\oldminusone, \oldzero, \oldone\}, \{\oldzero\}\}$ and $\sig_{x_2} (\mathcal{C})\in \{\{\oldminusone, \oldzero\},\allowbreak \{\oldminusone, \oldzero, \oldone\}, \allowbreak\{\oldzero\}\}$.
  Then assume without loss of generality that $M$ does not contain $\{x_1, y_1^{c_1}\}$ (this can be done as $M$ cannot contain both the edges $\{x_1, y_1^{c_1}\}$ and $\{x_2, y_2^{c_1}\}$ since $\tau_c [\bm{(\oldzero,\oldzero)}] = \square$).
  This implies that $\bm{h^{M, c_1}} (\{x_1, y_1^{c_1}\}) = \oldone$. Let $y_1^{d_1}$ be a vertex ranked second-best by $x_1$ among all vertices in $\N(x_1) \cap \bigl( \bigcup_{c\in \mathcal{C}} Y_c\bigr)$.
  We claim that the matching~$\hat{M}$ arising from $M$ by exchanging $\{\{x_1, y_1^{c_1^*}\}, \{x_2, y_2^{c_2^*}\}\}$ with $\{ \{x_1, y_1^{d_1}\}, \{x_2, y_2^{c_1}\}\}$ is an $\bm{h}$-embedding of~$\mathcal{N}$.

  We have $\rk_{x_2} (y_2^{c_1}) \le \rk_{x_2} (M(x_2))$ since $y_2^{c_1} \in \best (\mathcal{C}, x_2)$.
  Furthermore, we have $\rk_{x_1} (y_1^{d_1}) \le \rk_{x_1} (M(x_1))$ by the definition of $y_1^{d_1}$ and the assumption that $\{x_1, y_1^{c_1}\}\notin M$.
  We have $\bm{h^{\hat{M}, c_1}} = \bm{(\oldone, \oldzero)}$, $\bm{h^{\hat{M}, d_1}} = \bm{(\oldzero, \oldminusone)}$ and $\bm{h^{\hat{M}, c}} = \bm{(\oldminusone, \oldminusone)}$ for all $c\in \mathcal{C}\setminus \{c_1, d_1\}$.
  As $\oldzero\in \sig_{x_2} (\mathcal{C})$, we have $\tau_c [ \bm{(\oldone, \oldzero)}] \neq \square$ for all~$c\in \mathcal{C}$.
  As $\bm{h^{M, c_1^*}} = \bm{(\oldzero, \oldminusone)}$, we have $\tau_c [\bm{(\oldzero, \oldminusone})] \neq\square$ for all $c\in \mathcal{C}$, and thus, the exchange is feasible.

  \bfseries Case 4: \mdseries We have $\sig_{x_1} (\mathcal{C}) =\{\oldminusone, \oldone\}$ and $\sig_{x_2} (\mathcal{C}) \in \{\{\oldminusone, \oldzero\}, \{\oldminusone, \oldzero, \oldone\}, \{\oldzero\}\}$ (the case where $\sig_{x_1} (\mathcal{C})$ and $\sig_{x_2} (\mathcal{C})$ are swapped follows by symmetry).
  Matching $M$ cannot contain the edge $\{x_2, y_2^{c_1}\}$, as this implies $\{x_1, y_1^{c_1}\}\notin M$ and thus $\rk_{x_1} (y_1^{c_1}) < \rk_{x_1} (M(x_1))$, and therefore $\bm{h^{M, c_1}} = \bm{(\oldone, \oldzero)}$, but $\tau_{c_1} [\bm{(\oldone, \oldzero)}] = \square$ as $\oldzero\notin\sig_{x_1} (\mathcal{C}) $.

  Let $y_2^{d_2}$ be a vertex ranked second-best by $x_2$ among all vertices in $\N(x_2) \cap \bigl( \bigcup_{c\in \mathcal{C}} Y_c\bigr)$.
  We claim that the matching $\hat{M}$ arising from $M$ by exchanging edges $\{\{x_1, y_1^{c_1^*}\}, \{x_2, y_2^{c_2^*}\}\}$ with $\{\{x_1, y_1^{c_1}\}, \{x_2, y_2^{d_2}\}\}$ is an $\bm{h}$-embedding of $\mathcal{N}$.
  We have $\rk_{x_1} (y_1^{c_1}) \le \rk_{x_1} (M(x_1))$ and $\rk_{x_2} (y_2^{d_2}) \le \rk_{x_2} (M(x_2))$.
  For all~$c\in \mathcal{C}\setminus\{c_1, d_2\}$, we have $\bm{h^{\hat{M}, c}} = \bm{(\oldminusone, \oldminusone)}$.
  We have $\bm{h^{\hat{M}, d_2}} = \bm{(\oldminusone, \oldzero)}$ and $\bm{h^{\hat{M}, c_1}} = \bm{(\oldzero, \oldone)}$.
  As $\oldzero\in \sig_{x_2} (\mathcal{C})$, we have $\tau_{c_1} [\bm{(\oldzero, \oldone)}]\neq \square$.
  Furthermore, we assumed that $\bm{(\oldminusone, \oldzero)} \in \sig ( \mathcal{C})$, implying that $\tau_{d_2}[\bm{(\oldminusone, \oldzero)}]\neq \square$.
  Thus, the exchange is feasible.
  
  In all four cases, we can clearly find a set $\hat{\mathcal{C}}$ fulfilling the lemma in polynomial time.
\end{proof}

  For every class $\mathcal{C}$ with $N (\mathcal{C}) = \{x_1, x_2\}$, $\mathcal{N} (x_1) = \mathcal{C} \neq \mathcal{N} (x_2)$, we guess whether $x_1$ is matched to child $c^*$ or $c_1$ from \Cref{lem:once-zero}, or to another child $c\in \mathcal{C}$.
  In the former two cases, we call $\mathcal{C}$ good.
  Furthermore, for every class $\mathcal{C}$ with $\mathcal{N} (x_1) = \mathcal{C} = \mathcal{N} (x_2)$, we guess whether $x_1$ and $x_2$ are matched to the same child $c\in \mathcal{C}$ or not.
  If so, then we call this class good.
  All other classes $\mathcal{C}'$ with $\mathcal{N} (x) = \mathcal{C}$ for some $x\in X_t$ will be called bad.

 We have shown how to deal with the good classes;
 the following corollary ensures that it suffices to consider $2^{O(k)}$ many matchings from $X_t$ to the good children:

 \begin{corollary}\label{cgc}
   Let $\mathcal{N}$ be a matching in the auxiliary node graph~$H_t$ and let $\mathcal{N}_{\text{good}} \cup \mathcal{N}_{\text{bad}}$ be a partition of $\mathcal{N}$ into edges with their endpoint not in $X_t$ in good and bad children of the node~$t$.
   One can compute in $2^{O(k)}n^{O(1)}$ time a set $\mathcal{M}$ of $2^{O(k)}$ matchings (into the good classes) such that there exists an $\bm{h}$-embedding of $\mathcal{N}$ if and only if there exists an $\bm{h}$-embedding~$M$ of~$\mathcal{N}$ such that $\tilde{M} \cap E(\mathcal{C}_\text{good}) = M\cap  E(\mathcal{C}_\text{good})$ for some $\tilde{M} \in \mathcal{M}$, where $\mathcal{C}_{\text{good}}$ is the set of good children and $E(\mathcal{C}_\text{good}) := \bigcup_{\mathcal{C}\in \mathcal{C}_\text{good}} E(\mathcal{C})$.
 \end{corollary}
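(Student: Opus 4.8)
The plan is to assemble the per-class reductions already established (Lemmas~\ref{lHsimple}, \ref{lem:onlyBestInAGoodClass}, \ref{llc}, and Corollary~\ref{cmin}, together with the two intermediate lemmas covering the signatures $\sig_y(\mathcal{C}) = \{\oldzero\}$ and $\sig_y(\mathcal{C}) = \{\oldminusone\}$) into a single product construction over all good classes. First I would verify that this case distinction is exhaustive for good classes: a good class $\mathcal{C}$ with $\N(\mathcal{C}) = \{x, y\}$ either has exactly one matched neighbour (say $\mathcal{N}(x) = \mathcal{C}$, $\mathcal{N}(y) \neq \mathcal{C}$) or two ($\mathcal{N}(x) = \mathcal{N}(y) = \mathcal{C}$), and the definition of \emph{bad} (Definition~\ref{defBadClass}) is precisely the complement of the hypotheses of these lemmas. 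In the singly-matched case the good signatures $\emptyset, \{\oldminusone,\oldzero\}, \{\oldminusone,\oldzero,\oldone\}, \{\oldzero\}, \{\oldminusone\}$ are covered (only $\{\oldminusone,\oldone\}$ is bad), and in the doubly-matched case either $(\oldzero,\oldminusone),(\oldminusone,\oldzero) \in \sig(\mathcal{C})$ (Lemma~\ref{llc}) or $\oldminusone \notin \sig_x(\mathcal{C}) \cap \sig_y(\mathcal{C})$ (Corollary~\ref{cmin}, whose hypothesis unwinds exactly to $(\oldone,\oldminusone) \notin \sig(\mathcal{C})$ or $(\oldminusone,\oldone) \notin \sig(\mathcal{C})$, the negation of the second bad condition). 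Thus every good class is handled, and each lemma outputs, in polynomial time, a candidate subclass $\hat{\mathcal{C}} \subseteq \mathcal{C}$ of at most four children such that the matched neighbour(s) may be assumed to be matched into $\hat{\mathcal{C}}$.

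Next I would build $\mathcal{M}$ as the Cartesian product of the local choices. For each good class $\mathcal{C}$ carrying a matched neighbour, the subclass $\hat{\mathcal{C}}$ yields at most $|\hat{\mathcal{C}}| \le 4$ choices for the partner of a singly-matched $x$, or at most $|\hat{\mathcal{C}}|^2 \le 16$ choices for the pair $\bigl(M(x), M(y)\bigr)$ of a doubly-matched class. Each such choice of partner children fixes, via the vectors $\bm{h^c}$ from the exchange lemmas, the internal matching $\tau_c[\bm{h^c}]$ of every child $c \in \mathcal{C}$ (namely $\bm{(\oldminusone,\oldminusone)}$ for the unselected children and the appropriate vector for the selected ones), hence determines $M \cap E(\mathcal{C})$ completely. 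Since $\mathcal{N}$ matches every vertex of $X_t$ into at most one class and $|X_t| = O(k)$, at most $O(k)$ good classes carry a matched neighbour; taking the product over them gives $|\mathcal{M}| \le 16^{O(k)} = 2^{O(k)}$, and since each candidate subclass is computable in polynomial time and there are $O(k^2)$ classes in total, $\mathcal{M}$ can be enumerated in $2^{O(k)} n^{O(1)}$ time.

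Finally I would prove the equivalence. The backward direction is immediate: any matching in $\mathcal{M}$ that agrees with some $\bm{h}$-embedding on $E(\mathcal{C}_{\text{good}})$ already witnesses that an $\bm{h}$-embedding exists. For the forward direction, given an $\bm{h}$-embedding $M^*$ of $\mathcal{N}$, I would apply the relevant exchange lemma to each good class in turn. The main obstacle is justifying that these exchanges may be performed \emph{simultaneously}: each one modifies only edges of $E(\mathcal{C})$, and for distinct classes these edge sets are disjoint, because children of different classes have disjoint vertex sets and the only shared vertices lie in $X_t$, whose incident matching edges into a given class are governed solely by $\mathcal{N}$. Invoking Lemma~\ref{lcomplie} I would argue that since every single exchange is feasible and leaves each neighbour in $X_t$ matched at least as well as before, no exchange creates a blocking pair, not even across class boundaries (a blocking pair would require a vertex whose partner strictly worsened, which never happens). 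Consequently, applying all exchanges at once yields an $\bm{h}$-embedding $M$ of $\mathcal{N}$ whose restriction to every good class uses one of the enumerated candidate configurations, so $M \cap E(\mathcal{C}_{\text{good}}) = \tilde{M} \cap E(\mathcal{C}_{\text{good}})$ for the corresponding $\tilde{M} \in \mathcal{M}$, establishing the claim.
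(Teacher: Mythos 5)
Your proposal follows the same route as the paper's proof: shrink every good class to at most four candidate children via Lemmas~\ref{lHsimple}, \ref{lem:onlyBestInAGoodClass}, \ref{llc}, Corollary~\ref{cmin}, and the two lemmas for $\sig_y(\mathcal{C}) = \{\oldzero\}$ and $\sig_y(\mathcal{C}) = \{\oldminusone\}$; your check that these cases exhaust the good classes is correct, as is the product bound $2^{O(k)}$ and the class-by-class exchange argument for the forward direction (the exchanges of distinct classes touch disjoint edge sets $E(\mathcal{C})$ and never worsen the rank of the partner of any vertex in $X_t$, so they can be applied one after another).

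The gap is in the step claiming that a choice of partner children ``determines $M \cap E(\mathcal{C})$ completely,'' with unselected children receiving $\tau_c[\bm{(\oldminusone,\oldminusone)}]$. Consider a singly matched good class $\mathcal{C}$ with $\N(\mathcal{C}) = \{x,y\}$, $\mathcal{N}(x) = \mathcal{C}$ and $\mathcal{N}(y)\neq\mathcal{C}$. For a child $c \in \mathcal{C}$, the coordinate of $\bm{h^c}$ belonging to the edge $\{y,y^c\}$ is, in the exchange lemmas, $\oldone$ or $\oldminusone$ according to whether $y$ prefers $y^c$ to its partner $M(y)$ --- and $M(y)$ lies in a \emph{different} class, so this coordinate is not fixed by the local choice inside $\mathcal{C}$. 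Blanket use of the most permissive entry $\tau_c[\bm{(\oldminusone,\oldminusone)}]$ is not safe: that entry may store an internal matching in which $y^c$ prefers $y$; if in the completion $y$ also prefers $y^c$, then $\{y,y^c\}$ blocks, so no $\bm{h}$-embedding agrees with your enumerated matching on $E(\mathcal{C}_\text{good})$, and the ``only if'' direction of the equivalence fails. (A similar inaccuracy affects the $x$-coordinates of unselected children whenever the selected child is not in $\best(\mathcal{C},x)$, which can happen in the $\sig_y(\mathcal{C}) = \{\oldzero\}$ lemma and in Lemma~\ref{llc}.) The paper closes exactly this hole with a canonical local rule: knowing only $h_x := \bm{h^{M,c}}(\{x,x^c\})$, which \emph{is} fixed by the cut-edge choice, store $\tau_c[(h_x,\oldone)]$ if that entry is not $\square$, and $\tau_c[(h_x,\oldminusone)]$ otherwise. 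This is always compatible with some embedding: if $y$ prefers $y^c$ in the true embedding, compliance of that embedding forces the $(h_x,\oldone)$ entry to be nonempty, and that internal matching prevents $\{y,y^c\}$ from ever blocking; if $y$ does not prefer $y^c$, the $(h_x,\oldminusone)$ entry cannot create a blocking pair either. Your argument needs this (or an equivalent) canonical extension rule to make the matchings in $\mathcal{M}$ well defined on all of $E(\mathcal{C}_\text{good})$ and to salvage the forward direction.
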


 \begin{proof}
   By \Cref{lem:once-zero,lem:different-child}, we can compute in $2^{O(k)} n^{O(1)}$ time a set $\mathcal{M}'$ of $2^{O(k)}$ subsets of $\cut (\mathcal{C}_\text{good})$ such that there exists an $\bm{h}$-embedding $M$ of~$\mathcal{N}$ if and only there exists an $\bm{h}$-embedding $M'$ of~$\mathcal{N}$ with $M' \cap \cut (\mathcal{C}_{\text{good}}) \in \mathcal{M}'$, where $\cut (\mathcal{C}_\text{good}) : = \bigcup_{\mathcal{C}\in \mathcal{C}_\text{good}} \cut (\mathcal{C})$.

   Note that if we know for a good class $\mathcal{C}$ with $\mathcal{N} (x_1) = \mathcal{C}$ which edges from $\cut (\mathcal{C})$ are contained in a stable matching $M$, then we can already determine a set of edges $F$ such that $(M\setminus E(\mathcal{C}) )\cup F$ is a stable matching.
   This is due to the fact that we know $\bm{h^{M, c}} (\{x_1, y_1^c\})$ for all $c\in \mathcal{C}$.
   If we also have $\mathcal{N} (x_2) = \mathcal{C}$ for some $x_2\neq x_1$, then we know $\bm{h^{M, c}}$ for each $c\in \mathcal{C}$.
   Otherwise, we take the matching stored in $\tau_c [ \bm{( \bm{h^{M, c}} (\{x_1, y_1^c\}), \oldone)}]$ if $\tau_c [ \bm{( \bm{h^{M, c}} (\{x_1, y_1^c\}), \oldone)}]\neq \square$, and $\tau_c [ \bm{( \bm{h^{M, c}} (\{x_1, y_1^c\}), \oldminusone)}]$ otherwise.
 \end{proof}

 Due to \Cref{cgc}, we will from now on assume that we fixed a matching $\mathcal{N}$ between~$X_t$ and the classes, and an embedding $M^*\in \mathcal{M}$ of $\mathcal{N}$ into the good classes.

Since ewe already guessed how the vertices from $X_t$ are matched to the heavy and the good children, it remains to compute for each vertex $x\in X_t$ for which have guessed that $x$ is matched to a bad class $\mathcal{C}$ to which child $c\in \mathcal{C}$ vertex $x$ is matched to.
We now reduce this problem to a 2-SAT instance (where 2-SAT is the problem of deciding whether a boolean formula in 2-conjunctive normal form is satisfiable).
\textsc{2-SAT} is known to be solvable in linear time~\cite{ASPVALL1979}.

\subsubsection{Reduction to 2-SAT}
\label{sec:2-SAT}

  In this subsection, we assume that we fixed a matching $\mathcal{N}$ from~$X_t$ to the classes, i.e., that each vertex $x \in X_t$ may only be matched to a vertex from a class~$\mathcal{C}$ if $\{x, x^ \mathcal{C}\}\in \mathcal{N}$.
  Recall that in \Cref{lheavy}, we computed a set $\mathcal{H}$ of $2^{O (k \log k)}$ tuples $(\bm{g}^{c_1}, \dots, \bm{g}^{c_\ell})$ with $\bm{g}^{c_i} \in \{\oldminusone, \oldzero, \oldone\}^{\cut (c_i)}$ where $c_1, \dots, c_\ell$ are the heavy children of $t$ such that if there exists an $\bm{h}$-embedding, then there also exists one coinciding with $\tau_{c_i} [ \bm{g}^{c_i}]$ for all $i \in [\ell]$.
  Similarly, in \Cref{cgc}, we computed a set~$\mathcal{M}$ of $2^{O(k)}$ matchings inside the good children such that if there exists an $\bm{h}$-embedding, then there also exists one coinciding with some $M \in \mathcal{M}$.
  For each class $\mathcal{C} = \{c_1, \dots, c_\ell\}$ with $N (\mathcal{C}) = \{x_1, x_2\}$, $\mathcal{N } (x_1) =  \mathcal{C} \neq \mathcal{N} (x_2)$ with $\oldone \notin \sig_{x_1} (\mathcal{C})$ and $\rk_{x_j} (y_j^{c_i}) < \rk_{x_j} (y_j^{c_{i+1}})$ for $i\in [\ell -1]$ and $j\in \{1,2\}$, we guess whether it is matched to $c_1$, $c^*$, or some other child.
  In the first two cases, we consider $\mathcal{C}$ as a good class, since we know which child of $\mathcal{C}$ shall be matched to $x_1$.
  Since there are at most $2^{O(k \log k)}$ matchings inside~$E(G[X_t])$, we can also enumerate all these matchings.
  Thus, we can guess which edges an $\bm{h}$-embedding contains inside the heavy children, the good children, and inside $E(G[X_t])$.
  Given such a guess $\tilde{M}$, it remains to decide whether we can extend $\tilde{M}$ on the bad and unmatched children to an $\bm{h}$-embedding.
  We will reduce the problem of deciding whether $\tilde{M}$ can be extended to a matching $M$ complying with $\bm{h}$ and being valid for $\mathcal{N}$ to a 2-SAT instance.

  First, we give a formal definition of a partial embedding and the problem we are going to solve.

  \begin{definition}
    A \emph{partial embedding} $\tilde{M}$ of $\mathcal{N}$ is a matching inside $E(\mathcal{C}_{\operatorname{good}})\cup E(\mathcal{C}_{\operatorname{heavy}}) \cup E(G[X_t])$ such that there is an edge $\{x, x^c\}$ from $x\in X_t$ to a vertex $x^c\in Y_c$ for some~$c$ contained in a class $\mathcal{C}\in \mathcal{C}_{\operatorname{good}}\cup \mathcal{C}_{\operatorname{heavy}}$ if and only if $\mathcal{N} (x) = \mathcal{C}$, where $\mathcal{C}_{\operatorname{good}}$ is the set of good children and $\mathcal{C}_{\operatorname{heavy}}$ is the set of heavy children of $t$.
  \end{definition}

  \defProblemTask{\textsc{Partial Embedding Extension}}
  {
  A graph $G$, preference lists $P$,
  a nice tree-cut decomposition $(T, \mathcal{X})$ of $G$, and a node $t \in V(T)$.
  Moreover, a matching $\mathcal{N}$ from $X_t$ to the classes,
  a partial embedding $\tilde{M}$ of $\mathcal{N}$,
  a vector $\bm{h}\in \{\oldminusone, \oldzero, \oldone\}^{\cut (t)}$, and
  the DP tables $\tau_c$ for all children $c$ of $t$.
  }
  {
  Decide whether there exists a matching $M \subseteq E(G_t)\setminus \bigl(E(\mathcal{C}_{\operatorname{good}}) \cup E(\mathcal{C}_{\operatorname{heavy}}) \cup E(G[X_t])\bigr)$ such that $\tilde{M} \cup M$ is an $\bm{h}$-embedding of $\mathcal{N}$.
  }

  \begin{lemma}\label{lrest}
    \textsc{Partial Embedding Extension} is solvable in polynomial time.
  \end{lemma}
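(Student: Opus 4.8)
The plan is to reduce \textsc{Partial Embedding Extension} to \textsc{2-SAT}, exploiting the fact that once $\tilde M$ has fixed the heavy children, the good children and the edges of $G[X_t]$, the only remaining freedom lies in the bad and unmatched classes, and there the blocking-pair conditions turn out to be ``threshold implications'' expressible with clauses of width at most two. First I would isolate the genuine decision variables. For every vertex $x\in X_t$ with $\mathcal{N}(x)$ fixed to a good or heavy class, the rank $\rk_x(M(x))$ is already determined by $\tilde M$ and becomes a constant. For every $x$ whose class $\mathcal{C}=\mathcal{N}(x)$ is \emph{bad}, the chains $c_1,\dots,c_\ell$ produced by \cref{ltfv} (for the first type, where $\mathcal{N}(x)=\mathcal{C}$, $\mathcal{N}(y)\neq\mathcal{C}$ and $\sig_y(\mathcal{C})=\{\oldminusone,\oldone\}$) and by \cref{lbonv} (for the second type, $\mathcal{N}(x)=\mathcal{N}(y)=\mathcal{C}$) totally order the admissible partners of $x$ by $\rk_x$. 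I would therefore introduce for each threshold $i$ a boolean variable $p_x^{\,i}$ meaning ``$x$ is matched to a partner it ranks at least as well as $x^{c_i}$'', together with the monotonicity clauses $p_x^{\,i}\Rightarrow p_x^{\,i+1}$ (these are implications, hence legal \textsc{2-SAT} clauses). These variables are shared by \emph{all} classes adjacent to $x$, so that every blocking condition touching $x$ is phrased against the same threshold encoding.

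Next I would translate the three compliance requirements into clauses. Being matched is a unit clause on the bottom threshold. The interesting clauses come from blocking-pair avoidance. For a first-type bad class with $\N(\mathcal{C})=\{x,y\}$, matching $x$ to $x^{c_i}$ forces (because $(\oldzero,\oldone)\notin\sig(\mathcal{C})$ while $(\oldzero,\oldminusone)\in\sig(\mathcal{C})$) the neighbour $y$ to be matched at least as well as $y^{c_i}$, lest $\{y,y^{c_i}\}$ block; since \cref{ltfv} makes \emph{both} $\rk_x(x^{c_i})$ and $\rk_y(y^{c_i})$ increasing along the chain, this whole family collapses to the clauses $p_x^{\,i}\Rightarrow q_y^{\,i}$, where $q_y^{\,i}$ abbreviates ``$y$ is matched at least as well as $y^{c_i}$'' (itself one of $y$'s threshold variables, or a constant if $\mathcal{N}(y)$ is good/heavy). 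Each such clause $\neg p_x^{\,i}\vee q_y^{\,i}$ has width two. For a second-type bad class, \cref{lbonv} forces $x$ and $y$ onto the \emph{same} child $c_i$, with $\rk_x$ increasing and $\rk_y$ decreasing along the chain; the resulting anti-monotone coupling (``better for $x$'' $\Leftrightarrow$ ``worse for $y$'') is again a pair of implications between $x$'s and $y$'s thresholds, and so \textsc{2-SAT}-expressible. Finally, the unmatched classes contribute only \emph{local} feasibility checks: their two cut edges are absent and must be non-blocking given the (now parametrised) ranks of $x,y$, which I would record as further threshold implications or discharge directly against the tables $\tau_c$. Crucially, the chain lemmas also guarantee that whenever a threshold assignment is consistent, the matching edges inside each child required by that assignment actually exist, i.e.\ the corresponding entry $\tau_c[\cdot]$ is non-empty; this is exactly what lets a satisfying assignment be turned into a genuine matching.

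I would then close the argument by equivalence: any satisfying assignment of the polynomially-sized \textsc{2-SAT} instance designates, for each bad or unmatched child, a non-$\square$ table entry whose union with $\tilde M$ is a matching complying with $\bm h$, valid for $\mathcal{N}$, and free of blocking pairs; conversely any extension $M$ induces a consistent threshold assignment. As the instance has $O(\mathrm{poly}(n))$ variables and clauses and \textsc{2-SAT} is solvable in linear time, the procedure runs in polynomial time. The hard part will be the bookkeeping for the second-type bad classes: there the ``better for $x$ means worse for $y$'' trade-off together with the requirement to place both neighbours on one child is precisely what makes the naive encoding produce width-three clauses, and showing that the preprocessing of \cref{lbonv} (co-/anti-monotone chains and $\best$-reductions) removes every genuinely ternary interaction — so that each surviving constraint is a single implication, exactly as in Feder's stable-matching-to-\textsc{2-SAT} approach~\cite{Feder1992} — is the delicate step I would spend the most care on.
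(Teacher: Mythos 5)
Your proposal is correct and follows essentially the same route as the paper's own proof: a Feder-style reduction to \textsc{2-SAT} with per-vertex rank-threshold variables and monotonicity clauses, where the chain structure from \cref{lbonv} and \cref{ltfv} turns the bad-class constraints (implications for singly-matched classes, anti-monotone coupling for doubly-matched ones) into width-two clauses, and unmatched children contribute signature-based unit or binary clauses. The paper's variables $x_j$ (``$x$ ranks its partner worse than $j$'') are just the negations of your threshold variables, so the two encodings coincide.
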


  \begin{proof}
    We will reduce \textsc{Partial Embedding Extension} to 2-SAT. The correctness of the reduction will be proven in \Cref{l2sathin,l2satrueck}.
    As 2-SAT can be solved in linear time~\cite{ASPVALL1979}, the lemma follows.
  \end{proof}

  So let $\mathcal{I}$ be a \textsc{Partial Embedding Extension} instance.
  We assume that there is no blocking pair inside $E(\mathcal{C}_{\operatorname{good}}) \cup E( \mathcal{C}_{\operatorname{heavy}})$, since such a blocking pair is also blocking in every extension of $\tilde{M}$.
  We construct a 2-CNF formula~$\varphi$ similarly to the \textsc{2-SAT} formulation of \textsc{Stable Roommates} of Feder~\cite{Feder1992}.
  An example of our construction is given in \Cref{f2SAT}.
\subparagraph{The Construction}
  Due to \Cref{lem:once-nonzero,lem:once-zero,lem:same-child}, we may assume that for each bad class~$\mathcal{C}$ with $\N (\mathcal{C}) = \{x_1, x_2\} $, vertex $x_i$ prefers $y_i^c$ to $y_i^d$ or $x_i$ prefers $y_i^d$ to $y_i^c$ for each $c\neq d\in \mathcal{C}$, and that if $x_1$ prefers $y_1^c$ to~$y_1^d$, then also $x_2$ prefers $y_2^c $ to~$y_2^d$ if $\mathcal{N} (x_1) = \mathcal{C}\neq \mathcal{N} (x_2)$, and $x_2$ prefers~$y_2^d$ to~$y_2^c$ if $\mathcal{N} (x_1) = \mathcal{C} = \mathcal{N} (x_2)$.

  We assume that for each $x\in X_t$ with $\mathcal{N} (x) = \mathcal{C}$ for some class $\mathcal{C}$ and~$x \notin e$ for every edge~$e \in \tilde{M}$ we have that
  \begin{itemize}
    \item $\mathcal{C}$ is a bad class and
    \item $x$ ranks the vertices from $\mathcal{C}$ at position $1, 2, \dots, |\mathcal{C}|$.
  \end{itemize}
  To see this, observe the following.
  By \Cref{lem:once-nonzero,lem:once-zero,lem:same-child}, $x$ either prefers $y^c$ to~$y^d$ or prefers $y^d$ to $y^c$ for $c, d\in \mathcal{C}$.
  Furthermore, $x$ ranks all incident edges at a position from $\{1, 2, \dots, |\mathcal{C}|\}$:
  Let us write $\mathcal{C} = \{ c_1, \ldots, c_\ell \}$ with $\rk_x(c_1) < \cdots < \rk_x(c_\ell)$.
  Assume there exists a vertex $v \notin V(\mathcal{C})$ which is a neighbour of $x$ such that $\rk_x(c_i) < \rk_x(v) < \rk_x(c_{i+1})$ for some $i \in [\ell - 1]$.
  Recall that the edge $\{ x,v \}$ cannot be in $M$ (nor it is in $\tilde{M}$).
  Consequently, we may ``promote''~$v$ in its $x$-ranking to $\rk_x(v) = \rk_x(c_i)$, since this yields an equivalent instance (under the assumption that our matching is an $\bm{h}$-embedding of $\mathcal{N}$).

  For each $x\in X_t$ let us denote by $\max\rk_x$ the maximum $x$-rank of a neighbour of $x$, that is, $\max\rk_x := \max_{v\in \N (x)} \rk_x (v)$.
  We introduce variables $x^{j}$ for every $j$ with $1\le j\le \max\rk_x$, which will be true if and only if $x$ ranks its partner worse than $j$.
  Thus, we construct a subformula $\varphi_x$ as follows
  \[
    \varphi_x := \bigwedge_{j = 1}^{\max\rk_x -1} \left( x^{j} \lor \neg x^{j+1} \right) \,.
  \]
  Observe that in order to satisfy $\varphi_x$ we have that if $x^{j}$ is set to true, then $x^{j'}$ is set to true for all $j'\le j$ (and, by symmetry, if $x^j$ is set to false, then so is $x^{j'}$ for all $j' \ge j$).
  Note that given a satisfying truth assignment, we are going to match $x \in X_t$ to $y^c$ satisfying
  \begin{itemize}
    \item $c \in \mathcal{N}(x)$ and
    \item $\rk_x(y^c) = p$, where $p = \min\{ j \in \mathbb{N} : x_j = \false \}$.
    If no variable $x_j$ with $x_j=\false$ exists, then $x$ remains unmatched.
  \end{itemize}

  Now, we create a sentence $\varphi_{\operatorname{fixed}}$ (i.e., a conjunction of clauses) based on constraints from $\mathcal{N}$, $\bm{h}$, $\tilde{M}$ as follows.
  We start with $\varphi_{\operatorname{fixed}} = \emptyset$.
  \begin{itemize}
    \item If there is an edge $e \in \tilde{M}$ with $x \in e$, then we add the clauses $x^{p- 1}$ and $\neg x^{p}$ to $\varphi_{\operatorname{fixed}}$, where $p := \rk_x(\tilde{M}(x))$ (see clauses $x_1^1$ and $\neg x^2_1$ in \Cref{f2SAT} for an example).
    \item If $\mathcal{N}(x) = \bot$ (recall that this means that $x$ is unmatched by $\mathcal{N}$), then we add clauses of the form $x^{j}$ to $\varphi_{\operatorname{fixed}}$ for all $1\le j\le \max_{y\in \N_{G} (x)\cap Y_t} \rk_x (y)$, ensuring that $x$ is not matched in $M$.
    \item If $\mathcal{N}(x) \neq \bot$, then we add clause~$\neg x^p$ to $\varphi_{\operatorname{fixed}}$ for $p := \max_{y\in \N_{G} (x)\cap Y_t} \rk_x (y)$ to ensure that $x$ is matched in $M$ (see clauses $\neg x_2^2$, $\neg x^2_3$ and $\neg x^2_4$ in \Cref{f2SAT} for an example).
    \item For each edge $\{x, v\} \in \cut (t)$ with $\bm{h} (\{ x, v\}) = \oldone$, we add the clause $\neg x^{\rk_x (v)}$ to $\varphi_{\operatorname{fixed}}$.
    \item For each edge $\{x, x_*\}\in E(G[X_t])$, we add the clause $\neg x^{\rk_{x} (x_*)}\lor \neg x_*^{\rk_{x_*} (x)}$ to $\varphi_{\operatorname{fixed}}$ (see clause $\neg x_2^2 \lor \neg x^2_3$ in \Cref{f2SAT} for an example).
  \end{itemize}
  Observe that $\varphi_{\operatorname{fixed}}$ ensures that for every $x\in X_t$ with $\mathcal{N} ( x)$ not being a bad class, the variables $x^1, \dots, x^{\max\rk_x}$ correspond to $x$ being matched to $\tilde{M} (x)$.
  Furthermore, $\varphi_{\operatorname{fixed}}$ ensures that there is no blocking pair inside $E(G[X_t])$ or between a vertex $x\in X_t$ and a vertex from a good or heavy child, and that for every edge $\{x, v\}\in \cut (t)$ with $x\in X_t$ and $\bm{h} (\{x, v\})$ that $x$ does not prefer $v$ to its partner assigned by the \textsc{2-SAT} formula.

  \begin{figure}[]
    \begin{center}
    \begin{tikzpicture}
      \node[vertex, label=90:$x_1$] (x1) at (0, 0) {};
      \node[vertex, label=90:$x_2$] (x2) at (2, 0) {};
      \node[vertex, label=90:$x_3$] (x3) at (4, 0) {};
      \node[vertex, label=90:$x_4$] (x4) at (6, 0) {};
      
      \tikzstyle{child}=[draw, fill]

      \node[child, label=270:$a_1$] (a1) at (-3, -2) {};
      \node[child, label=270:$a_2$] (b1) at (-1, -2) {};
      \node[child, label=270:$b_1$] (c1) at (1, -2) {};
      \node[child, label=270:$b_2$] (c2) at (3, -2) {};
      \node[child, label=270:$c_1$] (d1) at (5, -2) {};
      \node[child, label=270:$c_2$] (d2) at (7, -2) {};

      \node[child, label= 90:$d_1$] (rd) at (1, 1) {};
      \node[child, label= 90:$e_1$] (e) at (3, 1) {};
      \node[child, label= 90:$f_1$] (f) at (5, 1) {};

      \draw[dotted] (rd) circle (0.6);
      \draw (x1) edge node[pos=0.3, fill=white, inner sep=1pt] {\scriptsize $1$} (rd);
      \draw (x2) edge node[pos=0.3, fill=white, inner sep=1pt] {\scriptsize $2$} (rd);

      \draw[dotted] (e) circle (0.6);
      \draw (x3) edge node[pos=0.3, fill=white, inner sep=1pt] {\scriptsize $1$} (e);
      \draw (x2) edge node[pos=0.3, fill=white, inner sep=1pt] {\scriptsize $2$} (e);

      \draw[dotted] (f) circle (0.6);
      \draw (x3) edge node[pos=0.3, fill=white, inner sep=1pt] {\scriptsize $1$} (f);
      \draw (x4) edge node[pos=0.3, fill=white, inner sep=1pt] {\scriptsize $2$} (f);

      \draw[dotted] (-2., -2) ellipse (1.7 and 1);
      \draw[dotted] (2., -2) ellipse (1.7 and 1);
      \draw[dotted] (6., -2) ellipse (1.7 and 1);

      \draw[dashed] (x1) -- (-2, -2);
      \draw[dashed] (x2) -- (2, -2);
      \draw[dashed] (x3) -- (2, -2);
      \draw[dashed] (x4) -- (6, -2);

      \draw (x1) edge node[pos=0.2, fill=white, inner sep=1pt] {\scriptsize $1$} (a1);
      \draw (x2) edge node[pos=0.2, fill=white, inner sep=1pt] {\scriptsize $2$} (a1);
      \draw (x1) edge node[pos=0.2, fill=white, inner sep=1pt] {\scriptsize $2$} (b1);
      \draw (x2) edge node[pos=0.2, fill=white, inner sep=1pt] {\scriptsize $2$} (b1);
      \draw (x2) edge node[pos=0.2, fill=white, inner sep=1pt] {\scriptsize $2$} (c1);
      \draw (x3) edge node[pos=0.2, fill=white, inner sep=1pt] {\scriptsize $1$} (c1);
      \draw (x2) edge node[pos=0.2, fill=white, inner sep=1pt] {\scriptsize $1$} (c2);
      \draw (x3) edge node[pos=0.2, fill=white, inner sep=1pt] {\scriptsize $2$} (c2);
      \draw (x3) edge node[pos=0.2, fill=white, inner sep=1pt] {\scriptsize $1$} (d1);
      \draw (x4) edge node[pos=0.2, fill=white, inner sep=1pt] {\scriptsize $1$} (d1);
      \draw (x3) edge node[pos=0.2, fill=white, inner sep=1pt] {\scriptsize $2$} (d2);
      \draw (x4) edge node[pos=0.2, fill=white, inner sep=1pt] {\scriptsize $2$} (d2);
      \draw (x2) edge node[pos=0.2, fill=white, inner sep=1pt] {\scriptsize $2$} node[pos=0.76, fill=white, inner sep=1pt] {\scriptsize $2$} (x3);
    \end{tikzpicture}
    \end{center}
    \begin{center}
      \begin{tabular}{l | l | l}
      reason for clause & clause(s) & subformula\\
      \hline
      $\varphi_{x_i}$ & $x_i^{j}\lor \neg {x_i^{j+1}}$ & $\varphi_{x_i}$\\
      $\{a_2, x_1\}\in \tilde{M}$ & $x^1_1$, $\neg {x_1^{2}}$ & $\varphi_{\operatorname{fixed}}$\\
      $\mathcal{N} (x_2) \neq \bot$ & $\neg x_2^2$ & $\varphi_{\operatorname{fixed}}$ \\
      $\mathcal{N} (x_3) \neq \bot$ & $\neg x^2_3$ & $\varphi_{\operatorname{fixed}}$ \\
      $\mathcal{N} (x_4) \neq \bot$ & $\neg x^2_4$ & $\varphi_{\operatorname{fixed}}$ \\
      $\oldzero\notin \sig_{x_2} (a_1)$ and $\{a_2, x_1\}\in \tilde{M}$ & $\neg x^2_{2}$ &$\varphi_{\operatorname{good}}$\\
      $b_1$ & ${x_2^{1}}\lor {x_3^{1}}$ & $\varphi_{\operatorname{bad}}$\\
      $c_1$ & $\neg x_3^{1} \lor {x_4^{1}}$ & $\varphi_{\operatorname{bad}}$\\
      $c_2 $ & $\neg x_3^{2} \lor {x_4^{2}}$ & $\varphi_{\operatorname{bad}}$\\
      $\{x_2, x_3\}$ & $ \neg x^2_{2} \lor \neg x_3^{2}$ & $\varphi_{\operatorname{fixed}}$\\
      $d_1$ & $\neg x_2^2$ & $\varphi_{\operatorname{unmatched}}$\\
      $e_1$ & $\neg x_2^2\lor \neg x^1_3$ & $\varphi_{\operatorname{unmatched}}$\\
      $f_1$ & $\neg x^1_3$, $\neg x^2_4$ & $\varphi_{\operatorname{unmatched}}$
      \end{tabular}

    \end{center}

    \caption{An example for the reduction to 2-SAT.
    The picture depicts the vertices from~$X_t=\{x_1, x_2, x_3, x_4\}$ as well as the children of $t$ (depicted as squares).
    Children in the same class are surrounded by a dotted ellipse.
    The matching $\mathcal{N}$ is given by the dashed edges.
    Children~$a_1$ and $a_2$ belong to a good class, and we have $\tau^{-1}_{a_i} (\square)= \{\bm{(\oldone, \oldone)}, \bm{(\oldzero, \oldone)}\}$.
    We have $\{a_2, x_1\}\in \tilde{M}$.
    Children $b_1$ and $b_2$ belong to a bad class, and we have $\tau^{-1}_{b_i} (\square) = \{\bm{(\oldzero, \oldminusone)}, \bm{(\oldzero, \oldone)}\}$.
    Children $c_1$ and~$c_2$ belong to a bad class, and we have $\tau^{-1}_{c_i} (\square) = \{\bm{(\oldzero, \oldone)}, \bm{(\oldzero, \oldminusone)}, \bm{(\oldone, \oldone)}\}$.
    Children $d_1$, $e_1$, and~$f_1$ belong to unmatched classes, and we have $\tau^{-1}_{d_1} ( \square) = \{ \bm{(\oldminusone, \oldone)}, \bm{(\oldone, \oldone)}\}$ as well as $\tau^{-1}_{e_1} ( \square) = \{\bm{(\oldone, \oldone)}\}$ and $\tau^{-1}_{f_1} ( \square) = \{ \bm{(\oldminusone, \oldone)}, \bm{(\oldone, \oldminusone)}, \bm{(\oldone, \oldone)}\}$.
    The variables of the 2-SAT instance are $x^1_{1}, x_1^{2}, x_2^{1}, x^2_{2}, x_3^{1}, x_3^{2}, x_4^{1}$, and~$x_4^{2}$.
    The clauses constructed because of an edge respectively vertex are given in the table.}
    \label{f2SAT}
  \end{figure}

  Next, we consider the heavy and the good children belonging to a class $\mathcal{C}$.
  The matching inside $E(\mathcal{C})$ is already fixed by the partial embedding $\tilde{M}$.
  For each child $c\in \mathcal{C}$ and each~$x\in N(\mathcal{C})$ with $\rk_{x^c} (x) < \rk_{x^c} (\tilde{M} (x^c))$, we add the clause $\varphi_c = \neg x^r$ with $r = \rk_x (x_c)$.
  See clause~$\neg x_2^2$ in \Cref{f2SAT} for an example.
  We call the conjunction of the formulas over all heavy and all good children $\varphi_\text{good}$.

  Next, we consider the unmatched children and for each such child $c$ with $N(Y_c) = \{x_1\}$ or $N (Y_c) = \{x_1, x_2\}$, we create its formula $\varphi_c$.
  We distinguish few cases based on the first case applicable to $\sig(c)$:
  \begin{enumerate}
    \item
    If $\bm{(\oldone,\oldone)} \in \sig(c)$ or $\bm{(\oldone)} \in \sig (c)$, then $c$ does not yield any further conditions on the rank of the partner of its neighbors.
    Thus, we set $\varphi_c$ to an empty formula (e.g., to true).
    \item
    If $\bm{(\oldone,\oldminusone)} \in \sig(c)$ and $\bm{(\oldminusone,\oldone)} \notin \sig(c)$, then we set $\varphi_c = \neg {x}_2^{s}$, where $s = \rk_{x_2}(y_2^c)$ (see clause $\neg x_2^2$ for $d_1$ in \Cref{f2SAT} for an example).
    Note that this imposes the constraint that $x_2$ must be matched to an agent it finds at least as good as $y_2^c$.
    The symmetric case $\bm{(\oldminusone, \oldone)}\in \sig (c)$ and $\bm{(\oldone, \oldminusone)} \notin \sig (c)$ is handled analogously.
    \item
    If both $\bm{(\oldone,\oldminusone)},\bm{(\oldminusone,\oldone)} \in \sig(c)$, then we set $\varphi_c = \neg {x}_1^r \lor \neg {x}_2^{s}$, where $r = \rk_{x_1}(y_1^c)$ and $s = \rk_{x_2}(y_2^c)$ (see clause $\neg x_2^2 \lor \neg x^1_3$ in \Cref{f2SAT} for an example).
    Similarly to the previous case this demands $x_1$ or $x_2$ to be matched to a partner they find at least as good as their possible partners in $Y_c$.
    \item
    If $\bm{(\oldminusone,\oldminusone)} \in \sig(c)$, then we set $\varphi_c = \neg {x}_1^r \land \neg {x}_2^{s}$, where $r = \rk_{x_1} (y_1^c)$ and $s = \rk_{x_2} (y_2^c)$ (see clauses $\neg x^1_3$ and $\neg x^2_4$ for $f_1$ in \Cref{f2SAT} for an example).
    Note that this setting requires both $x_1$ and $x_2$ to be matched to a partner they find at least as good as their possible partners in $Y_c$.
    \item
    If $\bm{(\oldminusone)} \in \sig (c)$, then we set $\varphi_c = \neg x^r$, where $r = \rk_{x_1} (y^c) $.
    Note that this setting requires~$x$ to be matched to a partner it finds at least as good as $y^c$.
  \end{enumerate}
  Let $\varphi_{\operatorname{unmatched}}$ be the conjunction of the formulas $\varphi_c$ for all unmatched children.

  So far we have constructed formulas that ensure that the sought matching obeys the obvious conditions given by the choices for unmatched, good, and heavy children and by the vector $\bm{h}$.
  Thus, we are left with the bad children.
  Recall that if $\mathcal{C}$ is a bad class with~$\mathcal{N}(x) = \mathcal{C}$, then we have altered the ranking function of $x$ such that we have $\rk_x(v) \in \{1, \ldots, |\mathcal{C}| \}$ for all~$v \in N(x)$.
  \begin{enumerate}
    \item
    Let $\mathcal{C}$ be a bad class with $\mathcal{N}(x_1) = \mathcal{C}$ and $\mathcal{N}(x_2) = \mathcal{C}$ for some $x_1 \neq x_2$.
    Recall that in this case we have to match both $x_1$ and $x_2$ to the same child in $\mathcal{C}$ and that if $\rk_{x_1} (y_1^c) = p$, then $\rk_{x_2}(y_2^c) = |\mathcal{C}| - p + 1$ (by \cref{lem:same-child} and since $x\in \{x_1, x_2\}$ ranks the vertices from~$\mathcal{C}$ at positions~$1,2,\dots, |\mathcal{C}|$).
    Thus we set $\varphi_{\mathcal{C}} = \bigwedge_{p = 1}^{|\mathcal{C}| - 1} \left( {x_1}^{p} \lor {x_2}^{|\mathcal{C}| - p } \right)$  (see clause $x^1_2\lor x^1_3$ in \Cref{f2SAT} for an example).
    \item
    Let $\mathcal{C}$ be a bad class with $\N (\mathcal{C}) = \{x_1, x_2\}$, $\mathcal{N}(x_1) = \mathcal{C}$, and $\mathcal{N}(x_2) \neq \mathcal{C}$ for some $x_1\neq x_2$.
    For every $c\in \mathcal{C}$ with $p = \rk_{x_1} (y_1^c)$ and $q = \rk_{x_2} (y_2^c)$, we add the following clauses, depending on $\sig_{x_1} ( \mathcal C)$:
    \begin{itemize}
      \item If $\sig_{x_1} (\mathcal{C}) \in \{\emptyset, \{\oldminusone\}, \{ \oldzero\}, \{\oldminusone, \oldzero\}\}$, then we add clauses~$x_1^1$ and $\neg x_2^{\rk_{x_2} (y_2^{c_1})}$ (as we guessed that $x_1$ is not matched to $c_1$).
      \item If $\sig_{x_1} (\mathcal{C}) = \{\oldminusone, \oldone\}$, then we add clause~$x_1^p \lor \neg x_2^q$.
      \item If $\sig_{x_1} ( \mathcal{C}) = \{\oldminusone, \oldzero, \oldone\}$, then we add no clause.    
    \end{itemize}

  \end{enumerate}

  We denote by $\varphi_{\operatorname{bad}}$ the conjunction of the formula $\varphi_{\mathcal{C}}$ for every bad class~$\mathcal{C}$.

  Let $\varphi$ be the conjunction of all of the above constructed formulas, i.e., $\varphi = \bigl( \bigwedge_{x\in X_t} \varphi_x \bigr) \wedge \varphi_{\operatorname{fixed}} \wedge \varphi_{\operatorname{good}} \wedge \varphi_{\operatorname{unmatched}} \land \varphi_{\operatorname{bad}}$.

\begin{lemma}\label{l2sathin}
  If there exists an $\bm{h}$-embedding of $\mathcal{N}$ of the form $M \cup \tilde{M}$, then there exists a satisfying truth assignment for $\varphi$.
\end{lemma}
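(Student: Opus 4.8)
The plan is to read a satisfying truth assignment directly off the given matching. Before doing so I would normalise the embedding: by the exchange arguments of \cref{lbonv,ltfv} (for the two kinds of bad classes) and \cref{lHsimple,lem:onlyBestInAGoodClass,llc,cgc} (for the singleton and good classes) I may replace $M$ by $M \Delta F$ for a suitable $F \subseteq \bigcup_{\mathcal{C}} E(\mathcal{C})$ and assume that every $x \in X_t$ with $\mathcal{N}(x) = \mathcal{C}$ is matched into the reduced chain $\mathcal{C}' = \{c_1,\dots,c_\ell\}$ actually used when building $\varphi$; this is consistent with the rank–promotion carried out in the construction. I then define $\alpha(x_j) := \true$ iff $\rk_x\bigl((M\cup\tilde M)(x)\bigr) > j$, using the convention that an unmatched $x$ has partner-rank $\infty$ so that all of its variables are true. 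It remains to check, clause group by clause group, that $\alpha$ satisfies $\varphi$.

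The clauses of $\varphi_x$ and $\varphi_{\operatorname{fixed}}$ are bookkeeping. Monotonicity $x_j \lor \neg x_{j+1}$ holds since $\rk_x(\text{partner}) > j+1$ implies $\rk_x(\text{partner}) > j$. For $e \in \tilde M$ with $x\in e$ and $p = \rk_x(\tilde M(x))$, the embedding extends $\tilde M$, so $x$'s partner has rank exactly $p$, making $x_{p-1}$ true and $x_p$ false. The clauses forcing $x$ to be (un)matched according to $\mathcal{N}$, the clauses $\neg x_{\rk_x(v)}$ coming from $\bm{h}(\{v,x\}) = \oldone$, and the clauses $\neg x_{\rk_x(y)} \lor \neg y_{\rk_y(x)}$ for $\{x,y\}\in E(G[X_t])$ all translate, under $\alpha$, into statements that hold because $M\cup\tilde M$ is a valid embedding, complies with $\bm{h}$, and contains no blocking pair inside $X_t$.

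For $\varphi_{\operatorname{good}}$ and $\varphi_{\operatorname{unmatched}}$ I would use the no-blocking-pair property together with the signature. A clause $\neg x_r$ of $\varphi_{\operatorname{good}}$ (with $r = \rk_x(x^c)$) is added exactly when $x^c$ prefers $x$ to its fixed partner $\tilde M(x^c)$; stability then forbids $x$ from preferring $x^c$, i.e.\ $\rk_x(\text{partner}) \le r$, so $x_r$ is false. For an unmatched child $c$ neither incident cut-edge lies in $M$, hence the vector $\bm{g}^c$ realised by $M$ on $Y_c$ has both coordinates in $\{\oldminusone,\oldone\}$ and lies in $\sig(c)$; the case distinction defining $\varphi_c$ is engineered so that a coordinate is forced to $\oldminusone$ precisely when the corresponding protected vector is absent from $\sig(c)$, and then stability forces the incident $X_t$-vertex to weakly prefer its partner, which is exactly what the literals $\neg x_r$, $\neg y_s$ of $\varphi_c$ assert.

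The main obstacle is $\varphi_{\operatorname{bad}}$. For a doubly-matched bad class ($\mathcal{N}(x)=\mathcal{N}(y)=\mathcal{C}$) I invoke \cref{lbonv}: $x,y$ are matched to a common child $c$ with $\rk_x(x^c)=p_0$ and $\rk_y(y^c)=|\mathcal{C}|-p_0+1$, so $x_p$ is true for $p<p_0$ and $y_{|\mathcal{C}|-p}$ is true for $p\ge p_0$, satisfying every clause $x_p \lor y_{|\mathcal{C}|-p}$. For a singly-matched bad class ($\mathcal{N}(x)=\mathcal{C}\neq\mathcal{N}(y)$, $\sig_y(\mathcal{C})=\{\oldminusone,\oldone\}$), let $x$ be matched to $x^{c_{p_0}}$. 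The decisive observation concerns $c_{p_0}$ itself: since $x$'s edge into it is $\oldzero$ and $(\oldzero,\oldone)\notin\sig(\mathcal{C})$, the edge $\{y,y^{c_{p_0}}\}$ cannot be protected, so stability forces $y$ not to prefer $y^{c_{p_0}}$, i.e.\ $\rk_y(\text{partner}) \le q_{p_0}$. Because \cref{ltfv} arranged $q_1<\cdots<q_\ell$, this yields $\rk_y(\text{partner}) \le q_{p_0} < q_p$ for every $p>p_0$, so $y_{q_p}$ is false there; combined with $x_p$ being true for $p<p_0$, all clauses $x_p \lor \neg y_{q_p}$ are satisfied. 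I expect the propagation of this single forced literal along the increasing $q$-chain (and dually the $p$-chain) to be the delicate point, so I would first record, as a short preliminary, the monotonicity of the relevant $\sig$-memberships across a reduced class before running the clause-by-clause verification.
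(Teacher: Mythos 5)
Your proposal is correct and takes essentially the same route as the paper's own proof: the identical truth assignment (setting $x_j$ to true exactly when $\rk_x\bigl((M\cup\tilde{M})(x)\bigr) > j$), the same clause-group-by-clause-group verification, and the same appeal to \cref{lbonv} and \cref{ltfv} for the two kinds of bad classes. You additionally make explicit two points the paper leaves implicit (its text for the singly-matched bad case even contains typos): the normalization of $M$ into the reduced chains, and the derivation of $\rk_y\bigl((M\cup\tilde{M})(y)\bigr) \le q_{p_0}$ from $(\oldzero,\oldone)\notin\sig(\mathcal{C})$.
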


\begin{proof}
  We set $x^{j}$ to true if and only if $\rk_x((M\cup \tilde{M})(x)) > j$ and we claim that this is a satisfying truth assignment.
  It is not hard to see that, since $M\cup\tilde{M}$ complies with $\bm{h}$, such an assignment satisfies formulas $\varphi_{\operatorname{good}},\varphi_{\operatorname{unmatched}}, \varphi_{\operatorname{fixed}}$, and $\varphi_x$ for all $x \in X_t$.

  Now, consider a bad class $\mathcal{C}$ with $\mathcal{N} (x_1) = \mathcal{C} = \mathcal{N} (x_2)$ for some distinct $x_1, x_2 \in X_t$.
  Recall that we have that $M(x_1) = y_1^c$ and $M(x_2) = y_2^c$ for some $c \in \mathcal{C}$;
  consequently, we have that $x_1^{p -1}$ and $x_2^{|C| - p}$ are set to true for $p = \rk_{x_1}(y_1^c)$.
  But now, if~$x_1^{p-1}$ is true, then $x_1^{p'}$ is true for all $p' \le p-1$.
  By the same argument, $x_2^{p'}$ is true for all $p' \le |\mathcal{C}| - p$, and the formula $\varphi_{\mathcal{C}}$ is satisfied by the constructed assignment.

  Finally, let $\mathcal{C} =\{c_1, \dots, c_\ell\}$ be a bad class with $\N (\mathcal{C}) = \{x_1, x_2\}$, $\mathcal{N} (x_1) = \mathcal{C}\neq \mathcal{N} (x_2)$, and $\rk_{x_j} (y_j^i) < \rk_{x_j} (y_j^{i+1})$ for $i \in [\ell -1]$ and $j\in \{1, 2\}$.
  Let $M(x_1) = y_1^c$ and $q_c := \rk_{x_1} (y_1^c)$ for some~$c \in \mathcal{C}$.
  Fix some $d\in \mathcal{C}$ and let $p:= \rk_{x_1} (y_1^d)$ and $q:= \rk_{x_2}(y_2^d)$.
  If $\sig (\mathcal{C} = \{\emptyset, \{\oldminusone\}, \{ \oldzero\}, \{\oldminusone, \oldzero\}\}$, then we guessed that $x_1$ is not matched to $y_1^{c_1}$.
  Consequently, $x_1^1$ is set to true.
  Since $M \cup \tilde{M}$ complies with $\bm{h}$, it follows that $x_2$ prefers $(M\cup \tilde{M}) (x_2)$ to~$y_2^{c_1}$, and consequently, $x_2^{\rk_{x_2}(y_2^{c_1})} $ is set to false.
  Thus, clause~$\neg x_2^{\rk_{x_2} (y_2^{c_1})}$ is satisfied.
  If $\sig_{x_1} (\mathcal{C}) = \{ \oldminusone, \oldone\}$, then $x_1^{p_c-1}$ is set to true, $x_1^{p_c}$ is set to false, and $x_2^p$ is set to true for~$p = \rk_{x_1} (x_1^c)$.
  If $p < p_c$, then $x_1^{p_c-1}$ is true, since $x_1^{p'}$ is true for all~$p' \le p_c -1$.
  If $p\ge p_c$, then $q \ge \rk_{x_2} (y_1^c)$.
  Since $M\cup \tilde{M}$ is an $\bm{h}$-embedding, it follows that $x_2$ does not prefer $x_2^{q_c}$ to $( M \cup \tilde{M}) (x_2)$.
  Thus, $x_2^{p_c}$ is false.
  Since this implies that $x_2^{p'}$ is false for all~$p' \ge p_c$, we have that $x_2^q$ is false.
  Thus, clause~$x_2^p \lor \neg x_2^q$ is satisfied.

  We conclude that our truth assignment satisfies $\varphi$ and the lemma follows.
\end{proof}

We now show the reverse direction, namely that every satisfying truth assignment implies a solution to the \textsc{Partial Embedding Extension} instance.

\begin{lemma}\label{l2satrueck}
  If there exists a satisfying truth assignment for $\varphi$, then there exists an $\bm{h}$-embedding of $\mathcal{N}$ of the form $M \cup \tilde{M}$.
\end{lemma}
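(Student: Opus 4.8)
The plan is to invert the decoding used in the proof of \cref{l2sathin}. Given a satisfying assignment $\beta$ of $\varphi$, I would first read off, for every $x\in X_t$, the value $p_x := \min\{j : \beta(x_j) = \false\}$ (leaving $p_x$ undefined if all $x_j$ are true). The clauses of $\varphi_x$ guarantee that $\beta(x_j) = \true$ for $j < p_x$ and $\beta(x_j) = \false$ for $j \ge p_x$, so $p_x$ is exactly the rank at which $x$ is to be matched. The matching $\tilde M$ already fixes all edges inside heavy and good children and inside $G[X_t]$; what remains is to match the vertices $x$ with $\mathcal{N}(x)$ a bad class (and not yet covered by $\tilde M$) and to fill in the interiors of all bad and unmatched children. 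For such an $x$ I would match it to the vertex $x^c$ of $\mathcal{N}(x)$ with $\rk_x(x^c)=p_x$, which is well defined because after the normalisation from \cref{lbonv} and \cref{ltfv} the members of a bad class occupy exactly the $x$-ranks $1,\dots,|\mathcal{C}|$, and the $\mathcal{N}(x)\neq\bot$ clause in $\varphi_{\operatorname{fixed}}$ forces $p_x\le|\mathcal{C}|$.

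Next I would check that each class is decoded consistently and that the corresponding table entries exist. For a bad class $\mathcal{C}$ with $\mathcal{N}(x)=\mathcal{C}=\mathcal{N}(y)$, the clauses $\varphi_{\mathcal{C}}=\bigwedge_p(x_p\lor y_{|\mathcal{C}|-p})$ force $p_x+p_y\ge|\mathcal{C}|+1$; matching $x$ at rank $p_x$ to a child $c$ then sends $y$ to the same $c$ at rank $|\mathcal{C}|-p_x+1\le p_y$, so $y$ is matched at least as well as $\beta$ claims (which keeps all clauses mentioning $y$ satisfied), and the interior is filled by $\tau_c[(\oldzero,\oldzero)]\neq\square$. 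For a bad class with $\mathcal{N}(x)=\mathcal{C}\neq\mathcal{N}(y)$ the clauses $\bigwedge_p(x_p\lor\neg y_{q_p})$ ensure that $y$ is matched at least as well as $y^{c}$ for the hosting child $c$, so the $y$-coordinate of $\bm{h^{M,c}}$ is $\oldminusone$ and the entry $\tau_c[(\oldzero,\oldminusone)]$ guaranteed by \cref{defBadClass} can be used; the non-hosting children $c'$ of $\mathcal{C}$ take the entry prescribed by $\bm{h^{M,c'}}$, which again lies in $\sig(\mathcal{C})$ by the list of vectors in \cref{defBadClass}. For an unmatched child the case distinction on $\sig(c)$ in the construction of $\varphi_c$ was designed precisely so that a satisfied $\varphi_c$ certifies that the signature entry matching the ranks of $c$'s neighbours is available.

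Finally I would verify that $M\cup\tilde M$ is an $\bm{h}$-embedding of $\mathcal{N}$. The embedding condition, and the fact that every $x$ with $\mathcal{N}(x)\neq\bot$ is matched precisely into $\mathcal{N}(x)$, follow from the $\mathcal{N}$-clauses of $\varphi_{\operatorname{fixed}}$, and the two rank conditions for the edges of $\cut(t)$ follow from the clauses $\neg x_{\rk_x(v)}$ added for those $e$ with $\bm{h}(e)=\oldone$. The heart of the argument, which I expect to be the main obstacle, is the absence of blocking pairs inside $Y_t$: pairs inside a single child are excluded because each $\tau_c[\bm{h^{M,c}}]$ is itself a complying matching, pairs inside $X_t$ are excluded by the clauses $\neg x_{\rk_x(y)}\lor\neg y_{\rk_y(x)}$, and a pair $\{x,v\}$ with $v$ in a child $c$ is excluded by exactly the clause that $\varphi_{\operatorname{good}}$, $\varphi_{\operatorname{unmatched}}$ or $\varphi_{\operatorname{bad}}$ contributed for $c$ — a satisfied clause forces, through the intended meaning of the variables, either $v$ to be matched at least as well as $x$ in $c$ or $x$ to be matched strictly above $\rk_x(v)$. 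The delicate bookkeeping is to run this check through every case of \cref{defBadClass}, confirming in each that the decoded vector $\bm{h^{M,c}}$ indeed belongs to $\sig(c)$, so that all interior entries invoked above are non-empty and the assembled matching is stable within $Y_t$.
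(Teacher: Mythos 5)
Your proposal is correct and takes essentially the same route as the paper's proof: decode the threshold $p_x=\min\{j: x_j=\false\}$ for every $x\in X_t$, match each vertex assigned to a bad class at exactly that rank, fill the interiors of bad and unmatched children with the stored entries $\tau_c[\bm{h^{M,c}}]$, and verify $\bm{h}$-compliance and absence of blocking pairs subformula by subformula. The only deviation is local: for doubly-matched bad classes the paper first modifies the satisfying assignment (via an internal claim) so that the thresholds of $x$ and $y$ select the same child, whereas you match $y$ into the child determined by $p_x$ at a rank that may beat $p_y$ and argue this is harmless --- an equivalent maneuver, and your explicit check that the decoded vectors lie in $\sig(\mathcal{C})$ via Definition~\ref{defBadClass} even spells out the table-entry existence that the paper dismisses as trivial.
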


\begin{proof}
  Assume that there is a satisfying truth assignment $f\colon\{x^{j}: x\in X_t\}\rightarrow \{ \true, \false\}$.
  Since $f$ satisfies $\varphi_x$, we know that for each $x\in X_t$ there exists some $j_x$ such that $x^{j}$ is set to false if and only if $j\ge j_x$.
  We first compute the matching $M$ and then argue that $M \cup \tilde{M}$ is an $\bm{h}$-embedding of $\mathcal{N}$.

  We start with a claim for classes~$\mathcal{C} $ of bad children with $\mathcal{N} (x_1) = \mathcal{C} = \mathcal{N} (x_2)$ for some $x_1, x_2 \in X_t$.
  \begin{claim}
    Let $\mathcal{C}$ be a class of bad children with $\mathcal{N}(x_1) = \mathcal{N}(x_2) = \mathcal{C}$ and let $f$ be an assignment satisfying $\varphi$.
    Then there exists an assignment $f' \colon \{x^{j}: x\in X_t\}\rightarrow \{ \true, \false\}$ satisfying $\varphi$ for which there exists an integer $q$ such that both $x_1^{q - 1}$ and $x_2^{|C| - q}$ are set to true in $f'$, and $f'$ coincides with $f $ on $\{x^j : x\in X_t \setminus \{x_1, x_2\}\}$.
  \end{claim}
  \begin{claimproof}
    Assume that in $f$ no such $q$ exists for $\mathcal{C}$ and recall that $\varphi_{\mathcal{C}} = \bigwedge_{p = 2}^{|\mathcal{C}| - 1} \left( {x}_1^p \lor {x}_2^{|\mathcal{C}| - p} \right)$.
    Let $q' := \min \left(\{j\in \mathbb{N} : x_1^j = \false\}\cup\{\max\rk(x_1)\}\right)$.
    Let $f'$ be the same assignment as $f$ with the only exception that we set $x_2^{p}$ to false for $p \ge |\mathcal{C}| - q' + 1$.
    Clearly, $f'$ satisfies $\varphi$; indeed we have only selected a more permissive assignment for $x_2$ (any clause outside $\mathcal{C}$ does not contain the literal~$x_2^{p}$ for any $p$).
  \end{claimproof}

  It follows that we can select an initial matching $M'$ between vertices in $X_t$ and their possible partners in bad children as follows.
  For a bad class $\mathcal{C}$ and an $x\in X_t$ with~$\mathcal{N} (x) = \mathcal{C}$, we add the edge $\{x, y^c\}$ to $M'$ for the $c\in \mathcal{C}$ such that $\rk_x(y^c) = p$, where $p : = \min \{ j \in \mathbb{N} : x_j = \false \}$.
  Now, we extend $M'$ to a matching $M$ on $\clos(t) $ by adding for each bad or unmatched child $c$ the matching stored in $\tau_c [\bm{h^{M', c}}]$.
  We claim that all such entries exist and that the resulting matching $M \cup\tilde{M}$ is an $\bm{h}$-embedding of $\mathcal{N}$.
  First, we show that $\tau_c [ \bm{h^{M', c}}] \neq \square$.
  Let $c \in \mathcal{C}$ with $N (\mathcal{C}) = \{ x_1, x_2\}$.
  If $c$ is a bad children with $\mathcal{N} ( x_1) = \mathcal{C} = \mathcal{N} (x_2)$, then $\bm{h^{M', c}} \in \{\bm{(\oldminusone, \oldone)}, \bm{(\oldone, \oldminusone)}, \bm{(\oldzero, \oldzero)}\}\subseteq \sig (c)$.
  
  If $c$ is a bad children with $\mathcal{N} (x_1 ) = \mathcal{C} \neq \mathcal{N} (x_2)$, then make a case distinction on the different values $\sig_{x_1} (\mathcal{C})$ can take.
  Let $C = \{c_1, \dots, c_\ell\}$ with $\rk_{x_1} (x_1^i) < \rk_{x_1} (x_1^{i+1})$ and $\rk_{x_2} (x_2^i) < \rk_{x_2} (x_2^{i+1})$.
  Let $c\in \mathcal{C}$ with $p := \rk_{x_1} (x_1^c)$ and $q:= \rk_{x_2} (x_2^c)$.
  If $\sig_{x_1} (\mathcal{C}) \in \{ \emptyset, \{\oldminusone\}, \{ \oldzero\},\{ \oldminusone, \oldzero\} \}$, then $M' (x_1) \neq y_1^{c_1}$ (otherwise the class $\mathcal{C}$ is good).
  Thus, clause $x_1^1$ is satisfied.
  Furthermore, we have $\bm{h^{M', c_1}} (\{x_1, y_1^c\})= \oldone$.
  Thus, clause $x_2^{\rk_{x_2} (y_2^{c_1})}$ is satisfied.
  If $\sig_{x_1} (\mathcal{C}) = \{\oldminusone, \oldone\}$, then
  $\{\oldminusone, \oldone\}^2 \subseteq \sig (c)$, and thus, for all $c \in \mathcal{C}$ such that $\{y_1^c, x_1\} \notin M'$, we have that $\tau_c [ \bm{h^{M', c}}] \neq \square$.
  If $\{y_1^c, x_1 \} \in M'$, then clause $x_1^{\rk_{x_1} (y_1^c)} \lor \neq x_2^{q}$ with $q = \rk_{x_2} (y_2^{c})$ implies that $x_2$ does not prefer $y_2^c$ to $M' (x_2)$.
  Consequently, we have $\bm{h^{M', c}} = \bm{(\oldzero, \oldminusone)} \in \sig (c)$, and it follows that $\tau_c [ \bm{h^{M', c}}] \neq \square$.
  Thus, we have $\tau_c [ \bm{h^{M', c}}] \neq \emptyset$.

  Observe that $\varphi_{\operatorname{fixed}}$ assures that for each $\{ x, v\}\in \cut (t)$ with $\bm{h} (\{ x, v\}) = \oldone$ we have that $\rk_x (M(x)) \le \rk_x (v)$.
  Furthermore, it assures that no edge $\{x, x'\}\in E(G[X_t])$ is blocking.
  By $\varphi_{\operatorname{good}}$, no edge $\{ x, y^c \}$ for a heavy or a good child $c$ of $t$ is blocking.
  By $\varphi_{\operatorname{unmatched}}$ no edge $\{ x,y^c \}$ for an unmatched child $c$ of $t$ is blocking.
  The lemma follows.
\end{proof}

\begin{theorem}\label{cind}
  Let $t\in V(T)$ and $\bm{h}\in \{\oldminusone, \oldzero, \oldone\}^{\cut (t)}$. Let $t_1, \dots, t_j$ be the children of $t$.

  If we know $\tau[t_i, h_i]$ for all $i\in [j]$ and $h_i\in \{\oldminusone, \oldzero, \oldone\}^{\cut (t_i) }$, then we can compute $\tau[t, h]$ in $2^{O(k\log k)}n^{O(1)}$ time.
\end{theorem}

\begin{proof}
  By \Cref{lheavy}, we have to consider $2^{O(k\log k)}$ matchings from $X_t$ to the heavy children of $t$.

  As there are $O(k^2)$ classes of light children, there are $2^{O(k\log k)}$ matchings between $X_t$ and the classes.

  By \Cref{cgc}, we get $2^{O(k)}$ matchings inside the light children.

  Since there are $2^{O(k \log k)}$ matchings inside $G[X_t]$, this results in $2^{O(k \log k)}$ partial embeddings for each of these matchings.

  Each of these partial embeddings defines a \textsc{Partial Embedding Extension} instance, which can be solved in polynomial time by \Cref{lrest}.
  Clearly, a matching complying with~$\bm{h}$ exists if and only if one of the \textsc{Partial Embedding Extension} instances is a YES-instance.

  Therefore, the total running time is $2^{O(k\log k)} n^{O(1)}$.
 \end{proof}

Having proven the correctness of the induction step, we now state the main theorem of this section.

\begin{theorem}\label{tfpttcw}
 Both \textsc{Perfect-SRTI} and \textsc{SRTI-Existence} can be solved in $2^{O(k\log k)}n^{O(1)}$ time, where $k:=\tcw (G)$.
\end{theorem}

\begin{proof}
  Let $(T, \mathcal{X})$ be a nice tree-cut decomposition of $G$ of width $k$.
  We will first explain the algorithm for \textsc{SRTI-Existence}, and in the end we highlight how this algorithm can be adapted to \textsc{Perfect-SRTI}.
 We compute the values $\tau_t [\bm{h}]$ by bottom-up induction over the tree $T$.

 For a leaf $t\in V(T)$ and a vector $\bm{h}$, we enumerate all matchings $M_t$ on $G[X_t\cup \N (Y_t)]$. We check whether $M_t$ complies with $\bm{h}$. If we find such a matching, then we store one of these matchings in $\tau_t [\bm{h}]$, and else set $\tau_t[\bm{h}] =\square$. As $|X_t\cup \N (Y_t)|\le 2k$, and $G$ is simple, the number of matchings is bounded by $2^{O(k\log k)}$.

 The induction step, that is, computing the table entries for the inner nodes of the tree-cut decomposition is the most-involved part and sketched below.

 For the root $r\in V(T)$, we have $Y_r = V(G)$ and $\cut (r) = \emptyset$. Thus, a matching on $Y_r = V(G)$ complying with $\bm{h^r}\in\{\oldminusone, \oldzero, \oldone\}^\emptyset$ is just a stable matching (note that $\bm{h^r}$ is unique). Hence, $G$ contains a stable matching if and only if $\tau_r [\bm{h^r}] \neq \square$.

 The induction step is executed for each $t\in V(T)$ and each $\bm{h}\in\{\oldminusone,\oldzero, \oldone\}^{\cut (t)}$, and therefore at most $n 3^k$ times. As each execution takes $2^{O(k\log k)}n^{O(1)}$ time (\Cref{cind}), the total running time of the algorithm is bounded by $2^{O(k\log k)}n^{O(1)}$.

 To solve \textsc{Perfect-SRTI}, we store in any dynamic programming table $\tau_t$ only matchings such that every vertex inside $Y_t$ is matched.
\end{proof}

We now show that the above algorithm directly yields a $\frac{1}{2}$-FPT-approximation.

 \begin{corollary}\label{capx}
   \textsc{Max-SRTI} allows for a factor-$\frac{1}{2}$-approximation computable in $2^{O(k\log k)}n^{O(1)}$ time, where $k:= \tcw (G)$.
 \end{corollary}

 \begin{proof}
   Any stable matching is a maximal matching, and thus any stable matching is a $\frac{1}{2}$-approximation of a maximum stable matching.

   Therefore, any algorithm finding an arbitrary stable matching, if one exists, is a $\frac{1}{2}$-approximation algorithm, and such an algorithm with the claimed running time exists by \Cref{tfpttcw}.
 \end{proof}

  We now extend the algorithm of \Cref{tfpttcw} to also work for \textsc{Max-SMTI}.

 \subsection{Max-SMTI and Tree-cut Width}
 \label{sec:Max-SMTI}

 When trying to generalize the algorithm from \Cref{tfpttcw} to solve \textsc{Max-SRTI}, a straightforward idea is to adopt the DP tables $\tau_c [ \bm{h}]$ to store a matching complying with $\bm{h}$ of maximum cardinality instead of an arbitrary one.
 However, the induction step is unable to compute maximum-cardinality matchings complying with $\bm{h}$.
 For example, even if we know that a given vertex $x\in X_t$ shall be matched to a class of singleton children, it is not clear how to reduce the possibilities $x$ can be matched to to a finite number, as it might be beneficial to match $x$ to a vertex it likes less because this results in a larger matching.
 A natural approach to cope with this issue is to include the sizes of the matchings in the definition of the classes;
 however, then the number of classes is not bounded by a function of the tree-cut width any more.
 In \Cref{sec:max-SMTI-tcw}, we will show that on bipartite acceptability graphs (that is, for \textsc{Max-SMTI}), we can anyway bound the number of classes in a function of the tree-cut width.
 In \Cref{sec:max-SMTI-alg}, we then show how to adapt the algorithm from \Cref{tfpttcw} to solve \textsc{Max-SMTI}.
 
 \subsubsection{Max-SMTI and Preference Changes}
\label{sec:max-SMTI-tcw}

We start with a different characterization of weakly stable matchings due to Manlove et al.~\cite{ManloveIIMM02}, which reduces weak stability to stability in instances without ties.
In order to do so, we say that a \textsc{Max-SMTI} instance~$\mathcal{I}$ arises from a \textsc{Max-SMTI} instance $\mathcal{I}'$ by \emph{breaking ties} if $\mathcal{I}$ and $\mathcal{I}'$ have the same set of vertices, and the preferences in $\mathcal{I}'$ of every vertex~$v$ arise from the preferences of $v$ in $\mathcal{I}$ by replacing every tie containing vertices $\{v_1, v_2, \dots, v_\ell\}$ by some strict order of $v_1, \dots, v_\ell$.

\begin{observation}[\cite{ManloveIIMM02}]
\label{lbreakingtiesManlove}
  Let $\mathcal{I}$ be a \textsc{Max-SMTI} instance.
  Then, a matching is stable if and only if it is stable for some instance~$\mathcal{I}'$ arising from $\mathcal{I}$ by breaking the ties.
\end{observation}
For instances without ties, Manlove et al.~\cite{ManloveIIMM99} showed that changing the preferences of a single vertex changes the size of a stable matching by at most one.

\begin{lemma}[{\cite[Lemma 2.6]{ManloveIIMM99}}]
\label{lnoties}
  Let $\mathcal{I}$ be an \textsc{SMI} instance. Changing the preference lists of one vertex changes the size of a stable matching by at most one.
\end{lemma}

Combining \Cref{lbreakingtiesManlove,lnoties}, we generalize \Cref{lnoties} to the case of ties.

\begin{lemma}\label{lwithties}
  Let $\mathcal{I}$ be a \textsc{Max-SMTI} instance.
  Changing the preference lists of one vertex~$v$ changes the size of a maximum stable matching by at most one.
\end{lemma}

\begin{proof}
  Let $\mathcal{J}$ be the modified instance, and let $P'_v$ be preferences obtained by breaking the ties in the preferences of $v$ in instance $\mathcal{J}$.
  Let $M$ be a maximum stable matching in~$\mathcal{I}$.
  By \Cref{lbreakingtiesManlove}, there is some way of breaking the ties such that $M$ is stable in the arising instance $\mathcal{I}'$.
  We create an \textsc{SMI} instance $\mathcal{J}'$ from $\mathcal{I}'$ by replacing the preferences of~$v$ by $P'_v$.
  By \Cref{lnoties}, there exists a stable matching~$M'$ in $\mathcal{I'}$ with $||M|-|M'| | \le 1$.
  By \Cref{lbreakingtiesManlove}, this matching is stable for the instance $\mathcal{J}$, showing that there exists a stable matching in $\mathcal{J}$ whose size is by at most one smaller than the size of a maximum stable matching in $\mathcal{I}$.

  As $\mathcal{I}$ can be obtained from $\mathcal{J}$ by changing the preference list of $v$, the lemma follows.
\end{proof}

We immediately obtain the following corollary:

\begin{corollary}\label{cmodification}
  Let $\mathcal{I}$ be a \textsc{Max-SMTI} instance. Changing the preference lists of $k$ vertices changes the size of a maximum stable matching by at most $k$.\hfill\qedsymbol
\end{corollary}

Note that \Cref{lwithties,cmodification} do not generalize to \textsc{Max-SRTI} as the instance~$\mathcal{J'}$ is not guaranteed to admit a stable matching.

\Cref{cmodification} allows us to ``prune'' matchings complying with some vector $\bm{h} \in \{\oldminusone, \oldzero, \oldone\}^{\cut (t)}$ for some node $t$ if they are much smaller than a maximum-cardinality matching complying with some $\bm{h'} \in \{\oldminusone, \oldzero, \oldone\}^{\cut (t)}$.

\begin{lemma}\label{lprune2}
  Let $\mathcal{I}$ be a \textsc{Max-SMTI} instance with acceptability graph $G$.
  Let $\emptyset \neq X\subset V(G)$ be a set of vertices with $k := |N(X)|$ neighbors.
  Assume that there exist two  matchings $M$ and~$M'$ on $X\cup N(X)$ such that there is no blocking pair for either of the two matchings with both endpoints in $X$.

  If $|M| <|M'| + 3k$, then no maximum stable matching contains $M$.
\end{lemma}

\begin{proof}
  Let $S:= N(X)$ be the set of vertices in the neighborhood of $X$.

  Assume that there is a maximum stable matching $M^*$ containing $M$. We now modify the preference lists of $S':= S\cup \{M'(s): s\in S\}\cup \{M^* (s): s\in S\}$ as follows:
  for each~$s\in S'$ with~$M' (s) \neq s$, we delete vertices on the preferences of $s$ except for $M'(s)$.
  If $s=M'(s)$ (i.e., $s$ is unmatched in $M'$), then we delete all vertices from the preferences of~$s$.
  This results in an instance $\mathcal{J}$.

  \begin{claim}
    $N^*:= (M^*\setminus M) \cup M'$ is a stable matching for instance $\mathcal{J}$.
  \end{claim}
  \begin{claimproof}
    Assume that there exists a blocking pair $\{v, w\}$.
    Clearly, none of the two vertices is contained in $S'$, as each vertex of $S'$ is matched to its top choice or has no neighbor.

    For every vertex $u \in V(G) \setminus (X \cup S')$, we have $M^* (u) = N^* (u)$, and also the preference lists of~$u$ in $\mathcal{I}$ and $\mathcal{J}$ coincide (up to the deletion of edges not contained in $N^*$).
    As $M^*$ was stable, $N^*$ also does not contain a blocking pair inside $G - (X \cup S')$.
 
    For each $u\in X \setminus S'$, we have $N^* (u) = M' (x)$, and also the preference list of $u$ coincides in $\mathcal{I}$ and $\mathcal{J}$ (again up to the deletion of edges not contained in $N^*$).
    As $M'$ is stable, $N^*$ also does not contain a blocking pair inside~$X$.
    Thus, thus $N^*$ is stable.
  \end{claimproof}

  Note that $|N^*| = |M^*| -|M| + |M'| > |M^* | + 3k$. Therefore, we have $\OPT (\mathcal{I}) < \OPT (\mathcal{J}) + 3k$.
  However, the instance $\mathcal{J}$ arose from $\mathcal{I}$ by the changing of at most $3k$ preference lists (as clearly $|S'| \le 3|S| = 3k$).
  Thus, by \Cref{cmodification}, we have $|\OPT (\mathcal{I}) - \OPT (\mathcal{J}) |\le 3k$, a contradiction to $\OPT (\mathcal{I}) < \OPT (\mathcal{J}) + 3k$.
\end{proof}

\subsubsection{Max-SMTI Parameterized by Tree-Cut Width is in FPT}
\label{sec:max-SMTI-alg}

We now use \Cref{lprune2} to extend the FPT-algorithm for \textsc{SRTI-Existence} to an FPT-algorithm for \textsc{Max-SMTI}.
This shows that \textsc{Max-SMTI} parameterized by tree-cut width is computationally easier than \textsc{Max-SRTI} unless $FPT = W[1]$.

The adaption of the algorithm from \Cref{tfpttcw} works as follows.
The dynamic programming table $\tau$ shall now contain a maximum-cardinality locally stable matching, and contain~$\square$ if no such matching exists.
The initialization works as for \textsc{SRTI-Existence}, and the solution can be read from $\tau_r [\mathbf{h^r}]$, where $\mathbf{h^r}$ is the unique vector in $\{\oldminusone, \oldzero, \oldone\}^{\cut (r)}$.
We can treat the heavy children as for \textsc{SRTI-Existence} as their number is upper-bounded in a function of the tree-cut width.
For light children, we have to take into account that there may exist two children $c$ and $d$ with $\sig (c) = \sig (d)$ but the sizes of the corresponding matchings differ.
As we are looking for a maximum-cardinality matching, we have to incorporate the different sizes of the matchings.
Due to \Cref{lprune2}, however, we can assume that the difference between the sizes of two matchings for the same child is at most six (if the difference is larger, then we simply store $\square$ instead of the smaller matching in $\tau$).
Thus, we change the definition of classes to incorporate the sizes of the matchings.
Two children~$c, d$ are in the same class if $\sig (c) = \sig (d)$ for all $\bm{h}$, and additionally for any two vectors $\bm{h}, \bm{h}'$ we have $|\tau_c [\bm{h} ]| - |\tau_c [\bm{h'}]| = |\tau_d[\bm{h}]| - \tau_d[\bm{h'}]|$.
Since $|\tau_c [ \bm{h}]| - |\tau_c [\bm{h'}]| \le 6$ whenever $\tau_c [\bm{h}] \neq \square$ and $\tau_c [\bm{h'}] \neq \square$, the number of classes is still~$O(k^2)$.

Lastly, we need to solve a maximization variant of \textsc{Partial Embedding Extension}.
We now define this variant and afterwards show how to solve it in FPT-time.

\defProblemTask{\textsc{Maximum Partial Embedding Extension}}
{
A graph $G$, profiles $P$,
a nice tree-cut decomposition $(\mathcal{X}, \mathcal{T})$ of $G$, a node $t \in \mathcal{T}$,
a matching $\mathcal{N}$ from $X_t$ to the classes,
a partial embedding $\tilde{M}$ of $\mathcal{N}$,
a vector $\bm{h}\in \{\oldminusone, \oldzero, \oldone\}^{\cut (t)}$, and
DP tables $\tau_c$ for all children $c$ of $t$.
}
{
Find a matching $M \subseteq E(G_t)$ of maximum cardinality such that $\tilde{M} \cup M$ is an $\bm{h}$-embedding of $\mathcal{N}$, or decide that no such matching exists.
}

\begin{theorem}\label{thm:max-smti-induction}
  On bipartite graphs, \textsc{Maximum Partial Embedding Extension} can be solved in $2^{O(k^2)} n^{O(1)}$ time for the parameter $k := |X_t|$.
\end{theorem}

\begin{proof}
  Note that for each class $\mathcal{C}$, we have that $|N(V(\mathcal{C}))| \le 2$.
  \Cref{lprune2} then implies that there are at most six possible sizes of a maximum matching in $V(\mathcal{C} ) \cup N(V(\mathcal{C}))$.
  Since there are $O(k^2)$ different classes, we can guess the size of the matching inside $V(\mathcal C) \cup  N(V(\mathcal{C}))$ for each class $\mathcal{C}$ in $2^{O (k^2)}$ time.
  We denote the guess for a class $\mathcal{C}$ by~$\ell_{\mathcal{C}}$.

  Note that \Cref{lem:once-zero,lem:once-nonzero,lem:same-child} also hold for this refined definition of classes and maximum $\bm{h}$-embeddings.
  For a class $\mathcal{C}$, we can apply \Cref{lem:different-child} for maximum $\bm{h}$-embeddings~$M$ if we additionally guess~$\bm{h^{M, c_1}}$ and $\bm{h^{M, c_2}}$ for the two children $c_1, c_2\in \mathcal{C}$ with $\{x_1, y_1^{c_1}\}\in M$ and $\{x_2, y_2^{c_2}\} \in M$.
  Thus, we can construct a 2-SAT-formula which is satisfiable if and only if there exists partial embedding extension.
  To find a maximum partial embedding extension, we now add clauses ensuring that the matching inside $V(\mathcal{C} ) \cup N (V (\mathcal{C}))$ has size at least $\ell_{\mathcal{C}}$.

  Consider a class $\mathcal{C}$ with $ N(\mathcal{C}) =\{x_1, x_2\}$ which is either unmatched or a bad class $\mathcal{C}$ with~$N (\mathcal{C}) = \{x_1, x_2\}$, $\mathcal{N} ( x_1) = \mathcal{C}$, and $\mathcal{N} (x_2) \neq \mathcal{C}$
  Assuming that $x_1$ is matched to a vertex it ranks at position~$r_1$, we can compute a rank $s(r_1)$ such that if $x_1$ is matched at rank~$r_1$, then there exists a matching of size at least $\ell_{\mathcal{C}}$ inside $V(\mathcal{C}) \cup N(\mathcal{C})$ such that there is no blocking pair inside $V(\mathcal{C})\cup N(\mathcal{C})$ if and only if $x_2$ is matched at least as good as~$s(r_1)$.
  Clearly, we have that $s(r_1)$ is monotonically decreasing in $r_1$.
  Thus, we can model the constraint that the matching on $\mathcal{C}$ shall have size at least $\ell_{\mathcal{C}}$ by clauses $\neg x_1^{r-1} \lor \neg x_2^{s(r)}$ for each~$r\in \{1, 2, \dots, \max \rk_{x_1} \}$.

  For each good class~$\mathcal{C}$ with $N (\mathcal{C}) = \{x_1, x_2\}$, we have already guessed for $x_1$ or $x_2$ (without loss of generality~$x_1$) to which vertex it is matched.
  We therefore can compute a number $r$ such that the size bound can be satisfied if and only if $x_2$ is matched to rank~$r$ or better;
  this can be represented by clause~$\neg x_2^{r}$.

  For each bad class $\mathcal{C}$ with $N (\mathcal{C}) = \{x_1, x_2\}$, $\mathcal{N} ( x_1) = \mathcal{C} = \mathcal{N} (x_2)$, we can test for which children~$c\in \mathcal{C}$ we can find a matching in $V(\mathcal{C}) \cup N(\mathcal{C})$ which matches both $x_1$ and $x_2$ to~$Y_c$ and does not admit a blocking pair inside $V(\mathcal{C}) \cup N (\mathcal{C})$ of size at least $\ell_{\mathcal{C}}$, and delete all other children from~$\mathcal{C}$ in advance.
  
  Therefore, the resulting \textsc{2-SAT} formula evaluates to true if and only if there exists a partial embedding extension~$M$ with $|M \cap \bigl(V(\mathcal{C} ) \cup N (\mathcal{C})\bigr)| \ge \ell_{ \mathcal{C}}$.
  By guessing all possible values for $\ell_{\mathcal{C}}$ (recall that we only have to consider six different values for $\ell_{\mathcal{C}}$ for every class~$\mathcal{C}$), the theorem follows.
\end{proof}

Having shown that we can solve the induction step, it is easy to show that \textsc{Max-SMTI} parameterized by tree-cut width is FPT.

\begin{corollary}
  \textsc{Max-SMTI} can be solved in $2^{O(k^2)} n^{O(1)}$ time, where $k:= \tcw (G)$.
\end{corollary}

\begin{proof}
  Fix a nice tree-cut decomposition $(T, \mathcal{X})$ of $G$ of width $k$.
  Similar to \Cref{tfpttcw}, we compute a values $\tau_c [ \bm{h}]$ for every $c\in V(T)$ and $\bm{h} \in \{\oldminusone, \oldzero, \oldone\}^{\cut (c)}$, which shall contain a maximum-cardinality matching complying with~$\bm{h}$.
  For every leaf $t\in V(T)$, we compute~$\tau_t$ by brute-forcing over all matchings in $\clos (Y_t)$ in $2^{O(k\log k)}$ time.
  \Cref{thm:max-smti-induction} (together with \Cref{lem:once-nonzero,lem:once-zero,lem:same-child,lem:different-child} stated for maximum-cardinality $\bm{h}$-embeddings) shows that given $\tau_c$ for every child of a node~$t\in V(T)$, we can compute $\tau_t$ in $2^{O(k^2)} n^{O(1)} $ time.
  The matching stored in $\tau_r$ for the root~$r$ of $T$ then gives us a maximum-cardinality stable matching.
\end{proof}

\section{An FPT-algorithm with Respect to Feedback Edge Number}

In this section, we give an FPT-algorithm for \textsc{Max-SRTI} parameterized by the feedback edge number.
This will be achieved by reducing an \textsc{Max-SRTI}-instance to $3^{\fes (G)}$ \textsc{Max-SRTI}-instances of treewidth at most two.
    \footnote{Note that the
preliminary version of this paper~\cite{BredereckHKN19} contained a
typo, claiming erroneously a running time of $2^{\fes (G)} n^{O(1)}$
instead of $3^{\fes (G)} n^{O(1)}$.}
These instances can be solved in polynomial time, as \textsc{Max-SRTI} parameterized by treewidth is contained in XP~\cite{Adil2018}.

The idea of the algorithm is as follows.
We fix a feedback edge set $F$ and guess for each edge $e \in F$ its influence on the stable matching, that is, whether $e$ is contained in a maximum stable matching, and if not, which of the two endpoints of $e$ prevents $e$ from being blocking.
We encode these guesses by small gadgets which we add to $G - F$.
As $G - F$ is a tree and these gadgets increase the treewidth only by a constant, we can then use the XP-algorithm for \textsc{Max-SRTI} parameterized by treewidth due to Adil et al.~\cite{Adil2018}, thus solving the resulting instance.

  \begin{theorem}\label{tfptfes}
    \textsc{Max-SRTI} can be solved in $3^{\fes (G)} n^{O(1)}$ time, where $n:= |V(G)|$.
  \end{theorem}
  
\begin{proof}
  Let $F$ be a minimum feedback edge set.

  We enumerate all matchings $F'\subseteq F$.
  Let $X$ be the vertices matched by $F'$, i.e., $X=\{v\in V(G): \exists \{v, w\} \in F'\}$.
  Denote by $G_{F'}:= G-X$ the subgraph of $G$ induced by the vertices unmatched in $F'$.

  For each edge $e\in F\setminus F'$, we guess the endpoint of $e$ which prefers the matching over the other endpoint of $e$.
  More formally, we enumerate all functions $f : F\setminus F' \rightarrow V(G)$ with $f (e) \in e$ for all $e\in F \setminus F'$.
  The vertex $f(e)$ for an edge $e \in F\setminus F'$ then corresponds to the endpoint of $e$ which prefers the matching over the other endpoint of $e$.

  We try to extend $F'$ to a maximum stable matching.

  If there exists a blocking pair $\{v, w\}$ consisting of two matched vertices, then we directly discard $F'$, as it cannot be part of any solution: We do not change the matching of $v$ or $w$, and thus they will stay a blocking pair.

  We design a \textsc{Max-SRTI} instance with acceptability graph~$H$ with $\tw (H) \le 2$ which contains a stable matching of size $t$ if and only if $G$ contains a size-$(t-\ell +|F'|)$ stable matching containing $F'$ for some $\ell $ to be defined later.

  Let $G' := G_{F'} - F$.
  For a vertex $v\in V(G')$, let $X_v:=\{w\in \N_{G} (v)\cap X: \rk_w (v) < \rk_w (M(w)\}$ be the set of neighbors of~$v$ matched in $F'$ who prefer $v$ to their matched partner, and let $Y_v:= \{ w\in \N_{G} (v) : \{v, w\} \in F \setminus F'\land f(\{v, w\} ) = v)\}$ be the set of vertices which are neighbored to $v$ through an edge $e $ of $F\setminus F'$ with $f(e) = v$.
  Let $\alpha (v):= \min_{w\in X_v\cup Y_v} \rk_v (w)$ (where $\min \emptyset = \infty$).
  For each $v\in V(G')$, we delete all edges $\{v, w\}$ with $\rk_v (w) >\alpha (v)$, as no stable matching can contain them.

  Let $S:=\{v\in V(G'): \alpha (v) <\infty\}$ be the set of vertices which are part of a blocking pair if they remain unmatched. Let $\ell := |S|$. For each $v\in S$, we add a 3-cycle $\{v, v', v''\}$ and set $\rk_{v'} (v ) = 1$, $\rk_{v'} (v'') =2$, $\rk_{v''} (v ) = 2$, $\rk_{v''} (v' ) = 1$, $\rk_v (v' ) = \alpha (v) +2$, and $\rk_v (v'') = \alpha (v) + 1$. This adds $2\ell$ vertices, and ensures that $v$ is matched in every stable matching. We call the resulting graph $H_{F'}$.

  \begin{claim}
    If $G$ admits a stable matching $M$ with $F'\subseteq M$ of size $t$ and $\rk_v (M(v) ) \le \rk_v (w) $ for all $e = \{v, w\}\in F\setminus F'$ with $f(e) = v$, then $H_{F'}$ contains a stable matching of size $t+\ell -|F'|$.
  \end{claim}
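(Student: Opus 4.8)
The plan is to turn the given stable matching $M$ into a stable matching $M'$ of $H_{F'}$ by restricting $M$ to $G'$ and then adding, for each $v \in S$, the gadget edge $\{v',v''\}$. Throughout I would work with the relevant case $M \cap F = F'$ (i.e.\ $M$ uses no feedback edge outside $F'$), which is the situation arising in the enumeration since $F'$ plays the role of $M \cap F$; I will make this explicit, because it is exactly what keeps the count and the gadget stability under control.

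First I would establish the key inequality $\rk_v(M(v)) \le \alpha(v)$ for every $v \in V(G')$. For $u \in X_v$ the vertex $u$ is matched by $F' \subseteq M$ and strictly prefers $v$ to $M(u)$, so stability of $M$ forbids $\{v,u\}$ from blocking, forcing $\rk_v(M(v)) \le \rk_v(u)$ (and, in particular, $v$ to be matched). For $u \in Y_v$ the edge $\{v,u\} \in F \setminus F'$ has $f(\{v,u\}) = v$, so the hypothesis $\rk_v(M(v)) \le \rk_v(u)$ applies directly. Taking the minimum over $u \in X_v \cup Y_v$ gives $\rk_v(M(v)) \le \alpha(v)$. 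This has two consequences: $M$ never uses a deleted edge (an edge $\{a,b\} \in M$ with $a,b \in V(G')$ satisfies $\rk_a(b) \le \alpha(a)$ and $\rk_b(a) \le \alpha(b)$, so it survives in $G'$), and every $v \in S$ is matched in $M$ to a vertex of $V(G')$, since $v$ cannot be matched into $X$ (all of $X$ is saturated by $F'$) nor, under $M \cap F = F'$, by a feedback edge.

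Then I would set $M' := (M \cap E(G')) \cup \{\{v',v''\} : v \in S\}$ and count. Because $M \cap F = F'$ and no vertex of $V(G')$ is matched into $X$, every edge of $M \setminus F'$ lies in $G'$ and is undeleted, so $|M \cap E(G')| = t - |F'|$; adding the $\ell = |S|$ gadget edges yields $|M'| = t + \ell - |F'|$, as required. For stability I would argue in two parts. On $V(G')$ the matching $M'$ agrees with $M$ in partners and ranks, so any edge of $G'$ blocking $M'$ would block $M$, contradicting stability of $M$; and deleted or feedback pairs are not edges of $H_{F'}$, hence cannot block. For the gadget of a vertex $v \in S$, the vertex $v$ is matched externally at rank $\rk_v(M(v)) \le \alpha(v)$, while $\rk_v(v'') = \alpha(v)+1$ and $\rk_v(v') = \alpha(v)+2$; one checks that $v''$ prefers its partner $v'$ to $v$, that $v$ prefers its external partner to both $v'$ and $v''$, and that $v'$ is matched to $v''$, so none of $\{v,v'\}$, $\{v,v''\}$, $\{v',v''\}$ blocks. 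As gadget vertices have no neighbours outside their own gadget, this exhausts all potential blocking pairs.

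The main obstacle is the bookkeeping ensuring the gadget edges can be added stably: it hinges entirely on $\rk_v(M(v)) \le \alpha(v)$, which simultaneously shows each $v \in S$ is matched (so $v'$ does not form a blocking pair with an unmatched $v$) and matched well enough that $v$ does not covet $v'$ or $v''$. The other delicate point is the assumption $M \cap F = F'$: if $M$ used a feedback edge $\{v,w\} \in F \setminus F'$, that edge would vanish in $G'$, leaving $v$ and $w$ unmatched in $M \cap E(G')$, which both breaks the size count and, when $v \in S$, destroys gadget stability; I would therefore state plainly that the claim is applied with $F' = M \cap F$.
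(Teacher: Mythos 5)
Your construction and argument are the same as the paper's: restrict $M$ to the surviving edges of $G'$, add the gadget edges $\{v',v''\}$ for every $v\in S$, and derive both the size count and stability from the key inequality $\rk_v(M(v))\le\alpha(v)$, proved exactly as the paper proves it (stability of $M$ against $X_v$, the hypothesis on $f$ against $Y_v$). The one substantive difference is your explicit extra assumption $M\cap F=F'$, and here you are in fact more careful than the paper: its proof asserts that every $v\in V(G')$ is matched in $M'$ at the same rank as in $M$, which silently presupposes that $M$ contains no edge of $F\setminus F'$. Without that assumption the claim as literally stated can fail: take $G$ a triangle on $a,b,c$ with $\rk_a(b)=\rk_b(a)=\rk_c(a)=1$ and all other ranks $2$, $F=\{\{a,b\}\}$, $F'=\emptyset$, $f(\{a,b\})=a$, and $M=\{\{a,b\}\}$; the hypothesis holds with $t=1$, but in $H_{F'}$ the edge $\{a,c\}$ gets deleted and the added $3$-cycle on $\{a,a',a''\}$ then has cyclic preferences, so $H_{F'}$ admits no stable matching at all, let alone one of size $2$. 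As you observe, the claim is only ever invoked with $F'=M\cap F$ (the enumeration hits this $F'$ together with a suitable $f$, which exists by stability of $M$), so your amendment is exactly the right repair and the enclosing theorem is unaffected.
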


  \begin{claimproof}
  Let $M^* := M\cap E(H_{F'})$.
  We claim that $M':=M^*\cup\bigcup_{v\in S} \{\{v', v''\}\}$ is a stable matching of size $t+\ell - |F'|$ in $H_{F'}$.

  First note that $|M'| = |M^*| + \ell = |M| - |F'| + \ell$.

  Assume that $M'$ contains a blocking pair $\{v, w\}$.
  As $M$ contains no blocking pair, each~$v\in V(G')$ is matched at rank at most $\alpha (v)$ (possibly unmatched if $\alpha (v) =\infty$), and thus at the same rank as in $M$.
  Thus, no blocking pair contains $v'$ or $v''$ for some $v\in V(G')$.
  As all other edges in $E(H_{F'}) $ are contained in $E(G)$, this implies that $H_{F'}$ contains no blocking pair.
  \end{claimproof}

  \begin{claim}
    If $H_{F'}$ contains a stable matching of size $t$, then $G$ contains a stable matching of size $t-\ell +|F'|$ with $F'\subseteq M$.
  \end{claim}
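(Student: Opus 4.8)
\begin{claimproof}
  Let $M'$ be a stable matching of size $t$ in $H_{F'}$. The plan is to read off a stable matching of $G$ from $M'$ by keeping exactly the edges of $M'$ that already live in $G'$ and adding back $F'$, and then to verify its cardinality and its stability.

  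First I would pin down the behaviour of $M'$ on the $3$-cycles. A short case analysis on the six ranks of the gadget $\{v, v', v''\}$ (for $v \in S$) shows that in every stable matching of $H_{F'}$ the edge $\{v', v''\}$ is used and $v$ is matched to a vertex of $G'$: otherwise one of the three gadget vertices yields a blocking pair, using that $v'$ ranks $v$ first, $v''$ ranks $v'$ first, and $v$ ranks $v'',v'$ at the two worst positions $\alpha(v)+1,\alpha(v)+2$. Since every surviving edge of $H_{F'}$ incident to $v$ has $v$-rank at most $\alpha(v)$ (all worse edges were deleted), this gives $\rk_v(M'(v)) \le \alpha(v)$ for each $v \in S$. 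I then set $M := (M' \cap E(G')) \cup F'$.

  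The size bookkeeping is then immediate. The edges of $M'$ not contained in $G'$ are exactly the $\ell = |S|$ gadget edges $\{v', v''\}$ (one per triangle, as a gadget vertex $v',v''$ has no other neighbours), so $|M' \cap E(G')| = t - \ell$; because $F'$ matches only vertices of $X$ and $V(G') = V(G) \setminus X$, the set $M$ is a matching with $F' \subseteq M$ and $|M| = t - \ell + |F'|$. I also record that $M(v) = M'(v)$ for every $v \in V(G')$ (vertices outside $S$ have no gadget, and vertices in $S$ are matched inside $G'$), which is used below.

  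The main work, and the step I expect to be the crux, is showing that $M$ has no blocking pair; I would argue by contradiction from a blocking pair $\{v, w\} \in E(G)$ via a case distinction on its endpoints. If $v, w \in X$, then $M(v) = F'(v)$ and $M(w) = F'(w)$, so $\{v,w\}$ would already be a blocking pair of two $F'$-matched vertices, which is excluded since such $F'$ are discarded before $H_{F'}$ is built. If $v, w \in V(G')$ and $\{v,w\}$ survived into $G'$, then $M(v) = M'(v)$ and $M(w) = M'(w)$ turn $\{v,w\}$ into a blocking pair of the stable matching $M'$, a contradiction. In every remaining case the edge $\{v,w\}$ is absent from $G'$ — it was deleted for ranking (a tree edge with $\rk_z(\,\text{other endpoint}\,) > \alpha(z)$ for one endpoint $z$), or it is a feedback edge of $F \setminus F'$, or it crosses between $X$ and $V(G')$ — and in each of these I would exhibit an endpoint $z \in V(G')$ with $\rk_z(M(z)) \le \alpha(z)$ and $\alpha(z) \le \rk_z(\,\text{other endpoint}\,)$: either directly because the edge was removed for exceeding $\alpha(z)$, or because the other endpoint lies in $X_z \cup Y_z$ (an $X$-neighbour preferring $z$ puts it into $X_z$; a feedback edge with $f$ pointing at $z$ puts the other endpoint into $Y_z$). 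In all cases $\rk_z(M(z)) \le \rk_z(\,\text{other endpoint}\,)$, so $z$ does not strictly prefer the other endpoint and $\{v,w\}$ cannot block. The delicate point is checking that these sub-cases exhaust all deleted and crossing edges and that the endpoint witnessing the $\alpha$-bound always lies in $V(G')$ (so that it is genuinely matched at rank at most $\alpha(z)$ in $M$), which is where the definitions of $X_v$, $Y_v$, $\alpha(v)$ and the guess $f$ must be lined up carefully.
\end{claimproof}
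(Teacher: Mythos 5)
Your proof is correct and takes essentially the same route as the paper's: define $M := (M' \cap E(G')) \cup F'$, count its size via the $\ell$ forced gadget edges $\{v',v''\}$, and refute blocking pairs by the same case analysis (two $X$-endpoints handled by the discarding of $F'$, edges of $G'$ handled by stability of $M'$, and all remaining edges handled by the bound $\rk_z(M(z)) \le \alpha(z) \le \rk_z(\cdot)$ obtained from $X_z$, $Y_z$, or rank-based deletion). If anything, your case distinction is slightly more careful than the paper's, which lumps the rank-deleted edges between two $V(G')$ vertices into the other cases without mentioning them explicitly.
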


  \begin{claimproof}
  Let $M'$ be a stable matching of size $t$.
  Any vertex in $v\in V(G')$ with $\alpha (v) <\infty$ is matched to another vertex in $V(G')$, as otherwise two of the three vertices $v$, $v'$, and $v''$ form a blocking pair.
  We claim that $M:=F'\cup (M'\cap E(G'))$ is a stable matching of size $t-\ell + |F'|$.

  To see this, note that $M'\cap E(G')$ contains $t-\ell$ edges, non of which is contained in $F$ nor in $F'$.
  Thus, $|M| = t-\ell +|F'|$.

  Assume that there is a blocking pair $\{v, w\}$ for $M$.
  Then, we have $\{v, w\}\notin E(H_{F'})$, implying that $\{v, w\} \in F\setminus F'$ or $v\in X$ or $w\in X$.
  Note that $v\in X$ and $w\in X$ is not possible, as we discarded all sets $F'$ which imply such blocking pairs.
  If $\{v, w\} \in F \setminus F'$, then we assume that $f(\{v, w\} ) = v$ (the case $f(\{v, w\}) = w$ is symmetric).
  As we added vertices $v'$ and $v''$ to $H_{F'}$, vertex $v$ has to be matched in $M$ at rank at most $\alpha_v \le \rk_v (w)$, implying that $v$ does not prefer $w$ to $M(v)$ and, therefore, $\{v, w\} $ is not a blocking pair.
  Hence, we have either $v\in X$ or $w\in X$.
  By symmetry, we can assume that $w\in X$.
  Then we have $\alpha (v) < \infty$ and therefore added vertices $v'$ and $v''$.
  Thus, $v$ is matched in $M'$, and due to the deletion of edges, it is matched at rank at most $\alpha (v)$, implying that $v$ does not prefer $w$ to $M(v)$.
  Thus, $\{v, w\}$ is not a blocking pair.
  \end{claimproof}

  Graph $H_{F'}$ arises from the forest $G'$ by adding for each $v \in G' $ with $\alpha (v) < \infty$ vertices~$v'$ and~$v''$ together with edges $\{v, v'\}$, $\{v, v''\}$, and $\{v', v''\}$.
  Any forest admits a tree decomposition of width one, where for each vertex $v$ there exists a bag $B_v$ containing only $v$.
  Starting with such a tree-decomposition for $G'$, duplicating $B_v$ and adding a bag containing $\{v, v', v''\}$ in between for every $v\in V(G')$ with $\alpha (v) < \infty$ yields a tree decomposition of width two.
  Thus, $H_{F'}$ has treewidth at most two.
  Consequently, finding maximum stable matching in~$H_{F'}$ or deciding that $H_{F'}$ does not admit a stable matching can be done in polynomial time (as \textsc{Max-SRTI} is in XP with respect to treewidth \cite{Adil2018}), proving the theorem.
\end{proof}

\section{Conclusion}
Taking the viewpoint of parameterized graph algorithmics, we investigated
the line between fixed-parameter tractability and W[1]-hardness
of \textsc{Stable Roommates with Ties and Incomplete Lists}.
Studying parameterizations mostly relating to the `tree-likeness'
of the underlying acceptability graph, we arrived at a fairly complete
picture (refer to \cref{fig:SRTIResultsOverview})
of the corresponding parameterized complexity landscape.
There is a number of future research directionss stimulated by our work.
First, we did not touch on questions about polynomial kernelizability of the
fixed-parameter tractable cases. Indeed, for the parameter feedback edge
number we believe that a polynomial kernel should be possible.
Another issue is how tight the running time for
our fixed-parameter algorithm for
the parameter tree-cut width~$k$ is; more specifically, can we show that
our exponential factor $k^{O(k)}$ is basically optimal or can it be improved
to say~$2^{O(k)}$?
Clearly, there is still a lot of room to study
\textsc{Stable Roommates with Ties and Incomplete Lists} through the
lens of further graph parameters.
On a general note, we emphasize that so far our investigations are
on the purely theoretic and classification side; practical algorithmic
considerations are left open for future research.

\bibliography{bib}

\end{document}